\documentclass[10pt]{amsart}
\usepackage{amsthm}
\usepackage{amsmath}
\usepackage{amssymb}
\usepackage{booktabs}
\usepackage{comment}
\usepackage{caption}
\usepackage{dsfont}
\usepackage{graphicx}
\usepackage{marginnote}
\usepackage{natbib}
\usepackage{setspace}
\usepackage[dvipsnames]{xcolor}
\usepackage[colorlinks,citecolor=OliveGreen,linkcolor=blue,hypertexnames=false]{hyperref}

%\usepackage{a4wide}
%\usepackage{adjustbox}
%\usepackage{algorithm}
%\usepackage{algpseudocode}
%\usepackage[foot]{amsaddr}
%\usepackage{amsfonts}
%\usepackage{appendix}
%\usepackage{array}
%\usepackage{bigints}
%\usepackage[english]{babel}
%\usepackage{changepage}
%\usepackage{epsfig}
%\usepackage{epstopdf}
%\usepackage{eurosym}
%\usepackage{float}
%\usepackage[T1]{fontenc}
%\usepackage[margin=1in]{geometry}
%\usepackage{lscape}
%\usepackage{lipsum}
%\usepackage{multirow}
%\usepackage{paralist}
%\usepackage{pgf}
%\usepackage{pdflscape}
%\usepackage{ragged2e}
%\usepackage{rotating}
%\usepackage{scalefnt}
%\usepackage{subfigure}
%\usepackage{threeparttable}
%\usepackage{url}
%\usepackage{verbatim}
%\usepackage{xr}

%\setcounter{MaxMatrixCols}{10}
%TCIDATA{OutputFilter=Latex.dll}
%TCIDATA{Version=5.50.0.2890}
%TCIDATA{<META NAME="SaveForMode" CONTENT="1">}
%TCIDATA{BibliographyScheme=Manual}
%TCIDATA{LastRevised=Monday, July 29, 2024 17:20:35}
%TCIDATA{<META NAME="GraphicsSave" CONTENT="32">}
%TCIDATA{Language=American English}

%\captionsetup[figure]{font=footnotesize, labelfont=footnotesize}
%\captionsetup[subfigure]{font=scriptsize, labelfont=scriptsize}
%\externaldocument{HT2021App}
%\setlength{\headsep}{0.5in}
%\allowdisplaybreaks
%\DeclareFontFamily{OT1}{pzc}{}
%\DeclareFontShape{OT1}{pzc}{m}{it}{<-> s * [1.10] pzcmi7t}{}
%\DeclareMathAlphabet{\mathpzc}{OT1}{pzc}{m}{it}

% GEOMETRY -------------------
\topmargin-1.0cm
\textwidth16.5cm
\textheight22.89cm
\marginparwidth=2.1cm
\marginparsep=0.5cm
\oddsidemargin   -0.2cm
\evensidemargin  -0.2cm
\thispagestyle{empty}
\parindent 0cm
% GEOMETRY -------------------

\numberwithin{table}{section}
\numberwithin{figure}{section}
\newtheorem{theorem}{Theorem}
\newtheorem{lemma}{Lemma}

\newtheorem*{definition*}{Definition}

\newtheorem{assumption}{Assumption}

% OUR COMMANDS AND MACROS  -------------------
%\makeatletter
%\newcommand{\oset}[3][0ex]{%
%  \mathrel{\mathop{#3}\limits^{
%    \vbox to#1{\kern-2\ex@
%    \hbox{$\scriptstyle#2$}\vss}}}}
%\makeatother

\newcommand{\eps}{\varepsilon}
\newcommand{\beps}{\boldsymbol{\eps}}
\newcommand{\F}{\mathbf{F}}
\newcommand{\R}{\mathbf{R}}
\newcommand{\E}{\mathbf{E}}
\newcommand{\C}{\mathbf{C}}
\newcommand{\D}{\mathcal{D}}
\newcommand{\X}{\mathbf{X}}
\newcommand{\bH}{\mathbf{H}}
\newcommand{\M}{\mathbf{M}}
\newcommand{\I}{\mathbf{I}}
\newcommand{\U}{\mathbf{U}}
\newcommand{\W}{\mathbf{W}}
\DeclareMathOperator{\Ve}{Vec}
\newcommand{\blue}[1]{{\textcolor{blue}{#1}}}
\newcommand{\orange}[1]{{\textcolor{orange}{#1}}}
\newcommand{\proj}[1]{\mathring{#1}}
\DeclareMathOperator{\tr}{tr}
\DeclareMathOperator{\cov}{Cov}
% OUR COMMANDS AND MACROS  -------------------

%\allowdisplaybreaks
%\onehalfspacing
%\linepenalty=2000
%\input{tcilatex}

\begin{document}
\title[Nonstationary matrix factor models]{Inference in matrix-valued time series with common stochastic trends and multifactor error structure}
\author{Rong Chen}
\address{Department of Statistics, School of Arts and Sciences, Rutgers University, NJ USA}
\email{rongchen@stat.rutgers.edu}
\author{Simone Giannerini}
\address{Dipartimento di scienze economiche e statistiche, Università di Udine, Italy}
\email{simone.giannerini@uniud.it}
\author{Greta Goracci}
\address{Free University of Bozen-Bolzano, Faculty of Economics and Management, Bolzano, Italy}
\email{greta.goracci@unibz.it}
\author{Lorenzo Trapani}
\address{University of Leicester Business School, University Road, Leicester
LE1 7RH, UK, and Department of Economics and Management, University of
Pavia, 27100 Pavia, Italy}
\email{lt285@leicester.ac.uk}

\thanks{\textbf{Acknowledgements} We are grateful to the participants to: the Workshop on the Analysis of Complex Data: Tensors, Networks, and Dynamic Systems, (Banff, May 12-17, 2024), in particular Elynn Chen and Qiwei Yao; to the Econometrics with Data Science conference (University of Reading, 16 September, 2024); and to the 1st CAM-Risk conference New risks and policy challenges (Universita' di Pavia, 18-20 December, 2024).}

\begin{abstract}
We develop an estimation methodology for a factor model for high-dimensional matrix-valued time series, where common stochastic trends and common stationary factors can be present. We study, in particular, the estimation of (row and column) loading spaces, of the common stochastic trends and of the common stationary factors, and the row and column ranks thereof. In a set of (negative) preliminary results, we show that a projection-based technique fails to improve the rates of convergence compared to a ``flattened'' estimation technique which does not take into account the matrix nature of the data. Hence, we develop a three-step algorithm where: (i) we first project the data onto the orthogonal complement to the (row and column) loadings of the common stochastic trends; (ii) we subsequently use such ``trend free'' data to estimate the stationary common component; (iii) we remove the estimated common stationary component from
the data, and re-estimate, using a projection-based estimator, the row and column common stochastic trends and their loadings. We show that this estimator succeeds in refining the rates of convergence of the initial, ``flattened'' estimator. As a by-product, we develop consistent eigenvalue-ratio based estimators for the number of stationary and nonstationary common factors.%, and a test for the null hypothesis that the matrix-valued data have a unit root. 
\end{abstract}

%\subjclass{[]}
%\keywords{}
\maketitle

\doublespacing
%% ********************************************************
\section{Introduction\label{intro}}
%% ********************************************************

In this paper, we study inference for a Matrix Factor Model (MFM)\ where common stochastic trends may be present as well as stationary common
factors, viz.%
\begin{equation}
\underset{p_{1}\times p_{2}}{\X_{t}}=\underset{p_{1}\times h_{R_{1}}}{\R_{1}}\underset{h_{R_{1}}\times h_{C_{1}}}{\F_{1,t}} \underset{h_{C_{1}}\times p_{2}}{\C_{1}^{\prime}}+\underset{p_{1}\times h_{R_{0}}}{\R_{0}}\underset{h_{R_{0}}\times h_{C_{0}}}{\F_{0,t}}\underset{h_{C_{0}}\times p_{2}}{\C_{0}^{\prime}}+\underset{p_{1}\times p_{2}}{\E_{t}},  \label{model}
\end{equation}%
where: $1\leq t\leq T$, $\min \left\{ p_{1},p_{2},T\right\} \rightarrow \infty $, $0\leq h_{R_{1}}$, $h_{C_{1}}$, $h_{R_{0}}$, $h_{C_{0}}<\infty $,
the common factors $\left\{ \F_{0,t},-\infty <t<\infty \right\} $form a stationary sequence, and the common stochastic trends $\F_{1,t}$ satisfy%
\begin{equation}
\F_{1,t}=\F_{1,t-1}+\beps_{t},
\label{model_ft}
\end{equation}%
with $\left\{\beps_{t},-\infty <t<\infty \right\} $ a stationary sequence. In particular, we propose a methodology to estimate the row and column loadings spaces for both the stationary and the nonstationary common factor structures (resp. $\R_{0}$, $\C_{0}$, $\R_{1}$ and $\C_{1}$), the common factors $\F_{1,t}$ and $\F_{0,t}$, and the dimensions of all factor spaces ($h_{R_{1}}$, $h_{C_{1}}$, $h_{R_{0}}$ and $h_{C_{0}}$).
%% *****************************************************
\subsection*{Matrix Factor Models: a brief literature review}
%% *****************************************************
In recent years, MFMs have been studied extensively as a way of modelling parsimoniously large datasets, and as an alternative to vectorising the data $\X_{t}$. \citet{wang2019factor} and \citet{Chen2020StatisticalIF} make powerful cases in favour of exploiting the matrix structure of $X_{t}$, when there is a ``two-way''\ factor structure, for the purpose of dimension reduction (see also \citealp{he2023one}). In addition to dimension reduction, several datasets lend themselves naturally to be modelled as matrix-valued time series, with examples in health sciences (such as electronic health records and ICU data) and 2-D image data processing (see, \textit{inter alia}, \citealp{Chen2020StatisticalIF} and \citealp{Gao2021A}), in macroeconomics (see e.g. \citealp{wang2019factor}, where several macroeconomic indicators are modelled for different countries; or \citealp{chen2021factor}, who consider import-export data), and in finance (see e.g. \citealp{wang2019factor}, where several portfolios are modelled through several indicators such as size or BE\ ratio). There is now a plethora of contributions on inference for \textit{stationary} MFMs. The determination of the number of common factors has been studied in various contributions, including, e.g. \citet{han2020rank} and \citet{he2023one}. The estimation of loadings and common factors has been developed in several articles, including \citet{Chen2020StatisticalIF}, who propose an estimation technique based on the spectrum of a weighted average of the mean and the column (row) covariance matrix of the data; \citet{hkyz2021}, who refine the rates of convergence of the estimated loadings via iterative projections (see also \citealp{he2023one}); and also \citet{chen2020semiparametric} and \citet{chen2021factor}. 
All the references above, however, consider models where only stationary, $I(0)$ common factors are present, thus ruling out the presence
of $I(1)$ common factors as described by equation (\ref{model} ). This can be viewed as an important gap in the literature: many datasets are well-known to be driven by stochastic trends: macroeconomic indicators are typically $I(1)$; and yield curves are often modelled as being driven by common stochastic trends, at least in the vector-valued case (see e.g. \citealp{bt2}, and the empirical application in  \citealp{hamilton2024principal}). Indeed, not only we are not aware of any contributions dealing with common stochastic trends in the context of
matrix-valued time series, but contributions in the context of vector-valued time series that consider $I(1)$ common factors are also rare:  \citet{bai04} develops the full-blown inferential theory for loadings and common factors; \citet{bt2} propose a family of randomised tests to determine the number of common trends and stationary factors; and \citet{massacci2022high} consider a threshold model where, in each regime, there are latent $I(1)$ common factors. Related contributions, \textit{lato sensu}, have also been developed in the context of high-dimensional cointegration (e.g. \citealp{onatski2018alternative}, \citealp{onatski2019extreme}, \citealp{bykhovskaya2022cointegration}, and \citealp{bct}). Naturally, in order to estimate (\ref{model}), it is always possible to take the first difference of the matrix-valued time series $\X_t$, and apply one of the techniques discussed above for stationary data; however, this would not afford the separate estimation of the $I(1)$ and the $I(0)$ components; in fact, it would not even be possible to understand whether there are any $I(1)$ common factors at all. 
\par
In this paper, we fill the aforementioned gap by developing the full-fledged inferential theory for model (\ref{model}); as we argue below, this is not a mere extension of existing techniques developed for the vector-valued case, as the problem calls for an entirely novel methodology. 

%% *****************************************************
\subsection*{The structure of (\ref{model})}
%% *****************************************************
We discuss two possible interpretations of (\ref{model})-(\ref{model_ft}). The first one goes along similar lines as in \citet{wang2019factor}, and it considers a ``two-step'' hierarchical factor model. Let the $j$-th column of $\X_{t}$ be denoted as $\X_{\cdot j,t}$, $1\leq j\leq p_{2}$, and consider the following factor model 
\begin{equation}
\X_{\cdot j,t}=\R_{1}\mathbf{g}_{j1,t}+\R_{0}\mathbf{g}_{j0,t}+\widetilde{\E}_{\cdot j,t},  \label{bai_v}
\end{equation}%
where $\mathbf{g}_{j1,t}$ is a $h_{R_{1}}$-dimensional $I(1)$ process, $\mathbf{g}_{j0,t}$ is a $h_{R_{0}}$-dimensional $I(0)$ process, and $\widetilde{\E}_{\cdot j,t}$ is an idiosyncratic term; (\ref{bai_v}) is exactly the same model as in \citet{bai04} for a
vector-valued time series with common stochastic trends. Define now the $%
1\leq i_{1}\leq h_{R_{1}}$ rows of $\mathbf{g}_{j1,t}$ as $\mathbf{g}%
_{j1,t}^{\left( i_{1}\right) }$, and the $1\leq i_{0}\leq h_{R_{0}}$ rows of 
$\mathbf{g}_{j0,t}$ as $\mathbf{g}_{j0,t}^{\left( i_{0}\right) }$, and
consider the ``nested''\ factor model for
the $p_{2}$-dimensional series $\mathbf{g}_{1,t}^{\left( i_{1}\right) }$, $%
1\leq i_{1}\leq h_{R_{1}}$:%
\begin{equation*}
\mathbf{g}_{1,t}^{\left( i_{1}\right) }=\left( \mathbf{g}_{11,t}^{\left(
i_{1}\right) },...,\mathbf{g}_{p_{2}1,t}^{\left( i_{1}\right) }\right)
^{\prime}=\underset{p_{2}\times h_{C_{1}}}{\C_{1}^{\left(
i_{1}\right) }}\underset{h_{C_{1}}\times 1}{\mathbf{h}_{1,i_{1},t}}+\boldsymbol{\nu}_{i_{1},t}^{(1) },
\end{equation*}%
where $\mathbf{h}_{1,i_{1},t}$ is a vector of $I(1)$ common
factors, $\C_{1}^{\left( i_{1}\right) }\ $a loadings matrix, and$\ 
\boldsymbol{\nu}_{i_{1},t}^{(1) }$ a $p_{2}$-dimensional stationary
idiosyncratic component. By the same token, we also define the nested\
factor model for the $p_{2}$-dimensional stationary series $\mathbf{g}%
_{0,t}^{\left( i_{0}\right) }$, $1\leq i_{0}\leq h_{R_{0}}$:%
\begin{equation*}
\mathbf{g}_{0,t}^{\left( i_{0}\right) }=\underset{p_{2}\times h_{C_{0}}}{%
\C_{0}^{\left( i_{0}\right) }}\underset{h_{C_{0}}\times 1}{\mathbf{h}%
_{0,i_{0},t}}+\boldsymbol{\nu}_{i_{0},t}^{\left( 0\right) },
\end{equation*}%
where $\mathbf{h}_{0,i_{0},t}$ is a vector of stationary, $I(0)$ common factors. Let us now put the above together. Assume $\C%
_{1}^{\left( i_{1}\right) }=\C_{1}$ and $\C_{0}^{\left(
i_{0}\right) }=\C_{0}$; define $\F_{1,t}$\ by stacking the
vectors $\mathbf{h}_{1,i_{1},t}^{\prime}$, and $\F_{0,t}$\ by
stacking the vectors $\mathbf{h}_{0,i_{1},t}^{\prime}$; define $\boldsymbol{\nu}_{t}^{(1) }$ by stacking the vectors $\boldsymbol{\nu}%
_{i_{1},t}^{(1) \prime}$, and $\boldsymbol{\nu}_{t}^{\left(
0\right) }$\ by stacking the vectors $\boldsymbol{\nu}_{i_{0},t}^{\left(
0\right) \prime}$. We finally receive%
\begin{equation*}
\X_{t}=\R_{1}\F_{1,t}\C_{1}^{\prime}+%
\R_{1}\boldsymbol{\nu}_{t}^{(1) }+\R_{0}\F%
_{0,t}\C_{0}^{\prime}+\R_{0}\boldsymbol{\nu}_{t}^{\left(
0\right) }+\widetilde{\E}_{t}=\R_{1}\F_{1,t}\C_{1}^{\prime}+\R_{0}\F_{0,t}\C_{0}^{\prime}+%
\E_{t},
\end{equation*}%
where $\E_{t}\equiv \R_{1}\boldsymbol{\nu}_{t}^{(1)
}+\R_{0}\boldsymbol{\nu}_{t}^{\left( 0\right) }+\widetilde{\E}%
_{t}$. 

As a second example, we note that, in (\ref{model})-(\ref%
{model_ft}), a $p_{1}\times p_{2}$ valued $I(1)$ time series $%
\X_{t}$ is driven by a (small) number of common stochastic trends.
Hence, (\ref{model}) represents a case of ``two-way''\ cointegration, in that it is possible to
construct vector-valued time series as linear combinations of both the rows
and the columns of $\X_{t}$ which are stationary (in essence, by
pre- or post- multiplying $\X_{t}$ by the orthogonal complements to $%
\R_{1}$ and $\C_{1}$ respectively). In this respect, (\ref%
{model}) can be viewed, heuristically, as an extension of the common
stochastic trends representation of a cointegrated system as discussed in %
\citet{stock1988variable}. Indeed, two recent contributions (\citealp{li2024cointegratedmatrixautoregressionmodels}, and \citealp{hecq2024detecting}) consider the extension of cointegrated Vector AutoRegressions to matrix-valued time series, but only for the case where the cross-sectional dimensions $p_{1}$ and $p_{2}$ are fixed. As a word of warning, however, we would like to point out that (\ref{model}) is not entirely aligned to a cointegrated system in the sense of \citet{johansen1991estimation}, and we refer to our concluding remarks in
Section \ref{conclusion} for a more thorough analysis.

\subsection*{The estimation methodology}
We now offer a preview of how our methodology works and of our results. We begin with an account of the problem at hand;
the details are in Section \ref{flattened}. Given the number of $I(1)$ common factors $h_{R_{1}}$ and $h_{C_{1}}$, we begin by noting
that, when estimating $\R_{1}$ using a ``flattened'' approach based on the second moment matrix $\sum_{t=1}^{T}\X_{t}\X_{t}^{\prime}$, the estimator has rate $O_{P}\left( p_{1}^{1/2}T^{-1}\right) $ - see Section \ref{flat}. Modulo the dimensionality effect represented by the $O_{P}\left(p_{1}^{1/2}\right) $ term, such a ``superconsistency'' is typical of the estimation of a cointegrated system; however, especially if $T$ is small, this rate may not be sufficiently fast. In order to refine it, a possible, and natural, way of estimating $\R_{1}$ in (\ref{model}) would be to use the iterative projection-based estimator considered in \citet{hkyz2021} - that is, given the initial, ``flattened''\ estimator of $\C_{1}$ (say $\hat{\C}_{1}$), one could define the projected data $\X_{t}\hat{\C}_{1}$, and re-estimate $\R_{1}$ as the eigenvectors corresponding to the $h_{R_{1}}$ largest eigenvalues of the (suitably rescaled) $\sum_{t=1}^{T}\X_{t}\hat{\C}_{1}\left( \X_{t}\hat{\C}_{1}\right)^{\prime}$. However, as we show in Section \ref{proj}, this estimator fails to improve the rate of convergence of the initial, non-projection-based,
estimate of $\R_{1}$. Heuristically, this can be explained by noting that, in (\ref{model}), the term $\R_{0}\F_{0,t}\C _{0}^{\prime}$ is present. For the purpose of the projection-based estimator of $\R_{1}$ and $\C_{1}$, this is a component of the error term; however, the projection-based estimator essentially works by attenuating the error by averaging it cross-sectionally through its projection onto $\hat{\C}_{1}$. Indeed, when $\X_{t}$ is multiplied by $\hat{\C}_{1}$, the ``signal''\ component $\R_{1}\F_{1,t}\C_{1}^{\prime}\hat{\C}_{1}$ contains the term $\C_{1}^{\prime}\hat{\C}_{1}$, which is proportional to $p_{2}$; conversely, the error component $\E_{t}\hat{\C}_{1}$ (provided that the errors are \textit{weakly} cross-sectionally dependent) heuristically becomes proportional to $p_{2}^{1/2}$ - hence, projecting results in a reduction of the noise-to-signal ratio. However, this argument fails in the presence of stationary common factors: the component $\R_{0}\F_{0,t}\C_{0}^{\prime}\hat{\C}_{1}$, in general, is proportional to $p_{2}$ due to the \textit{strong} cross-sectional dependence induced by the common factors $\F_{0,t}$; seeing as this component is effectively part of the error term, the noise-to-signal is not attenuated, and no refinement of the rates of convergence of the estimates of $\R_{1}$ (or $\C_{1}$) can be expected. In light of the above, it would be desirable to eliminate the $\R_{0}\F_{0,t}\C_{0}^{\prime}$ component prior to applying the projection method to the estimation of $\R_{1}$ (or $\C_{1}$). This, too, is not straightforward: a consistent estimate of $\R_{0}\F_{0,t} \C_{0}^{\prime}$ is required, but this cannot be obtained by simply
estimating $\R_{1}$ and $\C_{1}$ (and the common factors $ \F_{1,t}$) using the first-stage, flattened estimator mentioned above: the rate of convergence of the estimated $I(1)$\ common component $\R_{1}\F_{1,t}\C_{1}^{\prime}$ is not fast enough to be able to get rid of it without an impact on the subsequent estimation of $\R_{0}$, $\C_{0}$, and $\F_{0,t}$.

Hence, in this paper we propose a different iterative procedure, which we describe henceforth; the details are in Section \ref{orthogonal}. After obtaining the initial, flattened estimator of $\C_{1}$ (resp. $\R_{1}$), denoted as $\hat{\C}_{1}$, we construct its orthogonal complement $\hat{%
\C}_{1,\perp}$; this is a ``huge'' matrix, since both the numbers of its rows and columns grow with $p_{2}$. In order to estimate the stationary common component $\R_{0}\F_{0,t}\C_{0}^{\prime}$, we firstly get rid of the $I(1)$ common component $\R_{1}\F_{1,t}\C_{1}^{\prime}$
by projecting the data $\X_{t}$ onto $\hat{\C}_{1,\perp}$, and subsequently using the second moment matrix $\sum_{t=1}^{T}\X_{t}\hat{\C}_{1,\perp}\left( \X_{t}\hat{\C}_{1,\perp}\right) ^{\prime}$ to estimate $\R_{0}$, $\C_{0}$, and $\F_{0,t}$. Interestingly, this approach is the complete
opposite to the projection-based estimator (and, in general, to the philosophy of the Johnson-Lindenstrauss Lemma, and of the ``sketching'' approach, see e.g. \citealp{matouvsek2008variants} as a comprehensive review): instead of projecting the data onto a small dimensional space which is ``parallel''\ to $\C_{1}$ (so as to conserve the information contained in it), we project onto a large dimensional space which is orthogonal (so as to get rid of $\C_{1}$). As we show in Section \ref{inferencercf1}, this procedure yields an
estimator of the stationary common component $\R_{0}\F_{0,t}\C_{0}^{\prime}$ (say $\hat{\R}_{0}\hat{\F}_{0,t}\hat{\C}_{0}^{\prime}$) whose rate of convergence is sufficiently fast to be able to filter it out from the data $\X_{t}$. We then construct the ``purified''\ data $\overset{\diamond }{\X}_{t}=\X_{t}-\hat{\R}_{0}\hat{\F}_{0,t}\hat{\C}_{0}^{\prime}$, and apply the
projection based estimator thereto, using the second moment matrix $%
\sum_{t=1}^{T}\overset{\diamond }{\X}_{t}\hat{\C}%
_{1}\left( \overset{\diamond }{\X}_{t}\hat{\C}%
_{1}\right) ^{\prime}$. The resulting estimator of $\R_{1}$ refines
the rate of the initial estimator $\hat{\R}_{1}$, with - in
particular - the $O_{P}\left( p_{1}^{1/2}T^{-1}\right) $ component in the
error term becoming of order $O_{P}\left(
p_{1}^{1/2}p_{2}^{-1/2}T^{-1}\right) $. This is exactly what would be
expected when using a projection-based estimator in the absence of strong
cross-sectional dependence in the error term. In Section \ref{projectCR}, we
show that refinements are also available for the corresponding estimator of $%
\C_{1}$ (as can be expected), and for the estimator of the $I\left(
1\right) $ common factors $\F_{1,t}$. As a by-product, we also
derive consistent estimation of $\R_{0}$, $\C_{0}$, and $%
\F_{0,t}$. Finally, building on the spectra of the second moment
matrices $\sum_{t=1}^{T}\overset{\diamond }{\X}_{t}\hat{\C}_{1}\left( \overset{\diamond }{\X}_{t}\hat{\C}%
_{1}\right) ^{\prime}$ and $\sum_{t=1}^{T}\X_{t}\hat{\C}%
_{1,\perp}\left( \X_{t}\hat{\C}_{1,\perp}\right)
^{\prime}$, we are able to propose estimators of the ranks $h_{R_{1}}$, $%
h_{C_{1}}$, $h_{R_{0}}$ and $h_{C_{0}}$ based on the eigenvalue ratio
principle. 

\begin{comment}
    ; finally, in the spirit of \citet{trapani2018randomized}, we also
construct a randomised test for the null hypothesis of non-stationarity for
the matrix-valued time series $\X_{t}$ versus the alternative
hypothesis of stationarity.

\end{comment}
In conclusion, this is the first attempt to carry out inference on a MFM
with common stationary and non-stationary, $I(1)$, factors. We
make at least three contributions. First, we derive the full-blown
estimation theory for the stationary and the non-stationary factor spaces;
the ``anti-projection''\ approach which we
develop is, to the best of our knowledge, entirely novel. Secondly, we study
the estimation of the dimensions of the stationary and the non-stationary
factor spaces $h_{R_{0}}$, $h_{C_{0}}$, $h_{R_{1}}$, and $h_{C_{1}}$; whilst
this is an application, as mentioned above, of the eigenvalue ratio
principle, however this paper is the first contribution to address this
issue in the context of MFMs. Thirdly and finally, in the Supplement we study the spectrum of the second moment matrices studied hereafter; building on these, a test for the null hypothesis that the matrix-valued time series $\X_{t}$ can be readily derived, e.g. building on the randomised tests discussed in \citet{bt2}.

\medskip
The remainder of the paper is organised as follows. In Section \ref{assumptions}, we discuss our model and the main assumptions required for
our methodology. In Section \ref{estimation}, we report the full-fledged inferential theory. In particular, in Section \ref{flattened} we report a
set of preliminary, ``negative''\ results concerning the estimation of the $I(1)$ factor structure, and the failure of the iterative projection-based estimator; in Section \ref{orthogonal} we report the ``anti-projection''-based methodology, and the rates of convergence of the estimated non-stationary and stationary factor structures; and in Section \ref{number}, we propose an estimation technique for the ranks $h_{R_{0}}$, $h_{C_{0}}$, $h_{R_{1}}$, and $h_{C_{1}}$. Monte Carlo studies are reported in Section \ref{simulation}.%, and in Section \ref{applic} we
%illustrate our methodology via an application to yields. 
Section \ref{conclusion} concludes, also discussing possible extensions to e.g. the estimation of a cointegrated system. Technical lemmas, proofs and further evidence from synthetic data is contained in the Supplement.
\par
NOTATION. We use $\log \left( x\right) $ to denote the natural logarithm of $x$; we denote matrices using capitalised bold-face, e.g. $\mathbf{A}$, their elements using lower-case (e.g. $a_{ij}$ denotes the element of $\mathbf{A}$
in position $\left( i,j\right) $), and, for a generic $n\times m$ matrix $\mathbf{A}$, we define the space orthogonal to its column space as $\mathbf{A}_{\perp}$; the Frobenius norm is denoted as $\left\Vert \mathbf{A}\right\Vert _{F}=\left( \sum_{i=1}^{n}\sum_{j=1}^{m}a_{ij}^{2}\right) ^{1/2}$. Given a random variable $Y$, we use $\left\vert Y\right\vert _{\nu }$ for its $\mathcal{L}_{\nu }$-norm, i.e. $\left\vert Y\right\vert _{\nu }=\left(E\left\vert Y\right\vert ^{\nu }\right) ^{1/\nu }$, $\nu \geq 1$. Other, relevant notation is introduced later on in the paper.
\par
%% ********************************************************
\section{Model and assumptions\label{assumptions}}
%% ********************************************************
Recall (\ref{model})-(\ref{model_ft}):
\begin{align*}
&\X_{t}=\R_{1}\F_{1,t}\C_{1}^{\prime}+\R_{0}\F_{0,t}\C_{0}^{\prime}+\E_{t},\\
&\F_{1,t}=\F_{1,t-1}+\beps_{t}.
\end{align*}%
In the spirit of \textit{approximate} factor models (\citealp{chamberlainrothschild83}), we assume (weak) serial and cross sectional dependence. As far as the former is concerned, we will rely on the following
\begin{definition*}
%\label{bernoulli}
The $d$-dimensional sequence $\left\{ m_{t},-\infty <t<\infty \right\} $ forms an $\mathcal{L}_{\nu }$-decomposable Bernoulli shift if and only if $m_{t}=h\left( \eta _{t},\eta _{t-1},\dots\right) $, where: $\left\{ \eta _{t},-\infty <t<\infty \right\} $ is an \textit{i.i.d.} sequence with values in a measurable space $S$; $h\left( \cdot \right) :S^{\mathbb{N}}\rightarrow \mathbb{R}^{d}$ is a non random measurable function; $\left\vert m_{t}\right\vert
_{\nu }<\infty $; and $\left\vert m_{t}-m_{t,\ell }^{\ast }\right\vert _{\nu}\leq c_{0}\ell ^{-a}$, for some $c_{0}>0$ and $a>0$, where $m_{t,\ell}^{\ast }=h\left( \eta _{t},\dots,\eta _{t-\ell +1},\eta _{t-\ell ,t,\ell}^{\ast },\right. $ $\left. \eta _{t-\ell -1,t,\ell }^{\ast }\dots\right) $, with $\left\{ \eta _{s,t,\ell }^{\ast },-\infty <s,\ell ,t<\infty \right\} $ \textit{i.i.d.} copies of $\eta _{0}$, independent of $\left\{\eta_{t},-\infty <t<\infty \right\} $.
\end{definition*}

The concepts of Bernoulli shift and decomposability appeared first in \citet{ibragimov1962some}; see also \citet{wu2005} and \citet{berkeshormann}. Bernoulli shifts have proven a convenient way to model dependent time series, mainly due to their generality and to the fact that they are much easier to verify than e.g. mixing conditions: \citet{aue09} and \citet{linliu}, \textit{inter alia}, provide numerous examples of such DGPs, which include ARMA models, ARCH/GARCH sequences, and other nonlinear time series models (e.g. random coefficient autoregressive models and
threshold models). \\
\par
We are now ready to present our assumptions. Prior to doing so, we note that - for the sake of transparency of the proofs - we have tried to write \textit{primitive} assumptions. However, all our assumptions could be replaced by more high-level conditions, as we discuss after each assumption. Recall that the orthogonal complements to $\R_1$ and $\C_1$ are denoted as $\R_{1,\perp}$ and $\C_{1,\perp}$ respectively. 
\begin{assumption}
\label{as-1}It holds that: \textit{(i)} $\left\{ \Ve\left(\beps_{t}\right) ,-\infty <t<\infty \right\} $ is an $\mathcal{L} _{2+\delta }$-decomposable Bernoulli shift with $a>2$; 
\textit{(ii)} (a) $ \lim_{T\rightarrow \infty }E\left( T^{-1/2}\sum_{t=1}^{T}\beps_{t}\right) \left( T^{-1/2}\sum_{t=1}^{T}\beps_{t}\right)
^{\prime}=\Sigma _{F}^{\left( a\right) }$ with $\Sigma _{F}^{\left(a\right) }$ a positive definite $h_{R_{1}}\times h_{R_{1}}$ matrix; (b) $ \lim_{T\rightarrow \infty }E\left( T^{-1/2}\sum_{t=1}^{T}\beps_{t}\right) ^{\prime}\left(T^{-1/2}\sum_{t=1}^{T}\beps%
_{t}\right) =\Sigma _{F}^{\left( b\right) }$ with $\Sigma _{F}^{\left(b\right) }$ a positive definite $h_{C_{1}}\times h_{C_{1}}$ matrix.
\end{assumption}

\begin{assumption}
\label{as-2}It holds that: 
\textit{(i)} $\left\{ \Ve\left( \F_{0,t}\right) ,-\infty <t<\infty \right\} $ is an $\mathcal{L}_{4}$-decomposable Bernoulli shift with $a>2$;
\textit{(ii)} (a) $E\left( \F_{0,t}\F_{0,t}^{\prime}\right) =\Sigma _{F,1}^{\left( a\right) }$ with $\Sigma _{F,1}^{\left( a\right) }$ a positive definite $h_{R_{0}}\times h_{R_{0}}$ matrix; (b) $E\left( \F_{0,t}^{\prime}\F_{0,t}\right) =\Sigma _{F,1}^{\left( b\right) }$ with $\Sigma _{F,1}^{\left(
b\right) }$ a positive definite $h_{C_{0}}\times h_{C_{0}}$ matrix.
\end{assumption}

Assumptions \ref{as-1} and \ref{as-2} require $\left\{ \F_{0,t},-\infty <t<\infty \right\} $
and $\left\{ \beps_{t},-\infty <t<\infty \right\} $ to be
stationary sequences - hence, whilst \textit{conditional} heteroskedasticity
is allowed for, \textit{unconditional} heteroskedasticity is not.
In principle, it would be possible to consider this case too, by letting
- as suggested in Section 3.2.2 in \citet{horvath2023changepoint} - $\left\{ \F_{0,t},1\leq t\leq T\right\}
=\bigcup_{\ell =1}^{L}\left\{ \F_{0,t}^{\left( \ell \right)
},m_{\ell -1}\leq t\leq m_{\ell }\right\} $ with $m_{0}=1$ and $m_{L}=T$,
assuming that each sequence $\left\{ \F_{0,t}^{\left( \ell \right)
},-\infty <t<\infty \right\} $ satisfies Assumption \ref{as-1}. The main reason to have this assumption to model serial dependence is to be able to obtain bounds on the growth rates of partial sums, and other limiting theorems for summations involving $\left\{ \F_{0,t},-\infty <t<\infty \right\} $and $\left\{ \beps_{t},-\infty <t<\infty \right\} $. As mentioned above, all our technical results could be directly assumed (instead of shown using Assumptions \ref{as-1} and \ref{as-2}); this would make the set-up more general, but it would be less transparent. 

\begin{assumption}
\label{as-3}It holds that: 
\textit{(i)} $E\left( e_{ij,t}\right) =0$ and $E\left\vert e_{ij,t}\right\vert ^{4}\leq c_{0}$ for some $c_{0}<\infty $ and all $1\leq i\leq p_{1}$ and $1\leq j\leq p_{2}$; 
\textit{(ii)} (a) $ \sum_{t=1}^{T}\left\vert E\left( e_{ij,t}e_{i^{\prime}j^{\prime},s}\right) \right\vert \leq c_{0}$ for all $1\leq t\neq s\leq T$, $1\leq i,i^{\prime}\leq p_{1}$ and $1\leq j,j^{\prime}\leq p_{2}$; (b) $\sum_{i=1}^{p_{1}} \left\vert E\left(e_{ij,t}e_{i^{\prime}j^{\prime},s}\right) \right\vert\leq c_{0}$ for all $1\leq t,s\leq T$, $1\leq i\neq i^{\prime}\leq p_{1}$ and $1\leq j,j^{\prime}\leq p_{2}$; 
(c) $\sum_{j=1}^{p_{2}}\left\vert E\left( e_{ij,t}e_{i^{\prime}j^{\prime},s}\right) \right\vert \leq c_{0}$
for all $1\leq t,s\leq T$, $1\leq i,i^{\prime}\leq p_{1}$ and $1\leq j\neq j^{\prime}\leq p_{2}$; (d) $\sum_{j=1}^{p_{2}}\sum_{t=1}^{T}\left\vert E\left( e_{hj,t}e_{h^{\prime}k,s}\right) \right\vert \leq c_{0}$ for all $ 1\leq t\neq s\leq T$, $1\leq h,h^{\prime}\leq p_{1}$ and $1\leq j\neq k\leq p_{2}$; (e) $\sum_{i=1}^{p_{1}}\sum_{j=1}^{p_{2}}\left\vert E\left(e_{hj,t}e_{h^{\prime}j^{\prime},t}\right) \right\vert \leq c_{0}$ for all $ 1\leq t\leq T$, $1\leq i\neq i^{\prime}\leq p_{1}$ and $1\leq j\neq j^{\prime}\leq p_{2}$ 
\textit{(iii)} (a) $\sum_{i=1}^{p_{1}}\sum_{h=1}^{p_{2}}\sum_{t=1}^{T}\left\vert \cov\left(
e_{ik,t}e_{jk,t},e_{ih,s}e_{jh,s}\right) \right\vert \leq c_{0}$ for all $%
1\leq t\neq s\leq T$, $1\leq i\neq j^{\prime}\leq p_{1}$ and $1\leq h\neq
k\leq p_{2}$.
\end{assumption}

Assumption \ref{as-3} is a standard high-level requirement in this literature: in essence, it allows for the idiosyncratic components to be cross-sectionally
correlated, but only weakly, and it is virtually the same as Assumption D in \citet{hkyz2021} and Assumption B3 in \citet{he2023one}. The only difference with the extant literature is that we require the existence of only $4$ moments for the idiosyncratic components (as opposed to $8$); this is a direct consequence of Assumption \ref{as-5} below. 

\begin{assumption}
\label{as-4}It holds that: \textit{(i)} (a) $\left\Vert \R_{1}\right\Vert _{\max }<\infty $ and $\left\Vert \C_{1}\right\Vert _{\max
}<\infty $; (b) $\left\Vert \R_{0}\right\Vert _{\max }<\infty $ and $%
\left\Vert \C_{0}\right\Vert _{\max }<\infty $; \textit{(ii)} (a) $%
\R_{0}^{\prime}\R_{1,\perp}\neq 0$; (b) $\C_{0}^{\prime}\C_{1,\perp}\neq 0$.
\end{assumption}

Part \textit{(i)} of the assumption is standard. As far as part \textit{(ii)}
is concerned, we require it in order to avoid the case where, when
anti-projecting onto the orthogonal spaces $\R_{1,\perp}$ and $%
\C_{1,\perp}$, this annihilates also the common stationary
component, as well as the nonstationary one. 

\begin{assumption}
\label{as-5}It holds that: $\left\{ \beps_{t},1\leq t\leq
T\right\} $, $\left\{ \F_{0,t},1\leq t\leq T\right\} $ and $\left\{
e_{ij,t},1\leq t\leq T\right\} $ are three mutually independent groups, for
all $1\leq i\leq p_{1}$ and $1\leq j\leq p_{2}$.
\end{assumption}

Assumption 5 is the same as Assumption D in \citet{bai04}, and in principle it could
be relaxed, by replacing some of the assumptions above with more high-level
requirements (and strengthening the moment conditions).

%% ********************************************************
\section{Estimation\label{estimation}}
%% ********************************************************

We begin by presenting our ``negative'' results on the estimation (and of possible refinements thereof) of the row and column loading spaces associated with the common stochastic trends $\F_{1,t}$, and on the estimation of $\F_{1,t}$ itself, in Section \ref{flattened}. In Section \ref{flat}, we derive, as a benchmark, the results for the flattened estimators; in Section \ref{proj}, we show that the rates of convergence cannot be improved by applying the projection-based method directly. In Section \ref{orthogonal}, we present our methodology to refine the rates of convergence: in Section \ref{inferencercf1}, we estimate the stationary common component after projecting the nonstationary one onto its orthogonal complement, and remove them from the data; in Section \ref{projectCR}, we apply the projection-based methodology to refine the rates of convergence of the row and column loadings associated with the common stochastic trends; and, in Section \ref{proj-i0}, we consider a further iteration of this procedure to investigate whether it is possible to refine the estimates of the stationary common component. 

\subsection{Preliminary theory: negative results on the factor structures
estimation\label{flattened}}

In this section, we report a set of \textit{negative }results, which serve as motivation for our proposed algorithm. In particular, we begin by
studying ``flattened''\ estimators of the factor structure corresponding to the $I(1)$ component of equation (\ref{model}), i.e. estimators based on, essentially, vectorising the matrix-valued series $\X_{t}$, in Section \ref{flat}. We then consider ``projection-based''\ estimators of the aforementioned factor structure, based on preliminarily projecting the data $\X_{t}$ onto the space spanned by the columns of $\C_{1}$ (or, equivalently, the space spanned by the rows of $\R_{1}$), in Section \ref{proj}. In both cases, we show that, owing to the strong cross-sectional dependence induced by the factor structure in the $I(0) $ component of $\X_{t}$, estimation results in two major problems: (1) the common $I(1)$ factors cannot be estimated consistently (not even after a linear transformation), thus also making it impossible to estimate consistently the common $I(1)$ component $\R_{1}\F_{1,t}\C_{1}^{\prime}$, in turn making it impossible to estimate the $I(0)$ common factor structure; and (2) even
though the spaces spanned by the columns of $\C_{1}$ or $\R_{1}$ can be estimated consistently, projecting onto $\C_{1}$ or $\R_{1}$ does not improve the rates of convergence of such estimators.

\subsubsection{The flattened estimators\label{flat}}

Consider the ``flattened'' sample covariance matrices%
\begin{equation}
\M_{R_1}=\frac{1}{p_{1}p_{2}T^{2}}\sum_{t=1}^{T}\X_{t} \X_{t}^{\prime},\text{ \ \ and \ \ }\M_{C_1}=\frac{1}{p_{1}p_{2}T^{2}}\sum_{t=1}^{T}\X_{t}^{\prime}\X_{t}.
\label{m-flat}
\end{equation}%
The estimator of $\R_{1}$ ($\C_{1}$) is defined as the eigenvectors corresponding to the largest $h_{R_{1}}$ (resp. $h_{C_{1}}$) eigenvalues of $M_{R_1}$ (resp. $\M_{C_1}$), viz. 
\begin{equation}
\M_{R_1}\hat{\R}_{1} =\hat{\R}_{1}\Lambda _{R_{1}} \text{, \ \ and \ \ } \M_{C_1}\hat{\C}_{1} =\hat{\C}_{1}\Lambda _{C_{1}},
\label{def-c-hat}
\end{equation}%
where $\Lambda _{R_{1}}$ is a $h_{R_{1}}\times h_{R_{1}}$ diagonal matrix containing the largest $h_{R_{1}}$ eigenvalues of $\M_{R_1}$, and $\Lambda _{C_{1}}$\ is defined similarly, under the constraints $\hat{\R}_1^{\prime} \hat{\R}_1=p_{1}\I_{h_{R_{1}}}$ and $\hat{\C}_{1}%
^{\prime}\hat{\C}_{1}=p_{2}\I_{h_{C_{1}}}$.

\begin{theorem}\label{hat-estimates}
We assume that Assumptions \ref{as-1}-\ref{as-5} are satisfied. Then there exist: a $h_{R_{1}}\times h_{R_{1}}$ matrix $\bH_{R_{1}}$, with $\left\Vert \bH_{R_{1}}\right\Vert _{F}=O_{P}(1)$ and $\left\Vert \left(\bH_{R_{1}}\right)^{-1}\right\Vert _{F}=O_{P}(1)$; and a $h_{C_{1}}\times h_{C_{1}}$ matrix $\bH_{C_{1}}$, with $\left\Vert \bH_{C_{1}}\right\Vert _{F}=O_{P}(1)$ and $\left\Vert \left( \bH_{C_{1}}\right) ^{-1}\right\Vert _{F}=O_{P}(1) $, such that 
\begin{equation}
\left\Vert \hat{\R}_1-\R_1\bH_{R_{1}}\right\Vert _{F}=O_{P}\left( \frac{p_{1}^{1/2}}{T}\right) \text{, \ \ and \ \ }  
\left\Vert \hat{\C}_{1}-\C_{1}\bH_{C_{1}}\right\Vert _{F}=O_{P}\left( \frac{p_{2}^{1/2}}{T}\right) .  \notag
\end{equation}
\end{theorem}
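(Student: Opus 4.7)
\textbf{Proof sketch of Theorem \ref{hat-estimates}.} My plan is to exploit the eigenvector identity $\M_{R_{1}}\hat{\R}_{1}=\hat{\R}_{1}\Lambda_{R_{1}}$ and to expand $\X_{t}\X_{t}^{\prime}$ according to the model decomposition (\ref{model}), which yields nine terms. Only the pure $I(1)$ piece $\R_{1}\F_{1,t}\C_{1}^{\prime}\C_{1}\F_{1,t}^{\prime}\R_{1}^{\prime}$ is of the full order $p_{1}p_{2}T^{2}$ once summed over $t$; the remaining eight cross-products are genuinely smaller because each carries at least one factor that is either $I(0)$ (gaining a factor $T^{-1}$) or has weak cross-sectional dependence (losing a factor $p_{1}^{1/2}$ or $p_{2}^{1/2}$). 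I would therefore define the rotation matrix as
\[
\bH_{R_{1}}=\left(\frac{1}{p_{1}p_{2}T^{2}}\sum_{t=1}^{T}\F_{1,t}\C_{1}^{\prime}\C_{1}\F_{1,t}^{\prime}\right)\R_{1}^{\prime}\hat{\R}_{1}\,\Lambda_{R_{1}}^{-1},
\]
so that the defining equation can be rewritten as $\hat{\R}_{1}-\R_{1}\bH_{R_{1}}=\sum_{k=1}^{8}\mathbf{\Delta}_{k}$, each $\mathbf{\Delta}_{k}$ being one of the residual cross-products multiplied on the right by $\hat{\R}_{1}\Lambda_{R_{1}}^{-1}$ and rescaled by $(p_{1}p_{2}T^{2})^{-1}$.

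A preliminary step is to verify that $\|\bH_{R_{1}}\|_{F}$ and $\|\bH_{R_{1}}^{-1}\|_{F}$ are $O_{P}(1)$. The key ingredient is a functional CLT for $T^{-1/2}\sum_{s\leq t}\beps_{s}$, which follows from Assumption \ref{as-1}(i) under the summability $a>2$ and the $\mathcal{L}_{2+\delta}$ moment bound; this yields that $T^{-2}\sum_{t}\F_{1,t}\F_{1,t}^{\prime}$ converges in distribution to an almost surely positive definite limit driven by a $h_{R_{1}}$-dimensional Brownian motion with nonsingular covariance $\Sigma_{F}^{(a)}$. Together with Assumption \ref{as-4}(i)(a), which makes $p_{2}^{-1}\C_{1}^{\prime}\C_{1}$ bounded and bounded away from singularity, and with the normalisation $\hat{\R}_{1}^{\prime}\hat{\R}_{1}=p_{1}\I_{h_{R_{1}}}$, standard perturbation arguments for the top $h_{R_{1}}$ eigenvalues of $\M_{R_{1}}$ show that $\Lambda_{R_{1}}$ is $O_{P}(1)$ and bounded away from zero in probability; an orientation argument on $\R_{1}^{\prime}\hat{\R}_{1}$ pins down the two-sided boundedness of $\bH_{R_{1}}$.

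The substantive work is then to bound $\|\mathbf{\Delta}_{k}\|_{F}=O_{P}(p_{1}^{1/2}/T)$ for each $k$. I would group the residuals as: (i) $I(1)$--$I(0)$ cross-products, such as $(p_{1}p_{2}T^{2})^{-1}\sum_{t}\R_{1}\F_{1,t}\C_{1}^{\prime}\C_{0}\F_{0,t}^{\prime}\R_{0}^{\prime}\hat{\R}_{1}\Lambda_{R_{1}}^{-1}$; (ii) $I(1)$--idiosyncratic cross-products; (iii) the pure $I(0)$ term involving $\R_{0}\F_{0,t}\C_{0}^{\prime}\C_{0}\F_{0,t}^{\prime}\R_{0}^{\prime}$; (iv) the pure idiosyncratic term $\E_{t}\E_{t}^{\prime}$; and (v) $I(0)$--idiosyncratic cross-products. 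For (i)--(ii) the mutual independence in Assumption \ref{as-5} reduces second moments to tensor products of partial sums of $\beps_{t}$ with mean-zero stationary objects, and the maximal inequalities for $\mathcal{L}_{\nu}$-decomposable Bernoulli shifts applied to $\sum_{t}\F_{1,t}$, combined with Assumptions \ref{as-2}--\ref{as-3} for the matching factor, deliver the required rate. For (iii), Assumption \ref{as-2} gives $\|\sum_{t}\F_{0,t}\F_{0,t}^{\prime}\|_{F}=O_{P}(T)$, Assumption \ref{as-4}(i)(b) gives $\|\R_{0}\|_{F}=O(\sqrt{p_{1}})$ and $\|\C_{0}\|_{F}=O(\sqrt{p_{2}})$, and a direct submultiplicative count together with $\|\hat{\R}_{1}\|_{F}=O(\sqrt{p_{1}})$ yields exactly $O_{P}(p_{1}^{1/2}/T)$ after the $(p_{1}p_{2}T^{2})^{-1}$ rescaling. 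Parts (iv)--(v) are handled analogously using the weak-dependence bounds in Assumption \ref{as-3}(ii)--(iii). The main obstacle, and the one that drives the rest of the paper, is precisely part (iii): the strong cross-sectional dependence of $\R_{0}\F_{0,t}\C_{0}^{\prime}$ saturates the bound at $O_{P}(p_{1}^{1/2}/T)$, with no room for sharpening at this stage. The statement for $\hat{\C}_{1}$ follows by the symmetric argument applied to $\M_{C_{1}}$, with Assumption \ref{as-1}(ii)(b) playing the role of \ref{as-1}(ii)(a) and the roles of $p_{1}$ and $p_{2}$ interchanged.
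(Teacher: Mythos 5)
Your proposal is correct and follows essentially the same route as the paper's proof: the same eigenvector identity, the same nine-term expansion of $\M_{R_1}$, the identical definition of $\bH_{R_{1}}$ as $\bigl(T^{-2}\sum_{t}\F_{1,t}(p_2^{-1}\C_{1}^{\prime}\C_{1})\F_{1,t}^{\prime}\bigr)(p_1^{-1}\R_{1}^{\prime}\hat{\R}_{1})\Lambda_{R_{1}}^{-1}$, the same reliance on the FCLT for Bernoulli shifts to control $T^{-2}\sum_t\F_{1,t}\F_{1,t}^{\prime}$ and on Assumptions 3--5 for the remaining cross-products, and the same normalisation-based argument for the invertibility of $\bH_{R_{1}}$ (the paper makes this precise by showing $\bH_{R_{1}}^{\prime}\bH_{R_{1}}=\I_{h_{R_{1}}}+O_{P}(T^{-1})$). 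The only cosmetic difference is that the paper finds the $O_{P}(p_{1}^{1/2}/T)$ rate attained by both the pure $I(0)$ term and the $I(1)$--$I(0)$ cross-product, whereas you attribute it to the former alone, but this does not affect the bound.
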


The results in Theorem \ref{hat-estimates} are ``standard'': the $O_P(T^{-1})$ rate is a consequence of having cointegration, and it corresponds to the well-known notion of ``superconsistency'' in time series econometrics (\citealp{stock1987}); the main difference, in our context, is the lack of identification which is typical of factor models, so that $\hat{\R}_1$ and $\hat{\C}_{1}$ are only able to estimate a transformation of $\R_{1}$ and $\C_{1}$ respectively. The impact of the dimensionality (given by the terms $p_{1}^{1/2}$ and $p_{2}^{1/2}$ respectively) is also a standard
feature of high dimensional factor models: e.g., a similar result is found
in \citet{bai04} in the context of vector-valued time series.
\par
As we show in Lemma \ref{f-hat-negative} below, the rates in Theorem \ref{hat-estimates} are generally not enough to estimate consistently the space spanned by the common nonstationary factors $\F_{1,t}$. We consider the following, Least-Squares-based, estimator 
\begin{equation}
\hat{\F}_{1,t}=\frac{1}{p_{1}p_{2}}\hat{\R}_1^{\prime}%
\X_{t}\hat{\C}_{1}.  \label{f-hat}
\end{equation}

\begin{lemma}
\label{f-hat-negative}We assume that Assumptions \ref{as-1}-\ref{as-5} are
satisfied. Then it holds that $|| \hat{\F}_{1,t}-\left( \bH_{R_{1}}\right) ^{-1}$
$\F_{1,t}\left( \bH_{C_{1}}^{\prime}\right) ^{-1} ||_{F}=O_{P}(1) .$
\end{lemma}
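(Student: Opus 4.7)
I will substitute (\ref{model}) into the definition (\ref{f-hat}) and split the resulting expression into three pieces, namely
\begin{equation*}
\hat{\F}_{1,t} = \frac{1}{p_1 p_2}\hat{\R}_1' \R_1 \F_{1,t}\C_1'\hat{\C}_1 \;+\; \frac{1}{p_1 p_2}\hat{\R}_1'\R_0\F_{0,t}\C_0'\hat{\C}_1 \;+\; \frac{1}{p_1 p_2}\hat{\R}_1'\E_t\hat{\C}_1 \;\equiv\; T_1 + T_2 + T_3.
\end{equation*}
The strategy is to show that $T_1$ equals the ``target'' $(\bH_{R_1})^{-1}\F_{1,t}(\bH_{C_1}')^{-1}$ up to a vanishing error, while $T_2$ is exactly the obstruction that causes the $O_P(1)$ (and not smaller) error; $T_3$ will be negligible.

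\textbf{Handling $T_1$.} Writing $\hat{\R}_1 = \R_1 \bH_{R_1} + \Delta_R$ with $\|\Delta_R\|_F = O_P(p_1^{1/2}/T)$ from Theorem~\ref{hat-estimates}, and using the identification constraint $\hat{\R}_1'\hat{\R}_1/p_1 = \I_{h_{R_1}}$ together with $\|\R_1\|_F = O(p_1^{1/2})$ (from Assumption~\ref{as-4}(i)) and $\|\bH_{R_1}\|_F, \|\bH_{R_1}^{-1}\|_F = O_P(1)$, I obtain $\bH_{R_1}'(\R_1'\R_1/p_1)\bH_{R_1} = \I + O_P(T^{-1})$, and in turn $\hat{\R}_1'\R_1/p_1 = \bH_{R_1}^{-1} + O_P(T^{-1})$. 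The analogous argument for $\C_1$ yields $\C_1'\hat{\C}_1/p_2 = (\bH_{C_1}')^{-1} + O_P(T^{-1})$. Since $\F_{1,t}$ is a partial sum of stationary shocks, Assumption~\ref{as-1} gives $\|\F_{1,t}\|_F = O_P(T^{1/2})$ uniformly in $t\leq T$, so that
\begin{equation*}
T_1 - (\bH_{R_1})^{-1}\F_{1,t}(\bH_{C_1}')^{-1} = O_P(T^{-1/2}).
\end{equation*}

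\textbf{Handling $T_2$ and $T_3$.} For $T_2$, I use $\|\R_1'\R_0\|_F = O(p_1)$ and $\|\C_0'\C_1\|_F = O(p_2)$ (from Assumption~\ref{as-4}(i)), together with $\|\hat{\R}_1' \R_0\|_F \leq \|\bH_{R_1}'\|_F\|\R_1'\R_0\|_F + \|\Delta_R'\R_0\|_F = O_P(p_1)$ (and similarly for $\|\C_0'\hat{\C}_1\|_F = O_P(p_2)$), and $\|\F_{0,t}\|_F = O_P(1)$ from Assumption~\ref{as-2}. Multiplying these bounds yields $\|T_2\|_F = O_P(1)$, which does \emph{not} vanish. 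For $T_3$, replacing $\hat{\R}_1,\hat{\C}_1$ by $\R_1 \bH_{R_1}, \C_1 \bH_{C_1}$ and using the weak cross-sectional dependence in Assumption~\ref{as-3} gives $\|\R_1'\E_t \C_1\|_F^2 = O_P(p_1 p_2)$ by a direct variance computation, so $\|T_3\|_F = O_P((p_1 p_2)^{-1/2})$; the residual contribution of $\Delta_R$ and $\Delta_C$ is of smaller order.

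\textbf{Conclusion and main obstacle.} Combining the three bounds yields $\|\hat{\F}_{1,t} - (\bH_{R_1})^{-1}\F_{1,t}(\bH_{C_1}')^{-1}\|_F = O_P(1)$, with the dominant non-vanishing piece coming from $T_2$. The ``hard'' part is really conceptual rather than technical: one must recognise that $T_2$ cannot be improved below $O_P(1)$ because $\R_0'\R_1/p_1$ and $\C_0'\C_1/p_2$ are generically nonzero (the factor loadings of the $I(0)$ and $I(1)$ components are not orthogonal), so that no cross-sectional averaging in the projection of $\X_t$ onto $\hat{\R}_1$ and $\hat{\C}_1$ can attenuate the stationary common component. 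This is exactly the leakage phenomenon highlighted in the narrative preceding the lemma and the motivation for the anti-projection approach developed in Section~\ref{orthogonal}.
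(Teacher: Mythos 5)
Your proposal is correct and follows essentially the same route as the paper's proof: the same three-term decomposition of $\hat{\F}_{1,t}$, the same $O_P(T^{-1/2})$ control of the leading term via Theorem~\ref{hat-estimates} and $\|\F_{1,t}\|_F=O_P(T^{1/2})$, the same $O_P(1)$ bound on the stationary-factor term, and the same variance computation under Assumption~\ref{as-3} giving $O_P((p_1p_2)^{-1/2})$ for the idiosyncratic term. The only cosmetic difference is that you approximate $\hat{\R}_1'\R_1/p_1$ and $\C_1'\hat{\C}_1/p_2$ directly, whereas the paper uses the $\pm\,\hat{\R}_1(\bH_{R_1})^{-1}$ decomposition together with the exact normalisation $\hat{\R}_1'\hat{\R}_1=p_1\I$; these are equivalent.
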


Lemma \ref{f-hat-negative} does state that $\hat{\F}_{1,t}$ is
consistent: the estimation error is of order $O_{P}(1) $, which
is of a smaller order of magnitude than the signal $\F_{1,t}$ - a
standard application of the Functional Central Limit Theorem yields $%
\left\Vert \F_{1,t}\right\Vert _{F}=O_{P}\left( T^{1/2}\right) $.
However, the rate of convergence is slower than e.g. the one derived in
Theorem 2 in \citet{bai04}, where it is shown that - for an $N$-dimensional
vector-valued time series - the rate of convergence is found to be $%
O_{P}\left( N^{-1/2}\right) +O_{P}\left( T^{-3/2}\right) $ $=$ $o_{P}\left(
1\right) $.

In the case of Lemma \ref{f-hat-negative}, the $O_{P}(1) $ order
arises from the fact that, in (\ref{model}), the remainder $\U_{t}$
defined as%
\begin{equation}
\X_{t}=\R_1\F_{1,t}\C_{1}^{\prime}+\U_{t}=\R_1\F%
_{t}\C_{1}^{\prime}+\left( \R_{0}\F_{0,t}\C%
_{1}^{\prime}+\E_{t}\right) ,  \label{u-model}
\end{equation}%
also contains a factor structure. In turn, upon inspecting the proof of
Lemma \ref{f-hat-negative} (and comparing it with e.g. the proof of Theorem
2 in \citealp{bai04}), when applying cross-sectional averaging to $\U%
_{t}$, the strong cross-correlation arising from the presence of $\F%
_{0,t}$ prevents it from drifting to zero. Intuitively, this indicates that,
as can be expected, cross-sectional averaging does not help in the presence
of common factors.

\subsubsection{Projection-based estimation\label{proj}}

We now show that the same problems as in Lemma \ref{f-hat-negative} affects
the projection-based estimators of $\R_{1}$ and $\C_{1}$. These
could be constructed along the lines studied in \citet{he2023one}, \textit{inter
alia}, using 
\begin{equation*}
\hat{\M}_{R_{1}}^{\dagger}=\frac{1}{p_{1}p_{2}^{2}T^{2}}%
\sum_{t=1}^{T}\X_{t}\hat{\C}_{1}\hat{\C}_{1}^{\prime}\X_{t}^{\prime}\text{, \ \ and \ \ }\hat{\M}%
_{C_{1}}^{\dagger}=\frac{1}{p_{1}^{2}p_{2}T^{2}}\sum_{t=1}^{T}\X%
_{t}^{\prime}\hat{\R}_1\hat{\R}_1^{\prime}\X%
_{t},
\end{equation*}%
as the eigenvectors corresponding to the largest $h_{R_{1}}$ (resp. $h_{C_{1}}$)
eigenvalues of $\hat{\M}_{R_{1}}^{\dagger}$ (resp. $\hat{%
\M}_{C_{1}}^{\dagger}$), viz. 
\begin{equation}
\hat{\M}_{R_{1}}^{\dagger}\hat{\R}_1^{\dagger}=\hat{\R}_1^{\dagger}\Lambda _{R_{1}}^{\dagger},\text{ \ \ and \ \ }\hat{%
\M}_{C_{1}}^{\dagger}\hat{\C}_{1}^{\dagger}=\hat{\C}_{1}^{\dagger}\Lambda _{C_{1}}^{\dagger},  \label{proj-neg-est}
\end{equation}%
where $\Lambda _{R_{1}}^{\dagger}$ is a $h_{R_{1}}\times h_{R_{1}}$ diagonal matrix
containing the largest $h_{R_{1}}$ eigenvalues of $\M_{R_1}$, and $%
\Lambda _{C_{1}}^{\dagger}$\ is defined similarly, under the constraints $%
\left( \hat{\R}_1^{\dagger}\right) ^{\prime}\hat{\R}_1^{\dagger}=p_{1}\I_{h_{R_{1}}}$ and $\left( \hat{\C}_{1}^{\dagger}\right) ^{\prime}\hat{\C}_{1}^{\dagger}=p_{2}\I%
_{h_{C_{1}}}$.

\begin{lemma}
\label{proj-negative}We assume that Assumptions \ref{as-1}-\ref{as-5} are
satisfied. Then there exists a $h_{R_{1}}\times h_{R_{1}}$ matrix $\bH%
_{R_{1}}^{\dagger}$, with $\left\Vert \bH_{R_{1}}^{\dagger}\right\Vert
_{F}=O_{P}(1) $ and $\left\Vert \left( \bH_{R_{1}}^{\dagger}\right) ^{-1}\right\Vert _{F}=O_{P}(1) $, and a $h_{C_{1}}\times h_{C_{1}}$ matrix $\bH_{C_{1}}^{\dagger}$,
with $\left\Vert \bH_{C_{1}}^{\dagger}\right\Vert _{F}=O_{P}\left(
1\right) $ and $\left\Vert \left( \bH_{C_{1}}^{\dagger}\right)
^{-1}\right\Vert _{F}=O_{P}(1) $, such that%
\begin{equation}
\left\Vert \hat{\R}_1^{\dagger}-\R_1\bH_{R_{1}}^{\dagger}\right\Vert _{F}=O_{P}\left( \frac{p_{1}^{1/2}}{T}\right) , \text{ \ \ and \ \ }  \left\Vert \hat{\C}_{1}^{\dagger}-\C_{1}\bH_{C_{1}}^{\dagger}\right\Vert _{F}=O_{P}\left( \frac{p_{2}^{1/2}}{T}\right) .
\notag
\end{equation}
\end{lemma}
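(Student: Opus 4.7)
The plan is to apply an eigenvector perturbation argument to $\hat{\M}_{R_{1}}^{\dagger}$, closely mirroring the proof of Theorem \ref{hat-estimates} but tracking how the insertion of $\hat{\C}_{1}\hat{\C}_{1}^{\prime}$ between the two copies of $\X_{t}$ affects each term. Substituting $\X_{t}=\R_{1}\F_{1,t}\C_{1}^{\prime}+\R_{0}\F_{0,t}\C_{0}^{\prime}+\E_{t}$ splits $\hat{\M}_{R_{1}}^{\dagger}$ into nine summands. I would define the rotation
\[
\bH_{R_{1}}^{\dagger}=\frac{1}{p_{1}p_{2}^{2}T^{2}}\left(\sum_{t=1}^{T}\F_{1,t}\,\C_{1}^{\prime}\hat{\C}_{1}\hat{\C}_{1}^{\prime}\C_{1}\,\F_{1,t}^{\prime}\right)\R_{1}^{\prime}\hat{\R}_{1}^{\dagger}\bigl(\Lambda_{R_{1}}^{\dagger}\bigr)^{-1},
\]
so that, from $\hat{\M}_{R_{1}}^{\dagger}\hat{\R}_{1}^{\dagger}=\hat{\R}_{1}^{\dagger}\Lambda_{R_{1}}^{\dagger}$, the quantity $\hat{\R}_{1}^{\dagger}-\R_{1}\bH_{R_{1}}^{\dagger}$ equals $(\Lambda_{R_{1}}^{\dagger})^{-1}$ times the product of $\hat{\R}_{1}^{\dagger}$ with the remaining eight cross/noise summands.

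The decisive observation is that Assumption \ref{as-4}\textit{(ii)} does \emph{not} force $\C_{0}$ to be orthogonal to $\C_{1}$; combining this with Theorem \ref{hat-estimates} and the normalisation $\hat{\C}_{1}^{\prime}\hat{\C}_{1}=p_{2}\I_{h_{C_{1}}}$ yields $\|\C_{0}^{\prime}\hat{\C}_{1}\|_{F}=O_{P}(p_{2})$, so projection by $\hat{\C}_{1}$ does not suppress the stationary factor loadings. The bottleneck summand is therefore the pure $I(0)$ term
\[
\frac{1}{p_{1}p_{2}^{2}T^{2}}\R_{0}\left(\sum_{t=1}^{T}\F_{0,t}\,\C_{0}^{\prime}\hat{\C}_{1}\hat{\C}_{1}^{\prime}\C_{0}\,\F_{0,t}^{\prime}\right)\R_{0}^{\prime},
\]
for which $\|\C_{0}^{\prime}\hat{\C}_{1}\hat{\C}_{1}^{\prime}\C_{0}\|_{F}=O_{P}(p_{2}^{2})$, $\|\sum_{t}\F_{0,t}(\cdot)\F_{0,t}^{\prime}\|_{F}=O_{P}(p_{2}^{2}T)$ by stationarity (Assumption \ref{as-2}), and $\|\R_{0}\|^{2}=O(p_{1})$, producing an $O_{P}(T^{-1})$ contribution (in spectral norm) to $\hat{\M}_{R_{1}}^{\dagger}$ minus its signal part. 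Routine bookkeeping shows each of the other seven summands is $O_{P}(T^{-1})$ or smaller: the pure $\E_{t}$ term yields $O_{P}((p_{2}T)^{-1})$ via the weak cross-sectional dependence of Assumption \ref{as-3}; the $\E_{t}$--$I(1)$ and $\E_{t}$--$I(0)$ cross terms are handled by Assumption \ref{as-5} and standard CLT/invariance arguments; and the $I(1)$--$I(0)$ cross term, although of order $O_{P}(T^{-1})$, does not worsen the rate.

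The nonzero eigenvalues of the signal matrix $\frac{1}{p_{1}p_{2}^{2}T^{2}}\R_{1}(\sum_{t}\F_{1,t}\C_{1}^{\prime}\hat{\C}_{1}\hat{\C}_{1}^{\prime}\C_{1}\F_{1,t}^{\prime})\R_{1}^{\prime}$ are $\Theta_{P}(1)$ with a strictly positive eigengap, since $\C_{1}^{\prime}\hat{\C}_{1}\hat{\C}_{1}^{\prime}\C_{1}=p_{2}^{2}\bH_{C_{1}}\bH_{C_{1}}^{\prime}+o_{P}(p_{2}^{2})$, $\sum_{t}\F_{1,t}\F_{1,t}^{\prime}=O_{P}(T^{2})$ by Assumption \ref{as-1}, and $\R_{1}^{\prime}\R_{1}=O(p_{1})$. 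A Davis--Kahan bound applied to the leading $h_{R_{1}}$-dimensional eigenspace with perturbation of spectral-norm order $O_{P}(T^{-1})$ then yields a unit-norm eigenvector error of $O_{P}(T^{-1})$; the normalisation $(\hat{\R}_{1}^{\dagger})^{\prime}\hat{\R}_{1}^{\dagger}=p_{1}\I_{h_{R_{1}}}$ rescales this to $\|\hat{\R}_{1}^{\dagger}-\R_{1}\bH_{R_{1}}^{\dagger}\|_{F}=O_{P}(p_{1}^{1/2}/T)$. A symmetric argument on $\hat{\M}_{C_{1}}^{\dagger}$, with $\hat{\R}_{1}$ in place of $\hat{\C}_{1}$ and the roles of rows/columns swapped, delivers the bound for $\hat{\C}_{1}^{\dagger}$. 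The main obstacle is twofold: propagating the $O_{P}(p_{2}^{1/2}/T)$ error in $\hat{\C}_{1}$ through the nine-term expansion without losing control of the constants, and verifying that the signal eigenvalues remain asymptotically separated so that $\bH_{R_{1}}^{\dagger}$ and $(\bH_{R_{1}}^{\dagger})^{-1}$ stay $O_{P}(1)$ --- which in turn requires the rate of Theorem \ref{hat-estimates} to be strictly faster than the $O_{P}(T^{-1})$ perturbation identified above.
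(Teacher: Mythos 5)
Your proposal is correct and follows essentially the same route as the paper: the same nine-term expansion of $\hat{\M}_{R_{1}}^{\dagger}$, the same rotation matrix $\bH_{R_{1}}^{\dagger}$, the identification of the pure $I(0)$ summand (via $\|\C_{0}^{\prime}\hat{\C}_{1}\|_{F}=O_{P}(p_{2})$) as the dominant $O_{P}(T^{-1})$ perturbation, and the spectral lower bound $\|(\Lambda_{R_{1}}^{\dagger})^{-1}\|=O_{P}(1)$. The only difference is that your concluding Davis--Kahan step is superfluous: the eigenvector identity $\hat{\R}_{1}^{\dagger}=\hat{\M}_{R_{1}}^{\dagger}\hat{\R}_{1}^{\dagger}(\Lambda_{R_{1}}^{\dagger})^{-1}$ already delivers $\hat{\R}_{1}^{\dagger}-\R_{1}\bH_{R_{1}}^{\dagger}$ as the sum of the eight bounded remainder terms, which is how the paper finishes.
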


Lemma \ref{proj-negative} is, in essence, a negative result: despite
projecting $\X_{t}$ onto the space spanned by the columns of $\C_{1}$, the rate of convergence of the new estimator $\hat{\R}_1^{\dagger}$ does not improve over that of $\hat{\R}_1
$. Intuitively, this is due to the fact that, when projecting $\X%
_{t} $ onto $\C_{1}$, the effect on the ``signal''\ component $\R_1\F_{1,t}\C_{1}^{\prime}\C_{1}$ is to make it grow by a factor $\C_{1}^{\prime}\C_{1}\sim p_{2}$; on the other hand, the effect of such projecting on $\U%
_{t}$ in (\ref{u-model}) depends on the extent of cross-sectional dependence
in $\U_{t}$. If the columns of $\U_{t}$ are weakly
cross-correlated, the effect of projecting is that $\U_{t}\C_{1}$
will grow at a rate $O\left( p_{2}^{1/2}\right) $; in such a case, with the
signal growing as $p_{2}$, the signal-to-noise ratio would be enhanced,
thereby resulting in an estimate with a faster rate of convergence.
Conversely, in the presence of strong dependence among the columns of $%
\U_{t}$, the cross-sectional averaging in $\U_{t}\C_{1}$
will result in a rate proportional to $p_{2}$; in this case, the signal and
the noise would grow by the same factor, hence resulting in no enhancement
of the rates of convergence of the projection-based estimator.

\subsection{Inferential theory based on anti-projections\label{orthogonal}}

The (negative) results in Lemmas \ref{f-hat-negative} and \ref{proj-negative} suggest that, in order to enhance the rates of convergence of the estimated common factors and loadings, the stationary common factor structure needs to be filtered out first, and then a projection-based technique can be applied. Hence, in this section, we present the three stages of our algorithms and the corresponding theory. First, we propose an estimator of $\R_{0}$, $\C_{0}$ and $\F_{0,t}$, obtained after projecting away the $I(1)$ component onto the space orthogonal to the columns
of $\C_{1}$ or, equivalently, $\R_{1}$ (Section \ref{inferencercf1}); the output is a set of consistent (modulo a linear transformation)
estimators of $\R_{0}$, $\C_{0}$ and $\F_{0,t}$, and therefore of the common $I(0)$ component $\R_{0}\F_{0,t}\C_{0}^{\prime}$ - albeit with improvable rates of convergence. Second, we study the estimation of $\R_{1}$, $\C_{1}$ and $\F_{1,t}$, after subtracting the estimated common $I(0)$ component $\R_{0}\F_{0,t}\C_{0}^{\prime}$ from the data $\X_{t}$, and projecting these onto the space spanned by (the estimated) $\C_{1}$ or equivalently $\R_{1}$, thus taking advantage of the fact that, after removing the common $I(0)$ component from the data, cross-sectional dependence becomes substantially weaker (Section \ref{projectCR}); the output is a set of consistent (modulo a linear transformation) estimators of $\R_{1}$, $\C_{1}$ and $\F_{1,t}$, and therefore of the common $I(1)$ component $\R_1\F_{1,t}\C_{1}^{\prime}$ - with faster rates of convergence than the ones derived in Section \ref{flat} for $\R_{1}$ and $\C_{1}$. Third, we refine the rates of convergence obtained in the first step, by projecting the data $\X_{t}$ (minus the estimated common $I(1)$ component $\R_1\F_{1,t}\C_{1}^{\prime}$) onto the space
spanned by (the estimated) $\C_{0}$ or equivalently $\R_{0}$
(Section \ref{proj-i0}); the output is a set of consistent (modulo a linear
transformation) estimators of $\R_{0}$, $\C_{0}$ and $%
\F_{0,t}$, and therefore of the common $I(0)$ component 
$\R_{0}\F_{0,t}\C_{0}^{\prime}$, with faster rates
of convergence than the ones derived in the first step.

\subsubsection{Anti-projection based estimation of $\R_{0}$, $\C_{0}$ and $\F_{0,t}$\label{inferencercf1}}

Define the orthogonal (to the columns of $\C_{1}$) space and its corresponding sample version 
\begin{align}
&\C_{1,\perp}=\I_{p_{2}}-\C_{1}\left( \C%
_{1}^{\prime}\C_{1}\right) ^{-1}\C_{1}^{\prime},
\label{c-orthogonal} \\
&\hat{\C}_{1,\perp}=\I_{p_{2}}-\hat{\C}%
_{1}\left( \hat{\C}_{1}^{\prime}\hat{\C}_{1}\right)
^{-1}\hat{\C}_{1}^{\prime}.  \label{c-hat-orthogonal}
\end{align}%
The two matrices are: $p_{2}\times p_{2}$; symmetric; and idempotent. By the
same token, we can also define $\hat{\R}_{1,\perp}$ (as an
estimator of the space $\R_{1,\perp}$, orthogonal to the columns of 
$\R$), and study its use and its properties; % hereafter, we will focus on $\hat{\C}_{1,\perp}$, and its use, to avoid repetitions.
Define%
\begin{align}
&\hat{\X}_{t}^{C_{1}} =\X_{t}\hat{\C}_{1,\perp}, \text{ \ \ and \ \ }  \hat{\X}_{t}^{R_{1}} =\X_{t}^{\prime}\hat{\R}_{1,\perp},  \label{x-hat-2} \\
&\M_{R_{1},\perp}=\frac{1}{p_{1}p_{2}^{2}T}\sum_{t=1}^{T}\hat{\X}_{t}^{C_{1}}\left( \hat{\X}_{t}^{C_{1}}\right) ^{\prime},\text{ \ \ and \ \ }\M_{C_{1},\perp}=\frac{1}{p_{1}^{2}p_{2}T}\sum_{t=1}^{T}\hat{\X}_{t}^{R_{1}}\left(\hat{\X}_{t}^{R_{1}}\right)^{\prime}.\label{m-orthogonal}
\end{align}%
The estimator of $\R_{0}$ ($\C_{0}$) is defined as the
eigenvectors corresponding to the largest $h_{R_{1}}$ (resp. $h_{C_{1}}$)
eigenvalues of $\M_{R_{1},\perp}$ (resp. $\M_{C_{1},\perp}$), viz. 
\begin{equation}
\M_{R_{1},\perp}\hat{\R}_{0} =\hat{\R}_{0}\Lambda _{R_{0}}, \text{ \ \ and \ \ } 
\M_{C_{1},\perp}\hat{\C}_{0} =\hat{\C}_{0}\Lambda _{C_{0}},  \label{def-c1-hat}
\end{equation}%
where $\Lambda _{R_{0}}$ is a $h_{R_{1}}\times h_{R_{1}}$ diagonal matrix
containing the largest $h_{R_{1}}$ eigenvalues of $\M{R_{1},\perp}$, and $\Lambda _{C_{0}}$\ is defined similarly, under
the constraints $\hat{\R}_{0}^{\prime}\hat{\R}_{0}=p_{1}\I_{h_{R_{1}}}$ and $\hat{\C}_{0}^{\prime}\hat{\C}_{0}=p_{2}\I_{h_{C_{1}}}$.

\begin{theorem}
\label{rc1-hat}We assume that Assumptions \ref{as-1}-\ref{as-5} are satisfied. Then there exist: a $h_{R_{1}}\times h_{R_{1}}$ matrix $\bH%
_{R_0}$, with $\left\Vert \bH_{R_0}\right\Vert _{F}=O_{P}(1) $ and $\left\Vert \left( \bH_{R_0}\right) ^{-1}\right\Vert_{F}=O_{P}(1) $; and a $h_{C_{1}}\times h_{C_{1}}$ matrix $\bH_{C_0}$, with $\left\Vert \bH_{C_0}\right\Vert_{F}=O_{P}(1) $ and $\left\Vert \left( \bH_{C_0}\right)
^{-1}\right\Vert _{F}=O_{P}(1) $, such that
\begin{eqnarray}
\left\Vert \hat{\R}_{0}-\R_{0}\bH_{R_0}\right\Vert _{F} &=&O_{P}\left( \frac{p_{1}^{1/2}}{p_{2}^{1/2}T^{1/2}}\right) +O_{P}\left( \frac{p_{1}^{1/2}}{p_{1}p_{2}}\right) ,  \label{r1-hat}
\\
\left\Vert \hat{\C}_{0}-\C_{0}\bH_{C_0}\right\Vert _{F} &=&O_{P}\left( \frac{p_{2}^{1/2}}{p_{1}^{1/2}T^{1/2}}\right) +O_{P}\left( \frac{p_{2}^{1/2}}{p_{1}p_{2}}\right) .  \label{c1-hat}
\end{eqnarray}
\end{theorem}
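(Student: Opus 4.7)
The plan is to run a standard eigenvector perturbation argument on $\M_{R_{1},\perp}$, with the key preparatory work devoted to controlling the deviation of $\hat{\C}_{1,\perp}$ from $\C_{1,\perp}$ via Theorem \ref{hat-estimates}. First I would substitute (\ref{model}) into $\hat{\X}_t^{C_1}=\X_t\hat{\C}_{1,\perp}$ to obtain the decomposition
\[
\hat{\X}_t^{C_1} = A_t + B_t + D_t,\qquad A_t = \R_{0}\F_{0,t}\C_{0}'\hat{\C}_{1,\perp},\quad B_t = \R_{1}\F_{1,t}\C_{1}'\hat{\C}_{1,\perp},\quad D_t = \E_{t}\hat{\C}_{1,\perp}.
\]
The key preliminary estimate is $\|\C_{1}'\hat{\C}_{1,\perp}\|_F = O_P(p_2^{1/2}/T)$, which follows from $\hat{\C}_{1,\perp}\hat{\C}_{1}=0$ together with the identity $\hat{\C}_{1,\perp}\C_{1} = -\hat{\C}_{1,\perp}(\hat{\C}_{1}-\C_{1}\bH_{C_{1}})\bH_{C_{1}}^{-1}$ and the rate from Theorem \ref{hat-estimates}. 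This makes $B_t$ negligible compared with $A_t$, and an analogous argument shows $p_2^{-1}\C_{0}'\hat{\C}_{1,\perp}\C_{0} = p_2^{-1}\C_{0}'\C_{1,\perp}\C_{0} + o_P(1)$, with the latter limit positive definite by Assumption \ref{as-4}(ii)(b).

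Next I would identify the signal part of $\M_{R_{1},\perp}$ as
\[
M^{(s)} = \frac{1}{p_1p_2^2T}\sum_{t=1}^{T}\R_{0}\F_{0,t}(\C_{0}'\hat{\C}_{1,\perp}\C_{0})\F_{0,t}'\R_{0}',
\]
and, combining the limits above with Assumptions \ref{as-2} and \ref{as-4}, show that its $h_{R_{0}}$ largest eigenvalues are of order $1$, separated from zero and from the remaining eigenvalues, with associated eigenvectors spanning the column space of $\R_{0}$. Writing $\M_{R_{1},\perp}=M^{(s)}+\Delta$, the remainder $\Delta$ is the sum of: a quadratic term in $B_t$ (controlled by the previous paragraph); the pure idiosyncratic piece $\frac{1}{p_1p_2^2T}\sum_{t}\E_{t}\hat{\C}_{1,\perp}\E_{t}'$, whose operator norm is $O_P(1/(p_1p_2))$ via $\|\hat{\C}_{1,\perp}\|_2=1$ and Assumption \ref{as-3}; and three cross-terms. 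The dominant cross-term, proportional to $\sum_t\R_{0}\F_{0,t}\C_{0}'\hat{\C}_{1,\perp}\E_{t}'$, is handled by splitting $\hat{\C}_{1,\perp}=\C_{1,\perp}+(\hat{\C}_{1,\perp}-\C_{1,\perp})$: the first piece produces a centered sum of independent increments amenable to a standard CLT via Assumption \ref{as-5}, and the second is absorbed into a smaller remainder via Theorem \ref{hat-estimates}.

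The eigenidentity $\M_{R_{1},\perp}\hat{\R}_{0} = \hat{\R}_{0}\Lambda_{R_{0}}$ then yields
\[
\hat{\R}_{0} - \R_{0}\bH_{R_{0}} = \Delta\hat{\R}_{0}\Lambda_{R_{0}}^{-1},\qquad \bH_{R_{0}} = \frac{1}{p_1p_2^2T}\Bigl[\sum_{t}\F_{0,t}(\C_{0}'\hat{\C}_{1,\perp}\C_{0})\F_{0,t}'\Bigr]\frac{\R_{0}'\hat{\R}_{0}}{p_1}\Lambda_{R_{0}}^{-1},
\]
with $\|\bH_{R_{0}}^{\pm 1}\|_F=O_P(1)$ following from the signal analysis and from $\R_{0}'\hat{\R}_{0}/p_1$ being $O_P(1)$ and invertible in the limit. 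Substituting the bounds on $\Delta$ together with $\|\hat{\R}_{0}\|_F=O(p_1^{1/2})$ produces the two claimed rates: the $p_1^{1/2}/(p_2T)^{1/2}$ term from the CLT-controlled cross-term, and the $p_1^{1/2}/(p_1p_2)$ term from the idiosyncratic piece. The bound for $\hat{\C}_{0}$ is obtained by the symmetric argument with rows and columns of $\X_{t}$ interchanged.

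The main technical obstacle is that $\hat{\C}_{1,\perp}$ is a function of the full sample, including $\{\F_{0,t}\}$ and $\{\E_{t}\}$, so direct CLT or moment arguments cannot be applied to it as-is. The splitting $\hat{\C}_{1,\perp}=\C_{1,\perp}+(\hat{\C}_{1,\perp}-\C_{1,\perp})$ resolves this throughout: the first piece is deterministic in $t$ and so falls within the scope of Assumption \ref{as-5} and the Bernoulli-shift machinery of Assumptions \ref{as-1}--\ref{as-2}, while the residual is strictly of smaller order by a Theorem \ref{hat-estimates}-type bound on $\|\hat{\C}_{1,\perp}-\C_{1,\perp}\|_F$.
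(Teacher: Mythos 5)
Your proposal is correct and follows essentially the same route as the paper: the eigenidentity $\hat{\R}_{0}=\M_{R_{1},\perp}\hat{\R}_{0}\Lambda_{R_{0}}^{-1}$, annihilation of the $I(1)$ component via $\C_{1}^{\prime}\hat{\C}_{1,\perp}=O_{P}(p_{2}^{1/2}/T)$ (you obtain this directly from $\hat{\C}_{1,\perp}\hat{\C}_{1}=0$, the paper via $\Vert\hat{\C}_{1,\perp}-\C_{1,\perp}\Vert_{F}=O_{P}(1/T)$, which is the same estimate), the rotation $\bH_{R_{0}}$ built from the stationary signal together with $\R_{0}^{\prime}\hat{\R}_{0}/p_{1}$ and $\Lambda_{R_{0}}^{-1}$, and term-by-term bounds on the remainder, with the two claimed rates emerging from the idiosyncratic quadratic form and the $\F_{0,t}$--$\E_{t}$ cross term exactly as in the paper's proof. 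The only blemish is a spurious extra factor $1/p_{1}$ in your displayed formula for $\bH_{R_{0}}$, which is a normalization slip and does not affect the argument.
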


Equations (\ref{r1-hat}) and (\ref{c1-hat}) contain the rates of convergence
of $\hat{\R}_{0}$ and $\hat{\C}_{0}$; again, $\R_{0}$ and $\C_{0}$ are estimated modulo a transformation.
The estimators $\hat{\R}_{0}$ and $\hat{\C}_{0}$ are, in essence, projection-based estimators;
hence, their rates can be compared with the ones obtained e.g. in Theorem
3.1 in \citet{he2023one}. The two terms in (\ref{r1-hat}) and (\ref{c1-hat}) are
the same as found in \citet{he2023one}; we would like to point out that \citet{he2023one} obtain also further error terms, which in our case are absent. This is, essentially, due to the fact that the projection matrix, $\hat{\C}_{1,\perp}$, has a very fast rate of convergence to $\C_{1,\perp}$,\footnote{See Lemmas \ref{r-orth} and \ref{c-orth}.} with $\left\Vert \hat{\C}_{1,\perp}-\C_{1,\perp}\right\Vert _{F}^{2}=O_{P}\left(
T^{-2}\right)$.
\par
We now turn to the estimation of the stationary common factors $\F_{0,t}$. Using the Least Squares principle, we can define the following
estimator of $\F_{0,t}$
\begin{eqnarray}
\Ve\hat{\F}_{0,t} &=&\left[ \left( \hat{\C}_{0}^{\prime}\hat{\C}_{1,\perp}\left( \hat{\C}_{1,\perp}\right) ^{\prime}\hat{\C}_{0}\right) ^{-1}\otimes \left( \hat{\R}_{0}^{\prime}\hat{\R}_{1,\perp}\left( \hat{\R}_{1,\perp}\right) ^{\prime}\hat{\R}_{0}\right) ^{-1}\right]   \label{fi-hat-estimator} \\
&&\times \left[ \left( \hat{\C}_{0}^{\prime}\hat{\C}_{1,\perp}\otimes \hat{\R}_{0}^{\prime}\hat{\R}_{1,\perp}\right) \left( \left( \hat{\C}_{1,\perp}\right)^{\prime}\otimes \left( \hat{\R}_{1,\perp}\right) ^{\prime}\right) \right] \Ve\X_{t}.  \notag
\end{eqnarray}
Let for short $p_{1\wedge 2}=\min \left\{ p_{1},p_{2}\right\}$.
\begin{theorem}
\label{f1-hat}We assume that Assumptions \ref{as-1}-\ref{as-5} are
satisfied. Then%
\begin{equation}
\left\Vert \hat{\F}_{0,t}-\left( \bH_{R_{0}}\right) ^{-1}%
\F_{0,t}\left( \bH_{C_{0}}^{\prime}\right) ^{-1}\right\Vert
_{F}=O_{P}\left( \frac{1}{\sqrt{p_{1}p_{2}}}\right) +O_{P}\left( \frac{1}{%
p_{1\wedge 2}T}\right) ,  \label{f1-hat-point}
\end{equation}%
where $\bH_{R_{0}}$ and $\bH_{C_{0}}$,are defined in Theorem %
\ref{rc1-hat}, and%
\begin{equation}
\frac{1}{T}\sum_{t=1}^{T}\left\Vert \hat{\F}_{0,t}-\left( 
\bH_{R_{0}}\right) ^{-1}\F_{0,t}\left( \bH%
_{C_{0}}^{\prime}\right) ^{-1}\right\Vert _{F}^{2}=O_{P}\left( \frac{1}{%
p_{1}p_{2}}\right) +O_{P}\left( \frac{1}{\left( p_{1\wedge 2}T\right) ^{2}}%
\right) .  \label{f1-hat-l2}
\end{equation}
\end{theorem}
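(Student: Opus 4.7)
The plan is to first simplify the estimator and split it into three parts. The formula in (\ref{fi-hat-estimator}) is just the OLS estimator in the regression $\R_{1,\perp} \X_t \C_{1,\perp} \approx \R_{1,\perp} \R_0 \F_{0,t} \C_0' \C_{1,\perp}$, with population quantities replaced by sample ones; using the idempotence and symmetry of the projectors $\hat{\R}_{1,\perp}, \hat{\C}_{1,\perp}$ together with the Kronecker identity $\Ve(AXB)=(B'\otimes A)\Ve X$, the estimator collapses to
\begin{equation*}
\hat{\F}_{0,t} = (\hat{\R}_0' \hat{\R}_{1,\perp} \hat{\R}_0)^{-1}\hat{\R}_0' \hat{\R}_{1,\perp} \X_t \hat{\C}_{1,\perp} \hat{\C}_0 (\hat{\C}_0' \hat{\C}_{1,\perp} \hat{\C}_0)^{-1}.
\end{equation*}
Substituting the model for $\X_t$ decomposes this into an $I(1)$ leakage, a main $I(0)$ signal, and a noise contribution. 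I will carry out the analysis by writing $\hat{\R}_0 = \R_0\bH_{R_0}+\Delta_R$ and $\hat{\C}_0 = \C_0\bH_{C_0}+\Delta_C$, together with $\hat{\R}_{1,\perp} = \R_{1,\perp}+\Gamma_R$ and $\hat{\C}_{1,\perp} = \C_{1,\perp}+\Gamma_C$: Theorem \ref{rc1-hat} supplies the rates on $\|\Delta_R\|_F, \|\Delta_C\|_F$, and the lemmas referenced in the footnote to that theorem provide $\|\Gamma_R\|_F,\|\Gamma_C\|_F=O_P(T^{-1})$. Assumption \ref{as-4}(ii) then ensures that $\R_0'\R_{1,\perp}\R_0$ and $\C_0'\C_{1,\perp}\C_0$ have operator norms of exact order $p_1$ and $p_2$, so the two sandwich inverses behave as $O_P(p_1^{-1})$ and $O_P(p_2^{-1})$.

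Next I would bound the three pieces separately. For the noise contribution, the leading term is $\bH_{R_0}'\R_0' \R_{1,\perp} \E_t \C_{1,\perp} \C_0 \bH_{C_0}$; Assumption \ref{as-3} makes each of its entries a weighted sum of the $e_{ij,t}$ with variance $O(p_1 p_2)$, so its Frobenius norm is $O_P(\sqrt{p_1 p_2})$ and, divided by the two sandwich inverses, it contributes the $O_P(1/\sqrt{p_1 p_2})$ term in (\ref{f1-hat-point}). For the $I(1)$ leakage, $\R_{1,\perp}\R_1=0$ reduces $\hat{\R}_{1,\perp}\R_1$ to $\Gamma_R\R_1$, of operator norm $O_P(p_1^{1/2}/T)$, and analogously $\C_1'\hat{\C}_{1,\perp}=\C_1'\Gamma_C=O_P(p_2^{1/2}/T)$; combined with $\|\F_{1,t}\|_F=O_P(T^{1/2})$ (from the FCLT under Assumption \ref{as-1}) and the sandwich inverses, this yields a pointwise bound of $O_P(T^{-3/2})$, which is dominated by the claimed rate. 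The main $I(0)$ term will reproduce $\bH_{R_0}^{-1}\F_{0,t}(\bH_{C_0}')^{-1}$ at leading order; the residual gathers the interactions of $\Delta_R,\Delta_C,\Gamma_R,\Gamma_C$ with $\F_{0,t}$, and the dominant one---the first-order coupling of $\Delta_R$ with $\R_0\F_{0,t}$---will be controlled via Theorem \ref{rc1-hat} and the $O_P(1)$ bound on $\|\F_{0,t}\|_F$ from Assumption \ref{as-2} to produce the $O_P(1/(p_{1\wedge 2}T))$ term.

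Finally, for the $L^2$ bound (\ref{f1-hat-l2}) I would square each contribution and average in $t$. Independence across the three processes in Assumption \ref{as-5}, together with the fourth-moment control in Assumption \ref{as-3}, gives $T^{-1}\sum_t\|\R_0'\R_{1,\perp}\E_t\C_{1,\perp}\C_0\|_F^2=O_P(p_1 p_2)$, so the noise-squared average is $O_P(1/(p_1 p_2))$; stationarity of $\F_{0,t}$ turns the interaction contributions into the square of their pointwise rate, namely $O_P(1/(p_{1\wedge 2}T)^2)$; and $T^{-1}\sum_t\|\F_{1,t}\|_F^2=O_P(T)$ via the FCLT yields $O_P(T^{-3})$ for the $I(1)$ leakage, which is absorbed into the second bracket. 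The hardest step, I expect, will be the bookkeeping of all the cross-terms, and in particular the first-order coupling of $\Delta_R$ (and $\Delta_C$) with $\F_{0,t}$, which dictates the $1/(p_{1\wedge 2}T)$ rate; my plan is to fix the rotation as the least-squares match $\bH_{R_0}=(\R_0'\R_0)^{-1}\R_0'\hat{\R}_0$ (and analogously for $\bH_{C_0}$) so that $\R_0'\Delta_R=0$ exactly, eliminating that leading-order coupling and leaving a genuinely second-order residual consistent with the advertised rate.
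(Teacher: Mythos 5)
Your overall architecture coincides with the paper's: collapse the estimator to the sandwich form via idempotence, split into the $I(0)$ signal, the $I(1)$ leakage and the noise, and expand in the loading errors $\Delta_R,\Delta_C$ and the projector errors $\Gamma_R,\Gamma_C$. Your treatment of the noise term (each entry of $\hat{\R}_0'\hat{\R}_{1,\perp}\E_t\hat{\C}_{1,\perp}\hat{\C}_0$ has variance $O(p_1p_2)$ under Assumption \ref{as-3}, hence $O_P(1/\sqrt{p_1p_2})$ after the two $O_P(p_1^{-1})$, $O_P(p_2^{-1})$ inverses) is correct and matches what the paper's term $III$ delivers.

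The genuine gap is in the coupling of $\Delta_R$ (and $\Delta_C$) with the signal, which is exactly where the $O_P(1/(p_{1\wedge 2}T))$ term has to come from. The object that enters is not $\R_0'\Delta_R$ but the anti-projected error: the correction to the leading signal term is $-(\hat{\R}_0'\hat{\R}_{1,\perp}\hat{\R}_0)^{-1}\hat{\R}_0'\hat{\R}_{1,\perp}\Delta_R\,\bH_{R_0}^{-1}\F_{0,t}(\cdots)$. Bounding $\hat{\R}_0'\hat{\R}_{1,\perp}\Delta_R$ submultiplicatively with the Frobenius rate from Theorem \ref{rc1-hat} gives $p_1^{-1}\|\hat{\R}_0\|\,\|\Delta_R\|_F=O_P(p_2^{-1/2}T^{-1/2})+O_P((p_1p_2)^{-1})$, and the first term is \emph{not} dominated by $1/\sqrt{p_1p_2}+1/(p_{1\wedge 2}T)$ once, say, $p_1=p_2=p\gg T$. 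Your proposed remedy, choosing the least-squares rotation so that $\R_0'\Delta_R=0$, does not repair this: the projector sits between $\hat{\R}_0'$ and $\Delta_R$, so the orthogonality you engineer never acts, since $\R_0'\R_{1,\perp}\Delta_R=-p_1^{-1}\R_0'\R_1\R_1'\Delta_R\neq 0$ and carries the same order as before; moreover it redefines $\bH_{R_0},\bH_{C_0}$ away from the matrices fixed in Theorem \ref{rc1-hat}, which the statement explicitly references. The paper devotes a dedicated technical result, Lemma \ref{yong}, to precisely this point: it re-expands $\hat{\R}_0-\R_0\bH_{R_0}$ through the eigen-equation defining $\hat{\R}_0$ and bounds each constituent sum \emph{after} applying $\hat{\R}_{1,\perp}'$, so that $\sigma_{\max}[p_1^{-1}\hat{\R}_{1,\perp}'(\hat{\R}_0-\R_0\bH_{R_0})]$ (and its column analogue) acquires the extra cross-sectional attenuation that ultimately yields the $O_P(1/(p_1T))$, $O_P(1/(p_2T))$ and $O_P(1/\sqrt{p_1p_2T})$ contributions. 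Without an analogue of that lemma your argument stalls at $O_P(p_{1\wedge 2}^{-1/2}T^{-1/2})$. A secondary issue: your $O_P(T^{-3/2})$ bound on the $I(1)$ leakage is not in fact absorbed by the claimed rate when $p_1p_2\gg T^3$ and $p_{1\wedge 2}\gg T^{1/2}$; the paper's proof claims the sharper $O_P(1/(p_1p_2T^{3/2}))$ for this term, so you would either need to recover that extra factor or impose a restriction on the relative growth of $p_1,p_2,T$ for your version to close.
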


Theorem \ref{f1-hat} states the consistency of the estimate of the space
spanned by the common stationary factors. The rates in the theorem can be
compared with Theorem 3.5(1) in \citet{he2023one}: the $O_{P}\left( \left(
p_{1}p_{2}\right)^{-1/2}\right) $ component is the same as in our case, and it can
be viewed as a non-improvable component of the estimator. Conversely, in %
\citet{he2023one} the $O_{P}\left( \left(p_{1\wedge 2}T\right)^{-1}\right) $ component is
replaced by an $O_{P}\left( \left(p_{1\wedge 2}T^{1/2}\right)^{-1}\right) $ term. In
our case, this difference arises from using $\hat{\C}_{1,\perp}$
and $\hat{\R}_{1,\perp}$.

\subsubsection{Projected estimation of $\C_{1}$, $\R_{1}$ and $\F_{1,t}$\label{projectCR}}
Consider now the ``filtered''\ data%
\begin{equation}
\proj{\X}_{t}=\X_{t}-\hat{\R}_{0}\hat{\F}_{0,t}\hat{\C}_{0}^{\prime},  \label{project-x}
\end{equation}%
and the corresponding projected covariance matrix%
\begin{equation*}
\proj{\M}_{R_{1}}=\frac{1}{p_{1}p_{2}^{2}T^{2}}\sum_{t=1}^{T}\proj{\X}_{t}\hat{\C}_{1}\hat{\C}_{1}^{\prime}\proj{\X}_{t}^{\prime}.
\end{equation*}%
Letting $\widetilde{\Lambda }_{R_{1}}$ be a $h_{R_{1}}\times h_{R_{1}}$ diagonal matrix
containing the largest $h_{R_{1}}$ eigenvalues of $\proj{\M}_{R_{1}}$, we
can define the estimator of $\R_{1}$ as the solution to the
eigenvalue/eigenvector problem 
\begin{equation}
\proj{\M}_{R_{1}}\widetilde{\R}_1=\widetilde{\R}_1\widetilde{\Lambda }_{R_{1}}.  \label{r-tilde}
\end{equation}%
We can define, analogously%
\begin{equation*}
\proj{\M}_{C_{1}}=\frac{1}{p_{1}^{2}p_{2}T^{2}}\sum_{t=1}^{T}\proj{\X}_t^{\prime}\hat{\R}_1\hat{\R}_1^{\prime}\proj{\X}_t,
\end{equation*}%
and subsequently obtain the projected estimator of $\C_{1}$\ as the
solution of 
\begin{equation}
\proj{\M}_{C_{1}}\widetilde{\C}_{1}=\widetilde{\C}_{1}\widetilde{\Lambda }_{C_{1}},  \label{c-tilde}
\end{equation}%
where $\widetilde{\Lambda }_{C_{1}}$\ is defined, similarly to $\widetilde{%
\Lambda }_{R_{1}}$, as a $h_{C_{1}}\times h_{C_{1}}$ diagonal matrix containing the
largest $h_{R_{1}}$ eigenvalues of $\proj{\M}_{R_{1}}$.

\begin{theorem}
\label{rc-tilde}We assume that Assumptions \ref{as-1}-\ref{as-5} are
satisfied. Then there exist: a $h_{R_{1}}\times h_{R_{1}}$ matrix $\widetilde{%
\bH}_{R_{1}}$, with $\left\Vert \widetilde{\bH}_{R_{1}}\right\Vert
_{F}=O_{P}(1) $ and $\left\Vert \left( \widetilde{\bH}%
_{R_{1}}\right) ^{-1}\right\Vert _{F}=O_{P}(1) $; and a $h_{C_{1}}\times
h_{C_{1}}$ matrix $\widetilde{\bH}_{C_{1}}$, with $\left\Vert \widetilde{%
\bH}_{C_{1}}\right\Vert _{F}=O_{P}(1) $ and $\left\Vert
\left( \widetilde{\bH}_{C_{1}}\right) ^{-1}\right\Vert _{F}=O_{P}\left(
1\right) $, such that%
\begin{align}
\left\Vert \widetilde{\R}_1-\R_1\widetilde{\bH}%
_{R_{1}}\right\Vert _{F} &=O_{P}\left( \frac{p_{1}^{1/2}}{p_{2}^{1/2}T}\right)
+O_{P}\left( \frac{p_{1}^{1/2}}{T^{2}}\right) +O_{P}\left( \frac{1}{T^{3/2}}%
\right) ,  \label{r-hat-tilde} \\
\left\Vert \widetilde{\C}_{1}-\C_{1}\widetilde{\bH}%
_{C_{1}}\right\Vert _{F} &=O_{P}\left( \frac{p_{2}^{1/2}}{p_{1}^{1/2}T}\right)
+O_{P}\left( \frac{p_{2}^{1/2}}{T^{2}}\right) +O_{P}\left( \frac{1}{T^{3/2}}%
\right) .  \label{c-hat-tilde}
\end{align}
\end{theorem}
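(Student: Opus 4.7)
The plan is to substitute $\proj{\X}_t = \R_1 \F_{1,t} \C_1^\prime + \W_t + \E_t$ into $\proj{\M}_{R_1}$, where $\W_t \equiv \R_0 \F_{0,t} \C_0^\prime - \hat{\R}_0 \hat{\F}_{0,t} \hat{\C}_0^\prime$ is the first-stage estimation error of the common stationary component. Expanding $\proj{\M}_{R_1}$ yields nine terms, whose dominant (``signal'') piece is
$$\mathcal{S} = \frac{1}{p_1 p_2^2 T^2}\,\R_1 \left(\sum_{t=1}^T \F_{1,t}\, \C_1^\prime \hat{\C}_1 \hat{\C}_1^\prime \C_1\, \F_{1,t}^\prime\right) \R_1^\prime.$$
From the eigen-equation $\proj{\M}_{R_1}\widetilde{\R}_1 = \widetilde{\R}_1 \widetilde{\Lambda}_{R_1}$ I would set
$$\widetilde{\bH}_{R_1} = \left(\frac{1}{p_1 p_2^2 T^2}\sum_{t=1}^T \F_{1,t}\, \C_1^\prime \hat{\C}_1 \hat{\C}_1^\prime \C_1\, \F_{1,t}^\prime\right) \R_1^\prime \widetilde{\R}_1 \widetilde{\Lambda}_{R_1}^{-1},$$
which turns the eigen-equation into the identity $\widetilde{\R}_1 - \R_1 \widetilde{\bH}_{R_1} = (\proj{\M}_{R_1} - \mathcal{S})\widetilde{\R}_1 \widetilde{\Lambda}_{R_1}^{-1}$. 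The stated rates then follow by bounding the eight remainder terms and using $\|\widetilde{\R}_1\|_F = O(p_1^{1/2})$ together with $\|\widetilde{\Lambda}_{R_1}^{-1}\|_F = O_P(1)$.

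First I would verify that $\widetilde{\Lambda}_{R_1}$ is invertible with $O_P(1)$ inverse and that $\widetilde{\bH}_{R_1}$ has $O_P(1)$ norm with $O_P(1)$ inverse. Theorem \ref{hat-estimates} delivers $p_2^{-1}\C_1^\prime \hat{\C}_1 = (p_2^{-1}\C_1^\prime \C_1)\bH_{C_1} + o_P(1)$, while Assumption \ref{as-1} and the associated FCLT make $T^{-2}\sum_t \F_{1,t}\F_{1,t}^\prime$ positive definite in the limit; together these ensure that the $h_{R_1}$ largest eigenvalues of $\mathcal{S}$ are bounded away from zero, and Weyl's inequality transfers this to $\widetilde{\Lambda}_{R_1}$ once the remainder bounds of the next step are in place. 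The eight remainder terms then split into three classes. The pure-$\E$ quadratic $\frac{1}{p_1 p_2^2 T^2}\sum_t \E_t \hat{\C}_1 \hat{\C}_1^\prime \E_t^\prime$ is handled as in the stationary theory (using Assumption \ref{as-3}) and produces the $O_P(p_1^{1/2}/(p_2^{1/2}T))$ contribution. The mixed signal$\times\E$ cross term $\frac{1}{p_1 p_2^2 T^2}\R_1 \sum_t \F_{1,t}\C_1^\prime \hat{\C}_1 \hat{\C}_1^\prime \E_t^\prime$ is bounded by exploiting the independence of $\beps_t$ and $\E_t$ (Assumption \ref{as-5}) via Abel summation, yielding the $O_P(p_1^{1/2}/T^2)$ contribution. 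The five remaining terms involving $\W_t$---the quadratic $\sum_t \W_t \hat{\C}_1 \hat{\C}_1^\prime \W_t^\prime$ and the mixed signal$\times\W$ and $\E\times\W$ pieces---are bounded by plugging in the rates of Theorems \ref{rc1-hat} and \ref{f1-hat} for $\hat{\R}_0,\hat{\C}_0,\hat{\F}_{0,t}$, and after normalisation give the $O_P(T^{-3/2})$ residual. The proof of (\ref{c-hat-tilde}) is symmetric under $p_1 \leftrightarrow p_2$ and $R \leftrightarrow C$.

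The main technical obstacle is the cross term $\frac{1}{p_1 p_2^2 T^2}\R_1 \sum_t \F_{1,t} \C_1^\prime \hat{\C}_1 \hat{\C}_1^\prime \W_t^\prime$, because $\W_t$ is \emph{not} independent of the random walk $\F_{1,t}$: the first-stage estimators $\hat{\R}_0,\hat{\C}_0,\hat{\F}_{0,t}$ are constructed from $\X_t$, which itself contains $\F_{1,t}$. My strategy is to decompose $\W_t$ along the lines of the expansions inside the proofs of Theorems \ref{rc1-hat}--\ref{f1-hat}, splitting it into a piece linear in $\F_{0,t}$, a piece linear in $\E_t$, and a higher-order remainder in the first-stage estimation errors. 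The linear pieces are independent of $\beps_t$ by Assumption \ref{as-5}, so Abel summation against $\F_{1,t}$ again delivers sharp rates; the higher-order remainder is controlled by multiplying together the already-established rates of Theorems \ref{hat-estimates}, \ref{rc1-hat} and \ref{f1-hat}, and its contribution is absorbed into the $O_P(T^{-3/2})$ term of (\ref{r-hat-tilde}).
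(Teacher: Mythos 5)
Your proposal is correct and follows essentially the same route as the paper: the same eigen-equation identity, the same definition of $\widetilde{\bH}_{R_{1}}$ as the signal block times $p_1^{-1}\R_1^{\prime}\widetilde{\R}_1\widetilde{\Lambda}_{R_1}^{-1}$, the same nine-term expansion with $\W_t=\R_0\F_{0,t}\C_0^{\prime}-\hat{\R}_0\hat{\F}_{0,t}\hat{\C}_0^{\prime}$, and the same resolution of the key difficulty (dependence of $\W_t$ on $\F_{1,t}$) by further decomposing $\W_t$ into first-stage estimation-error pieces and bounding the cross sum $\sum_t(\hat{\F}_{0,t}-\bH_{R_0}^{-1}\F_{0,t}(\bH_{C_0}^{\prime})^{-1})\F_{1,t}^{\prime}$ separately, exactly as the paper does via its expansion of the stationary-component error and its Lemma on that cross sum. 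The only slip is a bookkeeping one: the dominant $O_P(p_1^{1/2}/(p_2^{1/2}T))$ contribution actually comes from the signal$\times\E$ cross term (the pure-$\E$ quadratic is the smaller $O_P(p_1^{1/2}/(p_2T))$), but this does not affect the validity of the argument or the stated rates.
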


Using the Least Squares principle, we can propose the following estimator of $\F_{1,t}$%
\begin{equation}
\widetilde{\F}_{1,t}=\frac{1}{p_{1}p_{2}}\widetilde{\R}_1%
^{\prime}\proj{\X}_t\widetilde{\C}_{1}.
\label{f-tilde-estimator}
\end{equation}

Let $\widetilde{\zeta }_{1}=\min \left\{ p_{1}^{1/2}p_{2}^{1/2},p_{1\wedge
2}^{1/2}T^{1/2},T^{3/2}\right\} $.

\begin{theorem}
\label{f-tilde}We assume that Assumptions \ref{as-1}-\ref{as-5} are
satisfied. Then%
\begin{equation}
\left\Vert \widetilde{\F}_{1,t}-\left( \widetilde{\bH}%
_{R_{1}}\right) ^{-1}\F_{1,t}\left( \widetilde{\bH}%
_{C_{1}}^{\prime}\right) ^{-1}\right\Vert _{F}  \label{f-tilde-point}
=O_{P}\left( \widetilde{\zeta }_{1}^{-1}\right)  , 
\notag
\end{equation}%
where $\widetilde{\bH}_{R_{1}}$ and $\widetilde{\bH}_{C_{1}}$%
,are defined in Theorem \ref{rc-tilde}, and%
\begin{equation}
\frac{1}{T}\sum_{t=1}^{T}\left\Vert \widetilde{\F}_{1,t}-\left( 
\widetilde{\bH}_{R_{1}}\right) ^{-1}\F_{1,t}\left( \widetilde{%
\bH}_{C_{1}}^{\prime}\right) ^{-1}\right\Vert _{F}^{2}
\label{f-tilde-l2} =O_{P}\left( \widetilde{\zeta }_{1}^{-2}\right) .  \notag
\end{equation}
\end{theorem}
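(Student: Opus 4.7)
The plan is to substitute $\proj{\X}_{t}=\R_{1}\F_{1,t}\C_{1}^{\prime}+\boldsymbol{\Delta}_{t}+\E_{t}$, where $\boldsymbol{\Delta}_{t}=\R_{0}\F_{0,t}\C_{0}^{\prime}-\hat{\R}_{0}\hat{\F}_{0,t}\hat{\C}_{0}^{\prime}$ is the filtering residual, into the definition (\ref{f-tilde-estimator}) of $\widetilde{\F}_{1,t}$. This yields
\begin{equation*}
\widetilde{\F}_{1,t}=\underbrace{\frac{1}{p_{1}p_{2}}\widetilde{\R}_{1}^{\prime}\R_{1}\F_{1,t}\C_{1}^{\prime}\widetilde{\C}_{1}}_{A_{t}}+\underbrace{\frac{1}{p_{1}p_{2}}\widetilde{\R}_{1}^{\prime}\boldsymbol{\Delta}_{t}\widetilde{\C}_{1}}_{B_{t}}+\underbrace{\frac{1}{p_{1}p_{2}}\widetilde{\R}_{1}^{\prime}\E_{t}\widetilde{\C}_{1}}_{C_{t}},
\end{equation*}
to be analysed term by term using the expansions $\widetilde{\R}_{1}=\R_{1}\widetilde{\bH}_{R_{1}}+\boldsymbol{\Psi}_{R}$ and $\widetilde{\C}_{1}=\C_{1}\widetilde{\bH}_{C_{1}}+\boldsymbol{\Psi}_{C}$, whose remainder norms are controlled by Theorem \ref{rc-tilde}, together with the normalisations $\widetilde{\R}_{1}^{\prime}\widetilde{\R}_{1}=p_{1}\I_{h_{R_{1}}}$ and $\widetilde{\C}_{1}^{\prime}\widetilde{\C}_{1}=p_{2}\I_{h_{C_{1}}}$.

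For the leading term $A_{t}$, combining the normalisations with the expansion of $\widetilde{\R}_{1}$ yields the standard factor-model identity $p_{1}^{-1}\widetilde{\R}_{1}^{\prime}\R_{1}=\widetilde{\bH}_{R_{1}}^{-1}+\boldsymbol{\Xi}_{R}$, with $\boldsymbol{\Xi}_{R}=p_{1}^{-1}\boldsymbol{\Psi}_{R}^{\prime}\R_{1}$ satisfying $\|\boldsymbol{\Xi}_{R}\|_{F}=O_{P}(p_{2}^{-1/2}T^{-1})+O_{P}(T^{-2})+O_{P}(p_{1}^{-1/2}T^{-3/2})$ by Theorem \ref{rc-tilde} and Assumption \ref{as-4}(i); analogously, $p_{2}^{-1}\C_{1}^{\prime}\widetilde{\C}_{1}=(\widetilde{\bH}_{C_{1}}^{\prime})^{-1}+\boldsymbol{\Xi}_{C}$. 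Since $\|\F_{1,t}\|_{F}=O_{P}(T^{1/2})$ by the invariance principle entailed by Assumption \ref{as-1}, expanding the resulting double product isolates $\widetilde{\bH}_{R_{1}}^{-1}\F_{1,t}(\widetilde{\bH}_{C_{1}}^{\prime})^{-1}$ and leaves a remainder bounded by $\|\F_{1,t}\|_{F}(\|\boldsymbol{\Xi}_{R}\|_{F}+\|\boldsymbol{\Xi}_{C}\|_{F})$, which is $O_{P}(p_{1\wedge 2}^{-1/2}T^{-1/2})+O_{P}(T^{-3/2})=O_{P}(\widetilde{\zeta}_{1}^{-1})$; the $T^{-3/2}$ component arises here from the $T^{-2}$ contribution to $\|\boldsymbol{\Psi}_{R}\|_{F},\|\boldsymbol{\Psi}_{C}\|_{F}$ inflated by the $T^{1/2}$ growth of the $I(1)$ common factors. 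The term $C_{t}$ splits, after substituting the expansions of $\widetilde{\R}_{1}$ and $\widetilde{\C}_{1}$, into a leading piece $(p_{1}p_{2})^{-1}\widetilde{\bH}_{R_{1}}^{\prime}\R_{1}^{\prime}\E_{t}\C_{1}\widetilde{\bH}_{C_{1}}$ of order $O_{P}((p_{1}p_{2})^{-1/2})$ (Assumption \ref{as-3} gives $\mathrm{Var}(\R_{1}^{\prime}\E_{t}\C_{1})=O(p_{1}p_{2})$ by a direct variance computation under weak cross-sectional dependence), plus three cross terms controlled via $\|\boldsymbol{\Psi}_{R}\|_{F}$, $\|\boldsymbol{\Psi}_{C}\|_{F}$ and $\|\E_{t}\|_{2}=O_{P}((p_{1}+p_{2})^{1/2})$.

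The main obstacle is $B_{t}$. I telescope
\begin{equation*}
\hat{\R}_{0}\hat{\F}_{0,t}\hat{\C}_{0}^{\prime}-\R_{0}\F_{0,t}\C_{0}^{\prime}=(\hat{\R}_{0}-\R_{0}\bH_{R_{0}})\hat{\F}_{0,t}\hat{\C}_{0}^{\prime}+\R_{0}\bH_{R_{0}}\bigl(\hat{\F}_{0,t}-\bH_{R_{0}}^{-1}\F_{0,t}(\bH_{C_{0}}^{\prime})^{-1}\bigr)\hat{\C}_{0}^{\prime}+\R_{0}\F_{0,t}(\bH_{C_{0}}^{\prime})^{-1}(\hat{\C}_{0}-\C_{0}\bH_{C_{0}})^{\prime},
\end{equation*}
and bound the resulting three sub-terms of $(p_{1}p_{2})^{-1}\widetilde{\R}_{1}^{\prime}\boldsymbol{\Delta}_{t}\widetilde{\C}_{1}$ by combining $\|\widetilde{\R}_{1}\|_{2}=p_{1}^{1/2}$, $\|\widetilde{\C}_{1}\|_{2}=p_{2}^{1/2}$, $\|\hat{\R}_{0}\|_{2}\leq p_{1}^{1/2}$, $\|\hat{\C}_{0}\|_{2}\leq p_{2}^{1/2}$ with the rates of Theorem \ref{rc1-hat} on $\hat{\R}_{0}-\R_{0}\bH_{R_{0}}$ and $\hat{\C}_{0}-\C_{0}\bH_{C_{0}}$ and of Theorem \ref{f1-hat} on $\hat{\F}_{0,t}-\bH_{R_{0}}^{-1}\F_{0,t}(\bH_{C_{0}}^{\prime})^{-1}$. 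The delicate point is that $\R_{0}^{\prime}\R_{1}$ and $\C_{0}^{\prime}\C_{1}$ need not vanish (Assumption \ref{as-4}(ii) only rules out full alignment, not partial overlap), so the projection by $\widetilde{\R}_{1}^{\prime}$ and $\widetilde{\C}_{1}$ does not automatically annihilate the stationary common component; each sub-term therefore couples three separate rates, and verifying that every resulting combination is dominated by $\widetilde{\zeta}_{1}^{-1}$ reduces to elementary but tedious comparisons of $O_{P}$ terms. Finally, the $\mathcal{L}_{2}$-average bound (\ref{f-tilde-l2}) follows from the same decomposition by squaring and averaging, using $T^{-1}\sum_{t=1}^{T}\|\F_{1,t}\|_{F}^{2}=O_{P}(T)$ and $T^{-1}\sum_{t=1}^{T}\|\F_{0,t}\|_{F}^{2}=O_{P}(1)$ in place of the pointwise bounds.
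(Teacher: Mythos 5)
Your proposal follows essentially the same route as the paper's proof: the same three-way split of $\proj{\X}_{t}$ into the $I(1)$ signal, the filtering residual $\R_{0}\F_{0,t}\C_{0}^{\prime}-\hat{\R}_{0}\hat{\F}_{0,t}\hat{\C}_{0}^{\prime}$, and the idiosyncratic term; the same use of the Theorem \ref{rc-tilde} expansions together with the normalisations $\widetilde{\R}_{1}^{\prime}\widetilde{\R}_{1}=p_{1}\I_{h_{R_{1}}}$ and $\widetilde{\C}_{1}^{\prime}\widetilde{\C}_{1}=p_{2}\I_{h_{C_{1}}}$ to isolate $(\widetilde{\bH}_{R_{1}})^{-1}\F_{1,t}(\widetilde{\bH}_{C_{1}}^{\prime})^{-1}$; and the same inputs (Theorems \ref{rc1-hat} and \ref{f1-hat}) for the residual, which the paper disposes of with a single Frobenius-norm product where you telescope. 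The only blemish is the intermediate claim $\left\Vert \E_{t}\right\Vert _{2}=O_{P}((p_{1}+p_{2})^{1/2})$, which Assumption \ref{as-3} does not deliver; the cross terms it is meant to control are dominated anyway once you use the paper's own bounds $\left\Vert \E_{t}\C_{1}\right\Vert _{F}=O_{P}((p_{1}p_{2})^{1/2})$ and $\left\Vert \E_{t}\right\Vert _{F}=O_{P}((p_{1}p_{2})^{1/2})$.
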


\subsubsection{Projected estimation of $\R_{0}$, $\C_{0}$
and $\F_{0,t}$\label{proj-i0}}

Finally, it is possible to iterate the ``anti-projection''
approach to re-estimate $\R_{0}$, $\C_{0}$ and $\F%
_{0,t}$. Define 
\begin{equation*}
\widetilde{\X}_{t}^{C_1}=\X_{t}\widetilde{\C}_{1,\perp}^{s},\text{ and }\widetilde{\X}_{t}^{R_1}=\X_{t}^{\prime}%
\widetilde{\R}_{1,\perp}^{s},
\end{equation*}%
\begin{equation*}
\widetilde{\M}_{R_1,\perp}=\frac{1}{p_{1}p_{2}^{2}T}\sum_{t=1}^{T}%
\widetilde{\X}_{t}^{C_1}\left( \widetilde{\X}_{t}^{C_1}\right)
^{\prime},\text{ \ \ and \ \ }\widetilde{\M}_{C_1,\perp}=\frac{1%
}{p_{1}^{2}p_{2}T}\sum_{t=1}^{T}\widetilde{\X}_{t}^{R_1}\left( 
\widetilde{\X}_{t}^{R_1}\right) ^{\prime}.
\end{equation*}%
The estimator of $\R_{0}$ ($\C_{0}$) is defined as the
eigenvectors corresponding to the largest $h_{R_{0}}$ (resp. $h_{C_{0}}$)
eigenvalues of $\widetilde{\M}_{R_1,\perp}$ (resp. $\widetilde{\M}_{C_1,\perp}$), viz. 
\begin{equation*}
\widetilde{\M}_{R_1,\perp}\widetilde{\R}_{0}=\widetilde{%
\R}_{0}\widetilde{\Lambda }_{R_0},\text{ \ \ and \ \ }\widetilde{\M}_{C_1,\perp}\widetilde{\C}_{0}=\widetilde{\C}_{0}\widetilde{\Lambda }_{C_0}.
\end{equation*}%
where $\widetilde{\Lambda }_{R_0}$ is a $h_{R_{0}}\times h_{R_{0}}$ diagonal matrix
containing the largest $h_{R_{0}}$ eigenvalues of $\widetilde{\M}_{R_1,\perp}$, and $\widetilde{\Lambda }_{C_0}$\ is defined similarly,
under the constraints $\widetilde{\R}_{0}^{\prime}\widetilde{%
\R}_{0}=p_{1}\I_{h_{R_{0}}}$ and $\widetilde{\C}_{0}^{\prime}\widetilde{\C}_{0}=p_{2}\I_{h_{C_{0}}}$.

The next lemma shows that $\widetilde{\R}_{0}$ and $\widetilde{%
\C}_{0}$ do not improve with respect to $\hat{\R}_{0}$
and $\hat{\C}_{0}$ (at least, as far as the dominating terms are
concerned).

\begin{lemma}
\label{neg-rtilde}We assume that Assumptions \ref{as-1}-\ref{as-5} hold.
Then there exist a $h_{R_{0}}\times h_{R_{0}}$ matrix $\widetilde{\bH}_{R_0}$%
, with $\left\Vert \widetilde{\bH}_{R_0}\right\Vert _{F}=O_{P}\left(
1\right) $ and $\left\Vert \left( \widetilde{\bH}_{R_0}\right)
^{-1}\right\Vert _{F}$, and a $h_{C_{0}}\times h_{C_{0}}$ matrix $\widetilde{\bH}_{C_0}$, with $\left\Vert \widetilde{\bH}_{C_0}\right\Vert
_{F}=O_{P}(1) $ and $\left\Vert \left( \widetilde{\bH}%
_{C_0}\right) ^{-1}\right\Vert _{F}$ such that%
\begin{align}
\left\Vert \widetilde{\R}_{0}-\R_{0}\widetilde{\bH}%
_{R_0}\right\Vert _{F} &=O_{P}\left( \frac{p_{1}^{1/2}}{p_{1}p_{2}}\right)
+O_{P}\left( \frac{p_{1}^{1/2}}{p_{2}^{1/2}T^{1/2}}\right) ,
\label{r-tilde-neg} \\
\left\Vert \widetilde{\C}_{0}-\C_{0}\widetilde{\bH}%
_{C_0}\right\Vert _{F} &=O_{P}\left( \frac{p_{2}^{1/2}}{p_{1}p_{2}}\right)
+O_{P}\left( \frac{p_{2}^{1/2}}{p_{1}^{1/2}T^{1/2}}\right) .
\label{c-tilde-neg}
\end{align}
\end{lemma}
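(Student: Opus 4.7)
The proof will parallel that of Theorem \ref{rc1-hat}, with the only change being that $\hat{\C}_{1,\perp}$ is replaced by $\widetilde{\C}_{1,\perp}^s$ (and $\hat{\R}_{1,\perp}$ by $\widetilde{\R}_{1,\perp}^s$). The first preparatory step is to upgrade the projection-consistency bound used in Section \ref{inferencercf1}: feeding the rates of Theorem \ref{rc-tilde} into the algebra of Lemmas \ref{r-orth}--\ref{c-orth} yields
\[
\|\widetilde{\C}_{1,\perp}^s - \C_{1,\perp}\|_F^2 = O_P\!\left(\tfrac{p_2}{p_1 T^2}\right) + O_P\!\left(\tfrac{p_2}{T^4}\right) + O_P\!\left(T^{-3}\right),
\]
which is strictly faster than the $O_P(T^{-2})$ bound available for $\hat{\C}_{1,\perp} - \C_{1,\perp}$; an analogous bound holds for the row-side projection.

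Next, I expand $\widetilde{\X}_t^{C_1} = \X_t \widetilde{\C}_{1,\perp}^s$ using model (\ref{model}) as
\[
\widetilde{\X}_t^{C_1} = \R_1 \F_{1,t}\C_1' \widetilde{\C}_{1,\perp}^s + \R_0\F_{0,t}\C_0' \widetilde{\C}_{1,\perp}^s + \E_t \widetilde{\C}_{1,\perp}^s,
\]
and substitute into $\widetilde{\M}_{R_1,\perp}$. The leading ``signal'' is
\[
\mathbf{S} = \frac{1}{p_1 p_2^2 T}\sum_{t=1}^T \R_0 \F_{0,t}\,\C_0' \widetilde{\C}_{1,\perp}^s (\widetilde{\C}_{1,\perp}^s)' \C_0\,\F_{0,t}' \R_0',
\]
which, by the projection bound above together with Assumptions \ref{as-2} and \ref{as-4}(ii), converges after suitable normalisation to a rank-$h_{R_0}$ positive-definite matrix; this fixes the identifying rotation $\widetilde{\bH}_{R_0}$. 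The three error contributions to $\widetilde{\M}_{R_1,\perp}-\mathbf{S}$ that need to be controlled are (i) the $I(1)$ term $\tfrac{1}{p_1 p_2^2 T}\sum_t \R_1 \F_{1,t}\C_1' \widetilde{\C}_{1,\perp}^s (\widetilde{\C}_{1,\perp}^s)' \C_1 \F_{1,t}' \R_1'$, whose $\C_1' \widetilde{\C}_{1,\perp}^s$ factor is small by the improved projection bound and whose $\F_{1,t}$-induced $T$ factor is absorbed by the $p_2^2 T^2$ rescaling; (ii) the cross term $\tfrac{1}{p_1 p_2^2 T}\sum_t \R_0 \F_{0,t}\C_0' \widetilde{\C}_{1,\perp}^s (\widetilde{\C}_{1,\perp}^s)' \E_t'$; and (iii) the pure-noise term $\tfrac{1}{p_1 p_2^2 T}\sum_t \E_t \widetilde{\C}_{1,\perp}^s (\widetilde{\C}_{1,\perp}^s)' \E_t'$. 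A standard Davis--Kahan / sin-theta argument then transports these into the bound (\ref{r-tilde-neg}): the $O_P\!\bigl(p_1^{1/2}/(p_2^{1/2}T^{1/2})\bigr)$ piece comes from (ii), where cross-sectional averaging over the $p_2 - h_{C_1}$ columns of $\C_{1,\perp}$ combined with temporal averaging yields the $p_2^{-1/2} T^{-1/2}$ scaling, and the $O_P\!\bigl(p_1^{1/2}/(p_1 p_2)\bigr)$ piece comes from (iii), where Assumption \ref{as-3} gives the operator-norm rate of $\tfrac{1}{p_1 p_2}\sum_t \E_t \C_{1,\perp}\C_{1,\perp}'\E_t'$ relative to its expectation. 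Part (\ref{c-tilde-neg}) follows by the mirror-image argument on $\widetilde{\M}_{C_1,\perp}$.

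The main obstacle, and the conceptual point of the lemma, is verifying that the faster projection rate does \textbf{not} translate into a faster estimator rate. Concretely, one must show that replacing $\widetilde{\C}_{1,\perp}^s$ by the exact $\C_{1,\perp}$ in (ii) and (iii) changes them only by strictly lower-order terms: any ``projection error'' is multiplied either by $\F_{0,t}$ (whose norm is $O_P(1)$) or by $\E_t$ (whose cross-sectional aggregation is already used up), and in both cases the resulting contribution is dominated by the two rates on the right-hand sides of (\ref{r-tilde-neg})--(\ref{c-tilde-neg}). In other words, the bottleneck is not the quality of the projection but the intrinsic level of noise in $\E_t$ and the cross-moment $\F_{0,t}\C_0' \C_{1,\perp}\C_{1,\perp}'\E_t'$, so iterating the anti-projection yields no improvement---precisely the negative content of the lemma.
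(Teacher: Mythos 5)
Your proposal is correct and follows essentially the same route as the paper: the paper's proof likewise expands $\widetilde{\M}_{R_1,\perp}$ as in (\ref{r1-hat-dec}) and observes that the pure-noise term $\frac{1}{p_{1}p_{2}^{2}T}\sum_{t}\E_{t}\C_{1,\perp}\C_{1,\perp}^{\prime}\E_{t}^{\prime}\widetilde{\R}_{0}\widetilde{\Lambda}_{R_{0}}^{-1}$ already has the full rate $O_{P}\bigl(p_{1}^{1/2}/(p_{1}p_{2})\bigr)+O_{P}\bigl(p_{1}^{1/2}/(p_{2}^{1/2}T^{1/2})\bigr)$ independently of the quality of the projection, so the improved rate of $\widetilde{\C}_{1,\perp}$ cannot help. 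Two minor points: your squared projection bound appears to omit the $1/p_{2}$ normalisation coming from the algebra of Lemma \ref{c-orth} (compare Lemma \ref{tilde-orth}), and in the paper both rates on the right-hand side of (\ref{r-tilde-neg}) arise from the pure-noise term alone (the expectation giving $1/(p_{1}p_{2})$ and the fluctuation giving $1/(p_{2}^{1/2}T^{1/2})$), but neither point affects your conclusion.
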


It is however possible to refine the rates of the estimated stationary
common factors $\F_{0,t}$. These are defined as%
\begin{eqnarray}
\Ve\widetilde{\F}_{0,t} &=&\left[ \left( \hat{\C}%
_{0}^{\prime}\widetilde{\C}_{1,\perp}^{s}\left( \widetilde{\C%
}_{1,\perp}^{s}\right) ^{\prime}\hat{\C}_{0}\right) ^{-1}\otimes
\left( \hat{\R}_{0}^{\prime}\widetilde{\R}%
_{1,\perp}^{s}\left( \widetilde{\R}_{1,\perp}^{s}\right) ^{\prime}%
\hat{\R}_{0}\right) ^{-1}\right]  \label{f1-tilde} \\
&&\times \left[ \left( \hat{\C}_{0}^{\prime}\widetilde{\C%
}_{1,\perp}^{s}\otimes \hat{\R}_{0}^{\prime}\widetilde{\R%
}_{1,\perp}^{s}\right) \left( \left( \widetilde{\C}%
_{1,\perp}^{s}\right) ^{\prime}\otimes \left( \widetilde{\R}%
_{1,\perp}^{s}\right) ^{\prime}\right) \right] \Ve\X_{t}.  \notag
\end{eqnarray}%
Note that we are using $\hat{\R}_{0}$ and $\hat{\C}%
_{0}$; in principle, it is possible to also use $\widetilde{\R}_{0}$
and $\widetilde{\C}_{0}$, but the results in Lemma \ref{neg-rtilde}
cast doubts over the effectiveness of such a choice. Let $\widetilde{\zeta }_{0}=\min \left\{ p_{1}^{1/2}p_{2}^{1/2},p_{1\wedge
2}T^{2},p_{1\wedge 2}^{2}T,p_{1\wedge 2}^{3/2}T^{3/2}\right\} $.

\begin{theorem}
\label{f1-tilde-rates}We assume that Assumptions \ref{as-1}-\ref{as-5} are
satisfied. Then%
\begin{equation*}
\left\Vert \widetilde{\F}_{0,t}-\left( \widetilde{\bH}%
_{R_{0}}\right) ^{-1}\F_{0,t}\left( \widetilde{\bH}%
_{C_{0}}^{\prime}\right) ^{-1}\right\Vert _{F} =O_{P}\left( \widetilde{\zeta }_{0}^{-1}\right) ,
\end{equation*}%
where $\widetilde{\bH}_{R_{0}}$ and $\widetilde{\bH}_{C_{0}}$%
,are defined in Lemma \ref{neg-rtilde}, and%
\begin{equation*}
\frac{1}{T}\sum_{t=1}^{T}\left\Vert \widetilde{\F}_{0,t}-\left( 
\widetilde{\bH}_{R_{0}}\right) ^{-1}\F_{0,t}\left( \widetilde{%
\bH}_{C_{0}}^{\prime}\right) ^{-1}\right\Vert _{F}^{2}=O_{P}\left( \widetilde{\zeta }_{0}^{-2}\right) .
\end{equation*}
\end{theorem}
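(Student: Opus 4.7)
The plan is to follow the template of the proof of Theorem~\ref{f1-hat}, substituting $\widetilde{\C}_{1,\perp}^{s}$ and $\widetilde{\R}_{1,\perp}^{s}$ for $\hat{\C}_{1,\perp}$ and $\hat{\R}_{1,\perp}$ and carefully tracking the resulting gain in convergence rate. Using the Kronecker identity $(M\otimes N)\Ve(\X_{t})=\Ve(N\X_{t}M^{\prime})$, together with the idempotence and symmetry of the two orthogonal projectors, formula (\ref{f1-tilde}) simplifies to
\[
\widetilde{\F}_{0,t}=\widetilde{A}^{-1}\hat{\R}_{0}^{\prime}\widetilde{\R}_{1,\perp}^{s}\X_{t}\widetilde{\C}_{1,\perp}^{s}\hat{\C}_{0}\widetilde{B}^{-1},
\]
with $\widetilde{A}=\hat{\R}_{0}^{\prime}\widetilde{\R}_{1,\perp}^{s}\hat{\R}_{0}$ and $\widetilde{B}=\hat{\C}_{0}^{\prime}\widetilde{\C}_{1,\perp}^{s}\hat{\C}_{0}$. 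Substituting the model decomposition $\X_{t}=\R_{1}\F_{1,t}\C_{1}^{\prime}+\R_{0}\F_{0,t}\C_{0}^{\prime}+\E_{t}$ splits the centred estimator into three pieces: a signal piece in $\F_{0,t}$, a trend-contamination piece in $\F_{1,t}$, and an idiosyncratic piece in $\E_{t}$.

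For the signal piece, I would expand each hatted and tilded object around its population counterpart using Theorems~\ref{rc1-hat} and \ref{rc-tilde}. Assumption~\ref{as-4}\textit{(ii)} guarantees that $\widetilde{A}$ and $\widetilde{B}$ are invertible, of orders $p_{1}$ and $p_{2}$ respectively, so that the leading contribution matches $\widetilde{\bH}_{R_{0}}^{-1}\F_{0,t}\widetilde{\bH}_{C_{0}}^{\prime-1}$ (with $\widetilde{\bH}_{R_{0}},\widetilde{\bH}_{C_{0}}$ as in Lemma~\ref{neg-rtilde}), modulo remainders absorbed in the other two pieces. For the idiosyncratic piece, the leading contribution reduces to $\R_{0}^{\prime}\R_{1,\perp}^{s}\E_{t}\C_{1,\perp}^{s}\C_{0}$ appropriately rescaled; cross-sectional averaging under Assumption~\ref{as-3} yields the first term $(p_{1}p_{2})^{-1/2}$ inside $\widetilde{\zeta}_{0}^{-1}$, while the cross-terms arising from the expansion $\widetilde{\R}_{1,\perp}^{s}=\R_{1,\perp}^{s}+(\widetilde{\R}_{1,\perp}^{s}-\R_{1,\perp}^{s})$ (and its column analogue) are of strictly smaller order thanks to Theorem~\ref{rc-tilde} and the standard perturbation bound $\Vert\widetilde{\R}_{1,\perp}^{s}-\R_{1,\perp}^{s}\Vert_{F}=O_{P}(p_{1}^{-1/2}\Vert\widetilde{\R}_1-\R_{1}\widetilde{\bH}_{R_{1}}\Vert_{F})$.

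The refinement over Theorem~\ref{f1-hat} comes entirely from the trend piece. Since $\R_{1,\perp}^{s}\R_{1}=0$, one has $\widetilde{\R}_{1,\perp}^{s}\R_{1}=(\widetilde{\R}_{1,\perp}^{s}-\R_{1,\perp}^{s})\R_{1}$, and likewise for $\C_{1}^{\prime}\widetilde{\C}_{1,\perp}^{s}$; combining the three bounds of Theorem~\ref{rc-tilde} with the FCLT bound $\Vert\F_{1,t}\Vert_{F}=O_{P}(T^{1/2})$ (guaranteed by Assumption~\ref{as-1}), the $O_{P}(p_{1}^{-1}),O_{P}(p_{2}^{-1})$ scalings of $\widetilde{A}^{-1},\widetilde{B}^{-1}$, and the $O_{P}(p_{1}^{1/2}),O_{P}(p_{2}^{1/2})$ operator norms of $\hat{\R}_{0},\hat{\C}_{0}$, yields the remaining three rates $(p_{1\wedge 2}T^{2})^{-1}$, $(p_{1\wedge 2}^{2}T)^{-1}$, $(p_{1\wedge 2}^{3/2}T^{3/2})^{-1}$ inside $\widetilde{\zeta}_{0}^{-1}$. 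The $\mathcal{L}_{2}$-in-time bound follows by squaring the pointwise bound and summing over $t$: since $\sum_{t=1}^{T}\Vert\F_{1,t}\Vert_{F}^{2}=O_{P}(T^{2})$ matches $T\cdot\Vert\F_{1,t}\Vert_{F}^{2}$ only in order, no savings in $T$ are available, and one recovers $O_{P}(\widetilde{\zeta}_{0}^{-2})$ term by term.

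The main obstacle is the bookkeeping of the cross-terms generated when each of the six perturbed factors $\hat{\R}_{0}$, $\hat{\C}_{0}$, $\widetilde{\R}_{1,\perp}^{s}$, $\widetilde{\C}_{1,\perp}^{s}$, $\widetilde{A}^{-1}$, $\widetilde{B}^{-1}$ is expanded around its population counterpart; in particular, one has to verify that the subleading $O_{P}(T^{-3/2})$ residual from Theorem~\ref{rc-tilde}, once coupled with $\Vert\F_{1,t}\Vert_{F}=O_{P}(T^{1/2})$ and the $p_{1\wedge 2}$ scalings, does not dominate the rate inside $\widetilde{\zeta}_{0}^{-1}$. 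A secondary subtlety is the identification step: although $\widetilde{\F}_{0,t}$ is defined through $\hat{\R}_{0}$, $\hat{\C}_{0}$, the target rotations are $\widetilde{\bH}_{R_{0}}$, $\widetilde{\bH}_{C_{0}}$ from Lemma~\ref{neg-rtilde}; reconciling the two requires showing that $\widetilde{\bH}_{R_{0}}-\bH_{R_{0}}$ and its column analogue are $o_{P}(1)$ through Theorems~\ref{rc1-hat} and \ref{rc-tilde}, which suffices to preserve signal identification without affecting the overall rate.
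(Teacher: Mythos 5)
Your overall template is the right one---the paper's own proof is, verbatim, ``the same as the proof of Theorem~\ref{f1-hat}, using Lemma~\ref{yong-2} instead of Lemma~\ref{yong}''---but your bookkeeping locates the rate improvement in the wrong term, and in doing so you skip the one ingredient that actually delivers it. In the proof of Theorem~\ref{f1-hat} the trend-contamination piece (term $II$ there) is already $O_{P}\left((p_{1}p_{2}T^{3/2})^{-1}\right)$, hence dominated by the unavoidable $(p_{1}p_{2})^{-1/2}$; replacing the hat-projectors by the tilde-projectors only makes it smaller (Lemma~\ref{tilde-orth}), so it cannot be the source of the binding rates $(p_{1\wedge 2}T^{2})^{-1}$, $(p_{1\wedge 2}^{2}T)^{-1}$, $(p_{1\wedge 2}^{3/2}T^{3/2})^{-1}$ appearing in $\widetilde{\zeta}_{0}^{-1}$. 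Your claim that ``the refinement over Theorem~\ref{f1-hat} comes entirely from the trend piece'' is therefore incorrect. The estimator-dependent rate $\left(p_{1\wedge 2}T\right)^{-1}$ in Theorem~\ref{f1-hat} in fact comes from the signal cross-terms $I_{b}$ and $I_{c}$, i.e.\ the terms containing $\hat{\C}_{1,\perp}^{\prime}\bigl(\hat{\C}_{0}-\C_{0}\bH_{C_{0}}\bigr)$ and its row analogue, and these are controlled not by the crude Frobenius bounds of Theorem~\ref{rc1-hat} but by the singular-value bound $\sigma_{\max}\bigl[\tfrac{1}{p_{2}}\hat{\C}_{1,\perp}^{\prime}(\hat{\C}_{0}-\C_{0}\hat{\bH}_{C_{0}})\bigr]$ of Lemma~\ref{yong}, which exploits the eigen-equation for $\hat{\C}_{0}$ rather than a product-of-norms estimate.

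Concretely, the refined theorem is obtained by running the proof of Theorem~\ref{f1-hat} with Lemma~\ref{yong-2} in place of Lemma~\ref{yong}: the five rates in Lemma~\ref{yong-2}, multiplied by the residual scaling $p_{2}^{-1/2}$ (resp.\ $p_{1}^{-1/2}$) that survives the $(\hat{\mathbf{D}})^{-1}$ normalisation in $I_{b}$ and $I_{c}$, are exactly the terms of $\widetilde{\zeta}_{0}^{-1}$ other than $(p_{1}p_{2})^{-1/2}$, the latter still coming from the idiosyncratic piece as you say. Your proposal treats the signal cross-terms as remainders ``absorbed in the other two pieces'' and handled by Theorems~\ref{rc1-hat} and~\ref{rc-tilde} alone; without the Lemma~\ref{yong-2} mechanism this leaves those terms at (no better than) the unimproved order $\left(p_{1\wedge 2}T\right)^{-1}$, so the argument as written does not establish $O_{P}(\widetilde{\zeta}_{0}^{-1})$. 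Your secondary observations---the negligibility of the $\E_{t}$ cross-terms, the $\mathcal{L}_{2}$-in-time statement following term by term, and the need to reconcile the rotations $\widetilde{\bH}_{R_{0}},\widetilde{\bH}_{C_{0}}$ with the fact that the estimator is built from $\hat{\R}_{0},\hat{\C}_{0}$---are all reasonable and consistent with the paper.
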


The rates can be compared with those in Theorem \ref{f1-hat}: the
non-improvable rate $O_{P}\left( (p_{1}p_{2})^{-1/2}\right) $ is
still present; however, the estimator-dependent rate has been refined. 

\begin{comment}
%% ********************************************************
\section{Testing and inference for the number of common factors\label{practice}}
%% ********************************************************

We consider two (interrelated) issues which both deal with estimating the number of common factors. In Section \ref{number}, we propose an estimator for the number of common (nonstationary, and stationary) factors based on the eigenvalue ratio principle. In Section \ref{test}, we propose a test for the null hypothesis of no common $I(1)$ factors - i.e., for the null of stationarity. 
\end{comment}

\subsection{Estimation of the number of common factors\label{number}}

In the above, we have (implicitly) assumed that the number of common stationary and nonstationary factors $h_{R_{1}}$, $h_{C_{1}}$, $h_{R_{0}}$, and $h_{C_{0}}$ are known. In practice, this is seldom the case, and an estimate of $h_{R_{1}}$, $h_{C_{1}}$, $h_{R_{0}}$, and $h_{C_{0}}$ is required as the preliminary
step in order to use our methodology. In this section, we discuss this issue, proposing a family of consistent estimators for the numbers of common factors.

Our first result shows that, as long as $h_{R_{1}}$, $h_{C_{1}}$, $h_{R_{0}}$, and $h_{C_{0}} $ are estimated consistently, all the theory derived above still holds. Let $\breve{h}_{R_{1}}$, $\breve{h}_{C_{1}}$, $\breve{h}_{R_{1}}$, and $\breve{h}_{C_{1}}$ denote such estimators.

\begin{lemma}
\label{bai03}We assume that Assumptions \ref{as-1}-\ref{as-5} are satisfied. Then, if $\breve{h}_{R_{1}}=h_{R_{1}}+o_{P}(1) $, $\breve{h}_{C_{1}}=h_{C_{1}}+o_{P}(1) $, $\breve{h}_{R_{1}}=h_{R_{0}}+o_{P}(1)$, and $\breve{h}_{C_{1}}=h_{C_{0}}+o_{P}(1) $, Theorems \ref%
{hat-estimates}-\ref{f-tilde},\ and Lemmas \ref{f-hat-negative}-\ref%
{neg-rtilde} still hold.
\end{lemma}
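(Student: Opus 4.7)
The plan follows a now-classical two-step reduction: one first shows that the feasible estimators, which use the data-driven ranks $\breve{h}_{R_{1}}, \breve{h}_{C_{1}}, \breve{h}_{R_{0}}, \breve{h}_{C_{0}}$, coincide with the oracle ones with probability tending to one, and then transfers the oracle bounds to the feasible setting by a simple conditioning argument. The key observation is that the true ranks belong to a fixed finite subset of $\mathbb{N}$, and the proposed rank estimators (see Section \ref{number}) take values in the same finite set by construction, being eigenvalue-ratio argmaxima. Hence, for any of the four estimators, the hypothesis $\breve{h} - h = o_{P}(1)$ is equivalent to $P(\breve{h}=h) \to 1$: for any $\varepsilon \in (0, 1/2)$, the event $\{|\breve{h}-h|<\varepsilon\}$ coincides with $\{\breve{h}=h\}$. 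Applying this to each of the four ranks and intersecting, the event
\begin{equation*}
\mathcal{E}_{T}=\left\{\breve{h}_{R_{1}}=h_{R_{1}},\, \breve{h}_{C_{1}}=h_{C_{1}},\, \breve{h}_{R_{0}}=h_{R_{0}},\, \breve{h}_{C_{0}}=h_{C_{0}}\right\}
\end{equation*}
satisfies $P(\mathcal{E}_{T})\to 1$.

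Next, I would observe that every estimator appearing in Theorems \ref{hat-estimates}--\ref{f-tilde} and Lemmas \ref{f-hat-negative}--\ref{neg-rtilde} is defined as the collection of eigenvectors corresponding to a prescribed number of leading eigenvalues of a sample matrix (namely $\M_{R_{1}}$, $\M_{C_{1}}$, $\M_{R_{1},\perp}$, $\M_{C_{1},\perp}$, $\proj{\M}_{R_{1}}$, $\proj{\M}_{C_{1}}$, $\widetilde{\M}_{R_{1},\perp}$, $\widetilde{\M}_{C_{1},\perp}$), and these sample matrices do \emph{not} themselves depend on the ranks; only the truncation level does. On $\mathcal{E}_{T}$, the truncation levels used by the feasible estimators agree path by path with the oracle ones, so that the feasible eigenvector matrices, the associated diagonal matrices of eigenvalues, the Least Squares factor estimates in (\ref{fi-hat-estimator}), (\ref{f-tilde-estimator}) and (\ref{f1-tilde}), and the rotation matrices $\bH_{R_{1}}, \bH_{C_{1}}, \bH_{R_{0}}, \bH_{C_{0}}, \widetilde{\bH}_{R_{1}}, \widetilde{\bH}_{C_{1}}, \widetilde{\bH}_{R_{0}}, \widetilde{\bH}_{C_{0}}$ all coincide, sample-path-wise, with those analysed earlier. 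Consequently, the bounds proved in the oracle setting hold verbatim on $\mathcal{E}_{T}$.

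Finally, I would convert these $\mathcal{E}_{T}$-conditional bounds into the unconditional $O_{P}$ statements claimed in the earlier results. Let $Z_{T}$ denote any of the error quantities appearing there and let $r_{T}$ be its claimed rate. Given $\eta>0$, pick $M$ large enough that the oracle bound yields $P(\|Z_{T}\|_{F}>M r_{T},\,\mathcal{E}_{T})\leq \eta/2$ for $T$ sufficiently large; combining this with $P(\mathcal{E}_{T}^{c})\leq \eta/2$ (for $T$ large) delivers $P(\|Z_{T}\|_{F}>M r_{T})\leq \eta$, i.e.\ $Z_{T}=O_{P}(r_{T})$. The same splitting, with $M$ replaced by a vanishing sequence, gives the $o_{P}$ consistency statements for the estimated common factors. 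The only delicate point is the discrete reduction in the first paragraph, which is what makes the integer-valuedness of rank estimators essential; everything thereafter is a routine transfer of results proved under known ranks.
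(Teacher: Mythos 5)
Your proposal is correct and follows essentially the same route as the paper: the paper's proof conditions on the event that the estimated ranks equal the true ones (its indicator $\mathcal{B}$), notes that this event has probability tending to one under the $o_{P}(1)$ hypothesis, and transfers the oracle results on that event, citing the argument in Bai (2003, footnote 5). Your write-up merely makes explicit two steps the paper leaves implicit — the equivalence of $\breve{h}-h=o_{P}(1)$ with $P(\breve{h}=h)\to 1$ via integer-valuedness, and the splitting argument converting bounds on $\mathcal{E}_{T}$ into unconditional $O_{P}$ statements — which is a welcome clarification but not a different proof.
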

\par
Several possible estimators can be proposed for $h_{R_{1}}$, $h_{C_{1}}$, $h_{R_{0}}$, and $h_{C_{0}}$: our results in Lemmas \ref{eig-mRx}, \ref{eig-mCx}, \ref{eig-orth-r}, \ref{eig-tilde-c}, \ref{spec-m-r-tilde} and \ref{spec-m-c-tilde} lend themselves to extending, to the matrix-valued time series context, both the information criteria proposed in \citet{bai04} and the sequential randomised tests proposed in \citet{bt2}. Other methodologies, specifically developed for the case of stationary matrix- or tensor-valued time series, could be also extended to our context - whilst an exhaustive treatment goes beyond the scope of this section, we refer to the article by \citet{he2023one} for a comprehensive review of the state of the art on this important issue. Here, we propose a methodology based on the eigenvalue ratio (ER) principle (see \citealp{lam2012factor}, and \citealp{ahn2013eigenvalue}). We introduce the following estimators for $h_{R_{1}}$ and $h_{C_{1}}$:%
\begin{equation}
\hat{h}_{R_{1}}=\mathop{\mathrm{arg\ max}}_{0\leq j\leq h_{\max }}\frac{\lambda _{j}\left( \M_{R_1}\right) }{\lambda _{j+1}\left( \M_{R_1}\right) +\hat{c}_{R_{1}}\delta_{R_1,p_{1},p_{2},T}}\text{ \ \
and \ \ }\hat{h}_{C_{1}}=\mathop{\mathrm{arg\ max}}_{0\leq j\leq h_{\max }} \frac{\lambda _{j}\left( \M_{C_1}\right) }{\lambda _{j+1}\left(\M_{C_1}\right) +\hat{c}_{C_{1}}\delta_{C_1,p_{1},p_{2},T}},
\label{k-hat}
\end{equation}%
which are based on the ``flattened'' covariance matrices $\M_{R_1}$ and $\M_{C_1}$ respectively - an alternative to $\hat{h}_{R_{1}}$ and $\hat{h}_{C_{1}}$ can also be based on the eigenvalues of $\hat{\M}_{R_{1}}^{\dagger}$ and $\hat{\M}_{C_{1}}^{\dagger}$ respectively, and in this case we
use the notation $\hat{h}_{R_{1}}^{\dagger}$ and $\hat{h}_{C_{1}}^{\dagger}
$; and%
\begin{equation}
\widetilde{h}_{R_{1}}=\mathop{\mathrm{arg\ max}}_{0\leq j\leq h_{\max }}\frac{\lambda _{j}\left( \proj{\M}_{R_{1}}\right) }{\lambda _{j+1}\left(\proj{\M}_{R_{1}}\right) +\widetilde{c}_{R_{1}}\delta_{R_1\diamond,p_{1},p_{2},T}} \text{ \ \ and \ \ }\widetilde{h}_{C_{1}}= \mathop{\mathrm{arg\ max}}_{0\leq j\leq h_{\max }}\frac{\lambda _{j}\left( 
\proj{\M}_{C_{1}}\right) }{\lambda _{j+1}\left( \proj{\M}_{C_{1}}\right) +\widetilde{c}_{C_{1}}\delta_{C_1\diamond,p_{1},p_{2},T}},
\label{k-tilde}
\end{equation}%
which are based on the projected\ covariance matrices $\proj{\M}_{R_{1}}$
and $\proj{\M}_{C_{1}}$ respectively. It can be envisaged that, given
that the eigen-gap is wider in the case of $\proj{\M}_{R_{1}}$ and $%
\proj{\M}_{C_{1}}$ as opposed to $\M_{R_1}$ and $\M_{C_1}$ (and $\hat{\M}_{R_{1}}^{\dagger}$ and $\hat{\M}_{C_{1}}^{\dagger}$), $\widetilde{h}_{R_{1}}$ and $\widetilde{h}_{C_{1}}$ may offer a
better performance than $\hat{h}_{R_{1}}$ and $\hat{h}_{C_{1}}$ (and $%
\hat{h}_{R_{1}}^{\dagger}$ and $\hat{h}_{C_{1}}^{\dagger}$); we explore
this in simulations. In both (\ref{k-hat}) and (\ref{k-tilde}), the
``mock''\ eigenvalues $\lambda _{0}\left( 
\mathbf{\cdot }\right) $ are designed to as to pick up the case of no common
factor; following \citet{ahn2013eigenvalue}, we construct these as $\lambda _{0}\left( \M_{\iota }\right) =\omega_{p_{1},p_{2},T}^{\iota}$, with $\iota \in \left\{R_1, C_1,R_{1\diamond},C_{1\diamond}\right\}$, with the convention that $\M_{R_{1\diamond}}= \proj{\M}_{R_1}$, $\M_{C_{1\diamond}}= \proj{\M}_{C_1}$, and $\omega _{p_{1},p_{2},T}^{\iota }$ is a
sequence such that, as $\min \left\{ p_{1},p_{2},T\right\} \rightarrow\infty $, $\omega _{p_{1},p_{2},T}^{\iota }\rightarrow 0\text{ \ \ and \ \ }\left(\delta _{\iota ,p_{1},p_{2},T}\right) ^{-1}\omega_{p_{1},p_{2},T}^{\iota }\rightarrow \infty $. Further, in both (\ref{k-hat}) and (\ref{k-tilde}), $h_{\max }$ is a user-chosen upper bound such that $h_{\max }<\min \left\{ p_{1},p_{2},T\right\} $, and we use the following sequences
\begin{align*}
\delta _{R_1,p_{1},p_{2},T} &=\delta _{C_1,p_{1},p_{2},T}=\frac{1}{T},\\
\delta _{R_1\diamond,p_{1},p_{2},T} &=\frac{1}{p_{1\wedge 2}^{1/2}T^{3/2}}+ \frac{1}{p_{2}T}+\frac{1}{T^{2}}+\frac{1}{p_{1}^{1/2}p_{2}^{1/2}T}, \\
\delta _{C_1\diamond,p_{1},p_{2},T} &=\frac{1}{p_{1\wedge 2}^{1/2}T^{3/2}}+ \frac{1}{p_{1}T}+\frac{1}{T^{2}}+\frac{1}{p_{1}^{1/2}p_{2}^{1/2}T}.
\end{align*}

Similarly, we propose the following estimators for $h_{R_{0}}$ and $h_{C_{0}}$: 
\begin{equation}
\hat{h}_{R_{0}}=\mathop{\mathrm{arg\ max}}_{0\leq j\leq h_{\max }}\frac{\lambda _{j}\left( \M_{R_1,\perp}\right) }{\lambda _{j+1}\left(\M_{R_1,\perp}\right) +c_{R_{0}}\delta _{R_0,p_{1},p_{2},T}} \text{\; and \;}\hat{h}_{C_{0}}=\mathop{\mathrm{arg\ max}}_{0\leq j\leq h_{\max }}\frac{\lambda _{j}\left( \M_{C_1,\perp}\right) }{\lambda _{j+1}\left( \M_{C_1,\perp}\right) +c_{C_{0}}\delta_{C_0,p_{1},p_{2},T}},\label{h-hat}
\end{equation}%
which are based on the ``anti-projected'' covariance matrices $\M_{R_1,\perp}$ and $\M_{C_1,\perp}$. Even in this case, $h_{\max }$ is a user-chosen upper bound such that $h_{\max }<\min \left\{ p_{1},p_{2},T\right\} $, and
\begin{equation*}
\delta _{R_0,p_{1},p_{2},T}=\frac{1}{p_{2}^{1/2}T^{1/2}}+\frac{1}{p_{1}p_{2}}\text{ \ \ and \ \ }\delta _{C_0,p_{1},p_{2},T}=\frac{1}{p_{1}^{1/2}T^{1/2}}+\frac{1}{p_{1}p_{2}}.
\end{equation*}%
As above, the mock eigenvalues are defined as $\lambda _{j}\left( \M_{\iota ,\perp}\right) =\omega_{p_{1},p_{2},T}^{\iota} $, with, as $\min \left\{ p_{1},p_{2},T\right\} \rightarrow \infty $, $\omega _{p_{1},p_{2},T}^{\iota }\rightarrow 0$, and $\left(\delta _{\iota ,p_{1},p_{2},T}\right) ^{-1}\omega_{p_{1},p_{2},T}^{\iota }\rightarrow \infty$, for $\iota \in \left\{ \left( R_0\right) ,\left( C_0\right) \right\} $. The constants $\hat{c}_{R_{1}}$, $\hat{c}_{C_{1}}$, $\widetilde{c}_{R_{1}}$, $\widetilde{c}_{C_{1}}$, $c_{R_{0}}$ and $c_{C_{0}}$ can e.g. be chosen adaptively, using different subsamples and choosing the values of the constants which offer stable estimates across such subsamples,
in a similar spirit to \citet{hallinliska07} and \citet{ABC10}.

\begin{theorem}
\label{er}We assume that Assumptions \ref{as-1}-\ref{as-5} are satisfied. Then, as $\min \left\{ p_{1},p_{2},T\right\} \rightarrow \infty $, it holds that all the estimators defined in (\ref{k-hat}), (\ref{k-tilde}) and (\ref{h-hat}) are consistent - i.e., $\hat{h}_{R_{1}}=h_{R_{1}}+o_{P}(1) $, and so on.
\end{theorem}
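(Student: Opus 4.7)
The strategy follows the standard eigenvalue-ratio template of \citet{lam2012factor} and \citet{ahn2013eigenvalue}: for each of the six matrices entering (\ref{k-hat}), (\ref{k-tilde}) and (\ref{h-hat}) I will establish a spectral dichotomy by which the top $h$ ordered eigenvalues stay bounded away from $0$ (and from infinity) w.p.a.~$1$, while those of index greater than $h$ decay at the rate $\delta_{\cdot,p_{1},p_{2},T}$ specified in the statement. Combined with the mock-eigenvalue convention $\omega\rightarrow 0$, $\omega/\delta\rightarrow \infty$, this forces the ratio to be maximised at $j=h$ asymptotically. I will describe the argument in detail only for $\hat{h}_{R_{1}}$; the other five estimators in (\ref{k-hat}), (\ref{k-tilde}), (\ref{h-hat}), as well as the variants $\hat{h}_{R_{1}}^{\dagger}$ and $\hat{h}_{C_{1}}^{\dagger}$, follow from the same template, the relevant spectral dichotomies being supplied by Lemmas \ref{eig-mRx}--\ref{spec-m-c-tilde} in the Supplement (and, for the $\dagger$-variants, by Lemma \ref{proj-negative}).

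For the flattened covariance $\M_{R_1}$, Lemma \ref{eig-mRx} (combined with Theorem \ref{hat-estimates} and Assumption \ref{as-1}(ii)) yields that, w.p.a.~$1$, the first $h_{R_{1}}$ eigenvalues $\lambda_{j}(\M_{R_1})$ are bounded above and below by positive constants, whereas $\lambda_{j}(\M_{R_1})=O_{P}(T^{-1})$ for $j>h_{R_{1}}$. The latter reflects the fact that, once the $I(1)$ signal space has been exhausted, only stationary and idiosyncratic residuals remain, and these are suppressed by the $T^{-2}$ normalisation of $\M_{R_1}$. The penalty $\delta_{R_{1},p_{1},p_{2},T}=T^{-1}$ matches this decay exactly.

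A case analysis on $j$ then gives the result. Assume first $h_{R_{1}}\geq 1$. (a) At $j=0$, the ratio equals $\omega/(\lambda_{1}(\M_{R_1})+\hat{c}_{R_{1}}\delta)$, which is $o_{P}(1)$ because $\omega\rightarrow 0$ and the denominator is bounded below away from $0$ w.p.a.~$1$. (b) For $1\leq j<h_{R_{1}}$, both numerator and denominator are $O_{P}(1)$ and bounded below away from $0$ w.p.a.~$1$, giving a bounded ratio. (c) At $j=h_{R_{1}}$, the numerator is bounded below by a positive constant, while the denominator is $O_{P}(T^{-1})$ because both $\lambda_{h_{R_{1}}+1}(\M_{R_1})$ and $\hat{c}_{R_{1}}\delta$ are of that order; the ratio therefore diverges at rate at least $T$. (d) For $h_{R_{1}}<j\leq h_{\max}$, both numerator and denominator are $O_{P}(T^{-1})$, but the additive penalty $\hat{c}_{R_{1}}\delta$ forces the denominator to be bounded below by a positive multiple of $T^{-1}$ w.p.a.~$1$, making the ratio $O_{P}(1)$. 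The argmax is therefore attained at $j=h_{R_{1}}$ w.p.a.~$1$. If instead $h_{R_{1}}=0$, all eigenvalues of index $\geq 1$ are $O_{P}(T^{-1})$, so the $j=0$ ratio equals $\omega/O_{P}(T^{-1})$, which diverges by the design $\omega/\delta\rightarrow \infty$, while every ratio at $j\geq 1$ remains $O_{P}(1)$; hence $\hat{h}_{R_{1}}=0$ w.p.a.~$1$.

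The main obstacle is not the ER argument itself but the underlying spectral dichotomies for the projected and anti-projected matrices appearing in (\ref{k-tilde}) and (\ref{h-hat}): after replacing $\C_{1},\R_{1},\C_{1,\perp},\R_{1,\perp}$ by their plug-in versions, one must verify that the top eigenvalues corresponding to the relevant signal subspaces do not collapse. This is controlled by Weyl's inequality combined with the rates already obtained in Theorems \ref{hat-estimates}, \ref{rc1-hat} and \ref{rc-tilde}, while Assumption \ref{as-4}(ii) ensures that $\R_{0}^{\prime}\R_{1,\perp}$ and $\C_{0}^{\prime}\C_{1,\perp}$ have full rank, so that the $I(0)$ signal survives the anti-projection and the noise-eigenvalues decay exactly at the rates $\delta_{R_{0},\cdot}$, $\delta_{C_{0},\cdot}$, $\delta_{R_{1\diamond},\cdot}$, $\delta_{C_{1\diamond},\cdot}$ listed in the theorem. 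Once the dichotomy is in place, the precise values of the penalty constants $\hat{c}_{R_{1}},\hat{c}_{C_{1}},\widetilde{c}_{R_{1}},\widetilde{c}_{C_{1}},c_{R_{0}},c_{C_{0}}$ play no role at the population level, since any sequence with a strictly positive probability limit makes the scheme above go through; this justifies the adaptive tuning suggested after (\ref{h-hat}).
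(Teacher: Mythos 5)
Your proposal is correct and follows essentially the same route as the paper's proof: establish the spectral dichotomy (signal eigenvalues bounded away from zero in probability, junk eigenvalues decaying at the rate matched by the penalty $\delta$), then run the case analysis over $j=0$, $j<h$, $j=h$, $j>h$, with the mock eigenvalue handling the zero-factor case; the paper writes this out for $\widetilde{h}_{R_{1}}$ via Lemma \ref{spec-m-r-tilde} whereas you instantiate it for $\hat{h}_{R_{1}}$ via Lemma \ref{eig-mRx} (noting that the in-probability version of that lemma, as used in Lemma \ref{lambda}, gives the clean $O_{P}(T^{-1})$ bound without the logarithmic factors of the almost-sure statement), but the argument is identical in substance.
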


\section{Monte Carlo evidence\label{simulation}}
%% ********************************************************
In this Section we show the results of a series of Monte Carlo studies to showcase the performance of our methodology. Section~\ref{sec:MCestimation} contains some key results for the estimators of the factor loadings and a comparison of their convergence rates; Section~\ref{sec:MCnfact} is devoted to the estimation of the number of factors. The full set of detailed results can be found in Appendix~\ref{app:MC}.
\par
We simulate from the following Data Generating Process (DGP)
\begin{equation}\label{modelsim}
\underset{\blue{p_{1}\times p_{2}}}{\X_{t}} = \underset{\blue{p_{1}\times h_{R_1}}}{\R_1} \underset{\orange{h_{R_1}\times h_{C_1}}}{\F_{1,t}}\underset{\blue{h_{C_1}\times p_{2}}}{\C_1^{\prime}}+ \underset{\blue{p_{1}\times h_{R_0}}}{\R_{0}} \underset{\orange{h_{R_0}\times h_{C_0}}}{\F_{0,t}}\underset{\blue{h_{C_0}\times p_{2}}}{\C_{0}^{\prime}}+\underset{\blue{p_{1}\times p_{2}}}{\E_{t}}, \qquad t=1,\dots,T
\end{equation}%
where:
\begin{itemize}
  \item[$\bullet$] $\F_{1,t} = \F_{1,t-1} + \beps_{t}$ with $\beps_{t}\sim N(\mathbf{0},\mathrm{I}_{h_{R_1} h_{C_1}})$;
  \item[$\bullet$] $\Ve(\F_{0,t}) \sim N(\mathbf{0},\sigma_0\mathrm{I}_{h_{R_0} h_{C_0}})$;
  \item[$\bullet$] $\R_0,\C_0 \sim \mathcal{U}[-a_0,a_0]$, \quad  $\R_1,\C_1 \sim \mathcal{U}[-a_1,a_1]$
%  \item[$\bullet$] $p_1 = 10,  20,  50, 100$; $p_2 = 20$; 
\end{itemize}

\subsection{Estimation and convergence rates}\label{sec:MCestimation}
We explore the cases presented in Table~\ref{tab:mcpar},
\begingroup
\begin{table}
\renewcommand{\arraystretch}{1}
\begin{tabular}{crrrrrrr}
 Case & $h_{R_0}$ & $h_{C_0}$ & $h_{R_1}$ & $h_{C_1}$ & $a_0$ & $a_1$ & $\sigma_0$ \\ 
  \cmidrule{2-8} 
  1.1 & 1 & 1 & 1 & 1 & 1 & 1 & 1 \\ 
  1.2 & 1 & 1 & 1 & 1 & 1 & 1 & 2 \\ 
  2.1 & 1 & 1 & 2 & 2 & 1 & 1 & 1 \\ 
  2.2 & 1 & 1 & 2 & 2 & 1 & 1 & 2 \\ 
  3.1 & 2 & 2 & 1 & 1 & 1 & 1 & 1 \\ 
  3.2 & 2 & 2 & 1 & 1 & 1 & 1 & 2 \\ 
  4.1 & 1 & 1 & 1 & 1 & 10 & 10 & 1 \\ 
  4.2 & 1 & 1 & 1 & 1 & 10 & 10 & 2 \\ 
  \cmidrule{2-8} 
\end{tabular}
\caption{Parameters' combination for the simulation study on the rates of convergence.}\label{tab:mcpar}
\end{table}
\endgroup
and we combine them with, with $p_1 = 10,  20,  50, 100$, $p_2 = 20$, $T = 20,  50, 100, 200$. For each parameters' combination we consider the average over 1000 Monte Carlo replications. Due to identification indeterminacy, we measure the performance of the estimators using the distance between subspaces. Given two orthogonal matrices $O_1$ and $O_2$ of sizes $p \times q_1$ and $p \times q_2$, define
\begin{equation}\label{DS}
\D(O_1,O_2) = \left(1 - \frac{1}{\max(q_1,q_2)}\tr\left(O_1 O_1^\prime O_2 O_2^\prime\right)  \right)^{\frac{1}{2}}.
\end{equation}
$\D(O_1,O_2)$ ranges between 0 and 1. It is equal to 0 if the column spaces of $O_1$ and $O_2$ are the same, and 1 if they are orthogonal.
\par
Figure~\ref{fig:mc2box1T} shows the boxplots of the ratio $\D_{\text{flat}}/\D_{\text{proj}}$ between the initial flattened and the refined projected estimators for $\R_1$ (left) and $\C_1$ (right) against series' length $T$. Each boxplot contains 32 values for a specific $T$, corresponding to the 8 cases (1.1 - 4.2) times the 4 values of $p_1$. In turn, each of the 32 values is the average over 1000 Monte Carlo replications.  Clearly, the refined projected estimator improves uniformly over the initial ``flattened'' estimator and the gain increases with the length of the series, reaching a median ratio of about $2.4$ for $T=200$. The ratio reaches 3.2 in some instances. The boxplots of the ratio $\D_{\text{flat}}/\D_{\text{proj}}$ against row-dimension $p_1$ is reported in Figure~\ref{fig:mc2box1p} in Appendix~\ref{app:MC} and the results are also consistent with the theoretical convergence rates in that, for fixed $p_2$ and $T$, as $p_1$ increases, the ratio should decrease for $\R_1$ and increase for $\C_1$.  
\begin{figure}
    \centering
    \includegraphics[width=0.45\linewidth]{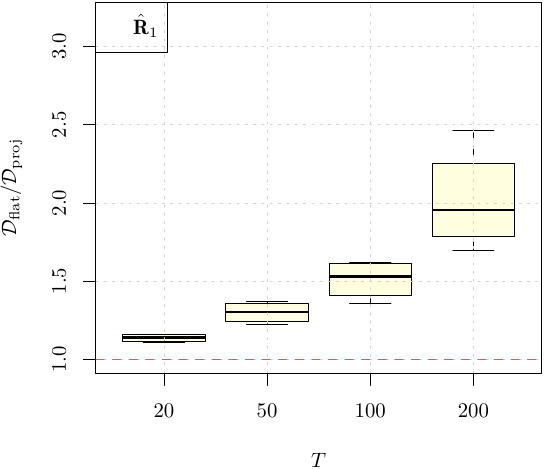}
    \includegraphics[width=0.45\linewidth]{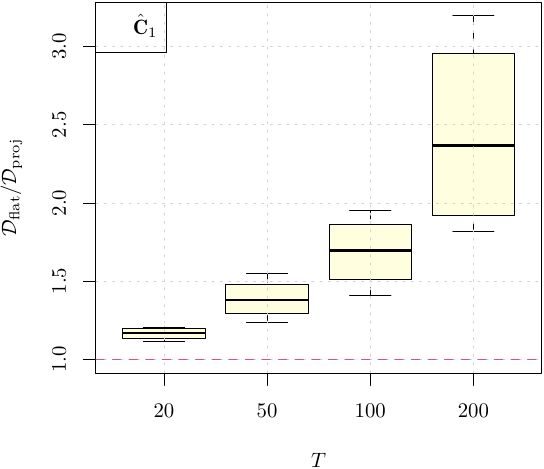}
    \caption{Boxplots of the ratio $\D_{\text{flat}}/\D_{\text{proj}}$ between the initial flattened  and the refined projected estimators for $\R_1$ (left) and $\C_1$ (right) against series' length $T$.}
    \label{fig:mc2box1T}
\end{figure}
A different, non trivial, behaviour is expected for $\R_0$ and $\C_0$ since the theoretical convergence rates of the initial and refined estimators are the same, see Figure~\ref{fig:mc2box0p} where the boxplots of the ratio are plotted against $p_1$. The plot against $T$ can be found in Figure~\ref{fig:mc2box0T} of the Appendix. In any case, also in this instance, the refined estimators improve uniformly over the initial ones, albeit by a tighter margin with respect to those for $\R_1$ and $\C_1$.  
\begin{figure}
    \centering
    \includegraphics[width=0.45\linewidth]{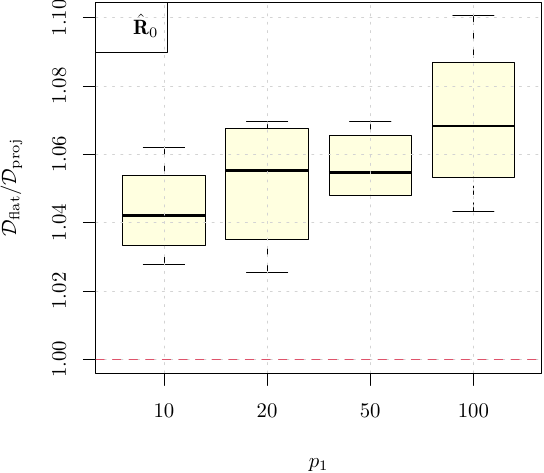}
    \includegraphics[width=0.45\linewidth]{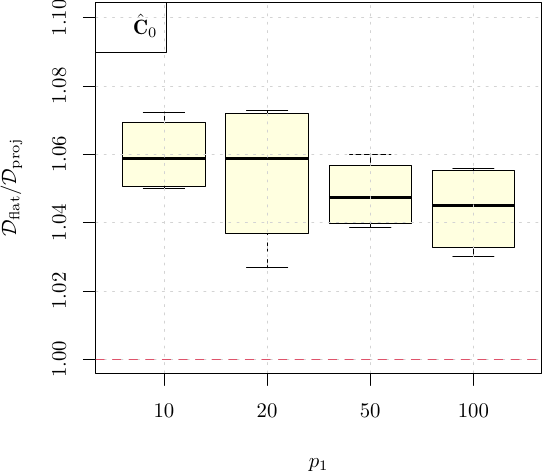}
    \caption{Boxplots of the ratio $\D_{\text{flat}}/\D_{\text{proj}}$ between the initial flattened  and the refined projected estimators for $\R_0$ (left) and $\C_0$ (right) against $p_1$.}
    \label{fig:mc2box0p}
\end{figure}

\begin{figure}
    \centering
\includegraphics[width=0.45\linewidth,keepaspectratio]{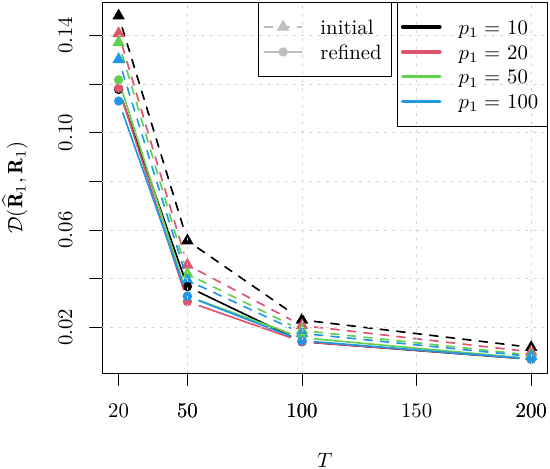} 
\includegraphics[width=0.45\linewidth,keepaspectratio]{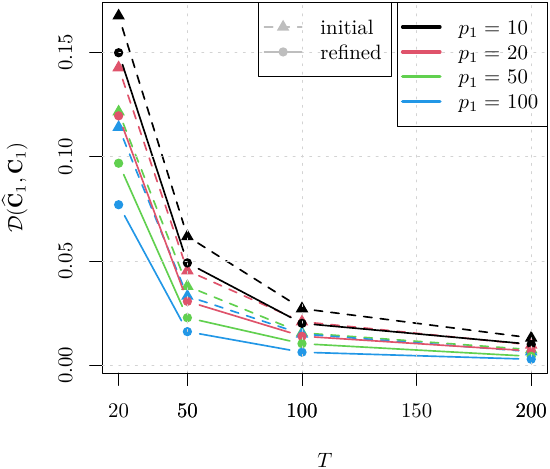} 
    \caption{Case 1.1: estimation of $\R_1$ (left) and $\C_1$ (right) for varying series length $T$ and row dimension $p_1$. Also, $p_2=20$, whereas $\R_0,\C_0 \sim \mathcal{U}[-1,1]$, \quad  $\R_1,\C_1 \sim \mathcal{U}[-1,1]$. Triangles with dashed lines indicate the initial ``flattened'' estimator, circles with full lines indicate the refined projected estimator.}\label{fig:1}
\end{figure}
The detailed results for Case 1.1 are shown in Figure~\ref{fig:1}, where the distances $\D(\hat\R_1,\R_1)$ (left) and $\D(\hat\C_1,\C_1)$ (right) are plotted against the length of the series $T$ for different values of the row-dimension $p_1$. Clearly, the refined projected estimator (circles) is always superior to the initial ``flattened'' estimator (triangles) and the rate also depends on $p_1$. This latter finding can also be appreciated from Figure~\ref{fig:2}, which shows the results of the estimation of $\R_0$ (left) and $\C_0$ (right). The full set of detailed results can be found in the Appendix~\ref{app:MC}.
\begin{figure}
    \centering
\includegraphics[width=0.45\linewidth,keepaspectratio]{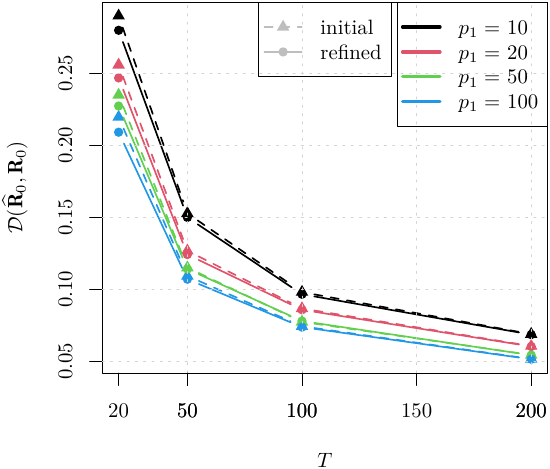} 
\includegraphics[width=0.45\linewidth,keepaspectratio]{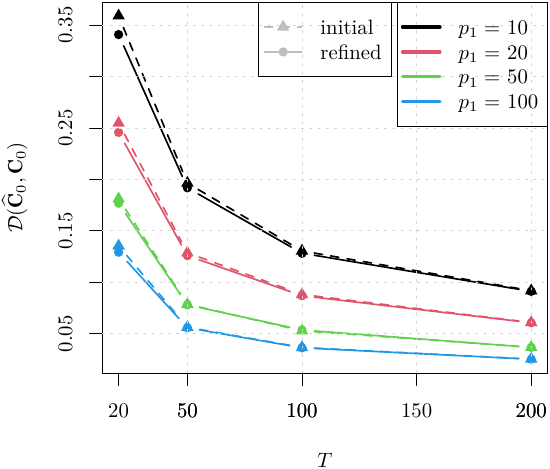} 
    \caption{Case 1.1: as Figure~\ref{fig:1} but for $\R_0$ and $\C_0$.}\label{fig:2}
\end{figure}
\subsection{Estimating the number of factors}\label{sec:MCnfact}

We simulate from the DGP of Eq.~(\ref{modelsim}), with

\begin{itemize}
  \item[$\bullet$] $\F_{1,t} = \F_{1,t-1} + \boldsymbol{\eps}_{t}$ with $\boldsymbol{\eps}_{t}\sim N(\mathbf{0},\mathrm{I}_{h_{R_1} h_{C_1}})$;
  \item[$\bullet$] $\Ve(\F_{0,t}) \sim N(\mathbf{0},\mathrm{I}_{h_{R_0} h_{C_0}})$;
  \item[$\bullet$] $\R_0,\C_0 \sim \mathcal{U}[-1,1]$, \quad  $\R_1,\C_1 \sim \mathcal{U}[-1,1]$
%  \item[$\bullet$] $p_1 = 10,  20,  50, 100$; $p_2 = 20$; 
\end{itemize}
We explore the following cases

%\small
\begingroup
\begin{center}
\renewcommand{\arraystretch}{0.6}
\begin{tabular}{rrrrr}
Case & $h_{R_0}$ & $h_{C_0}$ & $h_{R_1}$ & $h_{C_1}$ \\
\cmidrule{2-5}
1 & 1 & 1 & 1 & 1\\
2 & 2 & 2 & 2 & 2\\
3 & 1 & 1 & 2 & 2\\
4 & 1 & 1 & 3 & 3\\
5 & 2 & 2 & 1 & 1\\
6 & 3 & 3 & 1 & 1\\
7 & 2 & 1 & 1 & 1\\
8 & 1 & 1 & 2 & 1\\
\cmidrule{2-5}
\end{tabular}
\end{center}
\endgroup
and, as in the previous section, we combine the above with $p_1 = 10,  20,  50, 100$, $p_2 = 20$, $T = 20,  50, 100, 200$. We focus on the frequency of correct identification of $h_{R_0}$, $h_{C_0}$, $h_{R_1}$, $h_{C_1}$, based upon 1000 Monte Carlo replications. Here we report the results of different implementations of the Eigenvalue Ratio criterion:
\begin{description}
   \item[static]: $h_{R_0}$, $h_{C_0}$, $h_{R_1}$, $h_{C_1}$ are estimated once in the procedure. 
   \item[it0] starts from the above static estimates. Then, it uses the estimated number of $I(1)$ factors as starting values for the procedure, which stops when either the final estimated numbers of $I(1)$ factors coincide with the initial ones, or the maximum number of iterations is reached.
   \item[it1] starts from the \textbf{it0} estimates. If the initial estimate is not a fixed point i.e. the initial and refined estimates of the $I(1)$ parameters are the same at the first iteration, then, tries to refine as follows.
   \begin{enumerate}
     \item computes the static estimates on a grid of initial values for $h_{R_1}$, $h_{C_1}$;
     \item derives the graph of the combinations from the grid and retains the fixed point of the graph as candidates;
     \item if there is at least one candidate, updates the initial \textbf{it0} estimates if either the max number of iterations is reached or there is a new candidate/parameter combination with max average ER value and max cluster size. 
   \end{enumerate}
     \item[it2] starts from the \textbf{it0} estimates and keeps the values of $h_{R_0}$, $h_{C_0}$. If the initial estimate is not a fixed point i.e. the initial and refined estimates of the $I(1)$ parameters are the same at the first iteration, then, tries to refine the $I(1)$ parameters as follows.
   \begin{enumerate}
     \item computes the static estimates on a grid of initial values for $h_{R_1}$, $h_{C_1}$.
     \item derives the graph of the combinations from the grid and retains the fixed point of the graph as candidates.
     \item if there is at least one candidate,  updates the initial \textbf{it0} estimates by choosing the values of $h_{R_1}$, $h_{C_1}$ individually as the maximizers of the ER value among the parameters' combinations. 
   \end{enumerate}
\end{description}
In practice, criteria \textbf{it1} and \textbf{it2} differ only in step (3) when it comes to updating the initial estimates based upon \textbf{it0}. To showcase the advantages of refined iterative procedures over the simple iterative estimator \textbf{it0} we simulate a series from with the following parameters: 
\begin{align*}
  h_{R_0} = 1;\quad h_{C_0} = 1;\quad h_{R_1} = 2;\quad h_{C_1} = 1; \qquad\text{(case 8)}\\
  p_1 = 100;\quad p_2 = 20;\quad T = 50;
\end{align*}
Both the static and the simple iterative estimators for the number of factors incorrectly estimate  $h_{R_0} = h_{C_0} = h_{R_1} = h_{C_1} = 1$. The reason for this can be appreciated in Figure~\ref{fig:graph} where we show the associated graph where the nodes are the grid of initial values for $h_{R_1}$, $h_{C_1}$ and the arrows show the node connecting the initial and the refined estimate. There are two clusters of connected nodes but only the right hand side cluster has a fixed point (node 2), which corresponds to the true parameters' value. Hence, since the static estimator starts from node 1, the simple iterative estimator \textbf{it0} does not converge and falls back to the static solution. This is why, when a fixed point is not reached, estimators \textbf{it1-it2} try to refine over the initial \textbf{it0} estimate by looking for the fixed points of the graph (if any) and selecting the solution according to (slightly) different criteria, as explained above. This improves the initial iterative estimate \textbf{it0} in case of lack of convergence and solves the problem of the dependence on initial conditions.    
\begin{figure}
  \centering
    \begin{minipage}{.3\linewidth}
    \begingroup
  \renewcommand{\arraystretch}{0.9}
\begin{tabular}[t]{ccc}
\toprule
node & $h_{R_1}$ & $h_{C_1}$\\
\midrule
1 & 1 & 1\\
2 & 2 & 1\\
3 & 3 & 1\\
4 & 4 & 1\\
5 & 1 & 2\\
6 & 2 & 2\\
7 & 3 & 2\\
8 & 4 & 2\\
9 & 1 & 3\\
10 & 2 & 3\\
11 & 3 & 3\\
12 & 4 & 3\\
13 & 1 & 4\\
14 & 2 & 4\\
15 & 3 & 4\\
16 & 4 & 4\\
\bottomrule
\end{tabular}
\endgroup
\end{minipage}
\begin{minipage}{.6\linewidth}
\centering 
\includegraphics[width=0.9\linewidth,keepaspectratio]{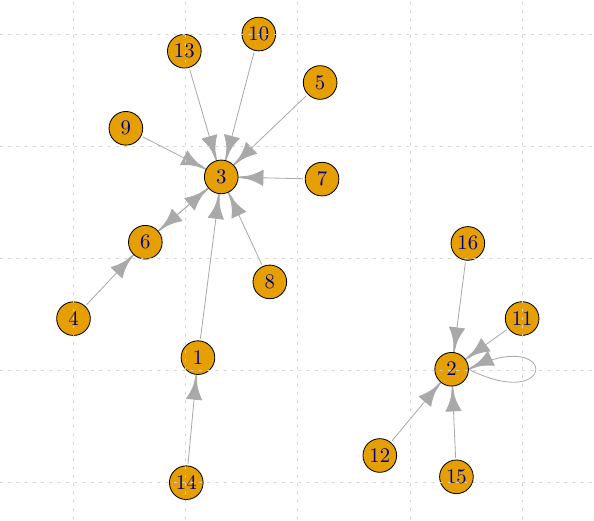} 
\end{minipage}
  \caption{Graph associated to the iterative estimators for the number of factors. Each node corresponds to a combination of initial values for $h_{R_1}$, $h_{C_1}$ reported in the table. Node 2 is a fixed point and corresponds to the true parameters' value. Procedures starting from nodes belonging to the cluster without a fixed point (left) do not converge. In such a case, estimators \textbf{it1-it2} correctly identify the fixed point (node 2) as the candidate solution.}\label{fig:graph}
\end{figure}
\noindent
The results of the Monte Carlo exercise are reported in Figure~\ref{fig:nf1t}, which contains the boxplots of the percentages of correct selection for each criterion and for each loadings matrix. For the sake of presentation, each boxplot aggregates 128 values (8 cases $\times$ 4 values of $p1$ $\times$ 4 values of $T$). The full set of results, stratified by $p_1$ and $T$ are available in Figure~\ref{fig:nf1b} of the Appendix~\ref{app:MC}. The iterative estimators improve noticeably over the static estimator, especially for $\R_1$ and $\C_1$. This is best appreciated in Figure~\ref{fig:nf2t2}, which shows the boxplots of the differences of percentages of correct estimation of the number of factors for the iterative criteria w.r.t. the static criterion.
\begin{figure}
    \centering
    \includegraphics[width=0.9\linewidth]{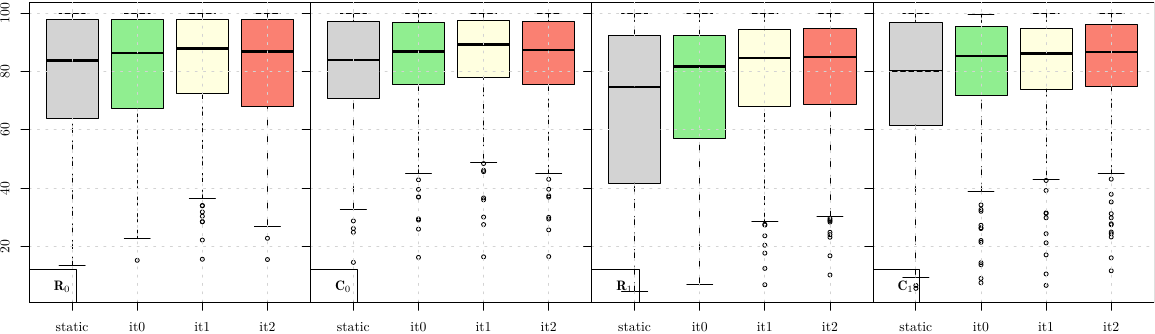}
\caption{Boxplots of the percentages of correct estimation of the number of factors for the 4 criteria for different parameters. Each boxplot contains 128 percentages:  8 cases $\times$ 4 values for $p_1$ $\times$ 4 values for $T$. The extended results, stratified by $p_1$ and $T$ are available in Figure~\ref{fig:nf1b} of the Appendix~\ref{app:MC}. }\label{fig:nf1t}
\end{figure}
\begin{figure}
    \centering
    \includegraphics[width=0.9\linewidth]{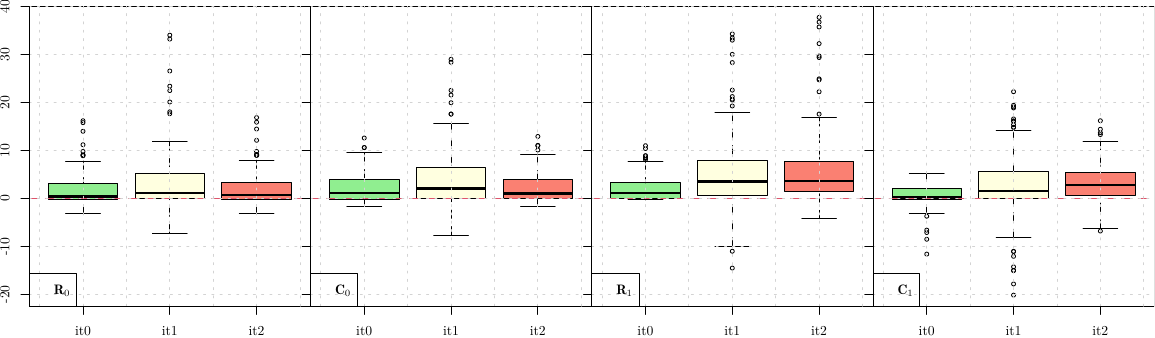}
\caption{Boxplots of the differences of percentages of correct estimation of the number of factors for the iterative criteria w.r.t. the static criterion. Positive values indicate that the iterative version is superior w.r.t. the static one. Each boxplot contains 128 percentages:  8 cases $\times$ 4 values for $p_1$ $\times$ 4 values for $T$. The extended results, stratified by $p_1$ and $T$ are available in Figure~\ref{fig:nf2b} of the Appendix~\ref{app:MC}.}\label{fig:nf2t2}
\end{figure}
Clearly, the iterative criteria can improve over the static estimation by 40\%. In particular, for $\R_0$ and $\C_0$, \textbf{it1} seems to achieve the largest gain, even if it is also prone to losing power, especially for $\R_1$ and $\C_1$. This is because it tends to overturn more often the initial estimator \textbf{it0}. In turn, \textbf{it2} is closer to \textbf{it0} for for $\R_0$ and $\C_0$ but shows a consistent gain for $\R_1$ and $\C_1$, so that it remains our recommended choice to date. As above, the extended results, stratified by $p_1$ and $T$ are available in Figure~\ref{fig:nf2b} of the Appendix~\ref{app:MC}.

\begin{comment}
%% ********************************************************
\section{Empirical application\label{applic}}
%% ********************************************************    
\end{comment}

%% ********************************************************
\section{Conclusions\label{conclusion}}
%% ********************************************************
In this paper, we study inference in the context of a factor model for a high-dimensional matrix-valued time series with the possible presence of
common stochastic trends and common stationary factors. The inferential problem is not a simple extension of existing techniques: the presence of
the common stationary factors makes it impossible to refine the rates of the estimators of the non-stationary common factor structure using e.g. the iterated projection-based estimator of \citet{hkyz2021}. Hence, we propose an entirely novel procedure, based on a preliminary step where the common stochastic trends are eliminated by projecting them away onto a large dimensional space constructed as the orthogonal complement to the loadings space of the $I(1) $ common factor structure. As mentioned in the introduction, this step goes in the opposite direction to the projection-based estimator, and we view it as an ``anti-projection'' (or an ``anti-Johnson-Lindenstrauss'') argument. After getting rid of the common $I(1)$ components, we estimate the common stationary component in a standard way, and after removing it from the data, we are able to use the full force of the iterative projection-based estimator. Our results, in terms of the rates of convergence of the estimated loadings and common factors, show that we are able to refine the rates of convergence of the estimators of the $I(1)$ factor structure. As a by-product, we also propose a technique to estimate the number of common factors in both the $I(1)$ and the stationary structures; further, building on the results on the spectrum of the second moment matrices studied in this paper, it would also be possible to propose several tests, along e.g. the lines of \citet{bt2}, for the null hypothesis that the data are $I(1)$, which would serve as a preliminary step to ascertain whether our estimation technique needs to be applied (i.e., whether $\X_{t}$ is indeed $I(1)$), or not.

\begin{comment}

, and a (randomised) test for the null hypothesis of $\X%
_{t}$ being actually $I(1)$. The latter result can be viewed as
the first attempt to test for a unit root in a large, matrix-valued time
series, and it can serve as a preliminary step to ascertain whether our
estimation technique needs to be applied (i.e., whether $\X_{t}$ is
indeed $I(1)$), or not.
\end{comment}

Several interesting questions and possible extensions are still outstanding. In addition to deriving the limiting distributions of the estimated
loadings, common factors and common components (which, in principle, can be done as an extension of our results), and to extending our results to the presence of deterministic components (such as drifts or linear trends in the common $I(1)$ structure), here we revisit the notion of
cointegration and how this can be cast into our model (\ref{model}). In particular, we note that our current model and theory can only partly embed
a (Matrix) Error Correction Model (\citealp{johansen1991estimation}). Indeed, consider again the two-step hierarchical representation discussed in
the introduction, and, unless stated otherwise, assume for simplicity that $h_{R_{1}}=h_{C_{1}}=1$. Recalling that $\X_{\cdot j,t}$, $1\leq
j\leq p_{2}$ denotes the $j$-th column of $\X_{t}$, consider the MA$(\infty)$ representation $\Delta \X_{\cdot j,t} = \mathbf{\Gamma }_{j}\left( L\right) \beps_{\cdot j,t}$, where $\mathbf{\Gamma }_{j}\left( L\right) =\sum_{h=0}^{\infty }\mathbf{\Gamma }_{j,h}L^{h}$ is a $p_{1}\times p_{1}$-valued MA$(\infty)$ polynomial. Then, by standard arguments,\footnote{See e.g. \citet{watson1994vector}} we can represent $\X_{\cdot j,t}$ as
\begin{equation}
\X_{\cdot j,t}=\mathbf{\Gamma}_{j}(1) \sum_{s=1}^{t}
\beps_{\cdot j,s}+\mathbf{\Gamma }_{j}^{\ast }\left(L\right) \beps_{\cdot j,t},  \label{ctx}
\end{equation}%
where $\mathbf{\Gamma }_{j}(1) =\sum_{h=0}^{\infty }\mathbf{\Gamma }_{j,h}$ is a $p_{1}\times p_{1}$ matrix with rank $h_{R_{1}}$, and $\mathbf{\Gamma }_{j}^{\ast }\left( L\right) =\sum_{h=0}^{\infty }\mathbf{\Gamma }_{j,h}^{\ast }L^{h}$ with $\mathbf{\Gamma }_{j,h}^{\ast
}=-\sum_{k=h+1}^{\infty }\mathbf{\Gamma }_{j,k}$. Hence, $\mathbf{\Gamma }_{j}(1) $ can be rewritten as the product between a $p_{1}\times
h_{R_{1}}$ matrix (say $\R_{1}$) and an $h_{R_{1}}\times p_{1}$ matrix (say $\mathbf{\Pi }_{j}$), whence (\ref{ctx}) becomes 
\begin{equation*}
\X_{\cdot j,t}=\underset{p_{1}\times k_{R}}{\R_{1}}\underset{k_{R}\times p_{1}}{\mathbf{\Pi }_{j}}\sum_{s=1}^{t}\beps_{\cdot j,s}+\mathbf{\Gamma }_{j}^{\ast }\left( L\right)\beps_{\cdot j,t}.
\end{equation*}
Define, for short, the scalar common trend $\mathbf{G}_{j,t}=\boldsymbol{\Pi}_{j}\sum_{s=1}^{t}\beps_{\cdot j,s}$, and consider the $
p_{2}\times 1$ vector of common trends $\mathbf{G}_{t}=\left( \mathbf{G}_{1,t},...,\mathbf{G}_{p_{2},t}\right) ^{\prime}$. The vector $\mathbf{G}
_{t}$ itself could be cointegrated; considering the MA$(\infty)$  representation $\Delta \mathbf{G}_{t}=\mathbf{\Gamma }_{G}\left( L\right) \mathbf{u}_{t}$, it follows that
\begin{equation}
\mathbf{G}_{t}=\mathbf{\Gamma }_{G}(1) \sum_{s=1}^{t}\mathbf{u}_{s}+\mathbf{\Gamma }_{G}^{\ast }\left( L\right) \mathbf{u}_{t}.  \label{ctg}
\end{equation}
Again, $\mathbf{\Gamma }_{G}(1) $ is $p_{2}\times p_{2}$ and has rank $h_{C_{1}}=1$, so that we can write
\begin{equation*}
\mathbf{G}_{t}=\mathbf{\Gamma }_{G}(1) \sum_{s=1}^{t}\mathbf{u}_{s}+\mathbf{\Gamma }_{G}^{\ast }\left( L\right) \mathbf{u}_{t}=\underset{p_{2}\times h_{C_{1}}}{\C_{1}}\underset{h_{C_{1}}\times p_{2}}{\mathbf{\beta }_{G}^{\prime}}\sum_{s=1}^{t}\mathbf{u}_{s}+\mathbf{\Gamma}
_{G}^{\ast }\left( L\right) \mathbf{u}_{t}=\underset{p_{2}\times h_{C_{1}}}{\C_{1}}\F_{1,t}^{\prime}+\mathbf{\Gamma }_{G}^{\ast }\left( L\right) \mathbf{u}_{t},
\end{equation*}%
where $\F_{1,t}^{\prime}=\mathbf{\beta }_{G}^{\prime}\sum_{s=1}^{t} \mathbf{u}_{s}$ is the common stochastic trend. Then, by substituting, we
have%
\begin{equation*}
\X_{t}=\R_{1}\mathbf{G}_{t}^{\prime}+\widetilde{\E}_{t},
\end{equation*}
where $\widetilde{\E}_{t}$ is a stationary matrix-valued whose $j$-th column is given by $\mathbf{\Gamma }_{j}^{\ast }\left( L\right) \beps_{\cdot j,t}$ in (\ref{ctx}); using (\ref{ctg}), we receive
\begin{equation}
\X_{t}=\R_{1}\F_{1,t}\C_{1}^{\prime}+\R_{1}\mathbf{u}_{t}^{\prime}\left[ \mathbf{\Gamma }_{G}^{\ast}(1) \right] ^{\prime}+\E_{t},  \label{mecm}
\end{equation}%
where $\E_{t}$ is the overall error term. In this model, we have the same structure for the common stochastic trend(s) $\F_{1,t}$ as in
model (\ref{model}). However, the common stationary component $\R_{1}\mathbf{u}_{t}^{\prime}\left[ \mathbf{\Gamma }_{G}^{\ast }\left( L\right) 
\right] ^{\prime}$ is different to the one in (\ref{model}), in that $\mathbf{\Gamma }_{G}^{\ast }(1) $ may have full rank, thus entailing that the stationary common component $\R_{1}\mathbf{u}_{t}^{\prime}\left[ \mathbf{\Gamma }_{G}^{\ast }(1) \right]^{\prime}$ does not have a ``two-way''\ but a ``one-way''\ structure. Our assumptions hereafter are also different to the ones implicitly present in (\ref{mecm}), seeing as we assume that $\F_{1,t}$ and $\F_{0,t}$ are independent, whereas, in (\ref{mecm}), $\F_{1,t}$ and $\mathbf{u}_{t}$ clearly are not. Hence, the extension of our methodology to a cointegrated system - whilst building on the methodology developed herein - is a not entirely trivial task, which is currently under investigation by the authors.
\par
As a second extension, augmenting (\ref{model})-(\ref{model_ft}) to include linear trends could be also of interest. In such a case, (\ref{model}) would become 
\begin{equation}
\X_{t}=\mathbf{A}t\mathbf{B}^{\prime}+\R_{1}\F_{1,t}\C_{1}^{\prime}+\R_{0}\F_{0,t}\C
_{0}^{\prime}+\E_{t}.  \label{model_trend}
\end{equation}
An ``anti-projection'' based estimation strategy, in this case, could be based on estimating $\C_{1}$ and $\C_{0}$ from the first-differenced version of (\ref{model_trend})
\begin{eqnarray*}
\Delta \X_{t} &=&\mathbf{AB}^{\prime}+\R_{1}\Delta \F_{1,t}\C_{1}^{\prime}+\R_{0}\Delta \F_{0,t}\C_{0}^{\prime}+\Delta \E_{t} \\
&=&\mathbf{AB}^{\prime}+\R\Delta \F_{t}\C+\Delta \E_{t},
\end{eqnarray*}%
using e.g. the projection-based estimator of \citet{hkyz2021} (denoting this as, say, $\hat{\R}\ $and $\hat{\C}$), and subsequently
estimating $\mathbf{A}$ from the anti-projected version of $\X_{t}$, viz. $\X_{t}\hat{\C}_{\perp}$. This extension, as well
as the case where one of the common $I(1)$ factor has a drift component, is also under investigation by the authors. 

\bibliographystyle{chicago}
\bibliography{LTbiblio}

\newpage \clearpage

\begin{appendix}
    
%\renewcommand*{\thesection}{\Alph{section}}

%\setcounter{section}{0} 
%%\setcounter{subsection}{-1} \setcounter{subsubsection}{-1} %
%\setcounter{equation}{0} \setcounter{lemma}{0} \setcounter{theorem}{0} %
%\renewcommand{\theassumption}{A.\arabic{assumption}} 
%%\renewcommand{\thesection}{A.\arabic{section}}
%\renewcommand{\thetheorem}{A.\arabic{theorem}} \renewcommand{\thelemma}{A.%
%\arabic{lemma}} \renewcommand{\theproposition}{A.\arabic{proposition}} %
%\renewcommand{\thecorollary}{A.\arabic{corollary}} \renewcommand{%
%\theequation}{A.\arabic{equation}}

%% ********************************************************
\section{Technical lemmas\label{lemmas}}
%% ********************************************************
Henceforth, whenever possible in this and in the next section, for
simplicity and without loss of generality we will carry out our proofs under
the constraints $h_{R_{1}}=1$, $h_{C_{1}}=1$, $h_{R_{0}}=1$, and $h_{C_{0}}=1$.

\begin{lemma}
\label{stout}Consider a multi-index random variable $U_{i_{1},\dots,i_{h}}$,
with $1\leq i_{1}\leq S_{1}$, $1\leq i_{2}\leq S_{2}$, etc\dots Assume that%
\begin{equation}
\sum_{S_{1}}\cdot \cdot \sum_{S_{h}}\frac{1}{S_{1}\cdot \dots\cdot S_{h}}%
P\left( \max_{1\leq i_{1}\leq S_{1},\dots,1\leq i_{h}\leq S_{h}}\left\vert
U_{i_{1},\dots,i_{h}}\right\vert >\epsilon L_{S_{1},\dots,S_{h}}\right) <\infty ,
\label{bc-1}
\end{equation}%
for some $\epsilon >0$ and a sequence $L_{S_{1},\dots,S_{h}}$ defined as%
\begin{equation*}
L_{S_{1},\dots,S_{h}}=S_{1}^{d_{1}}\cdot \dots\cdot S_{h}^{d_{h}}l_{1}\left(
S_{1}\right) \cdot \dots l_{h}\left( S_{h}\right) ,
\end{equation*}%
where $d_{1}$, $d_{2}$, etc. are non-negative numbers and $l_{1}\left( \cdot
\right) $, $l_{2}\left( \cdot \right) $, etc. are slowly varying functions
in the sense of Karamata. Then it holds that%
\begin{equation}
\limsup_{\left( S_{1},\dots,S_{h}\right) \rightarrow \infty }\frac{\left\vert
U_{S_{1},\dots,S_{h}}\right\vert }{L_{S_{1},\dots,S_{h}}}=0\text{ \textit{a.s.}}
\label{bc-2}
\end{equation}

\begin{proof}
The lemma is shown in \citet{massacci2022high} - see in particular Lemma A11
therein.
\end{proof}
\end{lemma}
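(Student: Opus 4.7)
The plan is a multi-index Borel--Cantelli argument along the dyadic skeleton $S_j=2^{k_j}$, combined with Karamata's uniform convergence theorem for slowly varying functions. Write $V_{S_1,\dots,S_h}:=\max_{i_j\leq S_j}|U_{i_1,\dots,i_h}|$, which is non-decreasing in each argument, and fix $\epsilon>0$ (the hypothesis being, as is usual for this type of statement, assumed for every such $\epsilon$, or equivalently later applied to the sequence $\epsilon=1/n$).

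First I would reduce the hypothesis to the dyadic vertices. Regular variation of each $l_j$, together with the power-law in $L$, yields a constant $C_0>0$ (depending only on $h$ and on $d_1,\dots,d_h$) such that, for all sufficiently large multi-indices and any $S_j\in[2^{k_j},2^{k_j+1})$,
\[
C_0^{-1}\,L_{2^{k_1},\dots,2^{k_h}}\;\leq\;L_{S_1,\dots,S_h}\;\leq\;C_0\,L_{2^{k_1},\dots,2^{k_h}}.
\]
Combined with monotonicity of $V$, this shows that the event $\{V_{2^{k_1},\dots,2^{k_h}}>C_0\epsilon\,L_{2^{k_1},\dots,2^{k_h}}\}$ is contained in $\{V_{S_1,\dots,S_h}>\epsilon\,L_{S_1,\dots,S_h}\}$ for every $(S_1,\dots,S_h)$ in the dyadic block $\prod_j[2^{k_j},2^{k_j+1})$. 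Since $\prod_j\sum_{S_j=2^{k_j}}^{2^{k_j+1}-1}S_j^{-1}\geq 2^{-h}$, summing (\ref{bc-1}) over that block gives
\[
\sum_{k_1,\dots,k_h\geq 0} 2^{-h}\,P\!\left(V_{2^{k_1},\dots,2^{k_h}}>C_0\epsilon\,L_{2^{k_1},\dots,2^{k_h}}\right)\;\leq\;\sum_{S_1,\dots,S_h}\frac{1}{S_1\cdots S_h}\,P\!\left(V_{S_1,\dots,S_h}>\epsilon\,L_{S_1,\dots,S_h}\right)\;<\;\infty.
\]

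Second, the Borel--Cantelli lemma applied to this countable family of events at the dyadic vertices gives that, almost surely, $V_{2^{k_1},\dots,2^{k_h}}\leq C_0\epsilon\,L_{2^{k_1},\dots,2^{k_h}}$ for all $(k_1,\dots,k_h)$ sufficiently large. Interpolating back to arbitrary $S_j\to\infty$ by choosing $k_j$ with $2^{k_j}\leq S_j<2^{k_j+1}$, and using monotonicity of $V$ together with the two-sided regular-variation bound above, one obtains
\[
\frac{|U_{S_1,\dots,S_h}|}{L_{S_1,\dots,S_h}}\;\leq\;\frac{V_{2^{k_1+1},\dots,2^{k_h+1}}}{L_{2^{k_1+1},\dots,2^{k_h+1}}}\cdot\frac{L_{2^{k_1+1},\dots,2^{k_h+1}}}{L_{S_1,\dots,S_h}}\;\leq\;C_0^{2}\,\epsilon
\]
eventually, almost surely. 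Intersecting the resulting countably many full-measure events along $\epsilon=1/n$ delivers (\ref{bc-2}).

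The main obstacle is the multi-index bookkeeping at the condensation step: one must ensure that the two-sided regular-variation bound on $L$ is uniform across each dyadic block (this is where Karamata's theorem is essential, since the $l_j$ need not be monotone and some $d_j$ may vanish), and that the condensation inequality is applied in the correct direction so that the summability of the block-level probabilities is inherited from the full hypothesis, rather than the other way around.
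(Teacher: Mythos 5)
Your proposal is correct, but note that the paper itself does not prove this lemma at all: it simply defers to Lemma A11 of \citet{massacci2022high}. What you have written is a self-contained version of the standard Stout-type argument (dyadic condensation, Borel--Cantelli at the skeleton vertices, interpolation via monotonicity of the running maximum and Karamata's uniform convergence theorem), and every step checks out. In particular, the condensation inequality is applied in the right direction: the event inclusion $\{V_{2^{k}}>C_0\epsilon L_{2^{k}}\}\subseteq\{V_{S}>\epsilon L_{S}\}$ for every $S$ in the dyadic block, together with $\prod_j\sum_{S_j=2^{k_j}}^{2^{k_j+1}-1}S_j^{-1}\geq 2^{-h}$, correctly transfers the summability of (\ref{bc-1}) to the block-level probabilities rather than the reverse. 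The multi-index Borel--Cantelli step is also sound, since "only finitely many events occur" means the exceptional set of indices is contained in a bounded box, which suffices for a limsup taken as all $S_j\rightarrow\infty$. The one point worth making explicit is the one you already flag: as literally stated, the hypothesis (\ref{bc-1}) holds "for some $\epsilon>0$", which by itself only yields $\limsup\leq C\epsilon$; the conclusion (\ref{bc-2}) requires the hypothesis along a sequence $\epsilon_n\downarrow 0$, which is how the lemma is invoked throughout the paper (the probability bounds there come from Markov-type inequalities valid for every $\epsilon$). Your reading is therefore the intended one, and your intersection over $\epsilon=1/n$ closes the argument.
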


\begin{lemma}
\label{berkes}We assume that Assumption \ref{as-1} is satisfied. Then it
holds that%
\begin{equation}
E\left\Vert \F_{1,t}\right\Vert _{F}^{2}\leq c_{0}t,  \label{lemma1-i}
\end{equation}%
\begin{equation}
\left\Vert \sum_{t=1}^{T}\F_{1,t}\right\Vert _{F}^{2}=O_{P}\left(
T^{2}\right)  \label{lemma1-ii}
\end{equation}

\begin{proof}
Equation (\ref{lemma1-i}) follows immediately from Proposition 4 in %
\citet{berkeshormann}. As far as (\ref{lemma1-ii}) is concerned, it follows
immediately from the FCLT for Bernoulli shifts (e.g., Theorem A.1 in %
\citealp{aue09}) and the Continuous Mapping Theorem.
\end{proof}
\end{lemma}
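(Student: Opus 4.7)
The plan is to prove the two parts with standard tools from the theory of decomposable Bernoulli shifts, both applicable under Assumption \ref{as-1}. Throughout I take, without loss of generality, $\mathbf{F}_{1,0}=\mathbf{0}$, so that $\mathbf{F}_{1,t}=\sum_{s=1}^{t}\boldsymbol{\eps}_{s}$.

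For part (i), I would argue entrywise, exploiting that $\mathcal{L}_{2+\delta}$-decomposability is inherited by every scalar coordinate of $\boldsymbol{\eps}_{t}$, so each entry $(\boldsymbol{\eps}_{s})_{ij}$ is itself an $\mathcal{L}_{2+\delta}$-decomposable Bernoulli shift with the same polynomial decay rate $a>2$. Proposition 4 of Berkes and H\"ormann then yields $E\bigl(\sum_{s=1}^{t}(\boldsymbol{\eps}_{s})_{ij}\bigr)^{2}\le c\,t$ uniformly in $(i,j)$; the proof of that proposition rests on the $\ell$-dependent approximants $\boldsymbol{\eps}_{s,\ell}^{*}$, combined with the error bound $|\boldsymbol{\eps}_{s}-\boldsymbol{\eps}_{s,\ell}^{*}|_{2+\delta}\le c_{0}\ell^{-a}$, to produce summable covariances. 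Summing this entrywise bound over the finitely many entries of the $h_{R_{1}}\times h_{C_{1}}$ matrix yields $E\|\mathbf{F}_{1,t}\|_{F}^{2}\le c_{0}t$, as claimed.

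For part (ii), the plan is to invoke the FCLT for Bernoulli shifts, e.g.\ Theorem A.1 in Aue, H\"ormann, Horv\'ath and Reimherr. Under Assumption \ref{as-1}(i)--(ii), this gives weak convergence in the Skorokhod topology on $D[0,1]$ of the normalised partial-sum process $T^{-1/2}\mathbf{F}_{1,\lfloor Tr\rfloor}$ to a matrix-valued Brownian motion whose covariance is determined by $\Sigma_{F}^{(a)}$ and $\Sigma_{F}^{(b)}$. Since the integration functional $x\mapsto\int_{0}^{1}x(r)\,dr$ is continuous in the Skorokhod topology, the Continuous Mapping Theorem implies that the suitably normalised sum $T^{-1}\sum_{t=1}^{T}T^{-1/2}\mathbf{F}_{1,t}$ converges weakly to an integrated Brownian motion, a tight (hence $O_{P}(1)$) random element of $\mathbb{R}^{h_{R_{1}}\times h_{C_{1}}}$. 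Squaring the Frobenius norm and undoing the scaling delivers the announced $O_{P}$ rate.

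The only nontrivial checks are (a) that the $\mathcal{L}_{2+\delta}$-decomposable structure with $a>2$ supplies both the finite-dimensional convergence and the tightness required by the Aue et al.\ FCLT (tightness comes essentially from the entrywise second-moment bound established in part (i) together with the decomposability-based maximal inequality), and (b) that the limiting long-run covariances produced by the invariance principle are precisely the $\Sigma_{F}^{(a)}$ and $\Sigma_{F}^{(b)}$ posited in Assumption \ref{as-1}(ii). Both reductions are routine given the hypotheses, so I do not foresee any substantive obstacle beyond correctly invoking the two cited results.
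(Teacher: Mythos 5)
Your part (i) is exactly the paper's argument (entrywise reduction plus Proposition 4 of Berkes and H\"ormann), and is fine. The problem is in part (ii), and it is a genuine gap rather than a stylistic difference. Your own computation shows that $T^{-3/2}\sum_{t=1}^{T}\F_{1,t}$ converges to an integrated Brownian motion, i.e.\ $\bigl\Vert\sum_{t=1}^{T}\F_{1,t}\bigr\Vert_{F}=O_{P}(T^{3/2})$ and hence $\bigl\Vert\sum_{t=1}^{T}\F_{1,t}\bigr\Vert_{F}^{2}=O_{P}(T^{3})$ --- and this order is exact, since for a random walk $\sum_{t=1}^{T}\F_{1,t}=\sum_{s=1}^{T}(T-s+1)\beps_{s}$ has variance of order $T^{3}$. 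Yet you conclude that ``undoing the scaling delivers the announced $O_{P}$ rate'' of $T^{2}$. It does not: your argument delivers $T^{3}$, and the display as literally printed ($\Vert\sum_{t}\F_{1,t}\Vert_{F}^{2}=O_{P}(T^{2})$) is in fact false. You should have caught this mismatch rather than asserted agreement.

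What is actually meant, and what is used everywhere downstream (e.g.\ to bound $\Vert\sum_{t=1}^{T}\F_{1,t}\F_{1,t}^{\prime}\Vert_{F}$ in the proof of Theorem \ref{hat-estimates} and $\sum_{t=1}^{T}\Vert\F_{1,t}\Vert_{F}^{2}$ in Lemma \ref{spec-eig-mRx}), is the statement with the sum outside the norm,
\begin{equation*}
\sum_{t=1}^{T}\left\Vert \F_{1,t}\right\Vert _{F}^{2}=O_{P}\left( T^{2}\right) .
\end{equation*}
For this the FCLT plus CMT argument goes through, but with the functional $x\mapsto\int_{0}^{1}\Vert x(r)\Vert_{F}^{2}\,dr$ applied to $T^{-1/2}\F_{1,\lfloor Tr\rfloor}$, not the integration functional $x\mapsto\int_{0}^{1}x(r)\,dr$ you chose; alternatively, and more cheaply, it follows from your part (i) and Markov's inequality, since $E\sum_{t=1}^{T}\Vert\F_{1,t}\Vert_{F}^{2}\leq c_{0}\sum_{t=1}^{T}t\leq c_{0}T^{2}$. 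So the fix is small, but as written your part (ii) proves a different (and false) claim at an inconsistent rate.
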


\begin{lemma}
\label{chung}We assume that Assumption \ref{as-1} is satisfied. Then it
holds that%
\begin{align}
\liminf_{T\rightarrow \infty }\frac{\log \log T}{T^{2}}\sum_{t=1}^{T}\F_{1,t}\F_{1,t}^{\prime} &=D_{1}\text{ \ \ a.s.},  \label{donsker-1}
\\
\liminf_{T\rightarrow \infty }\frac{\log \log T}{T^{2}}\sum_{t=1}^{T}\F_{1,t}^{\prime}\F_{1,t} &=D_{2}\text{ \ \ a.s.},  \label{donsker-2}
\end{align}%
where $D_{1}$ and $D_{2}$\ are two positive definite matrices of dimensions $%
h_{R_{1}}\times h_{R_{1}}$ and $h_{C_{1}}\times h_{C_{1}}\ $respectively.

\begin{proof}
We prove (\ref{donsker-1}) - the proof of (\ref{donsker-2}) is the same.
Assumption \ref{as-1} entails that, for each $T$, it is possible to define a
matrix valued, $h_{R_{1}}\times h_{C_{1}}$-dimensional Wiener process $\left\{ 
\W_{RC,T}\left( k\right) ,1\leq k\leq T\right\} $, with covariance
matrix $\Sigma _{F}^{\left( a\right) }$\ such that%
\begin{equation}
\max_{1\leq k\leq T}\left\Vert \F_{1,k}-\W_{RC,T}\left(
k\right) \right\Vert =O_{a.s.}\left( T^{1/2-\zeta }\right) ,  \label{sip}
\end{equation}%
for some $\zeta >0$ (a proof can be found e.g. in \citealp{aue2014}).
Considering%
\begin{align*}
&\sum_{t=1}^{T}\F_{1,t}\F_{1,t}^{\prime}\\
&=\sum_{t=1}^{T}\W_{RC,T}\left( t\right) \W_{RC,T}^{\prime}\left( t\right) +\sum_{t=1}^{T}\left( \F_{1,t}-\W%
_{RC,T}\left( t\right) \right) \W_{RC,T}^{\prime}\left( t\right) \\
&+\sum_{t=1}^{T}\W_{RC,T}\left( t\right) \left( \F_{1,t}-%
\W_{RC,T}\left( t\right) \right) +\sum_{t=1}^{T}\left( \F_{1,t}-\W_{RC,T}\left( t\right) \right) \left( \F_{1,t}-\W_{RC,T}\left( t\right) \right) ^{\prime},
\end{align*}%
equation (\ref{sip}) entails%
\begin{align*}
&\left\Vert \sum_{t=1}^{T}\left( \F_{1,t}-\W_{RC,T}\left(t\right) \right) \W_{RC,T}^{\prime}\left(t\right) \right\Vert \\
&\leq \sum_{t=1}^{T}\left\Vert \F_{1,t}-\W_{RC,T}\left(t\right) \right\Vert \left\Vert \W_{RC,T}\left( t\right) \right\Vert\leq T\max_{1\leq k\leq T}\left\Vert \F_{1,k}-\W_{RC,T}\left(t\right) \right\Vert \max_{1\leq k\leq T}\left\Vert \W_{RC,T}\left(t\right) \right\Vert \\
&=T\cdot O_{a.s.}\left( T^{1/2-\zeta }\right) O_{a.s.}\left( T^{1/2}\left( \log
\log T\right) ^{1/2}\right) =O_{a.s.}\left( T^{2-\zeta }\left( \log \log
T\right) ^{1/2}\right) ,
\end{align*}%
and by the same token%
\begin{align*}
&\left\Vert \sum_{t=1}^{T}\left( \F_{1,t}-\W_{RC,T}\left(t\right) \right) \left( \F_{1,t}-\W_{RC,T}\left( t\right)\right) ^{\prime}\right\Vert \\
&\leq T\left( \max_{1\leq k\leq T}\left\Vert \F_{1,k}-\W_{RC,T}\left( t\right) \right\Vert \right) ^{2}=O_{a.s.}\left( T^{2-2\zeta}\right) ,
\end{align*}%
whence 
\begin{equation*}
\frac{\log \log T}{T^{2}}\sum_{t=1}^{T}\F_{1,t}\F_{1,t}^{\prime}=\frac{\log \log T}{T^{2}}\sum_{t=1}^{T}\W_{RC,T}\left( t\right) 
\W_{RC,T}^{\prime}\left( t\right) +o_{a.s.}(1) .
\end{equation*}%
Note now that, letting $\W_{RC,T,j}\left( t\right) $ be the $j$-th
column of $\W_{RC,T}\left( t\right) $%
\begin{equation*}
\frac{\log \log T}{T^{2}}\sum_{t=1}^{T}\W_{RC,T}\left( t\right) 
\W_{RC,T}^{\prime}\left( t\right) =\sum_{j=1}^{p_{2}}\frac{\log
\log T}{T^{2}}\sum_{t=1}^{T}\W_{RC,T,j}\left( t\right) \W%
_{RC,T,j}^{\prime}\left( t\right) .
\end{equation*}%
We now show that the limit of the expression above is positive definite. To
this end, let $\mathbf{b}$ be a $h_{C_{1}}\times 1$ nontrivial vector and
consider%
\begin{equation*}
\sum_{j=1}^{p_{2}}\frac{\log \log T}{T^{2}}\sum_{t=1}^{T}\mathbf{b}^{\prime}%
\W_{RC,T,j}\left( t\right) \W_{RC,T,j}^{\prime}\left(
t\right) \mathbf{b};
\end{equation*}%
given that the distribution of $\W_{RC,T}\left( t\right) $,does not
depend on $T$, we have that $\left\{ \mathbf{b}^{\prime}\W_{RC,T,j}\left( t\right) \right. ,$ $\left. 1\leq t\leq T\right\} $ $\overset{\D}{=}$ $\left\{ \widetilde{W}_{j}\left( t\right) ,1\leq t\leq
T\right\} $, where $\left\{ \widetilde{W}_{j}\left( t\right) ,1\leq t\leq
T\right\} $ is a scalar Wiener process with variance such that%
\begin{equation*}
\sum_{j=1}^{p_{2}}\mathbf{b}^{\prime}E\left( \W_{RC,T,j}\left(
1\right) \W_{RC,T,j}^{\prime}(1) \right) \mathbf{b=b}%
^{\prime}\Sigma _{F}^{\left( a\right) }\mathbf{b.}
\end{equation*}%
Using equation (4.6) in \citet{donsker1977laws}, it therefore follows that%
\begin{equation*}
\liminf_{T\rightarrow \infty }\sum_{j=1}^{p_{2}}\frac{\log \log T}{T^{2}}%
\sum_{t=1}^{T}\mathbf{b}^{\prime}\W_{RC,T,j}\left( t\right) \W_{RC,T,j}^{\prime}\left( t\right) \mathbf{b}=\frac{1}{2}\mathbf{b}^{\prime}\Sigma _{F}^{\left( a\right) }\mathbf{b},
\end{equation*}%
a.s.; seeing as $\Sigma _{F}^{\left( a\right) }$ is positive definite by
Assumption \ref{as-1}\textit{(ii)}(a), $\mathbf{b}^{\prime}\Sigma
_{F}^{\left( a\right) }\mathbf{b}>0$ for all nontrivial $\mathbf{b}$. The
desired result now follows.
\end{proof}
\end{lemma}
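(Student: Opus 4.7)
The plan is to reduce the lower law of the iterated logarithm for $\sum_{t=1}^{T}\F_{1,t}\F_{1,t}^{\prime}$ to the analogous statement for a Brownian motion, via a strong invariance principle, and then invoke a Chung-type lower LIL for quadratic functionals of Wiener processes. Recall that, by Assumption \ref{as-1}, $\{\Ve(\beps_{t})\}$ is an $\mathcal{L}_{2+\delta}$-decomposable Bernoulli shift with decay exponent $a>2$, so existing strong approximation results yield a matrix-valued Wiener process $\W_{RC,T}(k)$ with covariance $\Sigma_{F}^{(a)}$ such that $\max_{1\le k\le T}\|\F_{1,k}-\W_{RC,T}(k)\|=O_{a.s.}(T^{1/2-\zeta})$ for some $\zeta>0$.

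First, I would expand $\sum_{t=1}^{T}\F_{1,t}\F_{1,t}^{\prime}$ around $\sum_{t=1}^{T}\W_{RC,T}(t)\W_{RC,T}(t)^{\prime}$. The three cross-terms are bounded using the approximation rate and the LIL for the running maximum of $\W_{RC,T}$ (which grows like $(T\log\log T)^{1/2}$), giving a remainder of order $T^{2-\zeta}(\log\log T)^{1/2}$ almost surely, which is $o_{a.s.}(T^{2}/\log\log T)$. Hence dividing by $T^{2}/\log\log T$, only the pure Wiener quadratic contribution matters. At this point the problem is reduced to showing
\begin{equation*}
\liminf_{T\to\infty}\frac{\log\log T}{T^{2}}\sum_{t=1}^{T}\W_{RC,T}(t)\W_{RC,T}(t)^{\prime}=D_{1}\quad\text{a.s.}
\end{equation*}
for some positive definite $D_{1}$.

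To identify $D_{1}$, I would pass to scalar functionals: for any nontrivial vector $\mathbf{b}\in\mathbb{R}^{h_{C_{1}}}$, column-wise decomposition of $\W_{RC,T}$ together with Riemann-sum approximation of $T^{-2}\sum_{t=1}^{T}(\mathbf{b}^{\prime}\W_{RC,T,j}(t))^{2}$ by the integral $\int_{0}^{1}(\mathbf{b}^{\prime}\W_{RC,T,j}(Ts))^{2}ds$ reduces the quantity to a weighted sum of $\int_{0}^{1}\widetilde{W}_{j}(s)^{2}ds$, a standard Wiener integral with variance proportional to $\mathbf{b}^{\prime}\Sigma_{F}^{(a)}\mathbf{b}$. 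Then I would invoke the Donsker-Varadhan / Chung-type lower LIL for integrals of squared Brownian motion (cf.\ \citealp{donsker1977laws}, eq.\ (4.6)) to obtain a deterministic liminf of the form $c\cdot \mathbf{b}^{\prime}\Sigma_{F}^{(a)}\mathbf{b}$. Positive definiteness of $\Sigma_{F}^{(a)}$ in Assumption \ref{as-1}\textit{(ii)}(a) ensures this limit is strictly positive for every nontrivial $\mathbf{b}$, which in turn yields positive definiteness of $D_{1}$.

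The main obstacle I anticipate is two-fold: (i) making sure that the strong-approximation remainder is in fact negligible \emph{almost surely} at the $T^{2}/\log\log T$ scale for \emph{every} bilinear form (rather than just in expectation or in probability), which requires care in combining the Cauchy-Schwarz-type splitting with the a.s.\ order of $\max_{t}\|\W_{RC,T}(t)\|$; and (ii) moving from the scalar, one-dimensional liminf identification to a matrix-valued a.s.\ liminf $D_{1}$, since liminf does not in general commute with linear combinations. The second obstacle would be handled by choosing a dense countable family of directions $\mathbf{b}$, applying the scalar Chung-type LIL on each (obtaining a.s.\ convergence on a probability-one event), and then extending by continuity to all $\mathbf{b}$. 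The argument for $D_{2}$ via $\sum_{t}\F_{1,t}^{\prime}\F_{1,t}$ is entirely analogous, using row-wise decomposition and Assumption \ref{as-1}\textit{(ii)}(b).
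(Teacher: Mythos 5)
Your proposal follows essentially the same route as the paper's proof: a strong invariance principle reduces the sum to a quadratic functional of a Wiener process, the cross-terms are shown to be $o_{a.s.}(T^{2}/\log\log T)$ using the approximation rate and the LIL for the running maximum, and the Chung-type lower bound of \citet{donsker1977laws} (equation (4.6)) applied to bilinear forms $\mathbf{b}^{\prime}(\cdot)\mathbf{b}$ together with positive definiteness of $\Sigma_{F}^{(a)}$ identifies the limit. Your additional care about passing from directional liminfs to the matrix-valued statement via a dense countable family of directions is a sensible refinement of a step the paper leaves implicit, but it does not change the substance of the argument.
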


\begin{lemma}
\label{f1-summab}We assume that Assumption \ref{as-2} is satisfied. Then it
holds that 
\begin{align}
\sum_{t,s=1}^{T}\left\Vert E\left( \F_{0,t}\F_{0,s}^{\prime}\right) \right\Vert _{F} &\leq c_{0}T,\text{ and } \sum_{t,s=1}^{T}\left \Vert E\left( \F_{0,t}^{\prime}\F_{0,s}\right) \right\Vert_{F}\leq c_{1}T,  \label{f1-sum-1} \\
\sum_{t=1}^{T}\left\Vert \F_{0,t}\right\Vert _{F}^{2} &= O_{a.s.}\left( T\right) ,  \label{f1-sum-2}
\end{align}%
for some $0<c_{0},c_{1}<\infty $.

\begin{proof}
Equation (\ref{f1-sum-1}) is a direct consequence of Assumption \ref{as-2} - see in particular the proof of Lemma A.3 in \citet{massacci2022high}. As far as (\ref{f1-sum-2}) is concerned, we begin by showing that $\left\Vert 
\F_{0,t}\right\Vert_{F}^{2}$ is an $\mathcal{L}_{2}$-decomposable Bernoulli shift, focusing on the case $h_{R_{0}}=h_{C_{0}}=1$ for simplicity and with no loss of generality. Consider the construction%
\begin{equation*}
\F_{0,t,t}^{\ast }=g_{\F_{0}}\left( \eta_{t}^{\F_{0}},\dots,\eta_{1}^{\F_{0}}, \widetilde{\eta}_{0}^{\F_{0}},\widetilde{\eta}_{-1}^{\F_{0}},\dots\right) .
\end{equation*}%
It holds that 
\begin{align*}
&\left\vert \F_{0,t}^{2}-\left( \F_{0,t,t}^{\ast }\right)^{2}\right\vert _{2}\leq \left\vert \F_{0,t}+\F%
_{0,t,t}^{\ast }\right\vert _{4}\left\vert \F_{0,t}-\F_{0,t,t}^{\ast }\right\vert _{4} \\
&\leq 2\left\vert \F_{0,t}\right\vert _{4}\left\vert \F_{0,t}-\F_{0,t,t}^{\ast }\right\vert _{4}\leq c_{0}^{-a},
\end{align*}%
which entails that $\left\Vert \F_{0,t}\right\Vert _{F}^{2}$ is a $%
\mathcal{L}_{2}$-decomposable Bernoulli shift. Hence, by Proposition 4 in %
\citet{berkeshormann}, it follows that%
\begin{equation*}
E\left\vert \sum_{t=1}^{T}\left( \left\Vert \F_{0,t}\right\Vert
_{F}^{2}-E\left\Vert \F_{0,0}\right\Vert _{F}^{2}\right) \right\vert
^{2}\leq c_{0}T,
\end{equation*}%
and therefore Theorem 3 in \citet{moricz1976moment}\ yields%
\begin{equation*}
E\max_{1\leq k\leq T}\left\vert \sum_{t=1}^{T}\left( \left\Vert \F%
_{0,t}\right\Vert _{F}^{2}-E\left\Vert \F_{0,0}\right\Vert
_{F}^{2}\right) \right\vert ^{2}\leq c_{0}T\left( \log 2T\right) ^{4}.
\end{equation*}%
Equation (\ref{f1-sum-2}) readily follows from the SLLN.
\end{proof}
\end{lemma}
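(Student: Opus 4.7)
The plan is to exploit the $\mathcal{L}_{4}$-decomposable Bernoulli shift structure of $\{\Ve(\F_{0,t})\}$ afforded by Assumption \ref{as-2}. Without loss of generality I take $E\F_{0,t}=0$, which is the standard normalization for factor models and is consistent with Assumption \ref{as-2}(ii) treating $E(\F_{0,t}\F_{0,t}^{\prime})$ as both second moment and covariance.

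For the first inequality in (\ref{f1-sum-1}), I would bound $\|E(\F_{0,t}\F_{0,s}^{\prime})\|_{F}$ for $t\neq s$ by a coupling argument. Set $\ell=|t-s|$ and introduce the decoupled copy $\F_{0,t,\ell}^{\ast}$, obtained by replacing $\eta_{t-\ell},\eta_{t-\ell-1},\dots$ with independent replicas. For $s\le t-\ell$, $\F_{0,t,\ell}^{\ast}$ is independent of $\F_{0,s}$, hence under the zero-mean normalization $E(\F_{0,t,\ell}^{\ast}\F_{0,s}^{\prime})=0$. Cauchy--Schwarz combined with the decomposability bound and the uniform $\mathcal{L}_{2}\subseteq\mathcal{L}_{4}$ moment bound yields
\[
\|E(\F_{0,t}\F_{0,s}^{\prime})\|_{F}=\|E((\F_{0,t}-\F_{0,t,\ell}^{\ast})\F_{0,s}^{\prime})\|_{F}\leq |\Ve\F_{0,t}-\Ve\F_{0,t,\ell}^{\ast}|_{2}\,|\Ve\F_{0,s}|_{2}\leq c_{0}\ell^{-a}.
\]
Summing and using the diagonal bound from Assumption \ref{as-2}(ii)(a) gives
\[
\sum_{t,s=1}^{T}\|E(\F_{0,t}\F_{0,s}^{\prime})\|_{F}\leq T\|\Sigma_{F,1}^{(a)}\|_{F}+2c_{0}T\sum_{\ell=1}^{\infty}\ell^{-a}\leq c_{0}T,
\]
since $a>2>1$ makes the tail series convergent. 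The second half of (\ref{f1-sum-1}) is proved identically, now invoking Assumption \ref{as-2}(ii)(b).

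For the almost-sure bound in (\ref{f1-sum-2}), the plan is to show that $Y_{t}:=\|\F_{0,t}\|_{F}^{2}$ is itself an $\mathcal{L}_{2}$-decomposable Bernoulli shift, and then to apply a moment inequality plus an SLLN. Setting $Y_{t,\ell}^{\ast}:=\|\F_{0,t,\ell}^{\ast}\|_{F}^{2}$ and using $|a^{2}-b^{2}|\leq(|a|+|b|)|a-b|$ together with Cauchy--Schwarz,
\[
|Y_{t}-Y_{t,\ell}^{\ast}|_{2}\leq \bigl(|\F_{0,t}|_{4}+|\F_{0,t,\ell}^{\ast}|_{4}\bigr)\,|\F_{0,t}-\F_{0,t,\ell}^{\ast}|_{4}\leq c_{0}\ell^{-a},
\]
where the $\mathcal{L}_{4}$ moment hypothesis on $\F_{0,t}$ is exploited twice. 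Proposition 4 of \citet{berkeshormann} then yields $E\bigl|\sum_{t=1}^{T}(Y_{t}-EY_{0})\bigr|^{2}\leq c_{0}T$, and the SLLN for stationary Bernoulli shifts gives $T^{-1}\sum_{t=1}^{T}Y_{t}\to EY_{0}$ a.s., which is exactly the claim $\sum_{t=1}^{T}\|\F_{0,t}\|_{F}^{2}=O_{a.s.}(T)$. A sharper bound on the centred sum could be obtained by feeding Moricz's maximal inequality into Lemma \ref{stout}, but it is overkill for the present statement.

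The one technical point that requires care is the calibration of the norm in the decomposability hypothesis: passing from $\F_{0,t}$ to its squared norm $\|\F_{0,t}\|_{F}^{2}$ costs exactly one index in $\mathcal{L}_{\nu}$, which is precisely why Assumption \ref{as-2}(i) is posed with $\mathcal{L}_{4}$ rather than $\mathcal{L}_{2}$. Once this calibration is in place, both parts reduce to routine summability computations combined with the SLLN for weakly dependent stationary sequences.
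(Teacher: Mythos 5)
Your proposal is correct and follows essentially the same route as the paper: for (\ref{f1-sum-2}) you establish, exactly as the paper does, that $\left\Vert \F_{0,t}\right\Vert _{F}^{2}$ is an $\mathcal{L}_{2}$-decomposable Bernoulli shift via the elementary inequality $|a^{2}-b^{2}|\leq (|a|+|b|)|a-b|$ and the $\mathcal{L}_{4}$ moment condition, then apply Proposition 4 of \citet{berkeshormann} and the SLLN. For (\ref{f1-sum-1}) the paper simply delegates to Lemma A.3 of \citet{massacci2022high}, whereas you spell out the underlying coupling argument (independence of the decoupled copy from $\F_{0,s}$, the zero-mean normalization, and summability of $\ell^{-a}$ for $a>2$); this is the standard proof of that cited result, so the two are substantively identical.
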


\begin{lemma}
\label{crossf}We assume that Assumptions \ref{as-1}, \ref{as-2} and \ref%
{as-5}\ are satisfied. Then it holds that%
\begin{equation*}
\left\Vert \sum_{t=1}^{T}\F_{1,t}\F_{0,t}\right\Vert
_{F}=O_{P}\left( T\right) .
\end{equation*}

\begin{proof}
Clearly%
\begin{align*}
&E\left\Vert \sum_{t=1}^{T}\F_{1,t}\F_{0,t}\right\Vert_{F}^{2}=\sum_{t,s=1}^{T}E\left(F_{1,t}\F_{1,s}\F_{0,t} \F_{0,s}\right) \leq \sum_{t,s=1}^{T}\left\vert E\left(\F_{1,t}\F_{1,s}\right) \right\vert \left\vert E\left( \F_{0,t}\F_{0,s}\right) \right\vert \\
&\leq \sum_{t,s=1}^{T}\left\vert \left( E\left\Vert \F_{1,t}\right\Vert _{F}^{2}\right) ^{1/2}\left( E\left\Vert \F_{1,s}\right\Vert _{F}^{2}\right) ^{1/2}\right\vert \left\vert E\left(\F_{0,t}\F_{0,s}\right) \right\vert \\
&\leq c_{0}T\sum_{t,s=1}^{T}\left\vert E\left( \F_{0,t}\F_{0,s}\right) \right\vert \leq c_{1}T^{2},
\end{align*}%
having used Assumptions \ref{as-1} and \ref{as-2}, (\ref{lemma1-i}) in the
penultimate passage, and (\ref{f1-sum-1}) in the last one. The desired
result now follows immediately from Markov inequality.
\end{proof}
\end{lemma}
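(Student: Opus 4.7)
The plan is to bound the second moment $E\Vert\sum_{t=1}^{T}\F_{1,t}\F_{0,t}\Vert_F^{2}$ and then conclude by Markov's inequality. Following the convention set in the appendix, I would work in the scalar case $h_{R_1}=h_{C_1}=h_{R_0}=h_{C_0}=1$, where the quantity of interest reduces to $(\sum_t F_{1,t}F_{0,t})^2$; the matrix case is then the same argument applied entrywise.

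First, I would expand the square and obtain
\[
E\Bigl(\sum_{t=1}^{T} F_{1,t}F_{0,t}\Bigr)^{2}=\sum_{t,s=1}^{T}E\bigl(F_{1,t}F_{1,s}F_{0,t}F_{0,s}\bigr).
\]
Here I invoke Assumption \ref{as-5}, which states that $\{\beps_t\}$ and $\{\F_{0,t}\}$ are independent groups; since $F_{1,t}$ is a measurable function of $\{\beps_s:s\le t\}$, the four-product factorises as $E(F_{1,t}F_{1,s})\,E(F_{0,t}F_{0,s})$. This independence step is the one conceptual hinge of the proof; the remainder is bookkeeping.

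Next I would control each factor separately. For the $I(1)$ factor, Cauchy--Schwarz together with Lemma \ref{berkes} (equation (\ref{lemma1-i})) gives
\[
\bigl|E(F_{1,t}F_{1,s})\bigr|\le \bigl(E F_{1,t}^{2}\bigr)^{1/2}\bigl(E F_{1,s}^{2}\bigr)^{1/2}\le c_0 (ts)^{1/2}\le c_0 T.
\]
For the stationary factor, the first bound in Lemma \ref{f1-summab} (equation (\ref{f1-sum-1})) yields $\sum_{t,s=1}^{T}|E(F_{0,t}F_{0,s})|\le c_1 T$. Multiplying the uniform bound on the $I(1)$ covariance through the double sum of the stationary autocovariances then produces
\[
E\Bigl(\sum_{t=1}^{T}F_{1,t}F_{0,t}\Bigr)^{2}\le c_0 T \sum_{t,s=1}^{T}\bigl|E(F_{0,t}F_{0,s})\bigr|\le c_0 c_1 T^{2}.
\]
A final application of Markov's inequality delivers $\Vert\sum_{t=1}^{T}\F_{1,t}\F_{0,t}\Vert_F = O_P(T)$, which is the desired conclusion. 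No step is especially delicate once independence is in hand; the only thing to watch is that the trivial bound $\sqrt{ts}\le T$ for $t,s\le T$ is sharp enough here because the stationary autocovariances are summable, so we do not need a finer $t$-dependent bound on $E(F_{1,t}F_{1,s})$.
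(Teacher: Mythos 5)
Your proposal is correct and follows essentially the same route as the paper's own proof: factorise the four-product via the independence in Assumption \ref{as-5}, bound $|E(F_{1,t}F_{1,s})|$ by Cauchy--Schwarz and (\ref{lemma1-i}) to get the uniform $c_0T$ bound, absorb the stationary part via the summability in (\ref{f1-sum-1}), and finish with Markov. No substantive differences.
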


\begin{lemma}
\label{err}We assume that Assumption \ref{as-3}\ are satisfied. Then it
holds that%
\begin{align}
\lambda _{\max }\left( \sum_{t=1}^{T}\E_{t}\E_{t}^{\prime}\right) &=O\left( p_{2}T\right) +O_{P}\left(
p_{1}p_{2}^{1/2}T^{1/2}\right) ,  \label{err-1} \\
\lambda _{\max }\left( \sum_{t=1}^{T}\E_{t}^{\prime}\E%
_{t}\right) &=O\left( p_{1}T\right) +O_{P}\left(
p_{1}^{1/2}p_{2}T^{1/2}\right) .  \label{err-2}
\end{align}

\begin{proof}
We can write%
\begin{equation*}
\sum_{t=1}^{T}\E_{t}\E_{t}^{\prime}=E\left( \sum_{t=1}^{T}%
\E_{t}\E_{t}^{\prime}\right) +\sum_{t=1}^{T}\left( \E_{t}\E_{t}^{\prime}-E\left( \E_{t}\E_{t}^{\prime}\right) \right) ,
\end{equation*}%
whence by Weyl's inequality%
\begin{equation*}
\lambda _{\max }\left( \sum_{t=1}^{T}\E_{t}\E_{t}^{\prime}\right) \leq \lambda _{\max }\left( E\left( \sum_{t=1}^{T}\E_{t}%
\E_{t}^{\prime}\right) \right) +\lambda _{\max }\left( \left( 
\E_{t}\E_{t}^{\prime}-E\left( \E_{t}\E%
_{t}^{\prime}\right) \right) \right) =I+II.
\end{equation*}%
Using again Weyl's inequality, it holds that 
\begin{equation*}
\lambda _{\max }\left( E\left( \sum_{t=1}^{T}\E_{t}\E%
_{t}^{\prime}\right) \right) \leq \sum_{t=1}^{T}\lambda _{\max }\left(
E\left( \E_{t}\E_{t}^{\prime}\right) \right) ,
\end{equation*}%
and by Assumption \ref{as-3}\textit{(ii)}(b)%
\begin{equation*}
\lambda _{\max }\left( E\left( \E_{t}\E_{t}^{\prime}\right)
\right) \leq \max_{1\leq i\leq
p_{1}}\sum_{j=1}^{p_{2}}\sum_{h=1}^{p_{1}}\left\vert E\left(
e_{ij,t}e_{hj,t}\right) \right\vert \leq c_{0}p_{2},
\end{equation*}%
whence finally it follows that $\lambda _{\max }\left( E\left( \sum_{t=1}^{T}%
\E_{t}\E_{t}^{\prime}\right) \right) =O\left( p_{2}T\right) 
$. Also, by symmetry%
\begin{align*}
&\lambda _{\max }\left( \left( \E_{t}\E_{t}^{\prime}-E\left( \E_{t}\E_{t}^{\prime}\right) \right) \right) \\
&\leq \left\Vert \E_{t}\E_{t}^{\prime}-E\left( \E_{t}\E_{t}^{\prime}\right) \right\Vert _{F}=\left(\sum_{i,j=1}^{p_{1}}\left( \sum_{t=1}^{T}\sum_{k=1}^{p_{2}}\left(e_{ik,t}e_{jk,t}-E\left( e_{ik,t}e_{jk,t}\right) \right) \right) ^{2}\right)^{1/2}.
\end{align*}%
It holds that%
\begin{align*}
&E\sum_{i,j=1}^{p_{1}}\left( \sum_{t=1}^{T}\sum_{k=1}^{p_{2}}\left(
e_{ik,t}e_{jk,t}-E\left( e_{ik,t}e_{jk,t}\right) \right) \right) ^{2} \\
&=\sum_{i,j=1}^{p_{1}}\sum_{t,s=1}^{T}\sum_{h,k=1}^{p_{2}}\cov\left(
e_{ik,t}e_{jk,t},e_{ih,s}e_{jh,s}\right) \leq c_{0}p_{1}^{2}p_{2}T,
\end{align*}%
by Assumption \ref{as-3}\textit{(iii)}(a). Hence by Markov inequality it
finally follows that%
\begin{equation*}
\lambda _{\max }\left( \left( \E_{t}\E_{t}^{\prime}-E\left( 
\E_{t}\E_{t}^{\prime}\right) \right) \right) =O_{P}\left(
p_{1}p_{2}^{1/2}T^{1/2}\right) ,
\end{equation*}%
whence (\ref{err-1}) follows. Equation (\ref{err-2}) follows from the same
logic.
\end{proof}
\end{lemma}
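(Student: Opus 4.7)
The plan is a classical concentration-plus-Weyl argument. First, decompose
\[
\sum_{t=1}^{T}\E_{t}\E_{t}^{\prime} = E\Bigl[\sum_{t=1}^{T}\E_{t}\E_{t}^{\prime}\Bigr] + M, \qquad M := \sum_{t=1}^{T}\bigl(\E_{t}\E_{t}^{\prime} - E[\E_{t}\E_{t}^{\prime}]\bigr),
\]
and apply Weyl's inequality for symmetric matrices to reduce the problem to bounding the spectral norms of the two pieces separately. The two terms in the claimed rate will correspond, respectively, to the deterministic mean and to the stochastic fluctuation $M$.

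For the deterministic part, I would use Weyl once more to push $\lambda_{\max}$ inside the sum over $t$, and then bound each $\lambda_{\max}(E[\E_{t}\E_{t}^{\prime}])$ by the maximum absolute row sum of that $p_{1}\times p_{1}$ symmetric matrix. The $(i,h)$ entry equals $\sum_{j=1}^{p_{2}}E(e_{ij,t}e_{hj,t})$; the $i=h$ diagonal contribution is bounded by $c\,p_{2}$ via the uniform second-moment bound in Assumption \ref{as-3}(i), while the $i\neq h$ off-diagonal rows are handled by Assumption \ref{as-3}(ii)(b) (evaluated at $s=t$). This yields $\lambda_{\max}(E[\E_{t}\E_{t}^{\prime}]) = O(p_{2})$ uniformly in $t$, so summation produces the $O(p_{2}T)$ contribution.

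For the stochastic part, I would dominate the spectral norm by the Frobenius norm, $\lambda_{\max}(M)\leq\|M\|_{F}$, and compute
\[
E\|M\|_{F}^{2} = \sum_{i,j=1}^{p_{1}}\sum_{t,s=1}^{T}\sum_{k,k'=1}^{p_{2}}\cov\bigl(e_{ik,t}e_{jk,t},\,e_{ik',s}e_{jk',s}\bigr).
\]
Assumption \ref{as-3}(iii)(a) bounds the triple sum in $(i,k',t)$ by an absolute constant at each fixed $(j,k,s)$, so the remaining three-fold summation contributes $p_{1}p_{2}T$; diagonal residuals at coincident indices are of lower order and are absorbed using the fourth-moment bound in (i). Markov's inequality then delivers $\|M\|_{F}=O_{P}(p_{1}p_{2}^{1/2}T^{1/2})$, matching the second term of (\ref{err-1}). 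The bound (\ref{err-2}) follows by applying precisely the same argument to $\sum_{t}\E_{t}^{\prime}\E_{t}$, with the roles of $p_{1}$ and $p_{2}$ swapped throughout and Assumption \ref{as-3}(ii)(c) replacing (ii)(b) for the deterministic piece.

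The main obstacle I anticipate is the careful index bookkeeping in the Frobenius calculation: each part of Assumption \ref{as-3} is stated only for distinct indices, so the contributions at coincident indices ($i=j$, $k=k'$, or $t=s$) must be peeled off and controlled separately, and one must verify that they are of lower order than the dominant $O(p_{1}^{2}p_{2}T)$ rate. Once that accounting is in place, the rest of the argument is a routine combination of Weyl's inequality, the operator-norm-by-row-sum bound, and Markov's inequality.
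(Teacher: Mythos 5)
Your proposal matches the paper's proof essentially step for step: the same mean-plus-fluctuation decomposition, Weyl's inequality twice for the deterministic part with the row-sum bound from Assumption \ref{as-3}\textit{(ii)}(b), and the Frobenius-norm-plus-Markov argument with Assumption \ref{as-3}\textit{(iii)}(a) for the stochastic part, yielding the same two rates. Your remark about peeling off coincident-index terms is a valid point of care that the paper's own proof passes over silently, but it does not change the approach or the result.
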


\begin{lemma}
\label{errorfactor}We assume that Assumptions \ref{as-1}, \ref{as-3} and \ref%
{as-5} are satisfied. Then it holds that%
\begin{equation*}
\left\Vert \sum_{t=1}^{T}\F_{1,t}\E_{t}\right\Vert
_{F}=O_{P}\left( p_{1}^{1/2}p_{2}^{1/2}T\right) .
\end{equation*}

\begin{proof}
We let $h_{R_{1}}=h_{C_{1}}=1$ for simplicity. We have%
\begin{eqnarray*}
&&E\left\Vert \sum_{t=1}^{T}\F_{1,t}\E_{t}\right\Vert
_{F}^{2} \\
&=&E\sum_{i=1}^{p_{1}}\sum_{j=1}^{p_{2}}\left( \sum_{t=1}^{T}\F%
_{1,t}e_{ij,t}\right)
^{2}=\sum_{i=1}^{p_{1}}\sum_{j=1}^{p_{2}}\sum_{t,s=1}^{T}E\left( \F%
_{1,t}\F_{1,s}\right) E\left( e_{ij,s}e_{ij,t}\right)  \\
&\leq &\sum_{i=1}^{p_{1}}\sum_{j=1}^{p_{2}}\sum_{t,s=1}^{T}E\left( \F%
_{1,t}^{2}\right) \left\vert E\left( e_{ij,s}e_{ij,t}\right) \right\vert
\leq c_{0}T\sum_{i=1}^{p_{1}}\sum_{j=1}^{p_{2}}\sum_{t,s=1}^{T}\left\vert
E\left( e_{ij,s}e_{ij,t}\right) \right\vert \leq c_{1}p_{1}p_{2}T,
\end{eqnarray*}%
by Assumptions \ref{as-5} (used in the second line) and \ref{as-3}\textit{%
(ii)}(a), used in the third line.
\end{proof}
\end{lemma}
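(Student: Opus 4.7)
The plan is to work directly with $E\bigl\Vert \sum_{t=1}^{T}\F_{1,t}\E_{t}\bigr\Vert_{F}^{2}$ and then appeal to Markov's inequality. For transparency I reduce to the scalar case $h_{R_{1}}=h_{C_{1}}=1$ as elsewhere in the paper, so that $\F_{1,t}$ is a scalar $I(1)$ process and $\F_{1,t}\E_{t}$ is a $p_{1}\times p_{2}$ matrix.

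First I would expand the Frobenius norm coordinatewise,
\begin{equation*}
E\Bigl\Vert \sum_{t=1}^{T}\F_{1,t}\E_{t}\Bigr\Vert _{F}^{2}
=\sum_{i=1}^{p_{1}}\sum_{j=1}^{p_{2}}\sum_{t,s=1}^{T}
E\bigl(\F_{1,t}\F_{1,s}e_{ij,t}e_{ij,s}\bigr).
\end{equation*}
By Assumption \ref{as-5} the groups $\{\beps_{t}\}$ and $\{e_{ij,t}\}$ are mutually independent, so $\F_{1,t}\F_{1,s}$ is independent of $e_{ij,t}e_{ij,s}$ and the expectation factors into $E(\F_{1,t}\F_{1,s})\,E(e_{ij,t}e_{ij,s})$.

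Second I would bound the two factors separately. For the factor coming from the common stochastic trend, Cauchy--Schwarz combined with Lemma \ref{berkes}(\ref{lemma1-i}) yields
\begin{equation*}
|E(\F_{1,t}\F_{1,s})|\le\bigl(E\F_{1,t}^{2}\bigr)^{1/2}\bigl(E\F_{1,s}^{2}\bigr)^{1/2}\le c_{0}(ts)^{1/2}\le c_{0}T .
\end{equation*}
For the idiosyncratic factor, Assumption \ref{as-3}\textit{(ii)}(a) gives, for each fixed $(i,j)$, $\sum_{t,s=1}^{T}|E(e_{ij,t}e_{ij,s})|\le c_{0}T$ (the diagonal $t=s$ contributes $O(T)$ via Assumption \ref{as-3}\textit{(i)}, and the off-diagonal is $O(1)$). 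Putting the two bounds together,
\begin{equation*}
E\Bigl\Vert \sum_{t=1}^{T}\F_{1,t}\E_{t}\Bigr\Vert _{F}^{2}
\le c_{0}T\sum_{i=1}^{p_{1}}\sum_{j=1}^{p_{2}}\sum_{t,s=1}^{T}|E(e_{ij,t}e_{ij,s})|
\le c_{1}\,p_{1}p_{2}T^{2}.
\end{equation*}
An application of Markov's inequality then delivers $\bigl\Vert \sum_{t=1}^{T}\F_{1,t}\E_{t}\bigr\Vert_{F}=O_{P}(p_{1}^{1/2}p_{2}^{1/2}T)$, which is the claim.

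The only delicate point is really the factorisation step: one must resist the temptation to bound $|E(\F_{1,t}\F_{1,s})|$ by $c(t\wedge s)$ (the usual covariance of a random walk), because then the crossed sum $\sum_{t,s}(t\wedge s)|E(e_{ij,t}e_{ij,s})|$ would need a more careful treatment on the diagonal; the simpler Cauchy--Schwarz bound by $T$ is enough and avoids distinguishing diagonal from off-diagonal terms of the $\beps$ process. The general case $h_{R_{1}},h_{C_{1}}\ge 1$ follows by the same argument applied entry-by-entry, since each entry of $\F_{1,t}$ is still an $I(1)$ Bernoulli-shift scalar satisfying Lemma \ref{berkes}.
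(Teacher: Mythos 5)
Your argument is essentially identical to the paper's proof: expand the squared Frobenius norm coordinatewise, factor the expectation using the independence in Assumption \ref{as-5}, bound $|E(\F_{1,t}\F_{1,s})|$ by $c_{0}T$ via Lemma \ref{berkes}, and control the idiosyncratic double sum with Assumption \ref{as-3}\textit{(ii)}(a) before applying Markov's inequality. Your final displayed bound $c_{1}p_{1}p_{2}T^{2}$ is in fact the correct one (the paper's last inequality reads $c_{1}p_{1}p_{2}T$, an evident typo, since the square root must yield the stated rate $p_{1}^{1/2}p_{2}^{1/2}T$), and your use of Cauchy--Schwarz makes the step the paper writes as $E(\F_{1,t}\F_{1,s})\le E(\F_{1,t}^{2})$ cleaner.
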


\begin{lemma}
\label{eig-mRx}We assume that Assumptions \ref{as-1}-\ref{as-5} are
satisfied. Then there exists a positive, finite constant $c_{0}$ and a
triplet of random variables $\left( p_{0,0},p_{2,0},T_{0}\right) $ such
that, for all $p_{1}\geq p_{0,0}$, $p_{2}\geq p_{2,0}$ and $T\geq T_{0}$, it
holds that%
\begin{equation*}
\left( \log \log T\right) \lambda _{j}\left( \M_{R_1}\right) \geq
c_{0}\text{, \ \ for all }j\leq h_{R_{1}},
\end{equation*}%
and, for all $\epsilon >0$%
\begin{equation*}
\lambda _{j}\left( \M_{R_1}\right) =o_{a.s.}\left( \frac{\left(
\log T\right) ^{3/2+\epsilon }}{T}\right) ,\text{ \ \ for all }j>h_{R_{1}}.
\end{equation*}

\begin{proof}
It holds that%
\begin{align}
\M_{R_1} &=\frac{1}{p_{1}p_{2}T^{2}}\sum_{t=1}^{T}\R_1\F_{1,t}\C_{1}^{\prime}\C_{1}\F_{1,t}^{\prime}\R_{1}^{\prime} + \frac{1}{p_{1}p_{2}T^{2}}\sum_{t=1}^{T}\R_{0}\F_{0,t}\C_{0}^{\prime}\C_{0}\F_{0,t}^{\prime}\R_{0}^{\prime}  \label{mxr-dec} \\
&+\frac{1}{p_{1}np_{2}T^{2}}\sum_{t=1}^{T}\E_{t}\E_{t}^{\prime}+\frac{1}{p_{1}p_{2}T^{2}}\sum_{t=1}^{T}\R_1\F_{1,t}\C_{1}^{\prime}\E_{t}^{\prime}+\frac{1}{p_{1}p_{2}T^{2}}\sum_{t=1}^{T} \E_{t}\C_{1}\F_{1,t}^{\prime}\R_{1}^{\prime}  \notag \\
&+\frac{1}{p_{1}p_{2}T^{2}}\sum_{t=1}^{T}\R_1\F_{1,t}\C_{1}^{\prime}\C_{0}\F_{0,t}^{\prime}\R_{0}^{\prime}+\frac{1}{p_{1}p_{2}T^{2}}\sum_{t=1}^{T}\R_{0}\F_{0,t}\C_{0}^{\prime}\C_{1}\F_{1,t}^{\prime}\R_{1}^{\prime}  \notag \\
&+\frac{1}{p_{1}p_{2}T^{2}}\sum_{t=1}^{T}\R_{0}\F_{0,t}\C_{0}^{\prime}\E_{t}^{\prime}+\frac{1}{p_{1}p_{2}T^{2}} \sum_{t=1}^{T}\E_{t}\C_{0}\F_{0,t}^{\prime}\R_{0}^{\prime}  \notag \\
&=I+II+III+IV+IV^{\prime}+V+V^{\prime}+VI+VI^{\prime}.  \notag
\end{align}
We begin by finding bounds for $II$, $III$, $IV$, $V$, and $VI$, using $%
h_{R_{1}}=h_{C_{1}}=h_{R_{0}}=h_{C_{1}}=1$ for simplicity whenever possible. We have%
\begin{equation*}
\left\Vert II\right\Vert _{F}\leq \frac{1}{p_{1}p_{2}T^{2}}\left\Vert 
\R_{0}\right\Vert _{F}^{2}\left\Vert \C_{0}\right\Vert
_{F}^{2}\left( \sum_{t=1}^{T}\F_{0,t}^{2}\right) =O_{a.s.}\left( 
\frac{1}{T}\right) ,
\end{equation*}%
having used Assumption \ref{as-4}\textit{(i)}(b) and (\ref{f1-sum-2}) in
Lemma \ref{f1-summab}.\ Turning to $III$, combining Lemmas \ref{stout} and %
\ref{err}, it is easy to see that%
\begin{equation*}
\lambda _{\max }\left( \frac{1}{p_{1}p_{2}T^{2}}\sum_{t=1}^{T}\E_{t}%
\E_{t}^{\prime}\right) =O\left( \frac{1}{p_{1}T}\right)
+o_{a.s.}\left( \frac{\left( \log p_{1}\log p_{2}\log T\right) ^{1+\epsilon }%
}{p_{2}^{1/2}T^{3/2}}\right) .
\end{equation*}%
We now study%
\begin{equation*}
\left\Vert IV\right\Vert _{F}=\frac{1}{p_{1}p_{2}T^{2}}\left(
\sum_{i,h=1}^{p_{1}}\left( r_{i}\sum_{t=1}^{T}\sum_{j=1}^{p_{2}}c_{j}\F_{1,t}e_{hj,t}\right) ^{2}\right) ^{1/2};
\end{equation*}%
it holds that%
\begin{align*}
&E\sum_{i,h=1}^{p_{1}}r_{i}^{2}\left( \sum_{t=1}^{T}\sum_{j=1}^{p_{2}}c_{j}%
\F_{1,t}e_{hj,t}\right) ^{2} \\
&=\sum_{i,h=1}^{p_{1}}r_{i}^{2}E\sum_{t,s=1}^{T}%
\sum_{j,k=1}^{p_{2}}c_{j}c_{k}\F_{1,t}\F_{1,s}e_{hj,t}e_{hk,s} \\
&\leq \left( \max_{1\leq i\leq p_{1}}r_{i}^{2}\right) \left( \max_{1\leq
i\leq p_{2}}c_{i}^{2}\right)
\sum_{i,h=1}^{p_{1}}\sum_{t,s=1}^{T}\sum_{j,k=1}^{p_{2}}\left\vert \left(
E\left\Vert \F_{1,t}\right\Vert _{F}^{2}\right) ^{1/2}\left(
E\left\Vert \F_{1,s}\right\Vert _{F}^{2}\right) ^{1/2}\right\vert
\left\vert E\left( e_{hj,t}e_{hk,s}\right) \right\vert \\
&\leq c_{0}p_{1}T\sum_{h=1}^{p_{1}}\sum_{t,s=1}^{T}\sum_{j,k=1}^{p_{2}}\left\vert
E\left( e_{hj,t}e_{hk,s}\right) \right\vert \leq c_{1}p_{1}^{2}p_{2}T^{2},
\end{align*}%
having used Assumption \ref{as-5} in the third line, Assumption \ref{as-4}%
\textit{(ii)}(d)\ in the last passage. Hence, using Lemma \ref{stout}%
\begin{equation*}
\left\Vert IV\right\Vert _{F}=o_{a.s.}\left( \frac{1}{p_{2}^{1/2}T}\left(
\log p_{1}\log ^{2}p_{2}\log ^{2}T\right) ^{1/2+\epsilon }\right) ,
\end{equation*}%
for all $\epsilon >0$. The same bound holds, by symmetry, for $\left\Vert
IV^{\prime}\right\Vert _{F}$. We now study%
\begin{equation*}
\left\Vert V\right\Vert _{F}\leq \left\Vert \R_1\right\Vert
_{F}\left\Vert \R_{0}^{\prime}\right\Vert _{F}\left\Vert \C_{0}\right\Vert _{F}\left\Vert \C_{1}\right\Vert _{F}\frac{1}{%
p_{1}p_{2}T^{2}}\left( \sum_{t=1}^{T}\F_{1,t}\F_{0,t}\right)
\leq c_{0}\frac{1}{T^{2}}\left( \sum_{t=1}^{T}\F_{1,t}\F%
_{0,t}\right) .
\end{equation*}%
We have%
\begin{align*}
&E\left( \sum_{t=1}^{T}\F_{1,t}\F_{0,t}\right) ^{2} \\
&\leq \sum_{t,s=1}^{T}\left( E\left\Vert \F_{1,t}\right\Vert
_{F}^{2}\right) ^{1/2}\left( E\left\Vert \F_{1,s}\right\Vert
_{F}^{2}\right) ^{1/2}\left\Vert E\left( \F_{0,t}\F%
_{0,s}\right) \right\Vert _{F}\leq c_{0}T\sum_{t,s=1}^{T}\left\Vert E\left( 
\F_{0,t}\F_{0,s}\right) \right\Vert _{F}\leq c_{1}T^{2},
\end{align*}%
having used Assumption \ref{as-5} and (\ref{lemma1-i}) in Lemma \ref{berkes} and (\ref{f1-sum-2}) in Lemma \ref{f1-summab} in the last passage. Hence, by Lemma \ref{stout}%
\begin{equation*}
\left\Vert V\right\Vert _{F}=o_{a.s.}\left( \frac{1}{T}\left( \log T\right)
^{3/2+\epsilon }\right) ,
\end{equation*}%
and the same bound holds, by symmetry, for $\left\Vert V^{\prime}\right\Vert _{F}$. Finally, we study%
\begin{align*}
\left\Vert VI\right\Vert _{F} &=\frac{1}{p_{1}p_{2}T^{2}}\left\Vert
\sum_{t=1}^{T}\R_{0}\F_{0,t}\C_{0}^{\prime}\E_{t}^{\prime}\right\Vert _{F} \leq \frac{1}{p_{1}p_{2}T^{2}}\left\Vert \R_{0}\right\Vert
_{F}\left\Vert \sum_{t=1}^{T}\F_{0,t}\C_{0}^{\prime}\E_{t}^{\prime}\right\Vert _{F}.
\end{align*}%
It holds that%
\begin{align}
&E\left\Vert \sum_{t=1}^{T}\F_{0,t}\C_{0}^{\prime}\E_{t}^{\prime}\right\Vert _{F}^{2}  \label{sum-f1e-bound} \\
&=E\left( \sum_{i=1}^{p_{1}}\left( \sum_{j=1}^{p_{2}}\sum_{t=1}^{T}c_{1,j}\F_{0,t}e_{ij,t}\right) ^{2}\right)
=E\sum_{i=1}^{p_{1}}\sum_{j,k=1}^{p_{2}}\sum_{t,s=1}^{T}c_{1,j}c_{1,k}\F_{0,t}\F_{0,s}e_{ij,t}e_{ik,s}  \notag \\
&\leq \left( \max_{1\leq i\leq p_{2}}c_{1,i}^{2}\right)\sum_{i=1}^{p_{1}}\sum_{j,k=1}^{p_{2}}\sum_{t,s=1}^{T}E\left( \left\Vert\F_{0,t}\right\Vert _{F}^{2}\right) ^{1/2}E\left( \left\Vert \F_{0,s}\right\Vert _{F}^{2}\right) ^{1/2}\left\vert E\left(e_{ij,t}e_{ik,s}\right) \right\vert .  \notag
\end{align}%
Seeing as $\F_{0,t}$\ is stationary, using Assumption \ref{as-2}%
\textit{(ii)}(d), the above is bounded by%
\begin{equation*}
c_{0}\sum_{i=1}^{p_{1}}\sum_{j,k=1}^{p_{2}}\sum_{t,s=1}^{T}\left\vert
E\left( e_{ij,t}e_{ik,s}\right) \right\vert \leq c_{1}p_{1}p_{2}T,
\end{equation*}%
whence using Lemma \ref{stout} 
\begin{equation*}
\left\Vert \sum_{t=1}^{T}\F_{0,t}\C_{0}^{\prime}\E_{t}^{\prime}\right\Vert _{F}=o_{a.s.}\left( \left( p_{1}p_{2}T\right)
^{1/2}\left( \log p_{1}\log ^{2}p_{2}\log ^{2}T\right) ^{3/2+\epsilon
}\right) .
\end{equation*}%
In turn, using Assumption \ref{as-4}\textit{(ii)}, this yields%
\begin{equation*}
\left\Vert VI\right\Vert _{F}=o_{a.s.}\left( \frac{1}{p_{2}^{1/2}T^{3/2}}%
\left( \log p_{1}\log ^{2}p_{2}\log ^{2}T\right) ^{3/2+\epsilon }\right) .
\end{equation*}%
The same bound holds, by symmetry, for $\left\Vert VI^{\prime}\right\Vert
_{F}$.

Putting together all the above, by symmetry it follows that%
\begin{equation}
\lambda _{\max }\left( II+III+IV+IV^{\prime}+V+V^{\prime}+VI+VI^{\prime}\right) =o_{a.s.}\left( \frac{\left( \log T\right) ^{3/2+\epsilon }}{T}%
\right) .  \label{l-max-R}
\end{equation}%
Finally, consider $I$ in (\ref{mxr-dec}). By construction, $\lambda
_{j}\left( I\right) =0$ a.s. for all $j>h_{R_{1}}$; when $j\leq h_{R_{1}}$, it holds
that%
\begin{equation*}
\lambda _{j}\left( \frac{\log \log T}{p_{1}p_{2}T^{2}}\sum_{t=1}^{T}\R_1\F_{1,t}\C_{1}^{\prime}\C_{1}\F_{1,t}^{\prime}\R_{1}^{\prime}\right) =\lambda _{j}\left( \frac{1}{p_{1}}\R_1\left( \frac{\log \log
T}{T^{2}}\sum_{t=1}^{T}\F_{1,t}\F_{1,t}^{\prime}\right) \R_{1}^{\prime}\right) ,
\end{equation*}%
seeing as $\C_{1}^{\prime}\C=p_{2}\I_{h_{C_{1}}}$; using
the multiplicative version of Weyl's inequality (see e.g. Theorem 7 in %
\citealp{merikoski2004inequalities})%
\begin{align}
&\lambda _{j}\left( \frac{1}{p_{1}}\R_1\left( \frac{1}{T^{2}}%
\sum_{t=1}^{T}\F_{1,t}\F_{1,t}^{\prime}\right) \R_{1}^{\prime}\right)  \label{l-chung} \\
&\geq \lambda _{j}\left( \frac{1}{p_{1}}\R_{1}^{\prime}\R_{1}\right) \lambda _{\min }\left( \left( \frac{\log \log T}{T^{2}}\sum_{t=1}^{T}%
\F_{1,t}\F_{1,t}^{\prime}\right) \right) \geq c_{0}\text{ a.s.,}
\notag
\end{align}%
having used Assumption \ref{as-4}\textit{(ii)} and Lemma \ref{chung}. The
desired results now follow by combining (\ref{l-chung}) and (\ref{l-max-R}),
using Weyl's inequality.
\end{proof}
\end{lemma}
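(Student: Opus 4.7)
The plan is to decompose $\M_{R_1}$ via the model (\ref{model}) into the $I(1)$ signal, the $I(0)$ signal, the pure error term, and six cross terms, then apply Weyl's inequality to transfer bounds from these individual pieces to the eigenvalues of $\M_{R_1}$. Writing
\[
p_1 p_2 T^2\,\M_{R_1}=\sum_{t=1}^{T}\bigl(\R_1\F_{1,t}\C_1^{\prime}+\R_0\F_{0,t}\C_0^{\prime}+\E_t\bigr)\bigl(\R_1\F_{1,t}\C_1^{\prime}+\R_0\F_{0,t}\C_0^{\prime}+\E_t\bigr)^{\prime}
\]
gives nine summands to control. The dominant term is the $I(1)$ quadratic piece
\[
\mathbf{S}=\frac{1}{p_1 p_2 T^2}\sum_{t=1}^{T}\R_1\F_{1,t}\C_1^{\prime}\C_1\F_{1,t}^{\prime}\R_1^{\prime}=\frac{1}{p_1}\R_1\!\left(\frac{1}{T^2}\sum_{t=1}^{T}\F_{1,t}\F_{1,t}^{\prime}\right)\!\R_1^{\prime},
\]
using $\C_1^{\prime}\C_1=p_2\I_{h_{C_1}}$. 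This matrix has rank exactly $h_{R_1}$ a.s., and by multiplicative Weyl combined with Assumption \ref{as-4} and the lim-inf lower bound of Lemma \ref{chung}, its nonzero eigenvalues satisfy $\lambda_j(\mathbf{S})\ge c_0/(\log\log T)$ eventually a.s. This supplies the lower bound half of the lemma once I show that the remaining eight summands are $o_{a.s.}((\log T)^{3/2+\epsilon}/T)$ in operator norm, since Weyl's inequality then forces $\lambda_j(\M_{R_1})$ to inherit the eigenvalue profile of $\mathbf{S}$: the top $h_{R_1}$ stay bounded below, while those indexed $j>h_{R_1}$ are of the ``noise'' order.

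For the noise terms I would proceed uniformly as follows. The stationary quadratic term is bounded using Assumption \ref{as-4}(i) and the partial-sum estimate (\ref{f1-sum-2}) by $O_{a.s.}(1/T)$. The error quadratic term is controlled via Lemma \ref{err} followed by Lemma \ref{stout} to obtain $O(1/(p_1T))+o_{a.s.}(\text{polylog}/(p_2^{1/2}T^{3/2}))$. The $I(1)\times\text{error}$ cross terms are dealt with by computing an $L^2$ bound whose inputs are Assumption \ref{as-3}(ii)(d) (weak cross-sectional error dependence) and Lemma \ref{berkes} ($E\|\F_{1,t}\|_F^2\le c_0 t$), yielding at most $o_{a.s.}(\text{polylog}/(p_2^{1/2}T))$. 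The $I(0)\times\text{error}$ cross terms are treated analogously, with stationarity of $\F_{0,t}$ replacing the linear-in-$t$ bound. Finally, the $I(1)\times I(0)$ cross term is the delicate one: from Lemma \ref{crossf} one has $E\|\sum_{t=1}^{T}\F_{1,t}\F_{0,t}\|_F^2=O(T^2)$, which combined with Assumption \ref{as-4}(i) and Lemma \ref{stout} gives operator-norm $o_{a.s.}((\log T)^{3/2+\epsilon}/T)$; this is precisely the rate that appears in the statement.

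The principal obstacle is twofold. First, the lower bound on $\lambda_j(\mathbf{S})$ cannot be obtained from a plain FCLT applied to $T^{-2}\sum_{t=1}^{T}\F_{1,t}\F_{1,t}^{\prime}$, because the FCLT only describes a weak limiting distribution and does not preclude the minimum eigenvalue from being small along a subsequence; Chung's law of the iterated logarithm, as packaged in Lemma \ref{chung}, is essential and is precisely what produces the $\log\log T$ factor in the lemma. Second, among the cross terms the $I(1)\times I(0)$ interaction dominates the remainder, so the sharpness of the upper bound on $\lambda_j(\M_{R_1})$ for $j>h_{R_1}$ hinges on whether $\F_{1,t}\F_{0,t}$ behaves in $L^2$ like a product of independent components; Assumption \ref{as-5} is exactly what makes this decoupling possible, and any relaxation of it would worsen the advertised rate.
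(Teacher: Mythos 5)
Your proposal is correct and follows essentially the same route as the paper's proof: the identical nine-term decomposition of $\M_{R_1}$, the same bounds for each summand (Lemma \ref{f1-summab} for the stationary quadratic, Lemmas \ref{err} and \ref{stout} for the error quadratic, the $L^2$ computations under Assumptions \ref{as-3} and \ref{as-5} for the cross terms, and Lemma \ref{crossf} for the dominant $I(1)\times I(0)$ interaction), followed by the multiplicative Weyl inequality together with Lemma \ref{chung} for the lower bound on the signal eigenvalues. You also correctly identify the two key points the paper relies on, namely that the $\log\log T$ factor comes from the almost-sure liminf in Lemma \ref{chung} rather than from a plain FCLT, and that the $(\log T)^{3/2+\epsilon}/T$ rate is set by the $\F_{1,t}\F_{0,t}$ cross term under the independence in Assumption \ref{as-5}.
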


\begin{lemma}
\label{eig-mCx}We assume that Assumptions \ref{as-1}-\ref{as-5} are
satisfied. Then there exists a positive, finite constant $c_{0}$ and a
triplet of random variables $\left( p_{0,0},p_{2,0},T_{0}\right) $ such
that, for all $p_{1}\geq p_{0,0}$, $p_{2}\geq p_{2,0}$ and $T\geq T_{0}$, it
holds that%
\begin{equation*}
\left( \log \log T\right) \lambda _{j}\left( \M_{C_1}\right) \geq
c_{0}\text{, \ \ for all }j\leq h_{C_{1}}.
\end{equation*}%
and, for all $\epsilon >0$%
\begin{equation*}
\lambda _{j}\left( \M_{C_1}\right) =o_{a.s.}\left( \frac{\left(
\log T\right) ^{3/2+\epsilon }}{T}\right) ,\text{ \ \ for all }j>h_{C_{1}}.
\end{equation*}

\begin{proof}
The proof is the same, \textit{mutatis mutandis}, as that of Lemma \ref%
{eig-mRx} and therefore we omit it.
\end{proof}
\end{lemma}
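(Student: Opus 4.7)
The plan is to mirror the argument of Lemma \ref{eig-mRx} by exploiting the natural symmetry between $\M_{R_1}$ and $\M_{C_1}$ that arises from transposing the model. Specifically, writing $\X_t^{\prime}=\C_1\F_{1,t}^{\prime}\R_1^{\prime}+\C_0\F_{0,t}^{\prime}\R_0^{\prime}+\E_t^{\prime}$, we obtain a decomposition of $\M_{C_1}$ analogous to (\ref{mxr-dec}):
\begin{align*}
\M_{C_1}&=\frac{1}{p_1 p_2 T^2}\sum_{t=1}^{T}\C_1\F_{1,t}^{\prime}\R_1^{\prime}\R_1\F_{1,t}\C_1^{\prime}+\frac{1}{p_1 p_2 T^2}\sum_{t=1}^{T}\C_0\F_{0,t}^{\prime}\R_0^{\prime}\R_0\F_{0,t}\C_0^{\prime}\\
&\quad+\frac{1}{p_1 p_2 T^2}\sum_{t=1}^{T}\E_t^{\prime}\E_t+\text{cross terms}=I^{\prime}+II^{\prime}+III^{\prime}+\text{cross terms}.
\end{align*}

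Next, I would bound the noise and cross terms exactly as in Lemma \ref{eig-mRx}, with the roles of $p_1$ and $p_2$, and of $(\R_i,\F_{i,t})$ versus $(\C_i,\F_{i,t}^{\prime})$, interchanged. The bound on $III^{\prime}$ follows directly from (\ref{err-2}) in Lemma \ref{err} together with Lemma \ref{stout}; the term $II^{\prime}$ is $O_{a.s.}(T^{-1})$ by Assumption \ref{as-4}(i)(b) and (\ref{f1-sum-2}); the cross terms involving $\E_t$ are controlled using Assumptions \ref{as-3}(ii)(c)-(d) and Assumption \ref{as-5}; and the cross term between the $I(1)$ and $I(0)$ factor components is bounded via Lemma \ref{crossf}, where the Frobenius norm of $\sum_{t=1}^T \F_{1,t}\F_{0,t}$ remains unchanged under transposition of the factor matrices. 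Applying Weyl's inequality to aggregate the bounds yields $\lambda_{\max}(II^{\prime}+III^{\prime}+\text{cross terms})=o_{a.s.}\bigl((\log T)^{3/2+\epsilon}/T\bigr)$, which gives the upper bound for $j>h_{C_1}$.

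For the lower bound on $\lambda_j$ with $j\leq h_{C_1}$, I would handle the signal term $I^{\prime}$ analogously to (\ref{l-chung}). Since $\R_1^{\prime}\R_1 \sim p_1 \I_{h_{R_1}}$ is annihilated by the normalisation, we get
\begin{equation*}
\lambda_j(I^{\prime})=\lambda_j\!\left(\frac{1}{p_2}\C_1\left(\frac{1}{T^2}\sum_{t=1}^{T}\F_{1,t}^{\prime}\R_1^{\prime}\R_1\F_{1,t}\right)\C_1^{\prime}\right),
\end{equation*}
and the multiplicative Weyl inequality together with Assumption \ref{as-4}(ii) (giving $\lambda_j(p_2^{-1}\C_1^{\prime}\C_1)$ bounded away from zero) and equation (\ref{donsker-2}) in Lemma \ref{chung} (providing the almost sure lower bound $\liminf_T \log\log T \cdot \lambda_{\min}(T^{-2}\sum_t\F_{1,t}^{\prime}\F_{1,t})>0$) yield $(\log\log T)\lambda_j(I^{\prime})\geq c_0$ a.s. for $j\leq h_{C_1}$. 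Combining this with the noise bound via Weyl completes the proof.

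The main obstacle, insofar as there is one, is merely verifying that Lemma \ref{chung} applies symmetrically — this is exactly why (\ref{donsker-2}) was stated in parallel to (\ref{donsker-1}), so the argument transfers with no new work. All other estimates reduce to transposed counterparts of steps already carried out in Lemma \ref{eig-mRx}.
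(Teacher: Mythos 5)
Your proposal is correct and follows exactly the route the paper intends: the paper's own proof of Lemma \ref{eig-mCx} simply states that it is the same as that of Lemma \ref{eig-mRx} \emph{mutatis mutandis}, and your write-up is precisely that transposed argument, using the identification restriction $\R_{1}^{\prime}\R_{1}=p_{1}\I_{h_{R_{1}}}$, equation (\ref{err-2}) in place of (\ref{err-1}), and (\ref{donsker-2}) in place of (\ref{donsker-1}). No gaps.
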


\begin{lemma}
\label{lambda}We assume that Assumptions \ref{as-1}-\ref{as-5} are
satisfied. Then it holds that%
\begin{align}
\left\Vert \Lambda _{R_{1}}^{-1}\right\Vert &=O_{P}(1) ,
\label{l-r-inv} \\
\left\Vert \Lambda _{C_{1}}^{-1}\right\Vert &=O_{P}(1) .
\label{l-c-inv}
\end{align}

\begin{proof}
We consider (\ref{l-r-inv}) only; the proof of (\ref{l-c-inv}) is similar.
Following the passages in the proof of Lemma \ref{eig-mRx} therein \textit{%
verbatim}, it is easy to see that (\ref{l-max-R}) becomes%
\begin{equation*}
\lambda _{\max }\left( II+III+IV+IV^{\prime}+V+V^{\prime}+VI+VI^{\prime}\right) =O_{P}\left( \frac{1}{T}\right) .
\end{equation*}%
Similarly, when $j\leq h_{R_{1}}$, it holds that%
\begin{equation*}
\lambda _{j}\left( \frac{1}{p_{1}p_{2}T^{2}}\sum_{t=1}^{T}\R_1\F_{1,t}%
\C_{1}^{\prime}\C_{1}\F_{1,t}^{\prime}\R_{1}^{\prime}\right)
\geq \lambda _{j}\left( \frac{1}{p_{1}}\R_{1}^{\prime}\R_1\right)
\lambda _{\min }\left( \left( \frac{1}{T^{2}}\sum_{t=1}^{T}\F_{1,t}%
\F_{1,t}^{\prime}\right) \right) ,
\end{equation*}%
recalling the identification restriction $\C_{1}^{\prime}\C%
=p_{2}\I_{h_{C_{1}}}$. Using the multiplicative version of Weyl's
inequality (see e.g. Theorem 7 in \citealp{merikoski2004inequalities})%
\begin{equation*}
\lambda _{j}\left( \frac{1}{p_{1}}\R_1\left( \frac{1}{T^{2}}%
\sum_{t=1}^{T}\F_{1,t}\F_{1,t}^{\prime}\right) \R_{1}^{\prime}\right) \geq \lambda _{j}\left( \frac{1}{p_{1}}\R_{1}^{\prime}\R_1\right) \lambda _{\min }\left( \left( \frac{1}{T^{2}}\sum_{t=1}^{T}\F_{1,t}\F_{1,t}^{\prime}\right) \right) .
\end{equation*}%
By Assumption \ref{as-4}\textit{(ii)}, $\lambda _{j}\left( \R_{1}^{\prime}\R_{1}/p_{1}\right) >0$. Let now $\mathbf{b}$ be a nonzero $%
h_{R_{1}}\times 1$ vector; it is immediate to see that $\mathbf{b}^{\prime}%
\beps_{t}$ is a decomposable Bernoulli shift with the same
rate as $\beps_{t}$. Hence, the FCLT for Bernoulli shifts
(see Theorem A.1 in \citealp{aue09}) entails that%
\begin{equation*}
\limsup_{T\rightarrow \infty }P\left( \frac{1}{T^{2}}\sum_{t=1}^{T}\mathbf{b}%
^{\prime}\F_{1,t}\F_{1,t}^{\prime}\mathbf{b}\leq \epsilon
^{\prime}\right) \leq P\left( \int_{0}^{1}W^{2}\left( r\right) dr\leq
\epsilon ^{\prime}\right) ,
\end{equation*}%
where $\left\{ W\left( r\right) ,0\leq r\leq 1\right\} $ is a Wiener process
with variance $\mathbf{b}^{\prime}\boldsymbol{\Sigma}_{F}^{\left( a\right) }%
\mathbf{b}$; Assumption \ref{as-1}\textit{(ii)} entails that $\mathbf{b}%
^{\prime}\boldsymbol{\Sigma}_{F}^{\left( a\right) }\mathbf{b>0}$. Hence, using
e.g. Theorem 1.1 in \citet{li2001small}, for any $\epsilon ^{\prime}\rightarrow 0^{+}$%
\begin{align*}
&\limsup_{T\rightarrow \infty }P\left( \frac{1}{T^{2}}\sum_{t=1}^{T}\mathbf{b}^{\prime}\F_{1,t}\F_{1,t}^{\prime}\mathbf{b}\leq \epsilon
^{\prime}\right) \\
&\leq P\left( \int_{0}^{1}W^{2}\left( r\right) dr\leq \epsilon ^{\prime}\right) =\exp \left( -\frac{1}{8\epsilon ^{\prime}}\right) =\epsilon
^{\prime \prime},
\end{align*}%
and therefore%
\begin{equation*}
P\left( \frac{1}{T^{2}}\sum_{t=1}^{T}\mathbf{b}^{\prime}\F_{1,t}%
\F_{1,t}^{\prime}\mathbf{b}>\epsilon ^{\prime}\right) \geq
1-\epsilon ^{\prime \prime}.
\end{equation*}%
In turn, this readily entails that 
\begin{equation*}
\lim_{T\rightarrow \infty }P\left( \lambda _{\min }\left( \left( \frac{1}{%
T^{2}}\sum_{t=1}^{T}\F_{1,t}\F_{1,t}^{\prime}\right) \right)
>0\right) =1,
\end{equation*}%
which implies the desired result.
\end{proof}
\end{lemma}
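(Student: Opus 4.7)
The plan is to reduce the claim to a small-ball estimate for integrated Brownian motion. First I would revisit the decomposition of $\M_{R_1}$ used in the proof of Lemma \ref{eig-mRx}, writing $\M_{R_1} = I + II + \dots + VI + VI'$, where $I = (p_1 p_2 T^2)^{-1}\sum_t \R_1 \F_{1,t}\C_1'\C_1\F_{1,t}'\R_1'$ is the signal and the remaining eight are noise. Whereas Lemma \ref{eig-mRx} needed almost-sure bounds with slowly varying logarithmic factors (to secure an eigengap), here it suffices to bound $\lambda_{h_{R_1}}(\M_{R_1})^{-1} = O_P(1)$, so Markov's inequality applied to the \emph{same} moment computations of Lemma \ref{eig-mRx}, together with Lemmas \ref{berkes}--\ref{errorfactor}, yields $\|II\| + \dots + \|VI'\| = O_P(T^{-1})$. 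By additive Weyl this gives $|\lambda_j(\M_{R_1}) - \lambda_j(I)| = O_P(T^{-1})$ for each $j \leq h_{R_1}$.

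Next, the identification restriction $\C_1'\C_1 = p_2\I_{h_{C_1}}$ collapses the signal term to $I = p_1^{-1}\R_1\bigl(T^{-2}\sum_t \F_{1,t}\F_{1,t}'\bigr)\R_1'$. Theorem~7 in \citet{merikoski2004inequalities} (multiplicative Weyl) then gives, for $j \leq h_{R_1}$,
\begin{equation*}
\lambda_j(I) \;\geq\; \lambda_j\bigl(p_1^{-1}\R_1'\R_1\bigr)\,\lambda_{\min}\Bigl(T^{-2}\sum_{t=1}^T \F_{1,t}\F_{1,t}'\Bigr).
\end{equation*}
The first factor is bounded away from zero by Assumption~\ref{as-4}\textit{(i)}(a) combined with finiteness of $h_{R_1}$, so the whole proof reduces to bounding the second.

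The main obstacle is therefore to show $\lambda_{\min}\bigl(T^{-2}\sum_t \F_{1,t}\F_{1,t}'\bigr)^{-1} = O_P(1)$, a small-ball problem for a matrix-valued integrated Brownian motion. For any nonzero $\mathbf{b}\in\mathbb{R}^{h_{R_1}}$, the scalar $\mathbf{b}'\beps_t$ inherits the $\mathcal{L}_{2+\delta}$-decomposable Bernoulli shift structure of Assumption~\ref{as-1}, so the FCLT for Bernoulli shifts (Theorem A.1 of \citealp{aue09}) together with the continuous mapping theorem yields $T^{-2}\sum_t(\mathbf{b}'\F_{1,t})^2 \Rightarrow \sigma_{\mathbf{b}}^2\int_0^1 W^2(r)\,dr$ with $\sigma_{\mathbf{b}}^2 = \mathbf{b}'\Sigma_F^{(a)}\mathbf{b} > 0$ by Assumption~\ref{as-1}\textit{(ii)}(a). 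I would then invoke the small-ball estimate of Theorem~1.1 in \citet{li2001small}, $P(\int_0^1 W^2\leq \epsilon) \leq \exp(-1/(8\epsilon)) \to 0$, to conclude that for each $\mathbf{b}$ the limiting bilinear form is strictly positive, and with probability approaching one $\mathbf{b}'(T^{-2}\sum \F\F')\mathbf{b} > \epsilon'$.

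To pass from a single $\mathbf{b}$ to $\lambda_{\min}$, I would exploit that $h_{R_1}$ is fixed and finite: the map $\mathbf{b}\mapsto \mathbf{b}'(T^{-2}\sum \F_{1,t}\F_{1,t}')\mathbf{b}$ is continuous on the compact unit sphere, and $T^{-2}\sum_t \F_{1,t}\F_{1,t}'$ converges jointly in distribution to $\int_0^1 W(r)W(r)'\,dr$, which is a.s.\ positive definite (a standard property of nondegenerate vector Brownian motion, available since $\Sigma_F^{(a)}\succ 0$). Hence $\lambda_{\min}$ is tight away from zero, giving $O_P(1)$ on its reciprocal. Combining the three steps via Weyl's inequality delivers $\lambda_{h_{R_1}}(\M_{R_1})^{-1} = O_P(1)$, i.e.\ $\|\Lambda_{R_1}^{-1}\| = O_P(1)$; the proof of $\|\Lambda_{C_1}^{-1}\| = O_P(1)$ is identical by symmetry, invoking Assumption~\ref{as-1}\textit{(ii)}(b) in place of (a).
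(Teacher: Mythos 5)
Your proposal follows essentially the same route as the paper's proof: bound the noise terms at rate $O_{P}(T^{-1})$ via the moment computations of Lemma \ref{eig-mRx}, apply the multiplicative Weyl inequality to the signal term, and control $\lambda _{\min }\left( T^{-2}\sum_{t}\F_{1,t}\F_{1,t}^{\prime }\right) $ through the FCLT for Bernoulli shifts combined with the small-ball estimate of \citet{li2001small}. Your explicit treatment of the passage from a fixed vector $\mathbf{b}$ to the minimum eigenvalue (via joint convergence to $\int_{0}^{1}W(r)W(r)^{\prime }dr$ and compactness of the unit sphere) is slightly more careful than the paper's, which asserts this step without elaboration, but the argument is the same.
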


\begin{lemma}
\label{spec-eig-mRx}
We assume that Assumptions \ref{as-1}-\ref{as-5} are
satisfied. Then it holds that $\left\Vert \left( \Lambda _{R_{1}}^{\dagger}\right) ^{-1}\right\Vert =O_{P}(1) $.

\begin{proof}
Some arguments in the proof are the same as in the proof of Lemma \ref{eig-mRx}, to which we refer for details. It holds that%
\begin{align*}
\hat{\M}_{R_{1}}^{\dagger} &=\frac{1}{p_{1}p_{2}^{2}T^{2}}%
\sum_{t=1}^{T}\R_1\F_{1,t}\C_{1}^{\prime}\hat{\C}_{1}%
\hat{\C}_{1}^{\prime}\C_{1}\F_{1,t}^{\prime}\R_{1}^{\prime}+%
\frac{1}{p_{1}p_{2}^{2}T^{2}}\sum_{t=1}^{T}\R_{0}\F_{0,t}%
\C_{0}^{\prime}\hat{\C}_{1}\hat{\C}_{1}^{\prime}%
\C_{0}\F_{0,t}^{\prime}\R_{0}^{\prime} \\
&+\frac{1}{p_{1}p_{2}^{2}T^{2}}\sum_{t=1}^{T}\E_{t}\hat{\C}_{1}\hat{\C}_{1}^{\prime}\E_{t}^{\prime}+\frac{1}{%
p_{1}p_{2}^{2}T^{2}}\sum_{t=1}^{T}\R_1\F_{1,t}\C_{1}^{\prime}%
\hat{\C}_{1}\hat{\C}_{1}^{\prime}\E_{t}^{\prime}+%
\frac{1}{p_{1}p_{2}^{2}T^{2}}\sum_{t=1}^{T}\E_{t}\hat{\C}_{1}%
\hat{\C}_{1}^{\prime}\C_{1}\F_{1,t}^{\prime}\R_{1}^{\prime}
\\
&+\frac{1}{p_{1}p_{2}^{2}T^{2}}\sum_{t=1}^{T}\R_1\F_{1,t}\C_{1}^{\prime}\hat{\C}_{1}\hat{\C}_{1}^{\prime}\C_{0}%
\F_{0,t}^{\prime}\R_{0}^{\prime}+\frac{1}{%
p_{1}p_{2}^{2}T^{2}}\sum_{t=1}^{T}\R_{0}\F_{0,t}\C_{0}^{\prime}\hat{\C}_{1}\hat{\C}_{1}^{\prime} \C_{1}\F_{1,t}^{\prime}\R_{1}^{\prime} \\
&+\frac{1}{p_{1}p_{2}^{2}T^{2}}\sum_{t=1}^{T}\R_{0}\F_{0,t}%
\C_{0}^{\prime}\hat{\C}_{1}\hat{\C}_{1}^{\prime}%
\E_{t}^{\prime}+\frac{1}{p_{1}p_{2}^{2}T^{2}}\sum_{t=1}^{T}\E_{t}\hat{\C}_{1}\hat{\C}_{1}^{\prime}\C_{0}\F_{0,t}^{\prime}\R_{0}^{\prime} \\
&=I+II+III+IV+IV^{\prime}+V+V^{\prime}+VI+VI^{\prime}.
\end{align*}%
We begin by finding bounds for $II$, $III$, $IV$, $V$, and $VI$, using $%
h_{R_{1}}=h_{C_{1}}=h_{R_{0}}=h_{C_{1}}=1$ for simplicity whenever possible. By standard
arguments, we have%
\begin{equation*}
\left\Vert II\right\Vert _{F}\leq \frac{1}{p_{1}p_{2}^{2}T^{2}}\left\Vert 
\R_{0}\right\Vert _{F}^{2}\left\Vert \C_{0}\right\Vert
_{F}^{2}\left\Vert \hat{\C}_{1}\right\Vert _{F}^{2}\left(
\sum_{t=1}^{T}\F_{0,t}^{2}\right) =O_{a.s.}\left( \frac{1}{T}\right)
.
\end{equation*}%
Following the proof of Lemma \ref{eig-mRx}, it is immediate to see that $%
\left\Vert III\right\Vert _{F}$, $\left\Vert IV\right\Vert _{F}$, $%
\left\Vert V\right\Vert _{F}$, and $\left\Vert VI\right\Vert _{F}$ are all
dominated by $\left\Vert II\right\Vert _{F}$. This entails that%
\begin{equation*}
\lambda _{\max }\left( II+III+IV+IV^{\prime}+V+V^{\prime}+VI+VI^{\prime}\right) =O_{P}\left( \frac{1}{T}\right) ,
\end{equation*}%
and therefore%
\begin{equation*}
\lambda _{j}\left( \hat{\M}_{R_{1}}^{\dagger}\right) =O_{P}\left( 
\frac{1}{T}\right) ,
\end{equation*}%
for all $j>h_{R_{1}}$. Consider now $I$; it holds that%
\begin{align*}
I &=\frac{1}{p_{1}p_{2}^{2}T^{2}}\sum_{t=1}^{T}\R_1\F_{1,t}\C_{1}^{\prime}\C_{1}\bH_{C_{1}}\bH_{C_{1}}^{\prime}\C_{1}^{\prime}\C_{1}\F_{1,t}^{\prime}\R_{1}^{\prime}+\frac{1}{p_{1}p_{2}^{2}T^{2}} \sum_{t=1}^{T}\R_1\F_{1,t}\C_{1}^{\prime}\left( \hat{\C}_{1}- \C_{1}\bH_{C_{1}}\right) \bH_{C_{1}}^{\prime}\C_{1}^{\prime}\C_{1}\F_{1,t}^{\prime}\R_{1}^{\prime} \\
&+\frac{1}{p_{1}p_{2}^{2}T^{2}}\sum_{t=1}^{T}\R_1\F_{1,t}\C_{1}^{\prime}\C_{1}\bH_{C_{1}}\left( \hat{\C}_{1}-\C_{1}\bH_{C_{1}}\right)
^{\prime}\C_{1}\F_{1,t}^{\prime}\R_{1}^{\prime} \\
&+\frac{1}{p_{1}p_{2}^{2}T^{2}}\sum_{t=1}^{T}\R_1\F_{1,t}\C_{1}^{\prime}\left( \hat{\C}_{1}-\C_{1}\bH_{C_{1}}\right) \left( \hat{%
\C}_{1}-\C_{1}\bH_{C_{1}}\right) ^{\prime}\C_{1}\F_{1,t}^{\prime}%
\R_{1}^{\prime} \\
&=I_{a}+I_{b}+I_{b}^{\prime}+I_{c}.
\end{align*}%
It is easy to see that%
\begin{align*}
\left\Vert I_{b}\right\Vert _{F} &\leq \frac{1}{p_{1}p_{2}^{2}T^{2}}%
\left\Vert \R_1\right\Vert _{F}^{2}\left\Vert \C_{1}\right\Vert
_{F}^{3}\left\Vert \bH_{C_{1}}\right\Vert _{F}\left\Vert \hat{\C}_{1}-\C_{1}\bH_{C_{1}}\right\Vert _{F}\left( \sum_{t=1}^{T}\left\Vert \F_{1,t}\right\Vert _{F}^{2}\right) \\
&=O_{P}(1) \frac{1}{p_{1}p_{2}^{2}T^{2}}p_{1}p_{2}^{3/2}\frac{%
p_{2}^{1/2}}{T}T^{2}=O_{P}\left( \frac{1}{T}\right) ,
\end{align*}%
and the same holds for $\left\Vert I_{b}^{\prime}\right\Vert _{F}$.
Similarly%
\begin{align*}
\left\Vert I_{c}\right\Vert _{F} &\leq \frac{1}{p_{1}p_{2}^{2}T^{2}}%
\left\Vert \R_1\right\Vert _{F}^{2}\left\Vert \C_{1}\right\Vert
_{F}^{2}\left\Vert \bH_{C_{1}}\right\Vert _{F}\left\Vert \hat{\C}_{1}-\C_{1}\bH_{C_{1}}\right\Vert _{F}^{2}\left( \sum_{t=1}^{T}\left\Vert 
\F_{1,t}\right\Vert _{F}^{2}\right) \\
&=O_{P}(1) \frac{1}{p_{1}p_{2}^{2}T^{2}}p_{1}p_{2}\frac{p_{2}}{%
T^{2}}T^{2}=O_{P}\left( \frac{1}{T^{2}}\right) .
\end{align*}%
Finally, it holds that, for all $j\leq h_{R_{1}}$%
\begin{equation*}
\lambda _{j}\left( I_{a}\right) \geq \lambda _{j}\left( \frac{\R_{1}^{\prime}\R_1}{p_{1}}\right) \lambda _{\min }\left( \frac{1}{T^{2}}%
\sum_{t=1}^{T}\F_{1,t}\bH_{C_{1}}\bH_{C_{1}}^{\prime}\F_{1,t}^{\prime}\right) .
\end{equation*}%
Recall that Theorem \ref{hat-estimates} states that $\bH_{C_{1}}^{\prime} \bH_{C_{1}}=\I_{h_{R_{1}}}+o_{P}(1) $, which entails that $\bH_{C_{1}}\bH_{C_{1}}^{\prime}=\I_{h_{R_{1}}}+o_{P}\left(
1\right) $. Now the same arguments as in the proof of Lemma \ref{lambda} entail that $P\left( \lambda _{j}\left( I_{a}\right) >0\right) \geq 1-\epsilon $ for any $\epsilon >0$, which implies the desired result.
\end{proof}
\end{lemma}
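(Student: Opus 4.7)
The plan is to show that the $h_{R_{1}}$-th largest eigenvalue of $\hat{\M}_{R_{1}}^{\dagger}$ is bounded below by a positive constant with probability tending to one, which is exactly what $\|(\Lambda_{R_{1}}^{\dagger})^{-1}\| = O_{P}(1)$ says. First I would expand $\hat{\M}_{R_{1}}^{\dagger}$ into the signal term carrying $\F_{1,t}$ on both sides plus eight cross and error terms, mirroring the decomposition used in the proof of Lemma \ref{eig-mRx}. The only structural difference from that decomposition is the extra sandwiched projector $\hat{\C}_{1} \hat{\C}_{1}^{\prime}$ and the accompanying factor $p_{2}$ appearing in the denominator.

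For the eight cross and error terms, I would follow the bounds in Lemma \ref{eig-mRx} essentially \emph{mutatis mutandis}: since $\hat{\C}_{1} \hat{\C}_{1}^{\prime}$ has rank $h_{C_{1}}$ and $\|\hat{\C}_{1}\|_{F} = O(p_{2}^{1/2})$, the additional projector is absorbed by the extra $p_{2}$ in the denominator. A quick check using Cauchy--Schwarz together with Lemmas \ref{berkes}, \ref{f1-summab} and \ref{errorfactor} shows that each of these eight blocks is $O_{P}(T^{-1})$ in operator norm, with the dominant contribution coming from the term quadratic in $\R_{0} \F_{0,t} \C_{0}^{\prime}$.

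For the signal term, I would split $\hat{\C}_{1} = \C_{1} \bH_{C_{1}} + (\hat{\C}_{1} - \C_{1} \bH_{C_{1}})$ and expand into four pieces. Using the identification constraint $\C_{1}^{\prime} \C_{1} = p_{2} \I_{h_{C_{1}}}$, the rate $\|\hat{\C}_{1} - \C_{1} \bH_{C_{1}}\|_{F} = O_{P}(p_{2}^{1/2}/T)$ from Theorem \ref{hat-estimates}, and the bound $\sum_{t=1}^{T} \|\F_{1,t}\|_{F}^{2} = O_{P}(T^{2})$ from Lemma \ref{berkes}, the cross and quadratic remainders are each $O_{P}(T^{-1})$, while the leading piece reduces to
\[
I_{a} = \frac{1}{p_{1} T^{2}} \R_{1} \left( \sum_{t=1}^{T} \F_{1,t} \bH_{C_{1}} \bH_{C_{1}}^{\prime} \F_{1,t}^{\prime} \right) \R_{1}^{\prime}.
\]

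To conclude, I would bound $\lambda_{h_{R_{1}}}(I_{a})$ away from zero using the multiplicative version of Weyl's inequality applied to $\R_{1}^{\prime} \R_{1}/p_{1}$ and the inner time-average matrix. The first factor is bounded below by a positive constant under Assumption \ref{as-4}; for the second, $\bH_{C_{1}} \bH_{C_{1}}^{\prime} = \I_{h_{C_{1}}} + o_{P}(1)$ follows from Theorem \ref{hat-estimates}, so the question reduces to showing $\lambda_{\min}(T^{-2} \sum_{t} \F_{1,t} \F_{1,t}^{\prime}) > 0$ in probability, which follows from the FCLT for Bernoulli shifts combined with the small-ball estimate of \citet{li2001small}, exactly as exploited in the proof of Lemma \ref{lambda}. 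Combining via Weyl's additive inequality with the $O_{P}(T^{-1})$ bound on the error block yields the desired lower bound. The main obstacle is controlling the stationary cross term $\R_{0} \F_{0,t} \C_{0}^{\prime} \hat{\C}_{1} \hat{\C}_{1}^{\prime} \C_{0} \F_{0,t}^{\prime} \R_{0}^{\prime}$: despite the strong cross-sectional dependence induced by $\F_{0,t}$, the rank-$h_{C_{1}}$ structure of the sandwiched projector together with the extra $p_{2}$ in the normaliser keeps this term at $O_{P}(T^{-1})$ rather than $O_{P}(1)$, which is exactly what saves the argument.
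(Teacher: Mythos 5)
Your proposal follows essentially the same route as the paper's proof: the same nine-term decomposition, the same observation that the stationary quadratic block dominates the error terms at rate $O_{P}(T^{-1})$, the same four-way split of the signal term via $\hat{\C}_{1}=\C_{1}\bH_{C_{1}}+(\hat{\C}_{1}-\C_{1}\bH_{C_{1}})$, and the same conclusion for $I_{a}$ via the multiplicative Weyl inequality, the near-orthogonality $\bH_{C_{1}}\bH_{C_{1}}^{\prime}=\I+o_{P}(1)$, and the small-ball lower bound on $\lambda_{\min}(T^{-2}\sum_{t}\F_{1,t}\F_{1,t}^{\prime})$ from Lemma \ref{lambda}. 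The argument is correct and matches the paper in all essentials.
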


\begin{lemma}
\label{spec-c-proj}We assume that Assumptions \ref{as-1}-\ref{as-5} are satisfied. Then it holds that $\left\Vert \left( \Lambda _{C_{1}}^{\dagger}\right) ^{-1}\right\Vert =O_{P}(1) $.

\begin{proof}
The proof is essentially the same as that of Lemma \ref{spec-eig-mRx}.
\end{proof}
\end{lemma}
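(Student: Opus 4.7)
The plan is to mirror, \emph{mutatis mutandis}, the proof of Lemma \ref{spec-eig-mRx}, interchanging the roles of $\R_1$ and $\C_1$ and of the row and column dimensions. Specifically, I would expand
\begin{equation*}
\hat{\M}_{C_{1}}^{\dagger}=\frac{1}{p_{1}^{2}p_{2}T^{2}}\sum_{t=1}^{T}\X_{t}^{\prime}\hat{\R}_1\hat{\R}_1^{\prime}\X_{t}
\end{equation*}
by substituting $\X_t=\R_1\F_{1,t}\C_1'+\R_0\F_{0,t}\C_0'+\E_t$ from (\ref{model}) into both factors of $\X_t^\prime \cdot \X_t$, obtaining a sum of nine terms $I+II+\cdots+VI+VI'$, organised exactly as in the proof of Lemma \ref{spec-eig-mRx}, with $I$ collecting the ``signal'' $p_{1}^{-2}p_{2}^{-1}T^{-2}\sum_{t}\C_{1}\F_{1,t}^{\prime}\R_{1}^{\prime}\hat{\R}_1\hat{\R}_1^{\prime}\R_1\F_{1,t}\C_{1}^{\prime}$, and the remaining eight terms collecting the cross/error contributions.

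For the eight non-signal terms, I would reuse verbatim the bounds already produced in the proofs of Lemmas \ref{eig-mRx} and \ref{eig-mCx}: the stationary-signal term $II$ is $O_{a.s.}(1/T)$ by Assumption \ref{as-4}\textit{(i)}\textit{(b)} together with (\ref{f1-sum-2}); the pure-noise term $III$ is $O_P(1/T)$ by Lemma \ref{err} applied to $\sum_t \E_t^{\prime}\E_t$; and each of the four mixed $I(1)$-error and $I(0)$-error/$I(0)$-$I(1)$ terms ($IV,IV',V,V',VI,VI'$) is again $O_P(1/T)$ (or smaller) via Lemmas \ref{errorfactor}, \ref{crossf} and \ref{f1-summab}, combined with $\hat{\R}_1^{\prime}\hat{\R}_1=p_{1}\I_{h_{R_1}}$ and Assumption \ref{as-4}. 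Weyl's inequality then yields $\lambda_{\max}(II+III+\cdots+VI')=O_P(1/T)$, so in particular $\lambda_j(\hat{\M}_{C_{1}}^{\dagger})=O_P(1/T)$ for $j>h_{C_1}$.

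For the signal term $I$, I would write $\hat{\R}_1=\R_1\bH_{R_1}+(\hat{\R}_1-\R_1\bH_{R_1})$ and decompose $I=I_a+I_b+I_b'+I_c$. Using Theorem \ref{hat-estimates}, $\|\hat{\R}_1-\R_1\bH_{R_1}\|_F=O_P(p_1^{1/2}/T)$, together with $\|\R_1\|_F^2=O(p_1)$, $\|\C_1\|_F^2=O(p_2)$, and $\sum_t\|\F_{1,t}\|_F^2=O_P(T^2)$ from Lemma \ref{berkes}, the cross terms $I_b$, $I_b^{\prime}$ are $O_P(1/T)$ and $I_c$ is $O_P(1/T^2)$. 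The leading term satisfies
\begin{equation*}
\lambda_j(I_a)\geq \lambda_j\!\left(\tfrac{\C_{1}^{\prime}\C_{1}}{p_2}\right)\lambda_{\min}\!\left(\tfrac{1}{T^{2}}\sum_{t=1}^{T}\bH_{R_1}^{\prime}\F_{1,t}^{\prime}\tfrac{\R_{1}^{\prime}\R_1}{p_1}\F_{1,t}\bH_{R_1}\right)
\end{equation*}
by the multiplicative Weyl inequality (Theorem 7 in \citealp{merikoski2004inequalities}).

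Assumption \ref{as-4}\textit{(ii)} gives $\lambda_j(\C_1'\C_1/p_2)>0$ and $\R_1'\R_1/p_1$ positive definite in the limit; Theorem \ref{hat-estimates} yields $\bH_{R_1}\bH_{R_1}^{\prime}=\I_{h_{R_1}}+o_P(1)$; then, exactly as in the final step of the proof of Lemma \ref{lambda}, the FCLT for decomposable Bernoulli shifts and the small-ball estimate $P(\int_0^1 W^2(r)\,dr\leq \varepsilon')\leq \exp(-1/(8\varepsilon'))$ from \citet{li2001small} imply $\lambda_{\min}(T^{-2}\sum_t\F_{1,t}^{\prime}\F_{1,t})$ is bounded away from zero in probability. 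Combining via Weyl's inequality, $\lambda_{h_{C_1}}(\hat{\M}_{C_1}^{\dagger})\geq c_0>0$ with probability approaching one, which yields $\|(\Lambda_{C_1}^{\dagger})^{-1}\|=O_P(1)$. The only point requiring slight care—and the most likely source of an oversight—is keeping track of the dimensional rescaling $p_1^2 p_2 T^2$ and the identification constraints $\hat{\R}_1^{\prime}\hat{\R}_1=p_1\I_{h_{R_1}}$ in each of the cross terms, since these differ from the analogous row-direction calculation.
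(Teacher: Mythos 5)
Your proposal is correct and follows essentially the same route as the paper, which itself proves this lemma simply by transposing the argument of Lemma \ref{spec-eig-mRx} (swap $\R_1\leftrightarrow\C_1$, $p_1\leftrightarrow p_2$, use $\hat{\R}_1^{\prime}\hat{\R}_1=p_1\I_{h_{R_1}}$ and $\bH_{R_1}\bH_{R_1}^{\prime}=\I_{h_{R_1}}+o_P(1)$, then the multiplicative Weyl inequality and the small-ball bound from Lemma \ref{lambda}). The only nitpick is that in $I_a$ the rotation matrices should sit in the middle of the sandwich, i.e.\ the relevant inner matrix is $T^{-2}\sum_{t}\F_{1,t}^{\prime}\bH_{R_{1}}\bH_{R_{1}}^{\prime}\F_{1,t}$ rather than $\bH_{R_{1}}^{\prime}\bigl(T^{-2}\sum_{t}\F_{1,t}^{\prime}\F_{1,t}\bigr)\bH_{R_{1}}$, but under the scalar convention $h_{R_{1}}=h_{C_{1}}=1$ adopted throughout the appendix the two coincide and your argument is unaffected.
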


\begin{lemma}
\label{r-orth}We assume that Assumptions \ref{as-1}-\ref{as-5} are
satisfied. Then it holds that 
\begin{equation*}
\left\Vert \hat{\R}_{1,\perp}-\R_{1,\perp}\right\Vert
_{F}=O_{P}\left( \frac{1}{T}\right) .
\end{equation*}

\begin{proof}
Recall that, by construction, $\hat{\R}_1^{\prime}\hat{\R}_1=p_{1}\I_{h_{R_{1}}}$; recall the identification restriction $\R_{1}^{\prime}\R_{1}=p_{1}\I_{h_{R_{1}}}$; and finally, recall
(\ref{h-r-orthogonal}), viz.%
\begin{equation*}
\bH_{R_{1}}^{\prime}\bH_{R_{1}}=\I_{h_{R_{1}}}+O_{P}\left( \frac{1%
}{T}\right) .
\end{equation*}%
Then, using Theorem \ref{hat-estimates}, it holds that%
\begin{align*}
&\left\Vert \hat{\R}_{1,\perp}-\R_{1,\perp}\right\Vert
_{F} \\
&=\left\Vert \I_{p_{1}}-\hat{\R}_1\left( \hat{\R}_1^{\prime}\hat{\R}_1\right) ^{-1}\hat{\R}_1^{\prime}-\left( \I_{p_{1}}-\R_1\left( \R_{1}^{\prime}\R%
\right) ^{-1}\R_{1}^{\prime}\right) \right\Vert _{F} \\
&=\frac{1}{p_{1}}\left\Vert \R_1\R_{1}^{\prime}-\hat{\R}_1%
\hat{\R}_1^{\prime}\right\Vert _{F}+O_{P}\left( \frac{1}{T}\right)
\\
&\leq \frac{1}{p_{1}}\left\Vert \left( \hat{\R}_1-\R_1\bH%
_{R_{1}}\right) \left( \hat{\R}_1-\R_1\bH_{R_{1}}\right) ^{\prime}-%
\R_1\R_{1}^{\prime}\right\Vert _{F}+O_{P}\left( \frac{1}{T}\right) \\
&\leq \frac{2}{p_{1}}\left\Vert \hat{\R}_1-\R_1\bH%
_{R_{1}}\right\Vert _{F}\left\Vert \R_1\bH_{R_{1}}^{\prime}\right\Vert _{F}+%
\frac{1}{p_{1}}\left\Vert \hat{\R}_1-\R_1\bH_{R_{1}}\right\Vert
_{F}^{2}+O_{P}\left( \frac{1}{T}\right) \\
&=O_{P}\left( \frac{1}{T}\right) +O_{P}\left( \frac{1}{T^{2}}\right) .
\end{align*}
\end{proof}
\end{lemma}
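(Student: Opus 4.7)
The plan is to reduce the problem to controlling $\hat{\R}_1\hat{\R}_1' - \R_1\R_{1}^{\prime}$ and then to invoke the rate supplied by Theorem \ref{hat-estimates}. Under the identification restrictions $\hat{\R}_1^{\prime}\hat{\R}_1 = p_1\I_{h_{R_1}}$ and (without loss of generality) $\R_{1}^{\prime}\R_1 = p_1\I_{h_{R_1}}$, both projections reduce to
\begin{equation*}
\hat{\R}_{1,\perp} = \I_{p_1} - \frac{1}{p_1}\hat{\R}_1\hat{\R}_1^{\prime}, \qquad \R_{1,\perp} = \I_{p_1} - \frac{1}{p_1}\R_1\R_{1}^{\prime},
\end{equation*}
so it suffices to show $\left\Vert \hat{\R}_1\hat{\R}_1^{\prime} - \R_1\R_{1}^{\prime}\right\Vert_F = O_P(p_1/T)$.

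Write $\hat{\R}_1 = \R_1\bH_{R_1} + \left(\hat{\R}_1 - \R_1\bH_{R_1}\right)$, where by Theorem \ref{hat-estimates} the error has Frobenius norm $O_P(p_1^{1/2}/T)$. Before expanding the outer product, I would establish an approximate orthogonality of the rotation matrix $\bH_{R_1}$. Substituting the decomposition into $\hat{\R}_1^{\prime}\hat{\R}_1 = p_1\I_{h_{R_1}}$ and using $\left\Vert\R_1\right\Vert_F = O(p_1^{1/2})$ together with $\left\Vert\bH_{R_1}\right\Vert_F = O_P(1)$, the cross and quadratic remainders are $O_P(p_1/T)$ and $O_P(p_1/T^2)$ respectively, so the leading term $p_1\bH_{R_1}^{\prime}\bH_{R_1}$ equals $p_1\I_{h_{R_1}}$ up to $O_P(p_1/T)$. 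Dividing by $p_1$ gives $\bH_{R_1}^{\prime}\bH_{R_1} = \I_{h_{R_1}} + O_P(T^{-1})$; since $\bH_{R_1}$ has fixed dimension $h_{R_1}\times h_{R_1}$ and $\left\Vert\bH_{R_1}^{-1}\right\Vert_F = O_P(1)$, the two one-sided products share squared singular values, hence $\bH_{R_1}\bH_{R_1}^{\prime} = \I_{h_{R_1}} + O_P(T^{-1})$ as well.

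The final step is a direct expansion of $\hat{\R}_1\hat{\R}_1^{\prime}$ into four pieces: the principal term $\R_1\bH_{R_1}\bH_{R_1}^{\prime}\R_{1}^{\prime}$, two cross terms, and a quadratic residual. The principal term equals $\R_1\R_{1}^{\prime} + \R_1(\bH_{R_1}\bH_{R_1}^{\prime} - \I_{h_{R_1}})\R_{1}^{\prime}$, whose Frobenius norm is bounded by $\left\Vert\R_1\right\Vert_F^2 \cdot O_P(T^{-1}) = O_P(p_1/T)$; each cross term is bounded by $\left\Vert\R_1\right\Vert_F \left\Vert\bH_{R_1}\right\Vert_F \left\Vert\hat{\R}_1 - \R_1\bH_{R_1}\right\Vert_F = O_P(p_1/T)$; and the quadratic residual is $O_P(p_1/T^2)$. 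Summing and dividing by $p_1$ yields the claimed $O_P(T^{-1})$ bound. The only delicate point is the transfer from $\bH_{R_1}^{\prime}\bH_{R_1} \approx \I$ to $\bH_{R_1}\bH_{R_1}^{\prime} \approx \I$, which is immediate via the singular value decomposition given the assumed bounds on $\bH_{R_1}$ and $\bH_{R_1}^{-1}$; no serious obstacle is anticipated, and the remainder is triangle-inequality bookkeeping against the rate of Theorem \ref{hat-estimates}.
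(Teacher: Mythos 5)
Your proposal is correct and follows essentially the same route as the paper: reduce to bounding $\frac{1}{p_1}\left\Vert \hat{\R}_1\hat{\R}_1^{\prime}-\R_1\R_{1}^{\prime}\right\Vert_F$, expand around $\R_1\bH_{R_1}$, and control the principal, cross, and quadratic terms using Theorem \ref{hat-estimates} and the near-orthogonality of $\bH_{R_1}$. If anything, you are slightly more explicit than the paper in justifying the passage from $\bH_{R_1}^{\prime}\bH_{R_1}=\I_{h_{R_1}}+O_P(T^{-1})$ to $\bH_{R_1}\bH_{R_1}^{\prime}=\I_{h_{R_1}}+O_P(T^{-1})$, which the paper uses implicitly.
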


\begin{lemma}
\label{c-orth}We assume that Assumptions \ref{as-1}-\ref{as-5} are
satisfied. Then it holds that 
\begin{equation*}
\left\Vert \hat{\C}_{1,\perp}-\C_{1,\perp}\right\Vert
_{F}=O_{P}\left( \frac{1}{T}\right) .
\end{equation*}

\begin{proof}
The proof follows from the same arguments as the proof of Lemma \ref{r-orth}.
\end{proof}
\end{lemma}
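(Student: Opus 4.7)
My plan is to mirror the proof of Lemma \ref{r-orth} verbatim, swapping the roles of rows and columns (i.e., $\R_1 \leftrightarrow \C_1$, $p_1 \leftrightarrow p_2$, $\bH_{R_1} \leftrightarrow \bH_{C_1}$). The normalizations are symmetric: by construction in (\ref{def-c-hat}) we have $\hat{\C}_1^\prime \hat{\C}_1 = p_2 \I_{h_{C_1}}$, and the identification restriction is $\C_1^\prime \C_1 = p_2 \I_{h_{C_1}}$. Hence $(\hat{\C}_1^\prime \hat{\C}_1)^{-1} = (\C_1^\prime \C_1)^{-1} = p_2^{-1} \I_{h_{C_1}}$, and the first step is to rewrite
\[
\hat{\C}_{1,\perp} - \C_{1,\perp} \;=\; \frac{1}{p_2}\bigl(\C_1 \C_1^\prime - \hat{\C}_1 \hat{\C}_1^\prime\bigr).
\]

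Next, I would decompose $\hat{\C}_1 = \C_1 \bH_{C_1} + (\hat{\C}_1 - \C_1 \bH_{C_1})$ and expand the outer product. The dominant term is $\C_1 \bH_{C_1} \bH_{C_1}^\prime \C_1^\prime$: using the refined near-orthogonality $\bH_{C_1}^\prime \bH_{C_1} = \I_{h_{C_1}} + O_P(T^{-1})$ (the companion of (\ref{h-r-orthogonal}) invoked in the proof of Lemma \ref{r-orth}) together with $\|\bH_{C_1}\|_F,\|\bH_{C_1}^{-1}\|_F = O_P(1)$ from Theorem \ref{hat-estimates}, which for the square matrix $\bH_{C_1}$ implies $\bH_{C_1}\bH_{C_1}^\prime = \I_{h_{C_1}} + O_P(T^{-1})$ as well, so that $\|\C_1 \bH_{C_1}\bH_{C_1}^\prime\C_1^\prime - \C_1\C_1^\prime\|_F \le \|\C_1\|_F^2\,O_P(T^{-1}) = O_P(p_2/T)$.

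The two cross terms are controlled by Theorem \ref{hat-estimates}:
\[
\bigl\|\C_1 \bH_{C_1} (\hat{\C}_1 - \C_1 \bH_{C_1})^\prime\bigr\|_F
\le \|\C_1\|_F \|\bH_{C_1}\|_F \|\hat{\C}_1 - \C_1 \bH_{C_1}\|_F
= O(p_2^{1/2})\cdot O_P(1)\cdot O_P(p_2^{1/2}/T) = O_P(p_2/T),
\]
and by symmetry the same bound holds for $(\hat{\C}_1 - \C_1 \bH_{C_1})\bH_{C_1}^\prime\C_1^\prime$. The quadratic remainder is of smaller order: $\|(\hat{\C}_1 - \C_1 \bH_{C_1})(\hat{\C}_1 - \C_1 \bH_{C_1})^\prime\|_F \le \|\hat{\C}_1 - \C_1 \bH_{C_1}\|_F^2 = O_P(p_2/T^2)$. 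Combining, $\|\C_1 \C_1^\prime - \hat{\C}_1\hat{\C}_1^\prime\|_F = O_P(p_2/T)$, so dividing by $p_2$ yields the claimed $O_P(T^{-1})$ bound.

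The only non-cosmetic ingredient that is not already an off-the-shelf consequence of Theorem \ref{hat-estimates} is the near-orthogonality relation $\bH_{C_1}^\prime \bH_{C_1} = \I_{h_{C_1}} + O_P(T^{-1})$, which is the analogue for $\C_1$ of (\ref{h-r-orthogonal}); this is the main obstacle in the sense that the weaker statement $\|\bH_{C_1}\|_F=O_P(1)$ from Theorem \ref{hat-estimates} is not, by itself, strong enough. It can be extracted from the explicit formula for $\bH_{C_1}$ that arises in the proof of Theorem \ref{hat-estimates} (involving the inverse $\Lambda_{C_1}^{-1}$, whose boundedness is given by Lemma \ref{lambda}), using the identification restriction $p_2^{-1}\C_1^\prime\C_1 = \I_{h_{C_1}}$ and the $O_P(p_2^{1/2}/T)$ rate for $\hat{\C}_1 - \C_1 \bH_{C_1}$, exactly as in the derivation of (\ref{h-r-orthogonal}).
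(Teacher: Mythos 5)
Your proposal is correct and follows essentially the same route as the paper's proof of Lemma \ref{r-orth} (which the paper simply transposes to the column case): reduce to $p_2^{-1}\left\Vert \C_{1}\C_{1}^{\prime}-\hat{\C}_{1}\hat{\C}_{1}^{\prime}\right\Vert _{F}$ via the normalizations, expand around $\C_{1}\bH_{C_{1}}$, and control the leading term with the near-orthogonality $\bH_{C_{1}}^{\prime}\bH_{C_{1}}=\I_{h_{C_{1}}}+O_{P}(T^{-1})$, the column analogue of (\ref{h-r-orthogonal}) established in the proof of Theorem \ref{hat-estimates}. You correctly identify that this near-orthogonality, not just $\left\Vert \bH_{C_{1}}\right\Vert _{F}=O_{P}(1)$, is the essential ingredient.
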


\begin{lemma}
\label{eig-orth-r}We assume that Assumptions \ref{as-1}-\ref{as-5} are
satisfied. Then there exists a positive constant $c_{0}$ such that 
\begin{equation*}
\lambda _{j}\left( \M_{X}^{R_{1},\perp}\right) =c_{0}+o_{P}\left(
1\right) \text{,}
\end{equation*}%
for all $j\leq h_{R_{1}}$, and%
\begin{equation*}
\lambda _{j}\left( \M_{X}^{R_{1},\perp}\right) =O_{P}\left( \frac{1%
}{p_{2}^{1/2}T^{1/2}}\right) +O_{P}\left( \frac{1}{p_{1}p_{2}}\right) ,
\end{equation*}%
for all $j>h_{R_{1}}$.

\begin{proof}
Some arguments are repetitive and similar to the proof of Lemma \ref{eig-mRx}%
, and therefore we omit them when possible for brevity. By construction%
\begin{eqnarray*}
\M_{X}^{R_{1},\perp} &=&\frac{1}{p_{1}p_{2}^{2}T}\sum_{t=1}^{T}%
\R_{1}\F_{1,t}\C_{1}^{\prime}\hat{\C}%
_{1,\perp}\hat{\C}_{1,\perp}^{\prime}\C_{1}\F%
_{1,t}^{\prime}\R_{1}^{\prime} \\
&&+\frac{1}{p_{1}p_{2}^{2}T}\sum_{t=1}^{T}\R_{0}\F_{0,t}%
\C_{0}^{\prime}\hat{\C}_{1,\perp}\hat{\C}%
_{1,\perp}^{\prime}\C_{0}\F_{0,t}^{\prime}\R%
_{0}^{\prime} \\
&&+\frac{1}{p_{1}p_{2}^{2}T}\sum_{t=1}^{T}\E_{t}\hat{\C}%
_{1,\perp}\hat{\C}_{1,\perp}^{\prime}\E_{t}^{\prime}+%
\frac{1}{p_{1}p_{2}^{2}T}\sum_{t=1}^{T}\R_{1}\F_{1,t}\C_{1}^{\prime}\hat{\C}_{1,\perp}\hat{\C}_{1,\perp
}^{\prime}\E_{t}^{\prime} \\
&&+\left( \frac{1}{p_{1}p_{2}^{2}T}\sum_{t=1}^{T}\R_{1}\F%
_{1,t}\C_{1}^{\prime}\hat{\C}_{1,\perp}\hat{\C}_{1,\perp}^{\prime}\E_{t}^{\prime}\right) ^{\prime}+%
\frac{1}{p_{1}p_{2}^{2}T}\sum_{t=1}^{T}\R_{1}\F_{1,t}\C_{1}^{\prime}\hat{\C}_{1,\perp}\hat{\C}_{1,\perp
}^{\prime}\C_{0}\F_{0,t}^{\prime}\R_{1}^{\prime}
\\
&&+\left( \frac{1}{p_{1}p_{2}^{2}T}\sum_{t=1}^{T}\R_{1}\F%
_{1,t}\C_{1}^{\prime}\hat{\C}_{1,\perp}\hat{\C}_{1,\perp}^{\prime}\C_{0}\F_{0,t}^{\prime}%
\R_{0}^{\prime}\right) ^{\prime} \\
&&+\frac{1}{p_{1}p_{2}^{2}T}\sum_{t=1}^{T}\R_{0}\F_{0,t}%
\C_{0}^{\prime}\hat{\C}_{1,\perp}\hat{\C}%
_{1,\perp}^{\prime}\E_{t}^{\prime}+\left( \frac{1}{p_{1}p_{2}^{2}T%
}\sum_{t=1}^{T}\R_{0}\F_{0,t}\C_{0}^{\prime}%
\hat{\C}_{1,\perp}\hat{\C}_{1,\perp}^{\prime}%
\E_{t}^{\prime}\right) ^{\prime} \\
&=&I+II+III+IV+IV^{\prime}+V+V^{\prime}+VI+VI^{\prime}.
\end{eqnarray*}%
We begin by noting that we can always write%
\begin{equation}
\C_{1}^{\prime}\hat{\C}_{1,\perp}=\C%
_{1}^{\prime}\left( \hat{\C}_{1,\perp}-\C_{1,\perp
}\right) .  \label{orth-prod-c}
\end{equation}%
We will work under the restrictions $h_{R_{1}}=h_{C_{1}}=1$, for simplicity
and with no loss of generality, and, when possible, $h_{R_{1}}=h_{C_{1}}=1$.
We begin by studying%
\begin{eqnarray*}
\left\Vert I\right\Vert _{F} &=&\left\Vert \frac{1}{p_{1}p_{2}^{2}T}%
\sum_{t=1}^{T}\R_{1}\F_{1,t}\C_{1}^{\prime}\left( 
\hat{\C}_{1,\perp}-\C_{1,\perp}\right) \left( \hat{%
\C}_{1,\perp}-\C_{1,\perp}\right) ^{\prime}\C_{1}%
\F_{1,t}^{\prime}\R_{1}^{\prime}\right\Vert _{F} \\
&\leq &\frac{1}{p_{1}p_{2}^{2}T}\left\Vert \R_{1}\right\Vert
_{F}^{2}\left\Vert \C_{1}\right\Vert _{F}^{2}\left\Vert \hat{%
\C}_{1,\perp}-\C_{1,\perp}\right\Vert _{F}^{2}\left(
\sum_{t=1}^{T}\left\Vert \F_{1,t}\right\Vert _{F}^{2}\right)  \\
&=&O_{P}\left( T^{2}\right) \frac{1}{p_{1}p_{2}^{2}T}p_{1}p_{2}\frac{1}{T^{2}%
}=O_{P}\left( \frac{1}{p_{2}T}\right) ,
\end{eqnarray*}%
by virtue of (\ref{orth-prod-c}) and Lemma \ref{c-orth}. Using the fact that 
$\hat{\C}_{1,\perp}$ is symmetric and idempotent, we have%
\begin{equation*}
\hat{\C}_{1,\perp}\hat{\C}_{1,\perp}^{\prime
}=\left( \hat{\C}_{1,\perp}\right) ^{2}=\hat{\C}%
_{1,\perp};
\end{equation*}%
hence%
\begin{eqnarray*}
III &=&\frac{1}{p_{1}p_{2}^{2}T}\sum_{t=1}^{T}\E_{t}\hat{\C}_{1,\perp}\E_{t}^{\prime} \\
&=&\frac{1}{p_{1}p_{2}^{2}T}\sum_{t=1}^{T}\E_{t}\C_{1,\perp}%
\E_{t}^{\prime}+\frac{1}{p_{1}p_{2}^{2}T}\sum_{t=1}^{T}\E%
_{t}\left( \hat{\C}_{1,\perp}-\C_{1,\perp}\right) 
\E_{t}^{\prime} \\
&=&III_{a}+III_{b}.
\end{eqnarray*}%
Seeing as%
\begin{equation*}
\sum_{t=1}^{T}\E_{t}\C_{1,\perp}\E_{t}^{\prime
}=\sum_{t=1}^{T}E\left( \E_{t}\C_{1,\perp}\E%
_{t}^{\prime}\right) +\sum_{t=1}^{T}\left( \E_{t}\C%
_{1,\perp}\E_{t}^{\prime}-E\left( \E_{t}\C%
_{1,\perp}\E_{t}^{\prime}\right) \right) ,
\end{equation*}%
we have (denoting the element in position $j$, $h$ of $\C_{1,\perp}$
as $c_{\perp ,hj}$)%
\begin{eqnarray*}
&&\lambda _{\max }\left( \sum_{t=1}^{T}E\left( \E_{t}\C%
_{1,\perp}\E_{t}^{\prime}\right) \right)  \\
&\leq &\sum_{t=1}^{T}\lambda _{\max }\left( E\left( \E_{t}\C%
_{1,\perp}\E_{t}^{\prime}\right) \right) \leq \left( \max_{1\leq
h,j\leq p_{2}}\left( c_{\perp ,hj}\right) ^{2}\right)
\sum_{t=1}^{T}\max_{1\leq i\leq
p_{1}}\sum_{k=1}^{p_{1}}\sum_{j,h=1}^{p_{2}}\left\vert E\left(
e_{ij,t}e_{kh,t}\right) \right\vert  \\
&\leq &c_{0}Tp_{2},
\end{eqnarray*}%
and, after some algebra 
\begin{eqnarray*}
&&E\left\Vert \sum_{t=1}^{T}\left( \E_{t}\C_{1,\perp}%
\E_{t}^{\prime}-E\left( \E_{t}\C_{1,\perp}\E_{t}^{\prime}\right) \right) \right\Vert _{F}^{2} \\
&\leq
&c_{0}\sum_{i,j=1}^{p_{1}}\sum_{t,s=1}^{T}%
\sum_{h_{1},h_{2},h_{3},h_{4}=1}^{p_{2}}\left\vert \cov\left(
e_{ih_{1},t}e_{jh_{2},t},e_{ih_{3},s}e_{jh_{4},s}\right) \right\vert  \\
&\leq &c_{0}p_{1}^{2}Tp_{2}^{3},
\end{eqnarray*}%
by Assumption \ref{as-3}\textit{(iii)}, whence ultimately 
\begin{equation*}
\left\Vert III_{a}\right\Vert _{F}=O_{P}\left( \frac{1}{p_{1}p_{2}}\right)
+O_{P}\left( \frac{1}{p_{2}^{1/2}T^{1/2}}\right) .
\end{equation*}%
Also, by similar calculations as above, it is not hard to see that $%
\sum_{t=1}^{T}\left\Vert \E_{t}\right\Vert _{F}^{2}=O_{P}\left(
p_{1}p_{2}T\right) $, and therefore%
\begin{eqnarray*}
\left\Vert III_{b}\right\Vert _{F} &\leq &\frac{1}{p_{1}p_{2}^{2}T}%
\sum_{t=1}^{T}\left\Vert \E_{t}\right\Vert _{F}^{2}\left\Vert 
\hat{\C}_{1,\perp}-\C_{1,\perp}\right\Vert _{F} \\
&=&O_{P}(1) \frac{1}{p_{1}p_{2}^{2}T}p_{1}p_{2}T\frac{1}{T}%
=O_{P}\left( \frac{1}{p_{2}T}\right) ,
\end{eqnarray*}%
whence ultimately 
\begin{equation}
\left\Vert III\right\Vert _{F}=O_{P}\left( \frac{1}{p_{1}p_{2}}\right)
+O_{P}\left( \frac{1}{p_{2}^{1/2}T^{1/2}}\right) .  \label{3F-error}
\end{equation}%
Further, again exploiting idempotency and (\ref{orth-prod-c})%
\begin{eqnarray*}
\left\Vert IV\right\Vert _{F} &=&\left\Vert \frac{1}{p_{1}p_{2}^{2}T}%
\sum_{t=1}^{T}\R_{1}\F_{1,t}\C_{1}^{\prime}\left( 
\hat{\C}_{1,\perp}-\C_{1,\perp}\right) \E%
_{t}^{\prime}\right\Vert _{F} \\
&\leq &\frac{1}{p_{1}p_{2}^{2}T}\left\Vert \R_{1}\right\Vert
_{F}\left\Vert \C_{1}\right\Vert _{F}\left\Vert \hat{\C}%
_{1,\perp}-\C_{1,\perp}\right\Vert _{F}\left\Vert \sum_{t=1}^{T}%
\F_{1,t}\E_{t}\right\Vert _{F} \\
&=&O_{P}(1) \frac{1}{p_{1}p_{2}^{2}T}p_{1}^{1/2}p_{2}^{1/2}\frac{%
1}{T}\left( p_{1}^{1/2}p_{2}^{1/2}T\right) =O_{P}\left( \frac{1}{p_{2}T}%
\right) ,
\end{eqnarray*}%
having used Lemma \ref{errorfactor}. Similarly 
\begin{eqnarray*}
\left\Vert V\right\Vert _{F} &=&\left\Vert \frac{1}{p_{1}p_{2}^{2}T}%
\sum_{t=1}^{T}\R_{1}\F_{1,t}\C_{1}^{\prime}\left( 
\hat{\C}_{1,\perp}-\C_{1,\perp}\right) \C_{0}%
\F_{0,t}^{\prime}\R_{1}^{\prime}\right\Vert _{F} \\
&\leq &\frac{1}{p_{1}p_{2}^{2}T}\left\Vert \R_{1}\right\Vert
_{F}^{2}\left\Vert \C_{1}\right\Vert _{F}\left\Vert \C%
_{0}\right\Vert _{F}\left\Vert \hat{\C}_{1,\perp}-\C%
_{1,\perp}\right\Vert _{F}\left\Vert \sum_{t=1}^{T}\F_{1,t}\F_{0,t}^{\prime}\right\Vert _{F} \\
&=&O_{P}(1) \frac{1}{p_{1}p_{2}^{2}T}p_{1}p_{2}^{1/2}p_{2}^{1/2}%
\frac{1}{T}T=O_{P}\left( \frac{1}{p_{2}T}\right) .
\end{eqnarray*}%
Finally, we have%
\begin{eqnarray*}
VI &=&\frac{1}{p_{1}p_{2}^{2}T}\sum_{t=1}^{T}\R_{0}\F_{0,t}%
\C_{0}^{\prime}\C_{1,\perp}\E_{t}^{\prime}+\frac{1%
}{p_{1}p_{2}^{2}T}\sum_{t=1}^{T}\R_{0}\F_{0,t}\C%
_{0}^{\prime}\left( \hat{\C}_{1,\perp}-\C_{1,\perp
}\right) \E_{t}^{\prime} \\
&=&VI_{a}+VI_{b}.
\end{eqnarray*}%
It holds that%
\begin{eqnarray*}
E\left\Vert \sum_{t=1}^{T}\C_{1,\perp}\E_{t}^{\prime}%
\F_{0,t}\right\Vert _{F}^{2}
&=&E\sum_{i=1}^{p_{1}}\sum_{h=1}^{p_{2}}\left(
\sum_{t=1}^{T}\sum_{j=1}^{p_{2}}c_{\perp ,hj}e_{ij,t}\F_{0,t}\right)
^{2} \\
&=&\sum_{i=1}^{p_{1}}\sum_{h=1}^{p_{2}}\sum_{t,s=1}^{T}%
\sum_{j_{1},j_{2}=1}^{p_{2}}c_{\perp ,hj_{1}}c_{\perp ,hj_{2}}E\left(
e_{ij_{1},t}e_{ij_{2},s}\right) E\left( \F_{0,t}\F%
_{0,s}\right)  \\
&\leq &\left( \max_{1\leq h,j\leq p_{2}}c_{\perp ,hj}\right)
^{2}\sum_{i=1}^{p_{1}}\sum_{h=1}^{p_{2}}\sum_{t,s=1}^{T}%
\sum_{j_{1},j_{2}=1}^{p_{2}}\left\vert E\left(
e_{ij_{1},t}e_{ij_{2},s}\right) \right\vert \left\vert E\left( \F%
_{0,t}\F_{0,s}\right) \right\vert  \\
&\leq &c_{0}\left( \max_{1\leq h,j\leq p_{2}}c_{\perp ,hj}\right)
^{2}\sum_{i=1}^{p_{1}}\sum_{h=1}^{p_{2}}\sum_{t,s=1}^{T}%
\sum_{j_{1},j_{2}=1}^{p_{2}}\left\vert E\left(
e_{ij_{1},t}e_{ij_{2},s}\right) \right\vert \leq c_{1}p_{1}p_{2}^{2}T,
\end{eqnarray*}%
and therefore we have%
\begin{eqnarray*}
\left\Vert VI_{a}\right\Vert _{F} &\leq &\frac{1}{p_{1}p_{2}^{2}T}\left\Vert 
\R_{0}\right\Vert _{F}\left\Vert \C_{0}\right\Vert
_{F}\left\Vert \sum_{t=1}^{T}\C_{1,\perp}\E_{t}^{\prime}%
\F_{0,t}\right\Vert _{F} \\
&=&O_{P}(1) \frac{1}{p_{1}p_{2}^{2}T}%
p_{1}^{1/2}p_{2}^{1/2}p_{1}^{1/2}p_{2}T^{1/2}=O_{P}\left( \frac{1}{%
p_{2}^{1/2}T^{1/2}}\right) .
\end{eqnarray*}%
Also%
\begin{eqnarray*}
\left\Vert VI_{b}\right\Vert _{F} &\leq &\frac{1}{p_{1}p_{2}^{2}T}\left\Vert 
\R_{0}\right\Vert _{F}\left\Vert \C_{0}\right\Vert
_{F}\left\Vert \hat{\C}_{1,\perp}-\C_{1,\perp
}\right\Vert _{F}\left\Vert \sum_{t=1}^{T}\F_{0,t}\E%
_{t}^{\prime}\right\Vert _{F} \\
&=&O_{P}(1) \frac{1}{p_{1}p_{2}^{2}T}p_{1}^{1/2}p_{2}^{1/2}\frac{%
1}{T}p_{1}^{1/2}p_{2}^{1/2}T^{1/2}=O_{P}\left( \frac{1}{p_{2}T^{3/2}}\right)
,
\end{eqnarray*}%
so that ultimately 
\begin{equation*}
\left\Vert VI\right\Vert _{F}=O_{P}\left( \frac{1}{p_{2}^{1/2}T^{1/2}}%
\right) .
\end{equation*}%
Putting all together, it follows that%
\begin{eqnarray*}
&&\lambda _{\max }\left( I+III+IV+IV^{\prime}+V+V^{\prime}+VI+VI^{\prime
}\right)  \\
&=&O_{P}\left( \frac{1}{p_{2}^{1/2}T^{1/2}}\right) +O_{P}\left( \frac{1}{%
p_{1}p_{2}}\right) .
\end{eqnarray*}%
Consider now $II$; by construction, $\lambda _{j}\left( II\right) =0$
whenever $j>h_{R_{1}}$. Further%
\begin{eqnarray*}
II &=&\frac{1}{p_{1}p_{2}^{2}T}\sum_{t=1}^{T}\R_{0}\F_{0,t}%
\C_{0}^{\prime}\hat{\C}_{1,\perp}\C_{0}\F_{0,t}^{\prime}\R_{0}^{\prime} \\
&=&\frac{1}{p_{1}p_{2}^{2}T}\sum_{t=1}^{T}\R_{0}\F_{0,t}%
\C_{0}^{\prime}\C_{1,\perp}\C_{0}\F%
_{0,t}^{\prime}\R_{0}^{\prime}+\frac{1}{p_{1}p_{2}^{2}T}%
\sum_{t=1}^{T}\R_{0}\F_{0,t}\C_{0}^{\prime}\left( 
\hat{\C}_{1,\perp}-\C_{1,\perp}\right) \C_{0}%
\F_{0,t}^{\prime}\R_{0}^{\prime} \\
&=&II_{a}+II_{b}.
\end{eqnarray*}%
Note that, by standard algebra 
\begin{equation*}
\left\Vert \C_{0}^{\prime}\C_{1,\perp}\C%
_{0}\right\Vert _{F}=c_{0}p_{2}^{2},
\end{equation*}%
and therefore 
\begin{eqnarray*}
\left\Vert II_{a}\right\Vert _{F} &=&\frac{1}{p_{1}p_{2}^{2}T}\sum_{t=1}^{T}%
\R_{0}E\left( \F_{0,t}\C_{0}^{\prime}\C%
_{1,\perp}\C_{0}\F_{0,t}^{\prime}\right) \R%
_{0}^{\prime} \\
&&+\frac{1}{p_{1}p_{2}^{2}T}\sum_{t=1}^{T}\R_{0}\left( \F%
_{0,t}\C_{0}^{\prime}\C_{1,\perp}\C_{0}\F%
_{0,t}^{\prime}-E\left( \F_{0,t}\C_{0}^{\prime}\C%
_{1,\perp}\C_{0}\F_{0,t}^{\prime}\right) \right) \R%
_{0}^{\prime} \\
&=&II_{a,1}+II_{a,2}.
\end{eqnarray*}%
Using Assumptions \ref{as-2} and \ref{as-4}, it is easy to see via tedious
but elementary passages that $\lambda _{j}\left( II_{a,1}\right) \geq c_{0}$%
, and $\lambda _{\max }\left( II_{a,2}\right) =O_{P}\left( T^{-1/2}\right) $%
. Moreover%
\begin{eqnarray*}
\left\Vert II_{b}\right\Vert _{F} &\leq &\frac{1}{p_{1}p_{2}^{2}T}\left\Vert 
\R_{0}\right\Vert _{F}^{2}\left\Vert \C_{0}\right\Vert
_{F}^{2}\left\Vert \hat{\C}_{1,\perp}-\C_{1,\perp
}\right\Vert _{F}\sum_{t=1}^{T}\left\Vert \F_{0,t}\right\Vert
_{F}^{2} \\
&=&O_{P}\left( T\right) \frac{1}{p_{1}p_{2}^{2}T}p_{1}p_{2}\frac{1}{T}%
=O_{P}\left( \frac{1}{p_{2}T}\right) .
\end{eqnarray*}%
Putting all together, it follows that%
\begin{equation*}
\lambda _{j}\left( II\right) =c_{0}+o_{P}(1) ,
\end{equation*}%
for all $j\leq h_{R_{1}}$. The desired result now follows from Weyl's
inequality.
\end{proof}
\end{lemma}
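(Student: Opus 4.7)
The plan is to decompose $\M_{R_1,\perp}$ according to $\X_t = \R_1 \F_{1,t} \C_1^\prime + \R_0 \F_{0,t} \C_0^\prime + \E_t$ into nine summands: a signal piece
\begin{equation*}
II = (p_1 p_2^2 T)^{-1}\sum_{t=1}^{T}\R_0 \F_{0,t}\C_0^\prime \hat{\C}_{1,\perp}\hat{\C}_{1,\perp}^\prime \C_0 \F_{0,t}^\prime \R_0^\prime,
\end{equation*}
five cross/noise terms containing $\R_1 \F_{1,t} \C_1^\prime$, and three remainders involving $\E_t$ alone or paired with $\F_{0,t}$. The algebraic identity driving the whole argument is
\begin{equation*}
\C_1^\prime \hat{\C}_{1,\perp} = \C_1^\prime\bigl(\hat{\C}_{1,\perp} - \C_{1,\perp}\bigr),
\end{equation*}
which holds because $\C_1^\prime \C_{1,\perp} = 0$; coupled with Lemma \ref{c-orth}, which supplies $\|\hat{\C}_{1,\perp} - \C_{1,\perp}\|_F = O_P(1/T)$, it produces an extra $1/T$ factor for every $\R_1$--dependent term.

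For those five terms, the routine estimates $\sum_t \|\F_{1,t}\|_F^2 = O_P(T^2)$ (Lemma \ref{berkes}), $\|\sum_t \F_{1,t}\E_t\|_F = O_P(p_1^{1/2}p_2^{1/2}T)$ (Lemma \ref{errorfactor}), and $\|\sum_t \F_{1,t}\F_{0,t}\|_F = O_P(T)$ (Lemma \ref{crossf}), combined with Assumption \ref{as-4}(i) and the $1/T$ gain above, are enough to render each term $O_P(1/(p_2 T))$ or faster, well inside the claimed noise rate. The most delicate noise contribution is $III = (p_1 p_2^2 T)^{-1} \sum_t \E_t \hat{\C}_{1,\perp} \E_t^\prime$ (after using idempotency of $\hat{\C}_{1,\perp}$); I would split this into its population mean (which contributes the $O_P(1/(p_1 p_2))$ summand, bounded via Assumption \ref{as-3}(ii)(b)), a centred fluctuation (giving the $O_P(1/(p_2^{1/2}T^{1/2}))$ summand via a second-moment computation using Assumption \ref{as-3}(iii)), and a perturbation from $\hat{\C}_{1,\perp} - \C_{1,\perp}$ controlled by Lemma \ref{c-orth} together with $\sum_t \|\E_t\|_F^2 = O_P(p_1 p_2 T)$. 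The $\F_0$--$\E$ cross term $VI$ admits an analogous treatment: Assumption \ref{as-5} decouples the relevant second moments into products bounded via Assumptions \ref{as-2} and \ref{as-3}(ii)(d), again producing $O_P(1/(p_2^{1/2}T^{1/2}))$.

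The signal term $II$ is where Assumption \ref{as-4}(ii)(b) is essential: it prevents $\C_0^\prime \C_{1,\perp} \C_0$ from degenerating to zero, so that $II$ retains its full row rank after anti-projection. Replacing $\hat{\C}_{1,\perp}$ by $\C_{1,\perp}$ inside $II$ (the substitution error being $O_P(1/(p_2 T))$ via Lemma \ref{c-orth}) and then applying the decomposable Bernoulli-shift law of large numbers to $T^{-1}\sum_t \F_{0,t} M \F_{0,t}^\prime$ with $M := \C_0^\prime \C_{1,\perp} \C_0$ (using Assumption \ref{as-2}), I would identify the limit of $II$ as a deterministic matrix of the form $p_1^{-1}\R_0 \Sigma \R_0^\prime$ with $\Sigma$ positive definite. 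This pins down the top $h_{R_1}$ eigenvalues of $II$ to a positive constant up to $o_P(1)$, while all remaining eigenvalues of $II$ vanish identically by rank. Weyl's inequality applied to the full decomposition then delivers both conclusions of the lemma. The principal difficulty, in my view, is bookkeeping: one must verify for every cross piece that the $1/T$ gain from $\hat{\C}_{1,\perp}-\C_{1,\perp}$ is genuinely being harvested, because without it several remainders would easily overwhelm the target noise rate $O_P(1/(p_2^{1/2}T^{1/2})) + O_P(1/(p_1 p_2))$.
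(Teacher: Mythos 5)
Your proposal is correct and follows essentially the same route as the paper's proof: the same nine-term decomposition, the same key identity $\C_{1}^{\prime}\hat{\C}_{1,\perp}=\C_{1}^{\prime}(\hat{\C}_{1,\perp}-\C_{1,\perp})$ harvesting a $1/T$ gain via Lemma \ref{c-orth}, the same splitting of the pure-error and $\F_0$--$\E$ cross terms into mean, fluctuation and perturbation pieces, the same use of Assumption \ref{as-4}\textit{(ii)}(b) to keep the signal term nondegenerate, and Weyl's inequality to conclude. No gaps.
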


\begin{lemma}
\label{eig-tilde-c}We assume that Assumptions \ref{as-1}-\ref{as-5} are
satisfied. Then there exists a positive constant $c_{0}$ such that%
\begin{equation*}
\lambda _{j}\left( \M_{C_1,\perp}\right) =c_{0}+o_{P}\left(
1\right) \text{,}
\end{equation*}%
for all $j\leq h_{C_{0}}$, and%
\begin{equation*}
\lambda _{j}\left( \M_{C_1,\perp}\right) =O_{P}\left( \frac{1}{%
p_{1}^{1/2}T^{1/2}}\right) +O_{P}\left( \frac{1}{p_{1}p_{2}}\right) ,
\end{equation*}%
for all $j>h_{C_{0}}$.

\begin{proof}
The proof is the same as the proof of Lemma \ref{eig-orth-r}, \textit{mutatis mutandis}.
\end{proof}
\end{lemma}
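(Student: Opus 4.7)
The plan is to mirror the argument used for $\M_{R_1,\perp}$ in Lemma \ref{eig-orth-r}, exploiting the symmetry between the row and column structure of the model. First, I would write $\hat{\X}_t^{R_1} = \X_t^{\prime}\hat{\R}_{1,\perp}$ and expand $\M_{C_1,\perp}$ into nine summands obtained from the factorisation $\X_t = \R_1\F_{1,t}\C_1^{\prime}+\R_0\F_{0,t}\C_0^{\prime}+\E_t$: a pure $I(1)$ term, a pure $I(0)$ term, a pure error term, and the six cross terms. Throughout, I would use the key identity $\R_1^{\prime}\hat{\R}_{1,\perp} = \R_1^{\prime}\bigl(\hat{\R}_{1,\perp}-\R_{1,\perp}\bigr)$, which follows because $\R_1^{\prime}\R_{1,\perp}=0$, together with Lemma \ref{r-orth} that gives $\|\hat{\R}_{1,\perp}-\R_{1,\perp}\|_F=O_P(T^{-1})$.

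The signal piece is the pure $I(0)$ term
\[
II = \frac{1}{p_1^2 p_2 T}\sum_{t=1}^{T}\C_0\F_{0,t}^{\prime}\R_0^{\prime}\hat{\R}_{1,\perp}\hat{\R}_{1,\perp}^{\prime}\R_0\F_{0,t}\C_0^{\prime},
\]
which has rank at most $h_{C_0}$. Splitting $II = II_a + II_b$ by replacing $\hat{\R}_{1,\perp}$ with $\R_{1,\perp}$ plus its error, a further decomposition $II_a = II_{a,1}+II_{a,2}$ into its expectation plus a mean-zero fluctuation shows, via Assumptions \ref{as-2} and \ref{as-4}\textit{(ii)}(a), that $\lambda_j(II_{a,1})\geq c_0>0$ for $j\leq h_{C_0}$ (since $\R_0^{\prime}\R_{1,\perp}\neq 0$ ensures the projected $\R_0$ retains full column rank), while $\|II_{a,2}\|_F = O_P(T^{-1/2})$ by an $\mathcal{L}_2$ bound on partial sums of $\F_{0,t}$ under the Bernoulli-shift assumption, and $\|II_b\|_F=O_P((p_1 T)^{-1})$ from Lemma \ref{r-orth} together with $\sum_t\|\F_{0,t}\|_F^2=O_P(T)$ from Lemma \ref{f1-summab}.

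For the remaining eight terms, the bounds are obtained by the same cross-sectional-averaging arguments as in Lemma \ref{eig-orth-r}, with the roles of $p_1$ and $p_2$ swapped. The pure $I(1)$ term gains a factor $\|\R_1^{\prime}(\hat{\R}_{1,\perp}-\R_{1,\perp})\|_F^2=O_P(T^{-2})$ which suppresses the otherwise $O_P(T^2)$ growth of $\sum_t\|\F_{1,t}\|_F^2$. The pure error term contributes the dominating $O_P(p_1^{-1/2}T^{-1/2})+O_P((p_1p_2)^{-1})$ via Assumption \ref{as-3}\textit{(iii)}(a), after idempotency of $\hat{\R}_{1,\perp}$ is used. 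The cross terms involving $\F_{1,t}\E_t$, $\F_{1,t}\F_{0,t}$, and $\F_{0,t}\E_t$ are handled using Lemmas \ref{crossf} and \ref{errorfactor}, each time extracting one factor of $\|\hat{\R}_{1,\perp}-\R_{1,\perp}\|_F=O_P(T^{-1})$ from the $\R_1^{\prime}\hat{\R}_{1,\perp}$ identity to kill the superconsistency scale, and each term is shown to be of strictly smaller order than $p_1^{-1/2}T^{-1/2}+(p_1p_2)^{-1}$.

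Combining these bounds via Weyl's inequality yields the claimed rate for $j>h_{C_0}$, while the combination with the uniform lower bound on $\lambda_j(II_{a,1})$ for $j\leq h_{C_0}$ gives the consistency claim. The main obstacle, as in Lemma \ref{eig-orth-r}, is the bookkeeping on the cross term involving $\R_0\F_{0,t}\C_0^{\prime}\hat{\R}_{1,\perp}\hat{\R}_{1,\perp}^{\prime}\E_t^{\prime}$, because there the fast decay of $\|\hat{\R}_{1,\perp}-\R_{1,\perp}\|_F$ is not directly available (no $\R_1$ is present to trigger the orthogonality identity), and one must instead exploit only the boundedness of $\hat{\R}_{1,\perp}$ together with the cross-sectional decay of $E(e_{ij,t}e_{i^{\prime}j^{\prime},s})$ from Assumption \ref{as-3}\textit{(ii)}.
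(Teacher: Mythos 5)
Your proposal is correct and follows exactly the route the paper intends: the published proof of Lemma \ref{eig-tilde-c} is literally ``same as Lemma \ref{eig-orth-r}, \emph{mutatis mutandis}'', and you have spelled out that transposition faithfully --- the nine-term expansion, the orthogonality identity $\R_1^{\prime}\hat{\R}_{1,\perp}=\R_1^{\prime}(\hat{\R}_{1,\perp}-\R_{1,\perp})$ with Lemma \ref{r-orth}, the signal analysis via Assumption \ref{as-4}\textit{(ii)}(a), and the direct treatment of the $\F_{0,t}$--$\E_t$ cross term where no orthogonality factor is available. One small imprecision: that last cross term is not of strictly smaller order but contributes exactly at the dominant $O_P(p_1^{-1/2}T^{-1/2})$ rate (as its analogue $VI_a$ does in Lemma \ref{eig-orth-r}), and as written it should read $\C_0\F_{0,t}^{\prime}\R_0^{\prime}\hat{\R}_{1,\perp}\hat{\R}_{1,\perp}^{\prime}\E_t$ in the column decomposition; neither point affects the conclusion.
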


\begin{lemma}
\label{yong}We assume that Assumptions \ref{as-1}-\ref{as-5} hold. Then it
holds that%
\begin{eqnarray}
&&\sigma _{\max }\left[ \frac{1}{p_{1}}\hat{\R}_{1,\perp
}^{\prime}\left( \hat{\R}_{0}-\R_{0}\hat{\bH}%
_{R_{0}}\right) \right]   \label{yong1} \\
&=&O_{P}\left( \frac{p_{1}^{1/2}}{p_{1}p_{2}}\right) +O_{P}\left( \frac{%
p_{1}^{1/2}}{p_{2}T}\right) +O_{P}\left( \frac{p_{1}^{1/2}}{%
p_{1}^{1/2}p_{2}^{1/2}T^{1/2}}\right) ,  \label{yong2}
\end{eqnarray}%
and 
\begin{eqnarray}
&&\sigma _{\max }\left[ \frac{1}{p_{2}}\hat{\C}_{1,\perp
}^{\prime}\left( \hat{\C}_{0}-\C_{0}\hat{\bH}%
_{C_{0}}\right) \right]  \\
&=&O_{P}\left( \frac{p_{2}^{1/2}}{p_{1}p_{2}}\right) +O_{P}\left( \frac{%
p_{2}^{1/2}}{p_{1}T}\right) +O_{P}\left( \frac{p_{2}^{1/2}}{%
p_{1}^{1/2}p_{2}^{1/2}T^{1/2}}\right) .  \notag
\end{eqnarray}

\begin{proof}
We only show (\ref{yong1}); the proof of (\ref{yong2}) is essentially the
same. It holds that 
\begin{eqnarray*}
&&\frac{1}{p_{1}}\hat{\R}_{1,\perp}^{\prime}\left( \hat{%
\R}_{0}-\R_{0}\hat{\bH}_{R_{0}}\right)  \\
&=&\frac{1}{p_{1}^{2}p_{2}^{2}T}\hat{\R}_{1,\perp}^{\prime
}\sum_{t=1}^{T}\R_{1}\F_{1,t}\C_{1}^{\prime}%
\hat{\C}_{1,\perp}\hat{\C}_{1,\perp}^{\prime}%
\C_{1}\F_{1,t}^{\prime}\R_{1}^{\prime}\hat{%
\R}_{0}\Lambda _{R_{0}}^{-1} \\
&&+\frac{1}{p_{1}^{2}p_{2}^{2}T}\hat{\R}_{1,\perp}^{\prime
}\sum_{t=1}^{T}\R_{0}\F_{0,t}\C_{0}^{\prime}\C_{1,\perp}\left( \hat{\C}_{1,\perp}-\C_{1,\perp
}\right) ^{\prime}\C_{0}\F_{0,t}^{\prime}\R%
_{0}^{\prime}\hat{\R}_{0}\Lambda _{R_{0}}^{-1} \\
&&+\frac{1}{p_{1}^{2}p_{2}^{2}T}\hat{\R}_{1,\perp}^{\prime
}\sum_{t=1}^{T}\R_{0}\F_{0,t}\C_{0}^{\prime}\left( 
\hat{\C}_{1,\perp}-\C_{1,\perp}\right) \C%
_{1,\perp}^{\prime}\C_{0}\F_{0,t}^{\prime}\R%
_{0}^{\prime}\hat{\R}_{0}\Lambda _{R_{0}}^{-1} \\
&&+\frac{1}{p_{1}^{2}p_{2}^{2}T}\hat{\R}_{1,\perp}^{\prime
}\sum_{t=1}^{T}\R_{0}\F_{0,t}\C_{0}^{\prime}\left( 
\hat{\C}_{1,\perp}-\C_{1,\perp}\right) \left( \hat{%
\C}_{1,\perp}-\C_{1,\perp}\right) ^{\prime}\C_{0}%
\F_{0,t}^{\prime}\R_{0}^{\prime}\hat{\R}%
_{0}\Lambda _{R_{0}}^{-1} \\
&&+\frac{1}{p_{0}^{2}p_{2}^{2}T}\hat{\R}_{1,\perp}^{\prime
}\sum_{t=1}^{T}\E_{t}\hat{\C}_{1,\perp}\hat{\C}_{1,\perp}^{\prime}\E_{t}^{\prime}\hat{\R}%
_{0}\Lambda _{R_{0}}^{-1}+\frac{1}{p_{1}^{2}p_{2}^{2}T}\hat{\R}%
_{1,\perp}^{\prime}\sum_{t=1}^{T}\R_{1}\F_{1,t}\C%
_{1}^{\prime}\hat{\C}_{1,\perp}\hat{\C}_{1,\perp
}^{\prime}\E_{t}^{\prime}\hat{\R}_{0}\Lambda
_{R_{0}}^{-1} \\
&&+\frac{1}{p_{1}^{2}p_{2}^{2}T}\hat{\R}_{1,\perp}^{\prime
}\left( \sum_{t=1}^{T}\R_{1}\F_{1,t}\C_{1}^{\prime}%
\hat{\C}_{1,\perp}\hat{\C}_{1,\perp}^{\prime}%
\E_{t}^{\prime}\right) ^{\prime}\hat{\R}_{0}\Lambda
_{R_{0}}^{-1} \\
&&+\frac{1}{p_{1}^{2}p_{2}^{2}T}\hat{\R}_{1,\perp}^{\prime
}\sum_{t=1}^{T}\R_{1}\F_{1,t}\C_{1}^{\prime}%
\hat{\C}_{1,\perp}\hat{\C}_{1,\perp}^{\prime}%
\C_{0}\F_{0,t}^{\prime}\R_{0}^{\prime}\hat{%
\R}_{0}\Lambda _{R_{0}}^{-1} \\
&&+\frac{1}{p_{1}^{2}p_{2}^{2}T}\hat{\R}_{1,\perp}^{\prime
}\left( \sum_{t=1}^{T}\R_{1}\F_{1,t}\C_{1}^{\prime}%
\hat{\C}_{1,\perp}\hat{\C}_{1,\perp}^{\prime}%
\C_{0}\F_{0,t}^{\prime}\R_{0}^{\prime}\right)
^{\prime}\hat{\R}_{0}\Lambda _{R_{0}}^{-1} \\
&&+\frac{1}{p_{1}^{2}p_{2}^{2}T}\hat{\R}_{1,\perp}^{\prime
}\sum_{t=1}^{T}\R_{0}\F_{0,t}\C_{0}^{\prime}%
\hat{\C}_{1,\perp}\hat{\C}_{1,\perp}^{\prime}%
\E_{t}^{\prime}\hat{\R}_{0}\Lambda _{R_{0}}^{-1} \\
&&+\frac{1}{p_{1}^{2}p_{2}^{2}T}\hat{\R}_{1,\perp}^{\prime
}\left( \sum_{t=1}^{T}\R_{0}\F_{0,t}\C_{0}^{\prime}%
\hat{\C}_{1,\perp}\hat{\C}_{1,\perp}^{\prime}%
\E_{t}^{\prime}\right) ^{\prime}\hat{\R}_{0}\Lambda
_{R_{0}}^{-1} \\
&=&I+II+III+IV+V+VI+VII+VIII+IX+X+XI.
\end{eqnarray*}%
We have%
\begin{eqnarray*}
\left\Vert I\right\Vert _{F} &=&\left\Vert \frac{1}{p_{1}^{2}p_{2}^{2}T}%
\left( \hat{\R}_{1,\perp}-\R_{1,\perp}\right)
\sum_{t=1}^{T}\R_{1}\F_{1,t}\C_{1}^{\prime}\left( 
\hat{\C}_{1,\perp}-\C_{1,\perp}\right) \left( \hat{%
\C}_{1,\perp}-\C_{1,\perp}\right) \C_{1}\F%
_{1,t}^{\prime}\R_{1}^{\prime}\hat{\R}_{0}\Lambda
_{R_{0}}^{-1}\right\Vert _{F} \\
&\leq &\frac{1}{p_{1}^{2}p_{2}^{2}T}\left\Vert \hat{\R}_{1,\perp
}-\R_{1,\perp}\right\Vert _{F}\left\Vert \R_{1}\right\Vert
_{F}^{2}\left\Vert \hat{\R}_{0}\right\Vert _{F}\left\Vert 
\C_{1}\right\Vert _{F}^{2}\left\Vert \hat{\C}_{1,\perp}-%
\C_{1,\perp}\right\Vert _{F}^{2}\left\Vert \sum_{t=1}^{T}\F%
_{1,t}^{2}\right\Vert _{F}\left\Vert \Lambda _{R_{0}}^{-1}\right\Vert _{F} \\
&=&O_{P}(1) \frac{1}{p_{1}^{2}p_{2}^{2}T}\frac{1}{T}%
p_{1}p_{1}^{1/2}p_{2}\frac{1}{T^{2}}T^{2}=O_{P}\left( \frac{1}{%
p_{1}^{1/2}p_{2}T^{2}}\right) .
\end{eqnarray*}%
Note now that%
\begin{equation*}
\hat{\C}_{1,\perp}\left( \hat{\C}_{1,\perp}-%
\C_{1,\perp}\right) =\hat{\C}_{1,\perp}\I%
_{p_{2}}-\hat{\C}_{1,\perp}\C_{1,\perp}=\hat{%
\C}_{1,\perp}\left[ \C\left( \C^{\prime}\C%
\right) ^{-1}\C^{\prime}\right] ,
\end{equation*}%
and since $\C\left( \C^{\prime}\C\right) ^{-1}%
\C^{\prime}$ is an idempotent matrix with $h_{C_{1}}$\ nonzero
eigenvalues, we have%
\begin{equation}
\left\Vert \hat{\C}_{1,\perp}\left( \hat{\C}%
_{1,\perp}-\C_{1,\perp}\right) \right\Vert _{F}=O\left( \frac{1}{T}%
\right) ,  \label{c-orth-large-1}
\end{equation}%
and similarly%
\begin{equation}
\left\Vert \C_{1,\perp}\left( \hat{\C}_{1,\perp}-%
\C_{1,\perp}\right) \right\Vert _{F}=O\left( \frac{1}{T}\right) .
\label{c-orth-large-2}
\end{equation}%
Hence we have%
\begin{eqnarray*}
\left\Vert II\right\Vert _{F} &=&\left\Vert \frac{1}{p_{1}^{2}p_{2}^{2}T}%
\hat{\R}_{1,\perp}^{\prime}\sum_{t=1}^{T}\R_{0}\F_{0,t}\C_{0}^{\prime}\C_{1,\perp}\left( \hat{\C}_{1,\perp}-\C_{1,\perp}\right) ^{\prime}\C_{0}\mathbf{F%
}_{0,t}^{\prime}\R_{0}^{\prime}\hat{\R}_{0}\Lambda
_{R_{0}}^{-1}\right\Vert _{F} \\
&\leq &\frac{1}{p_{1}^{2}p_{2}^{2}T}\left\Vert \R_{0}\right\Vert
_{F}^{2}\left\Vert \hat{\R}_{0}\right\Vert _{F}\left\Vert 
\hat{\R}_{1,\perp}\right\Vert _{F}\left\Vert \C%
_{0}\right\Vert _{F}^{2}\left\Vert \C_{1,\perp}\left( \hat{%
\C}_{1,\perp}-\C_{1,\perp}\right) \right\Vert
_{F}\left\Vert \sum_{t=1}^{T}\F_{0,t}^{2}\right\Vert _{F}\left\Vert
\Lambda _{R_{0}}^{-1}\right\Vert _{F} \\
&=&O_{P}(1) \frac{1}{p_{1}^{2}p_{2}^{2}T}%
p_{1}p_{1}^{1/2}p_{1}p_{2}\frac{1}{T}T=O_{P}\left( \frac{p_{1}^{1/2}}{p_{2}T}%
\right) ,
\end{eqnarray*}%
and the same rate holds for $III$; further%
\begin{eqnarray*}
\left\Vert IV\right\Vert _{F} &\leq &\frac{1}{p_{1}^{2}p_{2}^{2}T}\left\Vert 
\R_{0}\right\Vert _{F}^{2}\left\Vert \hat{\R}%
_{0}\right\Vert _{F}\left\Vert \hat{\R}_{1,\perp}\right\Vert
_{F}\left\Vert \C_{0}\right\Vert _{F}^{2}\left\Vert \hat{\C}_{1,\perp}-\C_{1,\perp}\right\Vert _{F}^{2}\left\Vert
\sum_{t=1}^{T}\F_{0,t}^{2}\right\Vert _{F}\left\Vert \Lambda
_{R_{0}}^{-1}\right\Vert _{F} \\
&=&O_{P}(1) \frac{1}{p_{1}^{2}p_{2}^{2}T}p_{1}^{5/2}p_{2}\frac{1%
}{T^{2}}T=O_{P}\left( \frac{p_{1}^{1/2}}{p_{2}T^{2}}\right) .
\end{eqnarray*}%
We now study $V$, using the decomposition%
\begin{eqnarray*}
V &=&\frac{1}{p_{1}^{2}p_{2}^{2}T}\R_{1,\perp}^{\prime
}\sum_{t=1}^{T}\E_{t}\C_{1,\perp}\E_{t}^{\prime}%
\R_{0}\hat{\bH}_{R_{0}}\Lambda _{R_{0}}^{-1}+\frac{1}{%
p_{1}^{2}p_{2}^{2}T}\left( \hat{\R}_{1,\perp}-\R%
_{1,\perp}\right) ^{\prime}\sum_{t=1}^{T}\E_{t}\C_{1,\perp
}\E_{t}^{\prime}\R_{0}\hat{\bH}_{R_{0}}\Lambda
_{R_{0}}^{-1} \\
&&+\frac{1}{p_{1}^{2}p_{2}^{2}T}\R_{1,\perp}^{\prime}\sum_{t=1}^{T}%
\E_{t}\left( \hat{\C}_{1,\perp}-\C_{1,\perp
}\right) ^{\prime}\E_{t}^{\prime}\R_{0}\hat{\bH}%
_{R_{0}}\Lambda _{R_{0}}^{-1} \\
&&+\frac{1}{p_{1}^{2}p_{2}^{2}T}\left( \hat{\R}_{1,\perp}-%
\R_{1,\perp}\right) ^{\prime}\sum_{t=1}^{T}\E_{t}\left( 
\hat{\C}_{1,\perp}-\C_{1,\perp}\right) ^{\prime}%
\E_{t}^{\prime}\R_{0}\hat{\bH}_{R_{0}}\Lambda
_{R_{0}}^{-1} \\
&&+\frac{1}{p_{1}^{2}p_{2}^{2}T}\R_{1,\perp}^{\prime}\sum_{t=1}^{T}%
\E_{t}\C_{1,\perp}\E_{t}^{\prime}\left( \hat{%
\R}_{0}-\R_{0}\hat{\bH}_{R_{0}}\right) \Lambda
_{R_{0}}^{-1}+V_{f} \\
&=&V_{a}+V_{b}+V_{c}+V_{d}+V_{e}+V_{f},
\end{eqnarray*}%
where $V_{f}$ is a remainder which can be shown to be dominated by the other
terms. We begin with $V_{a}$, and note that%
\begin{eqnarray*}
&&\R_{1,\perp}^{\prime}\sum_{t=1}^{T}\E_{t}\C%
_{1,\perp}\E_{t}^{\prime}\R_{0} \\
&=&\R_{1,\perp}^{\prime}\sum_{t=1}^{T}E\left( \E_{t}%
\C_{1,\perp}\E_{t}^{\prime}\right) \R_{0}+\mathbf{R%
}_{1,\perp}^{\prime}\sum_{t=1}^{T}\left[ \E_{t}\C_{1,\perp
}\E_{t}^{\prime}-E\left( \E_{t}\C_{1,\perp}\E_{t}^{\prime}\right) \right] \R_{0};
\end{eqnarray*}%
the element in position $1\leq k\leq p_{1}$ of the vector (recall we are
assuming only one factor) $\R_{1,\perp}^{\prime
}\sum_{t=1}^{T}E\left( \E_{t}\C_{1,\perp}\E%
_{t}^{\prime}\right) \R_{0}$ is%
\begin{equation*}
\sum_{t=1}^{T}\sum_{u=1}^{p_{1}}\sum_{i=1}^{p_{1}}\sum_{h=1}^{p_{2}}\sum_{%
\ell =1}^{p_{2}}p_{uikh\ell }=r_{0,u}r_{\perp ,ik}c_{\perp ,h\ell }E\left(
e_{ih,t}e_{u\ell ,t}\right) ,
\end{equation*}%
where $p_{uikh\ell }=r_{0,u}r_{\perp ,ik}c_{\perp ,h\ell }$, and therefore%
\begin{eqnarray*}
&&\sigma _{\max }\left( \R_{1,\perp}^{\prime}\sum_{t=1}^{T}E\left( 
\E_{t}\C_{1,\perp}\E_{t}^{\prime}\right) \R%
_{0}\right)  \\
&\leq &\left(
\sum_{k=1}^{p_{1}}\sum_{t=1}^{T}\sum_{u=1}^{p_{1}}\sum_{i=1}^{p_{1}}%
\sum_{h=1}^{p_{2}}\sum_{\ell =1}^{p_{2}}\left\vert p_{uikh\ell }\right\vert
\left\vert E\left( e_{ih,t}e_{u\ell ,t}\right) \right\vert \right) ^{1/2} \\
&&\times \left( \max_{1\leq k\leq
p_{1}}\sum_{t=1}^{T}\sum_{u=1}^{p_{1}}\sum_{i=1}^{p_{1}}\sum_{h=1}^{p_{2}}%
\sum_{\ell =1}^{p_{2}}\left\vert p_{uikh\ell }\right\vert \left\vert E\left(
e_{ih,t}e_{u\ell ,t}\right) \right\vert \right) ^{1/2} \\
&\leq &c_{0}\left( p_{1}^{2}p_{2}T\right) ^{1/2}\left( p_{1}p_{2}T\right)
^{1/2}\leq c_{1}p_{1}^{3/2}p_{2}T.
\end{eqnarray*}%
Also%
\begin{eqnarray*}
&&E\left\Vert \R_{1,\perp}^{\prime}\sum_{t=1}^{T}\left[ \E%
_{t}\C_{1,\perp}\E_{t}^{\prime}-E\left( \E_{t}%
\C_{1,\perp}\E_{t}^{\prime}\right) \right] \R%
_{0}\right\Vert _{F}^{2} \\
&=&\sum_{k=1}^{p_{1}}\sum_{t,s=1}^{T}%
\sum_{i_{1},i_{2},i_{3},i_{4}=1}^{p_{1}}%
\sum_{h_{1},h_{2},h_{3},h_{4}=1}^{p_{2}}p_{i_{1}i_{2}kh_{1}h_{2}}p_{i_{3}i_{4}kh_{3}h_{4}}\cov\left( e_{i_{1}h_{1},t}e_{i_{2}h_{2},t},e_{i_{3}h_{3},s}e_{i_{4}h_{4},s}\right) 
\\
&\leq
&c_{0}\sum_{k=1}^{p_{1}}\sum_{t,s=1}^{T}%
\sum_{i_{1},i_{2},i_{3},i_{4}=1}^{p_{1}}%
\sum_{h_{1},h_{2},h_{3},h_{4}=1}^{p_{2}}\left\vert \cov\left(
e_{i_{1}h_{1},t}e_{i_{2}h_{2},t},e_{i_{3}h_{3},s}e_{i_{4}h_{4},s}\right)
\right\vert  \\
&\leq &c_{0}p_{1}^{4}p_{2}^{3}T,
\end{eqnarray*}%
whence 
\begin{equation*}
\left\Vert \R_{1,\perp}^{\prime}\sum_{t=1}^{T}\left[ \E_{t}%
\C_{1,\perp}\E_{t}^{\prime}-E\left( \E_{t}\mathbf{C%
}_{1,\perp}\E_{t}^{\prime}\right) \right] \R%
_{0}\right\Vert _{F}=O_{P}\left( p_{1}^{2}p_{2}^{3/2}T^{1/2}\right) .
\end{equation*}%
Therefore we have%
\begin{equation*}
\sigma _{\max }\left( V_{a}\right) =O\left( \frac{p_{1}^{3/2}p_{2}T}{%
p_{1}^{2}p_{2}^{2}T}\right) +O_{P}\left( \frac{p_{1}^{2}p_{2}^{3/2}T^{1/2}}{%
p_{1}^{2}p_{2}^{2}T}\right) =O\left( \frac{p_{1}^{1/2}}{p_{1}p_{2}}\right)
+O_{P}\left( \frac{p_{1}^{1/2}}{p_{1}^{1/2}p_{2}^{1/2}T^{1/2}}\right) .
\end{equation*}

We now note that%
\begin{equation*}
\sigma _{\max }\left( \sum_{t=1}^{T}\E_{t}\C_{1,\perp}%
\E_{t}^{\prime}\R_{0}\right) =O_{P}\left(
p_{1}^{1/2}p_{2}T\right) +O_{P}\left( p_{1}p_{2}^{3/2}T^{1/2}\right) .
\end{equation*}%
Indeed, the element in position $1\leq i\leq p_{1}$ of $\E_{t}%
\C_{1,\perp}\E_{t}^{\prime}\R_{0}$ is given by%
\begin{equation*}
\sum_{k=1}^{p_{1}}\sum_{j,h=1}^{p_{2}}r_{0,k}c_{\perp ,jh}e_{ij,t}e_{kh,t};
\end{equation*}%
hence%
\begin{eqnarray*}
&&\sigma _{\max }\left( \sum_{t=1}^{T}E\left( \E_{t}\C%
_{1,\perp}\E_{t}^{\prime}\R_{0}\right) \right)  \\
&\leq &\left(
\sum_{t=1}^{T}\sum_{i,k=1}^{p_{1}}\sum_{j,h=1}^{p_{2}}\left\vert
r_{0,k}c_{\perp ,jh}\right\vert \left\vert E\left( e_{ij,t}e_{kh,t}\right)
\right\vert \right) ^{1/2}\left( \sum_{t=1}^{T}\max_{1\leq i\leq
p_{1}}\sum_{k=1}^{p_{1}}\sum_{j,h=1}^{p_{2}}\left\vert r_{0,k}c_{\perp
,jh}\right\vert \left\vert E\left( e_{ij,t}e_{kh,t}\right) \right\vert
\right) ^{1/2} \\
&\leq &c_{0}\left( p_{1}p_{2}T\right) ^{1/2}\left( p_{2}T\right) ^{1/2}\leq
c_{1}p_{1}^{1/2}p_{2}T;
\end{eqnarray*}%
also%
\begin{eqnarray*}
&&E\left\Vert \sum_{t=1}^{T}\left[ \E_{t}\C_{1,\perp}%
\E_{t}^{\prime}\R_{0}-E\left( \E_{t}\C%
_{1,\perp}\E_{t}^{\prime}\R_{0}\right) \right] \right\Vert
_{F}^{2} \\
&\leq
&c_{0}\sum_{i=1}^{p_{1}}\sum_{t,s=1}^{T}\sum_{h_{1},h_{2}=1}^{p_{1}}%
\sum_{h_{1},h_{2},h_{3},h_{4}=1}^{p_{2}}\left\vert \cov\left(
e_{ih_{1},t}e_{k_{1}h_{2},t},e_{ih_{3},s}e_{k_{2}h_{4},s}\right) \right\vert 
\\
&\leq &c_{0}p_{1}^{2}p_{2}^{3}T,
\end{eqnarray*}%
whence 
\begin{equation*}
\left\Vert \sum_{t=1}^{T}\left[ \E_{t}\C_{1,\perp}\E%
_{t}^{\prime}\R_{0}-E\left( \E_{t}\C_{1,\perp}%
\E_{t}^{\prime}\R_{0}\right) \right] \right\Vert
_{F}=O_{P}\left( p_{1}p_{2}^{3/2}T^{1/2}\right) .
\end{equation*}%
We therefore have%
\begin{eqnarray*}
&&\left\Vert \frac{1}{p_{1}^{2}p_{2}^{2}T}\left( \hat{\R}%
_{1,\perp}-\R_{1,\perp}\right) ^{\prime}\sum_{t=1}^{T}\E%
_{t}\C_{1,\perp}\E_{t}^{\prime}\R_{0}\hat{%
\bH}_{R_{0}}\Lambda _{R_{0}}^{-1}\right\Vert _{F} \\
&\leq &\frac{1}{p_{1}^{2}p_{2}^{2}T}\left\Vert \hat{\R}_{1,\perp
}-\R_{1,\perp}\right\Vert _{F}\left\Vert \hat{\bH}%
_{R_{0}}\right\Vert _{F}\left\Vert \Lambda _{R_{0}}^{-1}\right\Vert
_{F}\sigma _{\max }\left( \sum_{t=1}^{T}\E_{t}\C_{1,\perp}%
\E_{t}^{\prime}\R_{0}\right)  \\
&=&O_{P}(1) \frac{1}{p_{1}^{2}p_{2}^{2}T}\frac{1}{T}\left(
p_{1}^{1/2}p_{2}T+p_{1}p_{2}^{3/2}T^{1/2}\right) =O_{P}\left( \frac{1}{%
p_{1}^{3/2}p_{2}T}\right) +O_{P}\left( \frac{1}{p_{1}p_{2}^{1/2}T^{3/2}}%
\right) .
\end{eqnarray*}%
Continuing with $V_{c}$, the same passages as in the above yield%
\begin{eqnarray*}
\left\Vert V_{c}\right\Vert  &\leq &\frac{1}{p_{1}^{2}p_{2}^{2}T}\left\Vert 
\hat{\C}_{1,\perp}-\C_{1,\perp}\right\Vert
_{F}\sum_{t=1}^{T}\left\Vert \R_{1,\perp}^{\prime}\E%
_{t}\right\Vert _{F}\left\Vert \C_{1,\perp}^{\prime}\E%
_{t}^{\prime}\R_{0}\right\Vert _{F}\left\Vert \Lambda
_{R_{0}}^{-1}\right\Vert _{F} \\
&=&O_{P}(1) \frac{1}{p_{1}^{2}p_{2}^{2}T^{2}}\left(
\sum_{t=1}^{T}\left\Vert \R_{1,\perp}^{\prime}\E%
_{t}\right\Vert _{F}^{2}\right) ^{1/2}\left( \sum_{t=1}^{T}\left\Vert 
\C_{1,\perp}^{\prime}\E_{t}^{\prime}\R%
_{0}\right\Vert _{F}^{2}\right) ^{1/2} \\
&=&O_{P}(1) \frac{1}{p_{1}^{2}p_{2}^{2}T^{2}}\left(
p_{1}^{2}p_{2}T\right) ^{1/2}\left( p_{1}p_{2}^{2}T\right)
^{1/2}=O_{P}\left( \frac{1}{p_{1}^{1/2}p_{2}^{1/2}T}\right) ,
\end{eqnarray*}%
and the same can be shown for $V_{d}$ and (with a different, but still
dominated, rate) $V_{e}$. We now turn to $VI$; omitting some passages
already considered above, we have%
\begin{eqnarray*}
\left\Vert VI\right\Vert _{F} &\leq &\frac{1}{p_{1}^{2}p_{2}^{2}T}\left\Vert 
\hat{\R}_{1,\perp}-\R_{1,\perp}\right\Vert
_{F}\left\Vert \R_{1}\right\Vert _{F}\left\Vert \C%
_{1}\right\Vert _{F}\left\Vert \hat{\C}_{1,\perp}-\C%
_{1,\perp}\right\Vert _{F} \\
&&\times \left\Vert \sum_{t=1}^{T}\F_{1,t}\C_{1,\perp
}^{\prime}\E_{t}^{\prime}\R_{0}\right\Vert _{F}\left\Vert
\Lambda _{R_{0}}^{-1}\right\Vert _{F}+r_{p_{1}p_{2}T},
\end{eqnarray*}%
where $r_{p_{1}p_{2}T}$ is a (dominated) remainder term, and%
\begin{eqnarray*}
&&\frac{1}{p_{1}^{2}p_{2}^{2}T}\left\Vert \hat{\R}_{1,\perp}-%
\R_{1,\perp}\right\Vert _{F}\left\Vert \R_{1}\right\Vert
_{F}\left\Vert \C_{1}\right\Vert _{F}\left\Vert \hat{\C}%
_{1,\perp}-\C_{1,\perp}\right\Vert _{F}\left\Vert \sum_{t=1}^{T}%
\F_{1,t}\C_{1,\perp}^{\prime}\E_{t}^{\prime}%
\R_{0}\right\Vert _{F} \\
&=&O_{P}(1) \frac{1}{p_{1}^{2}p_{2}^{2}T}\frac{1}{T}%
p_{1}^{1/2}p_{2}^{1/2}\frac{1}{T}Tp_{1}^{1/2}p_{2}=O_{P}\left( \frac{1}{%
p_{1}p_{2}^{1/2}T^{2}}\right) ;
\end{eqnarray*}%
the same can be shown for $VII$. Also%
\begin{eqnarray*}
\left\Vert VIII\right\Vert _{F} &\leq &\frac{1}{p_{1}^{2}p_{2}^{2}T}%
\left\Vert \hat{\R}_{1,\perp}-\R_{1,\perp}\right\Vert
_{F}\left\Vert \R_{1}\right\Vert _{F}\left\Vert \hat{\C}%
_{1,\perp}-\C_{1,\perp}\right\Vert _{F}\left\Vert \C%
_{1}\right\Vert _{F} \\
&&\times \left\Vert \C_{0}\right\Vert _{F}\left\Vert \sum_{t=1}^{T}%
\F_{1,t}\F_{0,t}^{\prime}\right\Vert \left\Vert \R%
_{1}\right\Vert _{F}\left\Vert \hat{\R}_{0}\right\Vert
_{F}\left\Vert \Lambda _{R_{0}}^{-1}\right\Vert _{F} \\
&=&O_{P}(1) \frac{1}{p_{1}^{2}p_{2}^{2}T}\frac{1}{T}p_{1}^{1/2}%
\frac{1}{T}p_{2}^{1/2}p_{2}^{1/2}Tp_{1}^{1/2}p_{1}^{1/2}=O_{P}\left( \frac{1%
}{p_{1}^{1/2}p_{2}T^{2}}\right) ,
\end{eqnarray*}%
and the same holds for $IX$. Also repeating the passages above, we receive%
\begin{equation*}
\frac{1}{p_{1}^{2}p_{2}^{2}T}\hat{\R}_{1,\perp}^{\prime
}\sum_{t=1}^{T}\R_{0}\F_{0,t}\C_{0}^{\prime}%
\hat{\C}_{1,\perp}\E_{t}^{\prime}\hat{\R}%
_{0}\Lambda _{R_{0}}^{-1}
\end{equation*}%
\begin{eqnarray*}
\left\Vert X\right\Vert _{F} &\leq &\frac{1}{p_{1}^{2}p_{2}^{2}T}\left\Vert 
\hat{\R}_{1,\perp}\right\Vert _{F}\left\Vert \R%
_{0}\right\Vert _{F}\left\Vert \C_{0}\right\Vert _{F}\left\Vert
\sum_{t=1}^{T}\F_{0,t}\C_{1,\perp}^{\prime}\E%
_{t}^{\prime}\R_{0}\right\Vert \left\Vert \bH%
_{R_{0}}\right\Vert _{F}\left\Vert \Lambda _{R_{0}}^{-1}\right\Vert
_{F}+r_{p_{1}p_{2}T}^{\prime} \\
&=&O_{P}(1) \frac{1}{p_{1}^{2}p_{2}^{2}T}%
p_{1}p_{1}^{1/2}p_{2}^{1/2}T^{1/2}p_{1}^{1/2}p_{2}=O_{P}\left( \frac{1}{%
p_{2}^{1/2}T^{1/2}}\right) ,
\end{eqnarray*}%
with $r_{p_{1}p_{2}T}^{\prime}$ a (dominated) remainder. Finally%
\begin{eqnarray*}
\left\Vert XI\right\Vert _{F} &\leq &\frac{1}{p_{1}^{2}p_{2}^{2}T}\left\Vert
\sum_{t=1}^{T}\R_{1,\perp}^{\prime}\E_{t}\C%
_{1,\perp}\F_{0,t}^{\prime}\right\Vert \left\Vert \C%
_{0}\right\Vert _{F}\left\Vert \R_{0}\right\Vert _{F}\left\Vert 
\hat{\R}_{0}\right\Vert _{F}\left\Vert \Lambda
_{R_{0}}^{-1}\right\Vert _{F}+r_{p_{1}p_{2}T}^{\prime \prime} \\
&=&O_{P}(1) \frac{1}{p_{1}^{2}p_{2}^{2}T}%
p_{1}p_{2}T^{1/2}p_{2}^{1/2}p_{1}^{1/2}p_{1}^{1/2}=O_{P}\left( \frac{1}{%
p_{2}^{1/2}T^{1/2}}\right) ,
\end{eqnarray*}%
with, as usual, $r_{p_{1}p_{2}T}^{\prime \prime}$ a (dominated) remainder.
The desired result now follows from putting all together.
\end{proof}
\end{lemma}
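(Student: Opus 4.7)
The plan is to start from the eigenvalue identity $\hat{\R}_{0}=\M_{R_{1},\perp}\hat{\R}_{0}\Lambda_{R_{0}}^{-1}$ and, using the nine-term decomposition of $\M_{R_{1},\perp}$ established in the proof of Lemma \ref{eig-orth-r}, to choose $\hat{\bH}_{R_{0}}$ so that $\R_{0}\hat{\bH}_{R_{0}}$ absorbs precisely the signal contribution coming from the block $(p_{1}p_{2}^{2}T)^{-1}\sum_{t}\R_{0}\F_{0,t}\C_{0}^{\prime}\hat{\C}_{1,\perp}\hat{\C}_{1,\perp}^{\prime}\C_{0}\F_{0,t}^{\prime}\R_{0}^{\prime}$, the only block whose range lies in $\R_{0}$. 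With this choice, $\hat{\R}_{0}-\R_{0}\hat{\bH}_{R_{0}}$ decomposes into eleven remainder terms (the eight non-signal blocks, together with two pieces arising from replacing $\hat{\C}_{1,\perp}$ by $\C_{1,\perp}$ within the signal block and a quadratic rotation-error remainder), each rescaled on the right by $\Lambda_{R_{0}}^{-1}$. I would then premultiply by $p_{1}^{-1}\hat{\R}_{1,\perp}^{\prime}$ and bound each of the eleven contributions in turn via submultiplicativity of the Frobenius norm and the triangle inequality.

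The bounds rest on three recurring devices. First, the decompositions $\hat{\C}_{1,\perp}=\C_{1,\perp}+(\hat{\C}_{1,\perp}-\C_{1,\perp})$ and $\hat{\R}_{1,\perp}=\R_{1,\perp}+(\hat{\R}_{1,\perp}-\R_{1,\perp})$, whose rotation errors are of order $O_{P}(T^{-1})$ by Lemmas \ref{r-orth}--\ref{c-orth}, combined with the orthogonality identities $\C_{1}^{\prime}\C_{1,\perp}=0$ and $\R_{1,\perp}^{\prime}\R_{1}=0$; consequently, any cross-term carrying an $\R_{1}$ on the left or a $\C_{1}$ on the right inherits an extra factor $T^{-1}$, so that the pure $I(1)$ block contributes only at order $O_{P}(p_{1}^{-1/2}p_{2}^{-1}T^{-2})$. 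Second, the deterministic Frobenius-norm bounds $\|\R_{1}\|_{F}\asymp p_{1}^{1/2}$, $\|\C_{1}\|_{F}\asymp p_{2}^{1/2}$, $\|\hat{\R}_{0}\|_{F}=p_{1}^{1/2}$, together with the partial-sum bounds $\sum_{t}\|\F_{1,t}\|_{F}^{2}=O_{P}(T^{2})$, $\sum_{t}\|\F_{0,t}\|_{F}^{2}=O_{P}(T)$, $\|\sum_{t}\F_{1,t}\F_{0,t}^{\prime}\|_{F}=O_{P}(T)$, and $\|\sum_{t}\F_{1,t}\E_{t}\|_{F}=O_{P}((p_{1}p_{2})^{1/2}T)$ drawn from Lemmas \ref{berkes}, \ref{f1-summab}, \ref{crossf}, and \ref{errorfactor}. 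Third, the control $\|\Lambda_{R_{0}}^{-1}\|_{F}=O_{P}(1)$, which follows from the lower-eigenvalue half of Lemma \ref{eig-orth-r}.

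The main obstacle is the purely idiosyncratic block $(p_{1}^{2}p_{2}^{2}T)^{-1}\hat{\R}_{1,\perp}^{\prime}\sum_{t}\E_{t}\hat{\C}_{1,\perp}\hat{\C}_{1,\perp}^{\prime}\E_{t}^{\prime}\hat{\R}_{0}\Lambda_{R_{0}}^{-1}$: here no orthogonality cancellation is available and a crude Frobenius bound is too coarse. The approach is to split this, via the three decompositions $\hat{\R}_{1,\perp}=\R_{1,\perp}+(\hat{\R}_{1,\perp}-\R_{1,\perp})$, $\hat{\C}_{1,\perp}=\C_{1,\perp}+(\hat{\C}_{1,\perp}-\C_{1,\perp})$, and $\hat{\R}_{0}=\R_{0}\hat{\bH}_{R_{0}}+(\hat{\R}_{0}-\R_{0}\hat{\bH}_{R_{0}})$, into a dominant piece $(p_{1}^{2}p_{2}^{2}T)^{-1}\R_{1,\perp}^{\prime}\sum_{t}\E_{t}\C_{1,\perp}\E_{t}^{\prime}\R_{0}\hat{\bH}_{R_{0}}\Lambda_{R_{0}}^{-1}$ plus five remainders, each of which acquires at least one factor $T^{-1}$ from a rotation error and is therefore of lower order. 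For the dominant piece I would split the sum into its mean and the fluctuation around it: Assumption \ref{as-3}(ii) combined with a row-sum/column-sum Cauchy--Schwarz gives $\sigma_{\max}(\R_{1,\perp}^{\prime}\sum_{t}E(\E_{t}\C_{1,\perp}\E_{t}^{\prime})\R_{0})=O(p_{1}^{3/2}p_{2}T)$, while Assumption \ref{as-3}(iii) bounds the squared Frobenius norm of the centred sum by $p_{1}^{4}p_{2}^{3}T$, producing a fluctuation of order $O_{P}(p_{1}^{2}p_{2}^{3/2}T^{1/2})$. Dividing by the normaliser $p_{1}^{2}p_{2}^{2}T$ yields exactly the first and third rates in the statement; the remaining rate $O_{P}(p_{1}^{1/2}/(p_{2}T))$ then arises from the various mixed blocks in which an $\hat{\C}_{1,\perp}-\C_{1,\perp}$ rotation error meets $\sum_{t}\|\F_{0,t}\|_{F}^{2}$, or from an error--factor cross-product $\sum_{t}\F_{0,t}\C_{1,\perp}^{\prime}\E_{t}^{\prime}\R_{0}$, both routes delivering the same rate by a direct Frobenius calculation. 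The column estimate is obtained by an identical argument with the roles of rows and columns swapped.
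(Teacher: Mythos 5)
Your proposal follows essentially the same route as the paper's proof: the same eleven-term expansion from the eigenvalue identity with $\hat{\bH}_{R_{0}}$ absorbing the signal block, the same orthogonality and rotation-error devices, and the identical treatment of the idiosyncratic block (mean bounded at $O(p_{1}^{3/2}p_{2}T)$ via the row/column Cauchy--Schwarz under Assumption \ref{as-3}\textit{(ii)}, fluctuation at $O_{P}(p_{1}^{2}p_{2}^{3/2}T^{1/2})$ via Assumption \ref{as-3}\textit{(iii)}). The only small slip is bookkeeping: the cross-product $\sum_{t}\F_{0,t}\C_{1,\perp}^{\prime}\E_{t}^{\prime}\R_{0}$ actually delivers the $O_{P}(p_{2}^{-1/2}T^{-1/2})$ rate (the third term of the statement) rather than the $O_{P}(p_{1}^{1/2}p_{2}^{-1}T^{-1})$ rate, which instead comes from the blocks where $\C_{1,\perp}(\hat{\C}_{1,\perp}-\C_{1,\perp})$ meets $\sum_{t}\Vert\F_{0,t}\Vert_{F}^{2}$ --- but since both rates appear in the claimed bound, nothing is lost.
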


\begin{lemma}
\label{tilde-orth}We assume that Assumptions \ref{as-1}-\ref{as-5} are
satisfied. Then it holds that%
\begin{eqnarray*}
\left\Vert \widetilde{\R}_{1,\perp}-\R_{1,\perp
}\right\Vert _{F} &=&O_{P}\left( \frac{1}{p_{2}^{1/2}T}\right) +O_{P}\left( 
\frac{1}{T^{2}}\right) +O_{P}\left( \frac{1}{p_{1}T}\right) +O_{P}\left( 
\frac{1}{p_{1}^{1/2}T^{3/2}}\right) , \\
\left\Vert \widetilde{\C}_{_{1,\perp}}-\C_{1,\perp
}\right\Vert _{F} &=&O_{P}\left( \frac{1}{p_{1}^{1/2}T}\right) +O_{P}\left( 
\frac{1}{T^{2}}\right) +O_{P}\left( \frac{1}{p_{2}T}\right) +O_{P}\left( 
\frac{1}{p_{2}^{1/2}T^{3/2}}\right) ,
\end{eqnarray*}

\begin{proof}
The proof follows from a minor adaptation of the proof of Lemma \ref{r-orth}.
\end{proof}
\end{lemma}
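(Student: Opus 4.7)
The approach is to mimic the proof of Lemma \ref{r-orth}, substituting the refined projected estimator $\widetilde{\R}_1$ for the flattened estimator $\hat{\R}_1$ and exploiting the improved rates established in Theorem \ref{rc-tilde}. First, using the identification constraint $\widetilde{\R}_1^{\prime}\widetilde{\R}_1 = p_1 \I_{h_{R_1}}$, I rewrite
\begin{equation*}
\widetilde{\R}_{1,\perp} - \R_{1,\perp} = \frac{1}{p_1}\bigl(\R_1 \R_1^{\prime} - \widetilde{\R}_1 \widetilde{\R}_1^{\prime}\bigr),
\end{equation*}
modulo an asymptotically negligible remainder controlled by the identification restriction.

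Next, I decompose $\widetilde{\R}_1\widetilde{\R}_1^{\prime} - \R_1\R_1^{\prime}$ via $\widetilde{\R}_1 = \R_1 \widetilde{\bH}_{R_1} + (\widetilde{\R}_1 - \R_1 \widetilde{\bH}_{R_1})$ into the sum of: (i) a correction $\R_1(\widetilde{\bH}_{R_1}\widetilde{\bH}_{R_1}^{\prime} - \I)\R_1^{\prime}$; (ii) two symmetric cross terms $(\widetilde{\R}_1 - \R_1 \widetilde{\bH}_{R_1})\widetilde{\bH}_{R_1}^{\prime}\R_1^{\prime}$ and its transpose; and (iii) a quadratic remainder $(\widetilde{\R}_1 - \R_1 \widetilde{\bH}_{R_1})(\widetilde{\R}_1 - \R_1 \widetilde{\bH}_{R_1})^{\prime}$. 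The cross term, divided by $p_1$, is bounded by $p_1^{-1}\|\R_1\|_F \|\widetilde{\bH}_{R_1}\|_F \|\widetilde{\R}_1 - \R_1 \widetilde{\bH}_{R_1}\|_F$, which by Theorem \ref{rc-tilde} and $\|\R_1\|_F = O(p_1^{1/2})$ is exactly of order $O_P(p_2^{-1/2}T^{-1}) + O_P(T^{-2}) + O_P(p_1^{-1/2}T^{-3/2})$, delivering three of the four terms in the claim. The quadratic remainder, being squared in $\|\widetilde{\R}_1 - \R_1 \widetilde{\bH}_{R_1}\|_F$, is of strictly smaller order.

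The fourth term, $O_P(1/(p_1 T))$, must come from bounding $\|\widetilde{\bH}_{R_1}\widetilde{\bH}_{R_1}^{\prime} - \I\|_F$. Using $\widetilde{\R}_1^{\prime}\widetilde{\R}_1 = p_1 \I$ and $\R_1^{\prime}\R_1 = p_1 \I$, I obtain the identity
\begin{equation*}
p_1\bigl(\widetilde{\bH}_{R_1}^{\prime}\widetilde{\bH}_{R_1} - \I\bigr) = -\bigl[\widetilde{\bH}_{R_1}^{\prime}\R_1^{\prime}\Delta + \Delta^{\prime}\R_1\widetilde{\bH}_{R_1} + \Delta^{\prime}\Delta\bigr],
\end{equation*}
with $\Delta \equiv \widetilde{\R}_1 - \R_1 \widetilde{\bH}_{R_1}$, from which one recovers a bound on $\|\widetilde{\bH}_{R_1}\widetilde{\bH}_{R_1}^{\prime} - \I\|_F$ via the standard symmetry argument. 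I expect the main obstacle here to be obtaining the refined bound $\|\R_1^{\prime}\Delta\|_F = O_P(p_1/T)$ rather than the naive $\|\R_1\|_F \cdot \|\Delta\|_F$, since the latter would only yield contributions already absorbed into the first three terms; this refinement requires re-opening the explicit spectral representation of $\widetilde{\R}_1 - \R_1\widetilde{\bH}_{R_1}$ from the proof of Theorem \ref{rc-tilde} and exploiting that pre-multiplication by $\R_1^{\prime}$ triggers extra cross-sectional cancellations in the error aggregates. The bound for $\widetilde{\C}_{1,\perp}$ then follows by the symmetric argument, swapping the roles of rows and columns and applying the corresponding rate for $\widetilde{\C}_1$ from Theorem \ref{rc-tilde}.
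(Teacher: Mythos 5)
Your first two paragraphs reproduce exactly the argument the paper intends: the proof of Lemma \ref{tilde-orth} is the proof of Lemma \ref{r-orth} with $\hat{\R}_1$ replaced by $\widetilde{\R}_1$ and the rate of Theorem \ref{hat-estimates} replaced by that of Theorem \ref{rc-tilde}. The identity $\widetilde{\R}_{1,\perp}-\R_{1,\perp}=p_{1}^{-1}\bigl(\R_{1}\R_{1}^{\prime}-\widetilde{\R}_1\widetilde{\R}_1^{\prime}\bigr)$ (exact under the normalisations $\widetilde{\R}_1^{\prime}\widetilde{\R}_1=p_{1}\I_{h_{R_{1}}}$ and $\R_{1}^{\prime}\R_{1}=p_{1}\I_{h_{R_{1}}}$), the three-way decomposition, and the cross-term bound $p_{1}^{-1}\Vert\Delta\Vert_{F}\Vert\R_{1}\widetilde{\bH}_{R_{1}}^{\prime}\Vert_{F}=O_{P}(p_{1}^{-1/2}\Vert\Delta\Vert_{F})$ with $\Delta=\widetilde{\R}_1-\R_{1}\widetilde{\bH}_{R_{1}}$ are all correct and deliver $O_{P}(p_{2}^{-1/2}T^{-1})+O_{P}(T^{-2})+O_{P}(p_{1}^{-1/2}T^{-3/2})$.

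Your final paragraph, however, manufactures an obstacle that does not exist and proposes a step that would fail. First, the summand $O_{P}(1/(p_{1}T))$ in the statement does not need to be ``produced'': the claim is an upper bound given as a sum of $O_{P}$ terms, so establishing a bound consisting of a subset of those terms already proves it; the extra summand is harmless slack. Second, your own identity combined with the \emph{naive} bound $\Vert\R_{1}^{\prime}\Delta\Vert_{F}\leq\Vert\R_{1}\Vert_{F}\Vert\Delta\Vert_{F}=O_{P}(p_{1}^{1/2}\Vert\Delta\Vert_{F})$ already gives $\Vert\widetilde{\bH}_{R_{1}}^{\prime}\widetilde{\bH}_{R_{1}}-\I_{h_{R_{1}}}\Vert_{F}=O_{P}(p_{1}^{-1/2}\Vert\Delta\Vert_{F})$, so the correction $p_{1}^{-1}\R_{1}(\widetilde{\bH}_{R_{1}}\widetilde{\bH}_{R_{1}}^{\prime}-\I_{h_{R_{1}}})\R_{1}^{\prime}$ is of the same order as the cross terms; no refinement, and no re-opening of the expansion of $\widetilde{\R}_1-\R_{1}\widetilde{\bH}_{R_{1}}$, is needed (the paper does nothing of the sort in Lemma \ref{r-orth} either). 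Third, and more seriously, the ``refined'' target $\Vert\R_{1}^{\prime}\Delta\Vert_{F}=O_{P}(p_{1}/T)$ is in fact \emph{weaker} than the naive bound, since Theorem \ref{rc-tilde} gives $\Vert\Delta\Vert_{F}=o_{P}(p_{1}^{1/2}/T)$; feeding $O_{P}(p_{1}/T)$ through your identity yields $\Vert\widetilde{\bH}_{R_{1}}^{\prime}\widetilde{\bH}_{R_{1}}-\I_{h_{R_{1}}}\Vert_{F}=O_{P}(1/T)$ and hence a contribution of order $O_{P}(1/T)$, which dominates every term in the statement and would invalidate the lemma rather than complete it. Deleting that paragraph leaves precisely the paper's proof.
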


\begin{lemma}
\label{yong-2}We assume that Assumptions \ref{as-1}-\ref{as-5} hold. Let $%
\mathbf{a}$ be a $p_{1}\times 1$ vector with $\left\Vert \mathbf{a}%
\right\Vert =O\left( p_{1}^{1/2}\right) $, and $\mathbf{b}$ be a $%
p_{2}\times 1$ vector with $\left\Vert \mathbf{b}\right\Vert =O\left(
p_{2}^{1/2}\right) $. Then it holds that%
\begin{eqnarray*}
&&\sigma _{\max }\left[ \frac{1}{p_{1}}\widetilde{\R}_{1,\perp
}^{\prime}\left( \widetilde{\R}_{0}-\R_{0}\widetilde{%
\bH}_{R_{0}}\right) \right]  \\
&=&O_{P}\left( \frac{p_{1}^{1/2}}{p_{1}p_{2}}\right) +O_{P}\left( \frac{%
p_{1}^{1/2}}{\left( p_{1}p_{2}T\right) ^{1/2}}\right) +O_{P}\left( \frac{%
p_{1}^{1/2}}{p_{2}T^{2}}\right) +O_{P}\left( \frac{p_{1}^{1/2}}{p_{2}^{2}T}%
\right) +O_{P}\left( \frac{p_{1}^{1/2}}{p_{2}^{3/2}T^{3/2}}\right) ,
\end{eqnarray*}%
and%
\begin{eqnarray*}
&&\sigma _{\max }\left[ \frac{1}{p_{2}}\widetilde{\C}_{1,\perp
}^{\prime}\left( \widetilde{\C}_{0}-\C_{0}\widetilde{%
\bH}_{C_{0}}\right) \right]  \\
&=&O_{P}\left( \frac{p_{2}^{1/2}}{p_{1}p_{2}}\right) +O_{P}\left( \frac{%
p_{2}^{1/2}}{\left( p_{1}p_{2}T\right) ^{1/2}}\right) +O_{P}\left( \frac{%
p_{2}^{1/2}}{p_{1}T^{2}}\right) +O_{P}\left( \frac{p_{2}^{1/2}}{p_{1}^{2}T}%
\right) +O_{P}\left( \frac{p_{2}^{1/2}}{p_{1}^{3/2}T^{3/2}}\right) .
\end{eqnarray*}

\begin{proof}
The proof is essentially the same as that of Lemma \ref{yong}, \textit{%
mutatis mutandis}, using (also) the fact that%
\begin{equation*}
\left\Vert \widetilde{\C}_{1,\perp}\left( \widetilde{\C}%
_{1,\perp}-\C_{1,\perp}\right) \right\Vert _{F}=O_{P}\left( \frac{1%
}{p_{2}^{1/2}T}\right) +O_{P}\left( \frac{1}{T^{2}}\right) +O_{P}\left( 
\frac{1}{p_{1}T}\right) +O_{P}\left( \frac{1}{p_{1}^{1/2}T^{3/2}}\right) ,
\end{equation*}%
and 
\begin{equation*}
\left\Vert \C_{1,\perp}\left( \widetilde{\C}_{1,\perp}-%
\C_{1,\perp}\right) \right\Vert _{F}=O_{P}\left( \frac{1}{%
p_{2}^{1/2}T}\right) +O_{P}\left( \frac{1}{T^{2}}\right) +O_{P}\left( \frac{1%
}{p_{1}T}\right) +O_{P}\left( \frac{1}{p_{1}^{1/2}T^{3/2}}\right) .
\end{equation*}%
and the same (with the rates in Lemma \ref{tilde-orth}) for $\widetilde{%
\R}_{1,\perp}$.
\end{proof}
\end{lemma}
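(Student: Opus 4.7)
The strategy is to mimic the proof of Lemma \ref{yong} with $\hat{\R}_{1,\perp}$ and $\hat{\C}_{1,\perp}$ replaced, respectively, by $\widetilde{\R}_{1,\perp}$ and $\widetilde{\C}_{1,\perp}$, and with $\hat{\R}_0$, $\hat{\C}_0$, $\hat{\bH}_{R_0}$, $\hat{\bH}_{C_0}$ replaced by $\widetilde{\R}_0$, $\widetilde{\C}_0$, $\widetilde{\bH}_{R_0}$, $\widetilde{\bH}_{C_0}$. We carry out the details only for the first claim; the second one follows by symmetry. We would begin from the defining equation $\widetilde{\M}_{R_1,\perp}\widetilde{\R}_0 = \widetilde{\R}_0 \widetilde{\Lambda}_{R_0}$, which, upon substituting the decomposition of $\X_t$ into the $I(1)$ common, $I(0)$ common, and idiosyncratic components and multiplying by $\widetilde{\R}_{1,\perp}^{\prime}/p_1$, produces an expansion into eleven terms labelled $I,\dots,XI$ in the spirit of Lemma \ref{yong}.

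The key replacements that drive the refinement are the following. First, by Lemma \ref{tilde-orth} we have $\|\widetilde{\R}_{1,\perp}-\R_{1,\perp}\|_F$ and $\|\widetilde{\C}_{1,\perp}-\C_{1,\perp}\|_F$ of the rate stated therein (in particular dominated by the $p_1^{-1/2}T^{-1}$ and $p_2^{-1/2}T^{-1}$ terms, up to $T^{-2}$ components), as opposed to the crude $O_P(T^{-1})$ of Lemmas \ref{r-orth}--\ref{c-orth}. Second, by the identity $\widetilde{\C}_{1,\perp}(\widetilde{\C}_{1,\perp}-\C_{1,\perp})=\widetilde{\C}_{1,\perp}[\C_1(\C_1^{\prime}\C_1)^{-1}\C_1^{\prime}]$, the refined bound
\begin{equation*}
\|\widetilde{\C}_{1,\perp}(\widetilde{\C}_{1,\perp}-\C_{1,\perp})\|_F=O_P\!\Big(\tfrac{1}{p_2^{1/2}T}\Big)+O_P\!\Big(\tfrac{1}{T^2}\Big)+O_P\!\Big(\tfrac{1}{p_1 T}\Big)+O_P\!\Big(\tfrac{1}{p_1^{1/2}T^{3/2}}\Big)
\end{equation*}
(and the analogous statement for $\R$) replaces the crude $O_P(T^{-1})$ from (\ref{c-orth-large-1})--(\ref{c-orth-large-2}). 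This refinement propagates through terms $I$--$IV$ (those involving only signal $\times$ projector-difference products) and through $V_c$, $V_d$, $VI$--$IX$.

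The remaining terms of the expansion do not improve. In particular, the analogue of $V_a$ (the ``pure noise'' piece $\tfrac{1}{p_1^2p_2^2T}\R_{1,\perp}^{\prime}\sum_t\E_t\C_{1,\perp}\E_t^{\prime}\R_0\widetilde{\bH}_{R_0}\widetilde{\Lambda}_{R_0}^{-1}$) remains $O_P\!\big(\tfrac{p_1^{1/2}}{p_1p_2}\big)+O_P\!\big(\tfrac{p_1^{1/2}}{(p_1p_2T)^{1/2}}\big)$ via the same moment computation, since nothing about the projector enters it. The analogues of $X$ and $XI$ (noise $\times$ stationary factor terms) reduce, via the bound on $\sum_t\R_{1,\perp}^{\prime}\E_t\C_{1,\perp}\F_{0,t}^{\prime}$ already obtained in Lemma \ref{yong}, to $O_P(p_2^{-1/2}T^{-1/2})$ but, taking into account that $\widetilde{\bH}_{R_0}$ and $\widetilde{\Lambda}_{R_0}^{-1}$ are still $O_P(1)$ and that $\|\widetilde{\R}_{1,\perp}\|_F=O_P(p_1^{1/2})$, after the additional $p_1^{-1}$ rescaling on the left we obtain the $O_P\!\big(\tfrac{p_1^{1/2}}{p_2^{2}T}\big)+O_P\!\big(\tfrac{p_1^{1/2}}{p_2^{3/2}T^{3/2}}\big)$ improvement coming from the better orthogonal-projector rate.

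Assembling the bounds, the dominating contributions are: the pure noise term, yielding $O_P(p_1^{1/2}/(p_1p_2))+O_P(p_1^{1/2}/(p_1p_2T)^{1/2})$; and the cross-noise terms, for which the refined bounds on $\widetilde{\C}_{1,\perp}(\widetilde{\C}_{1,\perp}-\C_{1,\perp})$ give $O_P(p_1^{1/2}/(p_2T^2))+O_P(p_1^{1/2}/(p_2^{2}T))+O_P(p_1^{1/2}/(p_2^{3/2}T^{3/2}))$. All other terms are dominated. The main technical obstacle, as in Lemma \ref{yong}, is keeping careful bookkeeping of the many cross-terms that arise when $\hat{\C}_{1,\perp}\hat{\C}_{1,\perp}^{\prime}$ is split as $\C_{1,\perp}+(\widetilde{\C}_{1,\perp}-\C_{1,\perp})$ and then re-expanded; however, no new phenomenon appears and the calculations are entirely analogous to those of Lemma \ref{yong}, with the refined orthogonal-projector rates providing the claimed improvement.
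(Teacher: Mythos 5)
Your proposal follows exactly the route the paper takes: its proof of this lemma is a one-line appeal to the argument of Lemma \ref{yong}, \emph{mutatis mutandis}, combined with precisely the refined bounds on $\widetilde{\C}_{1,\perp}(\widetilde{\C}_{1,\perp}-\C_{1,\perp})$, $\C_{1,\perp}(\widetilde{\C}_{1,\perp}-\C_{1,\perp})$ (and their $\R$ analogues via Lemma \ref{tilde-orth}) that you derive from the idempotency identity. Your write-up simply makes explicit which terms of the eleven-term expansion are refined and which (the pure-noise term) are not, so it is correct and essentially identical in approach to the paper's.
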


\begin{lemma}
\label{bai}We assume that Assumptions \ref{as-1}-\ref{as-5} are satisfied.
Then it holds that%
\begin{equation*}
\frac{1}{T}\left\Vert \sum_{t=1}^{T}\left( \hat{\F}_{0,t}-\left( 
\bH_{R_{0}}\right) ^{-1}\F_{0,t}\left( \bH%
_{C_{0}}^{\prime}\right) ^{-1}\right) \F_{1,t}^{\prime}\right\Vert
_{F}=O_{P}\left( \frac{1}{\sqrt{p_{1}p_{2}}}\right) +O_{P}\left( \frac{1}{%
p_{1\wedge 2}^{1/2}T^{1/2}}\right) .
\end{equation*}

\begin{proof}
We prove the lemma for the case $h_{R_{1}}=h_{C_{1}}=h_{R_{1}}=h_{C_{1}}=1$,
with no loss of generality. We use the same arguments as in the proof of
Theorem \ref{f1-hat}, obtaining%
\begin{eqnarray*}
&&\sum_{t=1}^{T}\left( \hat{\F}_{0,t}-\left( \bH%
_{R_{0}}\right) ^{-1}\F_{0,t}\left( \bH_{C_{0}}^{\prime
}\right) ^{-1}\right) \F_{1,t} \\
&=&\left( \hat{\mathbf{D}}\right) ^{-1}\hat{\mathbf{N}}\left( 
\hat{\C}_{1,\perp}^{\prime}\otimes \hat{\R}_{\perp
}^{\prime}\right) \left( \left( \C_{0}-\hat{\C}%
_{0}\left( \bH_{C_{0}}\right) ^{-1}\right) \otimes \left( \hat{%
\R}_{0}\left( \bH_{R_{0}}\right) ^{-1}\right) \right)
\sum_{t=1}^{T}\F_{0,t}\F_{1,t} \\
&&+\left( \hat{\mathbf{D}}\right) ^{-1}\hat{\mathbf{N}}\left( 
\hat{\C}_{1,\perp}^{\prime}\otimes \hat{\R}_{\perp
}^{\prime}\right) \left( \left( \hat{\C}_{1}\left( \bH%
_{C_{0}}\right) ^{-1}\right) \otimes \left( \R_{0}-\hat{\mathbf{R%
}}_{0}\left( \bH_{R_{0}}\right) ^{-1}\right) \right) \sum_{t=1}^{T}%
\F_{0,t}\F_{1,t} \\
&&+\left( \hat{\mathbf{D}}\right) ^{-1}\hat{\mathbf{N}}\left( 
\hat{\C}_{1,\perp}^{\prime}\otimes \hat{\R}_{\perp
}^{\prime}\right) \left( \left( \C_{1}-\hat{\C}%
_{0}\left( \bH_{C_{0}}\right) ^{-1}\right) \otimes \left( \R%
_{0}-\hat{\R}_{0}\left( \bH_{R_{0}}\right) ^{-1}\right)
\right) \sum_{t=1}^{T}\F_{0,t}\F_{1,t} \\
&&+\left( \hat{\mathbf{D}}\right) ^{-1}\left( \hat{\C}%
_{0}^{\prime}\hat{\C}_{1,\perp}\otimes \hat{\R}%
_{0}^{\prime}\hat{\R}_{1,\perp}\right) \left( \C%
_{1}\otimes \R_{1}\right) \sum_{t=1}^{T}\F_{1,t}^{2} \\
&&+\left( \hat{\mathbf{D}}\right) ^{-1}\hat{\mathbf{N}}\left( 
\hat{\C}_{1,\perp}\otimes \hat{\R}_{1,\perp
}\right) \sum_{t=1}^{T}\Ve\left( \E_{t}\right) \F_{1,t} \\
&=&I+II+III+IV+V.
\end{eqnarray*}%
Recall (\ref{d-hat-inv}); it follows that 
\begin{eqnarray*}
I &=&\left( \hat{\mathbf{D}}\right) ^{-1}\left( \hat{\C}%
_{0}^{\prime}\hat{\C}_{1,\perp}\left( \C_{0}-\hat{%
\C}_{0}\left( \bH_{C_{0}}\right) ^{-1}\right) \otimes 
\hat{\R}_{0}^{\prime}\hat{\R}_{1,\perp}\hat{%
\R}_{0}\left( \bH_{R_{0}}\right) ^{-1}\right) \sum_{t=1}^{T}%
\F_{0,t}\F_{1,t} \\
&=&O_{P}(1) \frac{1}{\left( p_{1}p_{2}\right) ^{2}}%
p_{2}^{1/2}p_{2}p_{2}^{1/2}\left( \frac{1}{p_{1}p_{2}}+\frac{1}{%
p_{1}^{1/2}T^{1/2}}\right) p_{1}^{1/2}p_{1}p_{1}^{1/2}T \\
&=&O_{P}\left( \frac{T}{p_{1}p_{2}}\right) +O_{P}\left( \left( \frac{T}{p_{1}%
}\right) ^{1/2}\right) ,
\end{eqnarray*}%
and, by the same passages%
\begin{equation*}
II=O_{P}\left( \frac{T}{p_{1}p_{2}}\right) +O_{P}\left( \left( \frac{T}{p_{2}%
}\right) ^{1/2}\right) ,
\end{equation*}%
and $III$ is dominated by $I$ and $II$; also%
\begin{eqnarray*}
IV &=&\left( \hat{\mathbf{D}}\right) ^{-1}\left( \hat{\C}%
_{0}^{\prime}\left( \hat{\C}_{1,\perp}-\C_{1,\perp
}\right) \C_{1}\otimes \hat{\R}_{0}^{\prime}\left( 
\hat{\R}_{1,\perp}-\R_{1,\perp}\right) \R%
_{1}\right) \sum_{t=1}^{T}\F_{1,t}^{2} \\
&=&O_{P}(1) \frac{1}{\left( p_{1}p_{2}\right) ^{2}}p_{2}^{1/2}%
\frac{1}{T}p_{2}^{1/2}p_{1}^{1/2}\frac{1}{T}p_{1}^{1/2}T^{2}=O_{P}\left( 
\frac{1}{p_{1}p_{2}}\right) .
\end{eqnarray*}%
Finally%
\begin{eqnarray*}
V &=&\left( \hat{\mathbf{D}}\right) ^{-1}\left( \hat{\C}%
_{0}^{\prime}\hat{\C}_{1,\perp}\otimes \hat{\R}%
_{0}^{\prime}\hat{\R}_{1,\perp}\right) \left( \hat{\C}_{1,\perp}\otimes \hat{\R}_{1,\perp}\right) \sum_{t=1}^{T}%
\Ve\left( \E_{t}\right) \F_{1,t} \\
&=&\left( \hat{\mathbf{D}}\right) ^{-1}\left( \hat{\C}%
_{0}^{\prime}\otimes \hat{\R}_{0}^{\prime}\right) \left( 
\hat{\C}_{1,\perp}\otimes \hat{\R}_{1,\perp
}\right) \sum_{t=1}^{T}\Ve\left( \E_{t}\right) \F_{1,t} \\
&=&\left( \hat{\mathbf{D}}\right) ^{-1}\left( \hat{\C}%
_{0}^{\prime}\otimes \hat{\R}_{0}^{\prime}\right) \left( 
\C_{1,\perp}\otimes \R_{1,\perp}\right) \sum_{t=1}^{T}\Ve%
\left( \E_{t}\right) \F_{1,t} \\
&&+\left( \hat{\mathbf{D}}\right) ^{-1}\left( \hat{\C}%
_{0}^{\prime}\otimes \hat{\R}_{0}^{\prime}\right) \left( 
\C_{1,\perp}\otimes \left( \hat{\R}_{1,\perp}-\mathbf{R%
}_{1,\perp}\right) \right) \sum_{t=1}^{T}\Ve\left( \E_{t}\right) 
\F_{1,t} \\
&&+\left( \hat{\mathbf{D}}\right) ^{-1}\left( \hat{\C}%
_{0}^{\prime}\otimes \hat{\R}_{0}^{\prime}\right) \left(
\left( \hat{\C}_{1,\perp}-\C_{1,\perp}\right) \otimes 
\R_{1,\perp}\right) \sum_{t=1}^{T}\Ve\left( \E_{t}\right) 
\F_{1,t} \\
&&+\left( \hat{\mathbf{D}}\right) ^{-1}\left( \hat{\C}%
_{0}^{\prime}\otimes \hat{\R}_{0}^{\prime}\right) \left(
\left( \hat{\C}_{1,\perp}-\C_{1,\perp}\right) \otimes
\left( \hat{\R}_{1,\perp}-\R_{1,\perp}\right) \right)
\sum_{t=1}^{T}\Ve\left( \E_{t}\right) \F_{1,t} \\
&=&V_{a}+V_{b}+V_{c}+V_{d}.
\end{eqnarray*}%
Noting that%
\begin{eqnarray*}
&&E\left\Vert \sum_{t=1}^{T}\R_{1,\perp}^{\prime}\E_{t}%
\C_{1,\perp}\F_{1,t}\right\Vert _{F}^{2} \\
&=&E\sum_{i=1}^{p_{1}}\sum_{\ell =1}^{p_{2}}\left(
\sum_{j=1}^{p_{1}}\sum_{h=1}^{p_{2}}\sum_{t=1}^{T}r_{\perp ,ij}c_{\perp
,h\ell }e_{jh,t}\F_{1,t}\right) ^{2} \\
&=&\sum_{i=1}^{p_{1}}\sum_{\ell
=1}^{p_{2}}\sum_{j_{1},j_{2}=1}^{p_{1}}\sum_{h_{1},h_{2}=1}^{p_{2}}%
\sum_{t,s=1}^{T}r_{\perp ,ij_{1}}r_{\perp ,ij_{2}}c_{\perp ,h_{1}\ell
}c_{\perp ,h_{2}\ell }E\left( \F_{1,t}\F_{1,s}\right)
E\left( e_{j_{1}h_{1},t}e_{j_{2}h_{2},s}\right)  \\
&\leq &c_{0}\sum_{i=1}^{p_{1}}\sum_{\ell
=1}^{p_{2}}\sum_{j_{1},j_{2}=1}^{p_{1}}\sum_{h_{1},h_{2}=1}^{p_{2}}%
\sum_{t,s=1}^{T}\left( E\left( \F_{1,t}^{2}\right) E\left( \F%
_{1,s}^{2}\right) \right) ^{1/2}\left\vert E\left(
e_{j_{1}h_{1},t}e_{j_{2}h_{2},s}\right) \right\vert  \\
&\leq &c_{1}T\sum_{i=1}^{p_{1}}\sum_{\ell
=1}^{p_{2}}\sum_{j_{1},j_{2}=1}^{p_{1}}\sum_{h_{1},h_{2}=1}^{p_{2}}%
\sum_{t,s=1}^{T}\left\vert E\left( e_{ij,t}e_{hk,s}\right) \right\vert \leq
c_{2}\left( p_{1}p_{2}T\right) ^{2},
\end{eqnarray*}%
it immediately follows that 
\begin{equation*}
\left\Vert V_{a}\right\Vert _{F}=O_{P}(1) \frac{1}{p_{1}p_{2}}%
\left( p_{1}p_{2}\right) ^{1/2}\left( p_{1}p_{2}T\right) =O_{P}\left( \frac{T%
}{\left( p_{1}p_{2}\right) ^{1/2}}\right) .
\end{equation*}%
Similarly, seeing as%
\begin{eqnarray*}
&&E\left\Vert \sum_{t=1}^{T}\E_{t}\C_{1,\perp}\F%
_{1,t}\right\Vert _{F}^{2} \\
&=&E\sum_{i=1}^{p_{1}}\sum_{h=1}^{p_{2}}\left(
\sum_{j=1}^{p_{2}}\sum_{t=1}^{T}c_{\perp ,hj}^{s}e_{ij,t}\F%
_{1,t}\right) ^{2}\leq
c_{0}\sum_{i=1}^{p_{1}}\sum_{h=1}^{p_{2}}\sum_{j,k=1}^{p_{2}}%
\sum_{t,s=1}^{T}\left( E\left( \F_{1,t}^{2}\right) E\left( \F%
_{1,s}^{2}\right) \right) ^{1/2}\left\vert E\left( e_{ij,t}e_{ik,s}\right)
\right\vert  \\
&\leq &c_{1}p_{1}p_{2}^{2}T^{2},
\end{eqnarray*}%
we have%
\begin{eqnarray*}
\left\Vert V_{b}\right\Vert _{F} &\leq &\left\Vert \left( \hat{\mathbf{D}%
}\right) ^{-1}\right\Vert _{F}\left\Vert \hat{\C}_{0}\right\Vert
_{F}\left\Vert \hat{\R}_{0}\right\Vert _{F}\left\Vert \hat{%
\R}_{1,\perp}-\R_{1,\perp}\right\Vert _{F}\left\Vert
\sum_{t=1}^{T}\E_{t}\C_{1,\perp}\F_{1,t}\right\Vert
_{F}^{2} \\
&=&O_{P}(1) \frac{1}{\left( p_{1}p_{2}\right) ^{2}}%
p_{2}^{1/2}p_{1}^{1/2}\frac{1}{T}p_{1}^{1/2}p_{2}T=O_{P}\left( \frac{1}{%
p_{1}p_{2}^{1/2}}\right) .
\end{eqnarray*}%
The same holds for $V_{c}$, and $V_{d}$ is clearly dominated by $V_{b}$ and $%
V_{c}$. Hence%
\begin{equation*}
\left\Vert V\right\Vert _{F}=O_{P}\left( \frac{T}{\left( p_{1}p_{2}\right)
^{1/2}}\right) .
\end{equation*}
\end{proof}
\end{lemma}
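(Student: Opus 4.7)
The plan is to start from the explicit Least-Squares expression (\ref{fi-hat-estimator}) for $\hat{\F}_{0,t}$. Substituting the model $\X_{t}=\R_{1}\F_{1,t}\C_{1}^{\prime}+\R_{0}\F_{0,t}\C_{0}^{\prime}+\E_{t}$ and writing the identity $\C_{0}=\hat{\C}_{0}\left( \bH_{C_{0}}\right) ^{-1}+\left( \C_{0}-\hat{\C}_{0}\left( \bH_{C_{0}}\right) ^{-1}\right) $ (and symmetrically for $\R_{0}$), the difference $\hat{\F}_{0,t}-\left(\bH_{R_{0}}\right)^{-1}\F_{0,t}\left(\bH_{C_{0}}^{\prime}\right)^{-1}$ decomposes into five pieces: two cross terms ($I$, $II$) which are linear in $\hat{\C}_{0}-\C_{0}\bH_{C_{0}}$ or $\hat{\R}_{0}-\R_{0}\bH_{R_{0}}$ (and therefore carry the rates in Theorem \ref{rc1-hat}), a quadratic cross-term ($III$) of smaller order, a term ($IV$) coming from the $I(1)$ residual $\R_{1}\F_{1,t}\C_{1}^{\prime}$ after anti-projecting (which would vanish if $\hat{\C}_{1,\perp}$, $\hat{\R}_{1,\perp}$ were replaced by their population versions and is thus controlled by Lemmas \ref{r-orth}--\ref{c-orth}), and a term ($V$) involving $\E_{t}$ anti-projected by $\hat{\C}_{1,\perp}\otimes \hat{\R}_{1,\perp}$.

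Next I would post-multiply the decomposition by $\F_{1,t}^{\prime}$, sum over $t$, and bound each piece using the toolbox already available: Lemma \ref{crossf} gives $\| \sum_{t=1}^{T}\F_{0,t}\F_{1,t}^{\prime}\|_{F}=O_{P}(T)$; Lemma \ref{berkes} gives $\| \sum_{t=1}^{T}\F_{1,t}^{2}\|_{F}=O_{P}(T^{2})$; Lemmas \ref{r-orth}--\ref{c-orth} give $\|\hat{\C}_{1,\perp}-\C_{1,\perp}\|_{F}+\|\hat{\R}_{1,\perp}-\R_{1,\perp}\|_{F}=O_{P}(T^{-1})$; Theorem \ref{rc1-hat} provides the rates of $\hat{\R}_{0}-\R_{0}\bH_{R_{0}}$ and $\hat{\C}_{0}-\C_{0}\bH_{C_{0}}$; and, from the spectral bounds in the proof of Theorem \ref{f1-hat}, $\|(\hat{\mathbf{D}})^{-1}\|_{F}=O_{P}((p_{1}p_{2})^{-2})$. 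A direct application gives $I=O_{P}(T/(p_{1}p_{2}))+O_{P}((T/p_{1})^{1/2})$, $II=O_{P}(T/(p_{1}p_{2}))+O_{P}((T/p_{2})^{1/2})$, $III$ dominated by these, and $IV=O_{P}(1/(p_{1}p_{2}))$.

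The main obstacle is the term $V=(\hat{\mathbf{D}})^{-1}(\hat{\C}_{0}^{\prime}\otimes \hat{\R}_{0}^{\prime})(\hat{\C}_{1,\perp}\otimes \hat{\R}_{1,\perp})\sum_{t=1}^{T}\Ve(\E_{t})\F_{1,t}$: unlike the analogous term in Theorem \ref{f1-hat}, the cross-factor is now the non-stationary $\F_{1,t}$ which contributes an extra $T^{1/2}$. To handle it I would split $\hat{\C}_{1,\perp}=\C_{1,\perp}+(\hat{\C}_{1,\perp}-\C_{1,\perp})$ and symmetrically for $\hat{\R}_{1,\perp}$, isolating a leading piece $V_{a}=(\hat{\mathbf{D}})^{-1}(\hat{\C}_{0}^{\prime}\otimes \hat{\R}_{0}^{\prime})(\C_{1,\perp}\otimes \R_{1,\perp})\sum_{t=1}^{T}\Ve(\E_{t})\F_{1,t}$ whose moment calculation exploits Assumption \ref{as-5} (independence of $\E_{t}$ and $\F_{1,t}$) and Assumption \ref{as-3}\textit{(ii)} to yield
\begin{equation*}
E\Big\| \sum_{t=1}^{T}\R_{1,\perp}^{\prime}\E_{t}\C_{1,\perp}\F_{1,t}\Big\|_{F}^{2}\leq c_{0}\sum_{i,\ell,j_{1},j_{2},h_{1},h_{2},t,s}(E\F_{1,t}^{2}\,E\F_{1,s}^{2})^{1/2}|E(e_{j_{1}h_{1},t}e_{j_{2}h_{2},s})|\leq c_{1}(p_{1}p_{2}T)^{2},
\end{equation*}
so that $V_{a}=O_{P}(T/\sqrt{p_{1}p_{2}})$. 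The three residual pieces arising from the $(\hat{\C}_{1,\perp}-\C_{1,\perp})$ and $(\hat{\R}_{1,\perp}-\R_{1,\perp})$ contributions are bounded via Cauchy--Schwarz using $E\|\sum_{t=1}^{T}\E_{t}\C_{1,\perp}\F_{1,t}\|_{F}^{2}=O(p_{1}p_{2}^{2}T^{2})$ together with the extra $O_{P}(T^{-1})$ or $O_{P}(T^{-2})$ factor, which makes them negligible. Combining all five pieces and dividing by $T$ collapses the bounds to $O_{P}((p_{1}p_{2})^{-1/2})+O_{P}((p_{1\wedge 2}T)^{-1/2})$, as claimed.
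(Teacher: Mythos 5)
Your proposal is correct and follows essentially the same route as the paper's own proof: the identical five-term decomposition of $\sum_{t}(\hat{\F}_{0,t}-(\bH_{R_{0}})^{-1}\F_{0,t}(\bH_{C_{0}}^{\prime})^{-1})\F_{1,t}$, the same bounds $O_{P}(T/(p_{1}p_{2}))+O_{P}((T/p_{1\wedge 2})^{1/2})$ for the loading-error terms, and the same splitting of the $\E_{t}$ term around the population orthogonal complements with the key moment bound $E\|\sum_{t}\R_{1,\perp}^{\prime}\E_{t}\C_{1,\perp}\F_{1,t}\|_{F}^{2}\leq c(p_{1}p_{2}T)^{2}$ giving the dominant $O_{P}(T/\sqrt{p_{1}p_{2}})$ contribution. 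No substantive differences.
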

\begin{lemma}\label{spec-m-r-tilde}
We assume that Assumptions \ref{as-1}-\ref{as-5} are
satisfied. Then there exists a positive constant $c_{0}$ such that%
\begin{equation*}
\lambda _{j}\left( \M_{X}^{R_{1},PE}\right) =c_{0}+o_{P}\left(
1\right) \text{,}
\end{equation*}%
for all $j\leq h_{R_{1}}$, and%
\begin{equation*}
\lambda _{j}\left( \M_{X}^{R_{1},PE}\right) =O_{P}\left( \frac{1}{%
p_{1\wedge 2}^{1/2}T^{3/2}}\right) +O_{P}\left( \frac{1}{p_{2}T}\right)
+O_{P}\left( \frac{1}{T^{2}}\right) +O_{P}\left( \frac{1}{%
p_{1}^{1/2}p_{2}^{1/2}T}\right) ,
\end{equation*}%
for all $j>h_{R_{1}}$.
\begin{proof}
The proof repeats several arguments already discussed above, which are
therefore omitted. Note that 
\begin{align*}
\M_{X}^{R_{1},PE} &=\frac{1}{p_{1}p_{2}^{2}T^{2}}\sum_{t=1}^{T}%
\R_{1}\F_{1,t}\C_{1}^{\prime}\hat{\C}%
_{1}\hat{\C}_{1}^{\prime}\C_{1}\F_{1,t}^{\prime
}\R_{1}^{\prime}+\frac{1}{p_{1}p_{2}^{2}T^{2}}\sum_{t=1}^{T}\E_{t}\hat{\C}_{1}\hat{\C}_{1}^{\prime}\E%
_{t}^{\prime} \\
&+\frac{1}{p_{1}p_{2}^{2}T^{2}}\sum_{t=1}^{T}\R_{1}\F_{1,t}%
\C^{\prime}\hat{\C}_{1}\hat{\C}_{1}^{\prime
}\E_{t}^{\prime}+\left( \frac{1}{p_{1}p_{2}^{2}T^{2}}\sum_{t=1}^{T}%
\R_{1}\F_{1,t}\C_{1}^{\prime}\hat{\C}%
_{1}\hat{\C}_{1}^{\prime}\E_{t}^{\prime}\right)
^{\prime} \\
&+\frac{1}{p_{1}p_{2}^{2}T^{2}}\sum_{t=1}^{T}\left( \R_{0}\F%
_{0,t}\C_{0}^{\prime}-\hat{\R}_{0}\hat{\F}%
_{0,t}\hat{\C}_{0}^{\prime}\right) \hat{\C}_{1}%
\hat{\C}_{1}^{\prime}\E_{t}^{\prime} \\
&+\left( \frac{1}{p_{1}p_{2}^{2}T^{2}}\sum_{t=1}^{T}\left( \R_{0}%
\F_{0,t}\C_{0}^{\prime}-\hat{\R}_{0}\hat{%
\F}_{0,t}\hat{\C}_{0}^{\prime}\right) \hat{\mathbf{C%
}}_{1}\hat{\C}_{1}^{\prime}\E_{t}^{\prime}\right)
^{\prime} \\
&+\frac{1}{p_{1}p_{2}^{2}T^{2}}\sum_{t=1}^{T}\R_{1}\F_{1,t}%
\C_{1}^{\prime}\hat{\C}_{1}\hat{\C}%
_{1}^{\prime}\left( \R_{0}\F_{0,t}\C_{0}^{\prime}-%
\hat{\R}_{0}\hat{\F}_{0,t}\hat{\C}%
_{0}^{\prime}\right) ^{\prime} \\
&+\left( \frac{1}{p_{1}p_{2}^{2}T^{2}}\sum_{t=1}^{T}\R_{1}\F%
_{1,t}\C_{1}^{\prime}\hat{\C}_{1}\hat{\C}%
_{1}^{\prime}\left( \R_{0}\F_{0,t}\C_{0}^{\prime}-%
\hat{\R}_{0}\hat{\F}_{0,t}\hat{\C}%
_{0}^{\prime}\right) ^{\prime}\right) ^{\prime} \\
&+\frac{1}{p_{1}p_{2}^{2}T^{2}}\sum_{t=1}^{T}\left( \R_{0}\F%
_{0,t}\C_{0}^{\prime}-\hat{\R}_{0}\hat{\F}%
_{0,t}\hat{\C}_{0}^{\prime}\right) \hat{\C}_{1}%
\hat{\C}_{1}^{\prime}\left( \R_{0}\F_{0,t}%
\C_{0}^{\prime}-\hat{\R}_{0}\hat{\F}_{0,t}%
\hat{\C}_{0}^{\prime}\right) ^{\prime} \\
&=I+II+III+III^{\prime}+IV+IV^{\prime}+V+V^{\prime}+VI.
\end{align*}
We have already studied terms $II+III+III^{\prime}$ in the proof of Lemma %
\ref{proj-negative}, with 
\begin{equation*}
II+III+III^{\prime}=O_{P}\left( \frac{1}{T^{2}}\right) +O_{P}\left( \frac{1%
}{p_{2}T}\right) +O_{P}\left( \frac{1}{p_{2}^{1/2}T^{3/2}}\right) .
\end{equation*}%
Further, write%
\begin{eqnarray}
&&\R_{0}\F_{0,t}\C_{0}^{\prime}-\hat{\R}%
_{0}\hat{\F}_{0,t}\hat{\C}_{0}^{\prime}
\label{st-cc} \\
&=&\left( \hat{\R}_{0}-\R_{0}\bH_{R_{0}}\right) 
\bH_{R_{0}}^{-1}\F_{0,t}\C_{0}^{\prime}+\R%
_{0}\F_{0,t}\left( \bH_{C_{0}}^{\prime}\right) ^{-1}\left( 
\hat{\C}_{0}-\C_{0}\bH_{C_{0}}\right) ^{\prime} 
\notag \\
&&+\R_{0}\bH_{R_{0}}\left( \hat{\F}_{0,t}-\bH_{R_{0}}^{-1}\F_{0,t}\left( \bH_{C_{0}}^{\prime}\right)
^{-1}\right) \left( \C_{0}\bH_{C_{0}}\right) ^{\prime}+ 
\notag \\
&&+\left( \hat{\R}_{0}-\R_{0}\bH_{R_{0}}\right) 
\bH_{R_{0}}^{-1}\F_{0,t}\left( \bH_{C_{0}}^{\prime
}\right) ^{-1}\left( \hat{\C}_{0}-\C_{0}\bH%
_{C_{0}}\right) ^{\prime}  \notag \\
&&+\left( \hat{\R}_{0}-\R_{0}\bH_{R_{0}}\right) 
\bH_{R_{0}}^{-1}\left( \hat{\F}_{0,t}-\bH%
_{R_{0}}^{-1}\F_{0,t}\left( \bH_{C_{0}}^{\prime}\right)
^{-1}\right) \left( \C_{0}\bH_{C_{0}}\right) ^{\prime} 
\notag \\
&&+\R_{0}\bH_{R_{0}}\left( \hat{\F}_{0,t}-\bH_{R_{0}}^{-1}\F_{0,t}\left( \bH_{C_{0}}^{\prime}\right)
^{-1}\right) \left( \hat{\C}_{0}-\C_{0}\bH%
_{C_{0}}\right) ^{\prime}  \notag \\
&&+\left( \hat{\R}_{0}-\R_{0}\bH_{R_{0}}\right) 
\bH_{R_{0}}^{-1}\left( \hat{\F}_{0,t}-\bH%
_{R_{0}}^{-1}\F_{0,t}\left( \bH_{C_{0}}^{\prime}\right)
^{-1}\right) \left( \hat{\C}_{0}-\C_{0}\bH%
_{C_{0}}\right) ^{\prime}.  \notag
\end{eqnarray}%
Then we have%
\begin{eqnarray*}
IV &=&\frac{1}{p_{1}p_{2}^{2}T^{2}}\sum_{t=1}^{T}\left( \R_{0}%
\F_{0,t}\C_{0}^{\prime}-\hat{\R}_{0}\hat{%
\F}_{0,t}\hat{\C}_{0}^{\prime}\right) \hat{\mathbf{C%
}}_{1}\bH_{C_{1}}^{\prime}\C_{1}^{\prime}\E%
_{t}^{\prime} \\
&&+\frac{1}{p_{1}p_{2}^{2}T^{2}}\sum_{t=1}^{T}\left( \R_{0}\F%
_{0,t}\C_{0}^{\prime}-\hat{\R}_{0}\hat{\F}%
_{0,t}\hat{\C}_{0}^{\prime}\right) \hat{\C}%
_{1}\left( \hat{\C}_{1}-\C_{1}\bH_{C_{1}}\right)
^{\prime}\E_{t}^{\prime}=IV_{a}+IV_{b};
\end{eqnarray*}%
using (\ref{st-cc}), we can write 
\begin{equation*}
IV_{a}=\sum_{i=1}^{7}IV_{a,i}.
\end{equation*}%
Following the passages in the above, it holds that%
\begin{eqnarray*}
\left\Vert IV_{a,1}\right\Vert _{F} &=&O_{P}(1) \frac{1}{%
p_{1}p_{2}^{2}T^{2}}\left\Vert \hat{\R}_{0}-\R_{0}%
\bH_{R_{0}}\right\Vert _{F}\left\Vert \hat{\C}%
_{1}\right\Vert _{F}\left\Vert \C_{0}\right\Vert _{F}\left\Vert
\sum_{t=1}^{T}\F_{0,t}\C_{1}^{\prime}\E%
_{t}\right\Vert _{F} \\
&=&O_{P}(1) \frac{1}{p_{1}p_{2}^{2}T^{2}}p_{1}^{1/2}\left( \frac{%
1}{p_{1}p_{2}}+\frac{1}{\left( p_{2}T\right) ^{1/2}}\right)
p_{2}^{1/2}p_{2}^{1/2}\left( p_{1}p_{2}T\right) ^{1/2} \\
&=&O_{P}\left( \frac{1}{p_{1}^{1/2}T^{3/2}}\left( \frac{1}{p_{1}p_{2}}+\frac{%
1}{\left( p_{2}T\right) ^{1/2}}\right) \right) ;
\end{eqnarray*}%
\begin{eqnarray*}
\left\Vert IV_{a,2}\right\Vert _{F} &=&O_{P}(1) \frac{1}{%
p_{1}p_{2}^{2}T^{2}}\left\Vert \hat{\C}_{0}-\C_{0}%
\bH_{C_{0}}\right\Vert _{F}\left\Vert \hat{\C}%
_{1}\right\Vert _{F}\left\Vert \R_{0}\right\Vert _{F}\left\Vert
\sum_{t=1}^{T}\F_{0,t}\C_{1}^{\prime}\E%
_{t}\right\Vert _{F} \\
&=&O_{P}\left( \frac{1}{p_{2}^{1/2}T^{3/2}}\left( \frac{1}{p_{1}p_{2}}+\frac{%
1}{\left( p_{1}T\right) ^{1/2}}\right) \right) ;
\end{eqnarray*}%
\begin{eqnarray*}
\left\Vert IV_{a,3}\right\Vert _{F} &=&O_{P}(1) \frac{1}{%
p_{1}p_{2}^{2}T^{2}}\left\Vert \hat{\C}_{1}\right\Vert
_{F}\left\Vert \R_{0}\right\Vert _{F}\left\Vert \C%
_{0}\right\Vert _{F} \\
&&\times \left\Vert \sum_{t=1}^{T}\left( \hat{\F}_{0,t}-\mathbf{H%
}_{R_{0}}^{-1}\F_{0,t}\left( \bH_{C_{0}}^{\prime}\right)
^{-1}\right) \C_{1}^{\prime}\E_{t}\right\Vert _{F} \\
&=&O_{P}(1) \frac{p_{1}^{1/2}p_{2}}{p_{1}p_{2}^{2}T^{3/2}}\left( 
\frac{1}{T}\sum_{t=1}^{T}\left\Vert \hat{\F}_{0,t}-\bH%
_{R_{0}}^{-1}\F_{0,t}\left( \bH_{C_{0}}^{\prime}\right)
^{-1}\right\Vert _{F}^{2}\right) ^{1/2}\left( \sum_{t=1}^{T}\left\Vert 
\C_{1}^{\prime}\E_{t}\right\Vert _{F}^{2}\right) ^{1/2} \\
&=&O_{P}(1) \frac{1}{p_{2}^{1/2}T}\left( \frac{1}{\sqrt{%
p_{1}p_{2}}}+\frac{1}{\left( p_{1\wedge 2}T\right) ^{1/2}}\right) ,
\end{eqnarray*}%
having used (\ref{f1-hat-l2}) in the last set of equations. By the same
token, it can be shown that $IV_{a,4}-IV_{a,7}$ are all dominated by $%
IV_{a,1}-IV_{a,3}$. Similarly, using (\ref{st-cc}), we can write 
\begin{equation*}
IV_{b}=\sum_{i=1}^{7}IV_{b,i}.
\end{equation*}%
It holds that%
\begin{eqnarray*}
\left\Vert IV_{b,1}\right\Vert _{F} &=&O_{P}(1) \frac{1}{%
p_{1}p_{2}^{2}T^{2}}\left\Vert \hat{\R}_{0}-\R_{0}%
\bH_{R_{0}}\right\Vert _{F}\left\Vert \hat{\C}%
_{1}\right\Vert _{F}\left\Vert \hat{\C}_{1}-\C_{1}%
\bH_{C_{1}}\right\Vert _{F}\left\Vert \C_{0}\right\Vert
_{F}\left\Vert \sum_{t=1}^{T}\F_{0,t}\E_{t}\right\Vert _{F}
\\
&=&O_{P}(1) \frac{1}{p_{1}p_{2}^{2}T^{2}}p_{1}^{1/2}\left( \frac{%
1}{p_{1}p_{2}}+\frac{1}{\left( p_{2}T\right) ^{1/2}}\right) p_{2}^{1/2}\frac{%
p_{2}^{1/2}}{T}\left( p_{1}p_{2}T\right) ^{1/2} \\
&=&O_{P}\left( \frac{1}{p_{1}^{1/2}T^{5/2}}\left( \frac{1}{p_{1}p_{2}}+\frac{%
1}{\left( p_{2}T\right) ^{1/2}}\right) \right) ;
\end{eqnarray*}%
\begin{eqnarray*}
\left\Vert IV_{b,2}\right\Vert _{F} &=&O_{P}(1) \frac{1}{%
p_{1}p_{2}^{2}T^{2}}\left\Vert \hat{\C}_{0}-\C_{0}%
\bH_{C_{0}}\right\Vert _{F}\left\Vert \hat{\C}%
_{1}\right\Vert _{F}\left\Vert \R_{0}\right\Vert _{F}\left\Vert 
\hat{\C}_{1}-\C_{1}\bH_{C_{1}}\right\Vert
_{F}\left\Vert \sum_{t=1}^{T}\F_{0,t}\E_{t}\right\Vert _{F}
\\
&=&O_{P}\left( \frac{1}{p_{2}^{1/2}T^{5/2}}\left( \frac{1}{p_{1}p_{2}}+\frac{%
1}{\left( p_{1}T\right) ^{1/2}}\right) \right) .
\end{eqnarray*}%
We now study, along similar lines as the proof of Lemma \ref{bai} 
\begin{eqnarray*}
&&\sum_{t=1}^{T}\left( \hat{\F}_{0,t}-\left( \bH%
_{R_{0}}\right) ^{-1}\F_{0,t}\left( \bH_{C_{0}}^{\prime
}\right) ^{-1}\right) \E_{t} \\
&=&\left( \hat{\mathbf{D}}\right) ^{-1}\hat{\mathbf{N}}\left( 
\hat{\C}_{1,\perp}^{\prime}\otimes \hat{\R}%
_{1,\perp}^{\prime}\right) \sum_{t=1}^{T}\F_{0,t}\left( \hat{%
\R}_{0}\left( \bH_{R_{0}}\right) ^{-1}\right) ^{\prime}%
\E_{t}\left( \C_{0}-\hat{\C}_{0}\left( \bH%
_{C_{0}}\right) ^{-1}\right)  \\
&&+\left( \hat{\mathbf{D}}\right) ^{-1}\hat{\mathbf{N}}\left( 
\hat{\C}_{1,\perp}^{\prime}\otimes \hat{\R}%
_{1,\perp}^{\prime}\right) \left( \R_{0}-\hat{\R}%
_{0}\left( \bH_{R_{0}}\right) ^{-1}\right) ^{\prime}\sum_{t=1}^{T}%
\F_{0,t}\E_{t}\left( \hat{\C}_{0}\left( \mathbf{H%
}_{C_{0}}\right) ^{-1}\right)  \\
&&+\left( \hat{\mathbf{D}}\right) ^{-1}\hat{\mathbf{N}}\left( 
\hat{\C}_{1,\perp}^{\prime}\otimes \hat{\R}%
_{1,\perp}^{\prime}\right) \left( \left( \C_{0}-\hat{\C%
}_{0}\left( \bH_{C_{0}}\right) ^{-1}\right) ^{\prime}\otimes \left( 
\R_{0}-\hat{\R}_{0}\left( \bH_{R_{0}}\right)
^{-1}\right) ^{\prime}\right) \sum_{t=1}^{T}\F_{0,t}\E_{t}
\\
&&+\left( \hat{\mathbf{D}}\right) ^{-1}\hat{\mathbf{N}}\left( \left(
\left( \hat{\C}_{1,\perp}-\C_{1,\perp}\right) ^{\prime
}\C_{1}\right) \otimes \left( \left( \hat{\R}_{1,\perp}-%
\R_{1,\perp}\right) ^{\prime}\R_{1}\right) \right)
\sum_{t=1}^{T}\F_{1,t}\E_{t} \\
&&+\left( \hat{\R}_{0}^{\prime}\hat{\R}_{1,\perp}%
\hat{\R}_{0}\right) ^{-1}\left( \hat{\C}_{0}^{\prime
}\hat{\C}_{1,\perp}\hat{\C}_{0}\right)
^{-1}\sum_{t=1}^{T}\hat{\R}_{0}^{\prime}\hat{\R}%
_{1,\perp}\E_{t}\hat{\C}_{1,\perp}\hat{\C}%
_{0}\E_{t} \\
&=&a+b+c+d+e.
\end{eqnarray*}%
By using the same arguments as in the above we have%
\begin{eqnarray*}
\left\Vert a\right\Vert _{F} &\leq &\left\Vert \left( \hat{\mathbf{D}}%
\right) ^{-1}\right\Vert _{F}\left\Vert \hat{\C}_{0}^{\prime}%
\hat{\C}_{1,\perp}\right\Vert _{F}\left\Vert \hat{\R%
}_{0}^{\prime}\hat{\R}_{1,\perp}\right\Vert _{F}\left\Vert
\sum_{t=1}^{T}\F_{0,t}\left( \hat{\R}_{0}\left( \mathbf{H%
}_{R_{0}}\right) ^{-1}\right) ^{\prime}\E_{t}\right\Vert _{F} \\
&&\times \left\Vert \C_{0}-\hat{\C}_{0}\left( \bH%
_{C_{0}}\right) ^{-1}\right\Vert _{F} \\
&=&O_{P}(1) \frac{1}{p_{1}^{2}p_{2}^{2}}p_{1}^{3/2}p_{2}^{3/2}%
\left( p_{1}p_{2}T\right) ^{1/2}p_{2}^{1/2}\left( \frac{1}{p_{1}p_{2}}+\frac{%
1}{p_{1}^{1/2}T^{1/2}}\right)  \\
&=&O_{P}\left( \frac{T^{1/2}}{p_{1}p_{2}^{1/2}}\right) +O_{P}\left( \frac{%
p_{2}^{1/2}}{p_{1}^{1/2}}\right) ;
\end{eqnarray*}%
\begin{eqnarray*}
\left\Vert b\right\Vert _{F} &\leq &\left\Vert \left( \hat{\mathbf{D}}%
\right) ^{-1}\right\Vert _{F}\left\Vert \hat{\C}_{0}^{\prime}%
\hat{\C}_{1,\perp}\right\Vert _{F}\left\Vert \hat{\R%
}_{0}^{\prime}\hat{\R}_{1,\perp}\right\Vert _{F}\left\Vert 
\R_{0}-\hat{\R}_{0}\left( \bH_{R_{0}}\right)
^{-1}\right\Vert _{F} \\
&&\times \left\Vert \sum_{t=1}^{T}\F_{0,t}\E_{t}\left( 
\hat{\C}_{0}\left( \bH_{C_{0}}\right) ^{-1}\right)
\right\Vert _{F} \\
&=&O_{P}(1) \frac{1}{p_{1}^{2}p_{2}^{2}}p_{1}p_{2}\left(
p_{1}p_{2}\right) ^{1/2}\left( p_{1}p_{2}T\right) ^{1/2}p_{1}^{1/2}\left( 
\frac{1}{p_{1}p_{2}}+\frac{1}{p_{2}^{1/2}T^{1/2}}\right)  \\
&=&O_{P}\left( \frac{T^{1/2}}{p_{1}^{1/2}p_{2}}\right) +O_{P}\left( \frac{%
p_{1}^{1/2}}{p_{2}^{1/2}}\right) ;
\end{eqnarray*}%
also, it follows by the same logic that $\left\Vert c\right\Vert _{F}$ is
dominated by the other two terms; 
\begin{eqnarray*}
\left\Vert d\right\Vert _{F} &\leq &\left\Vert \left( \hat{\mathbf{D}}%
\right) ^{-1}\right\Vert _{F}\left\Vert \hat{\C}_{0}\right\Vert
_{F}\left\Vert \hat{\C}_{1,\perp}\left( \hat{\C}%
_{1,\perp}-\C_{1,\perp}\right) \right\Vert _{F}\left\Vert \mathbf{C%
}_{1}\right\Vert _{F} \\
&&\times \left\Vert \hat{\R}_{0}\right\Vert _{F}\left\Vert 
\hat{\R}_{1,\perp}\left( \hat{\R}_{1,\perp}-%
\R_{1,\perp}\right) \right\Vert _{F}\left\Vert \R%
_{1}\right\Vert _{F}\left\Vert \sum_{t=1}^{T}\F_{1,t}\E%
_{t}\right\Vert _{F} \\
&=&O_{P}(1) \frac{1}{p_{1}^{2}p_{2}^{2}}p_{2}^{1/2}\frac{1}{T}%
p_{2}^{1/2}p_{1}^{1/2}\frac{1}{T}p_{1}^{1/2}\left( p_{1}p_{2}T^{2}\right)
^{1/2}=O_{P}\left( \frac{1}{p_{1}^{1/2}p_{2}^{1/2}T}\right) .
\end{eqnarray*}%
Finally%
\begin{equation*}
\left\Vert e\right\Vert _{F}\leq \left\Vert \left( \hat{\R}%
_{0}^{\prime}\hat{\R}_{1,\perp}\hat{\R}_{0}\right)
^{-1}\right\Vert _{F}\left\Vert \left( \hat{\C}_{0}^{\prime}%
\hat{\C}_{1,\perp}\hat{\C}_{0}\right)
^{-1}\right\Vert _{F}\left\Vert \sum_{t=1}^{T}\R_{0}^{\prime}%
\R_{1,\perp}\E_{t}\C_{1,\perp}\C_{0}%
\E_{t}\right\Vert _{F}+r_{p_{1}p_{2}T},
\end{equation*}%
where $r_{p_{1}p_{2}T}$\ is a dominated remainder, and%
\begin{eqnarray*}
&&\left\Vert \sum_{t=1}^{T}\R_{0}^{\prime}\R_{1,\perp}%
\E_{t}\C_{1,\perp}\C_{0}\E_{t}\right\Vert
_{F}^{2} \\
&=&\sum_{i=1}^{p_{1}}\sum_{j=1}^{p_{2}}\sum_{t,s=1}^{T}\sum_{\ell _{1},\ell
_{2}=1}^{p_{1}}\sum_{h_{1},h_{2}=1}^{p_{1}}\sum_{u_{1},u_{2}=1}^{p_{2}}%
\sum_{k_{1},k_{2}=1}^{p_{2}}r_{\perp ,h_{1}\ell _{1}}r_{\perp ,h_{2}\ell
_{2}}r_{0,\ell _{1}}r_{0,\ell _{2}}c_{\perp ,k_{1}u_{1}}c_{\perp
,k_{2}u_{2}}c_{0,u_{1}}c_{0,u_{2}}e_{h_{1}k_{1},t}e_{h_{2}k_{2},t}e_{ij,t}e_{ij,s},
\end{eqnarray*}%
which can be shown to be $O_{P}\left( p_{1}^{4}p_{2}^{4}T\right) $; putting
all together, it follows that%
\begin{equation*}
\left\Vert e\right\Vert _{F}=O_{P}\left( T^{1/2}\right) .
\end{equation*}%
Hence it follows that%
\begin{equation*}
\left\Vert \sum_{t=1}^{T}\left( \hat{\F}_{0,t}-\bH%
_{R_{0}}^{-1}\F_{0,t}\left( \bH_{C_{0}}^{\prime}\right)
^{-1}\right) \E_{t}\right\Vert _{F}=O_{P}\left( T^{1/2}\right)
+O_{P}\left( \frac{p_{2}^{1/2}}{p_{1}^{1/2}}\right) +O_{P}\left( \frac{%
p_{1}^{1/2}}{p_{2}^{1/2}}\right) ,
\end{equation*}%
and therefore%
\begin{eqnarray*}
\left\Vert IV_{b,3}\right\Vert _{F} &=&O_{P}(1) \frac{1}{%
p_{1}p_{2}^{2}T^{2}}\left\Vert \hat{\C}_{1}\right\Vert
_{F}\left\Vert \R_{0}\right\Vert _{F}\left\Vert \C%
_{0}\right\Vert _{F}\left\Vert \hat{\C}_{1}-\C_{1}%
\bH_{C_{1}}\right\Vert _{F} \\
&&\times \left\Vert \sum_{t=1}^{T}\left( \hat{\F}_{0,t}-\mathbf{H%
}_{R_{0}}^{-1}\F_{0,t}\left( \bH_{C_{0}}^{\prime}\right)
^{-1}\right) \E_{t}\right\Vert _{F} \\
&=&O_{P}(1) \frac{1}{T^{5/2}}\left( \frac{1}{\sqrt{p_{1}p_{2}}}+%
\frac{1}{p_{1\wedge 2}T^{1/2}}\right) ,
\end{eqnarray*}%
and again by the same logic, it can be shown that $IV_{b,4}-IV_{b,7}$ are
all dominated by $IV_{b,1}-IV_{b,3}$. By the same logic%
\begin{equation*}
V=\sum_{i=1}^{7}V_{i},
\end{equation*}%
with 
\begin{eqnarray*}
\left\Vert V_{1}\right\Vert _{F} &=&O_{P}(1) \frac{1}{%
p_{1}p_{2}^{2}T^{2}}\left\Vert \R_{1}\right\Vert _{F}\left\Vert 
\C_{1}\right\Vert _{F}\left\Vert \hat{\C}_{1}\right\Vert
_{F}^{2}\left\Vert \C_{0}\right\Vert _{F}\left\Vert \hat{\R}_{0}-\R_{0}\bH_{R_{0}}\right\Vert _{F}\left\Vert
\sum_{t=1}^{T}\F_{1,t}\F_{0,t}\right\Vert _{F} \\
&=&O_{P}(1) \frac{1}{p_{1}p_{2}^{2}T^{2}}%
p_{1}^{1/2}p_{2}^{2}p_{1}^{1/2}\left( \frac{1}{p_{1}p_{2}}+\frac{1}{\left(
p_{2}T\right) ^{1/2}}\right) T \\
&=&O_{P}\left( \frac{1}{T}\left( \frac{1}{p_{1}p_{2}}+\frac{1}{\left(
p_{2}T\right) ^{1/2}}\right) \right) ,
\end{eqnarray*}%
\begin{eqnarray*}
\left\Vert V_{2}\right\Vert _{F} &=&O_{P}(1) \frac{1}{%
p_{1}p_{2}^{2}T^{2}}\left\Vert \R_{1}\right\Vert _{F}\left\Vert 
\C_{1}\right\Vert _{F}\left\Vert \hat{\C}_{1}\right\Vert
_{F}^{2}\left\Vert \R_{0}\right\Vert _{F}\left\Vert \hat{\C}_{0}-\C_{0}\bH_{C_{0}}\right\Vert _{F}\left\Vert
\sum_{t=1}^{T}\F_{1,t}\F_{0,t}\right\Vert _{F} \\
&=&O_{P}\left( \frac{1}{T}\left( \frac{1}{p_{1}p_{2}}+\frac{1}{\left(
p_{1}T\right) ^{1/2}}\right) \right) ,
\end{eqnarray*}%
\begin{eqnarray*}
\left\Vert V_{3}\right\Vert _{F} &=&O_{P}(1) \frac{1}{%
p_{1}p_{2}^{2}T^{2}}\left\Vert \R_{1}\right\Vert _{F}\left\Vert 
\C_{1}\right\Vert _{F}\left\Vert \hat{\C}_{1}\right\Vert
_{F}^{2}\left\Vert \R_{0}\right\Vert _{F}\left\Vert \C%
_{0}\right\Vert _{F} \\
&&\times \sum_{t=1}^{T}\left\Vert \left( \hat{\F}_{0,t}-\mathbf{H%
}_{R_{0}}^{-1}\F_{0,t}\left( \bH_{C_{0}}^{\prime}\right)
^{-1}\right) \F_{1,t}^{\prime}\right\Vert _{F} \\
&=&O_{P}(1) \frac{1}{p_{1}p_{2}^{2}T^{2}}%
p_{1}^{1/2}p_{2}^{1/2}p_{2}p_{1}^{1/2}p_{2}^{1/2}T\left( \frac{1}{\sqrt{%
p_{1}p_{2}}}+\frac{1}{\left( p_{1\wedge 2}T\right) ^{1/2}}\right)  \\
&=&O_{P}\left( \frac{1}{T}\left( \frac{1}{\sqrt{p_{1}p_{2}}}+\frac{1}{\left(
p_{1\wedge 2}T\right) ^{1/2}}\right) \right) ,
\end{eqnarray*}%
again by Lemma \ref{bai}; similarly, it can be shown that $V_{4}-V_{7}$ are
all dominated by $V_{1}-V_{3}$. Finally we write%
\begin{equation*}
VI=\sum_{i=1}^{7}VI_{i}.
\end{equation*}%
It holds that%
\begin{eqnarray*}
\left\Vert VI_{1}\right\Vert _{F} &=&O_{P}(1) \frac{1}{%
p_{1}p_{2}^{2}T^{2}}\left\Vert \hat{\C}_{1}\right\Vert
_{F}^{2}\left\Vert \C_{0}\right\Vert _{F}^{2}\left\Vert \hat{%
\R}_{0}-\R_{0}\bH_{R_{0}}\right\Vert
_{F}^{2}\sum_{t=1}^{T}\left\Vert \F_{0,t}\right\Vert _{F}^{2} \\
&=&O_{P}(1) \frac{1}{p_{1}p_{2}^{2}T^{2}}p_{2}p_{2}p_{1}\left( 
\frac{1}{p_{1}^{2}p_{2}^{2}}+\frac{1}{p_{2}T}\right) T \\
&=&O_{P}\left( \frac{1}{T}\left( \frac{1}{p_{1}^{2}p_{2}^{2}}+\frac{1}{p_{2}T%
}\right) \right) ,
\end{eqnarray*}%
and similarly 
\begin{equation*}
\left\Vert VI_{2}\right\Vert _{F}=O_{P}\left( \frac{1}{T}\left( \frac{1}{%
p_{1}^{2}p_{2}^{2}}+\frac{1}{p_{1}T}\right) \right) ,
\end{equation*}%
and%
\begin{eqnarray*}
\left\Vert VI_{3}\right\Vert _{F} &=&O_{P}(1) \frac{1}{%
p_{1}p_{2}^{2}T^{2}}\left\Vert \hat{\C}_{1}\right\Vert
_{F}^{2}\left\Vert \C_{0}\right\Vert _{F}^{2}\left\Vert \R%
_{0}\right\Vert _{F}^{2}\sum_{t=1}^{T}\left\Vert \hat{\F}_{0,t}-%
\bH_{R_{0}}^{-1}\F_{0,t}\left( \bH_{C_{0}}^{\prime
}\right) ^{-1}\right\Vert _{F}^{2} \\
&=&O_{P}(1) \frac{1}{p_{1}p_{2}^{2}T^{2}}p_{2}p_{2}p_{1}T\left( 
\frac{1}{p_{1}p_{2}}+\frac{1}{\left( p_{1\wedge 2}T\right) ^{1/2}}\right)
^{2} \\
&=&O_{P}\left( \frac{1}{T}\left( \frac{1}{p_{1}p_{2}}+\frac{1}{\left(
p_{1\wedge 2}T\right) ^{1/2}}\right) ^{2}\right) ;
\end{eqnarray*}%
again, it can be shown by the same logic that $VI_{4}-VI_{7}$ are all
dominated by $VI_{1}-VI_{3}$. The desired result now follows from the same
logic as in the previous proofs.
\end{proof}
\end{lemma}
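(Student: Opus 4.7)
The plan is to follow the same blueprint as Lemma \ref{eig-orth-r} (and the sharper version used in Lemma \ref{proj-negative}), adapted for the filtered, projected covariance $\proj{\M}_{R_1} = \frac{1}{p_{1}p_{2}^{2}T^{2}}\sum_{t=1}^{T}\proj{\X}_t\hat{\C}_1\hat{\C}_1'\proj{\X}_t'$. First I would expand $\proj{\X}_t = \R_1\F_{1,t}\C_1' + \bigl(\R_0\F_{0,t}\C_0' - \hat{\R}_0\hat{\F}_{0,t}\hat{\C}_0'\bigr) + \E_t$ to obtain the nine-term decomposition $I+II+III+III'+IV+IV'+V+V'+VI$ used above, where $I$ is the pure signal $\frac{1}{p_{1}p_{2}^{2}T^{2}}\sum_t\R_1\F_{1,t}\C_1'\hat{\C}_1\hat{\C}_1'\C_1\F_{1,t}'\R_1'$, the trio $II+III+III'$ is identical to the projection-based noise term already bounded in the proof of Lemma \ref{proj-negative}, and the remaining six terms carry the ``filtering error'' $\D_t := \R_0\F_{0,t}\C_0' - \hat{\R}_0\hat{\F}_{0,t}\hat{\C}_0'$.

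For the signal $I$, I would use $\hat{\C}_1 = \C_1\bH_{C_1} + (\hat{\C}_1 - \C_1\bH_{C_1})$ together with $\C_1'\C_1 = p_2\I_{h_{C_1}}$, Theorem \ref{hat-estimates}, and the multiplicative Weyl inequality to reduce the problem to bounding $\lambda_{\min}\bigl(\frac{1}{T^2}\sum_t\F_{1,t}\F_{1,t}'\bigr)$ from below, which follows from Lemma \ref{lambda} (or Lemma \ref{chung}). Coupled with $\lambda_j(\R_1'\R_1/p_1)>0$ by Assumption \ref{as-4}, this yields $\lambda_j(I) \geq c_0 + o_P(1)$ for $j\leq h_{R_1}$, and the degenerate cross terms in the expansion of $\hat{\C}_1\hat{\C}_1'$ contribute at most $O_P(T^{-1})$. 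For $II+III+III'$, I would import directly the rate $O_P(T^{-2}) + O_P((p_2 T)^{-1}) + O_P((p_2^{1/2}T^{3/2})^{-1})$ already derived in the proof of Lemma \ref{proj-negative}; this is exactly the first three of the four rates claimed, so the only genuinely new work is in the six terms containing $\D_t$.

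The main obstacle is the systematic control of these $\D_t$-terms. Here I would use the seven-piece expansion of $\D_t$ (the same identity used in the proofs of Theorems \ref{rc-tilde} and \ref{f-tilde}), splitting it according to which of $\hat{\R}_0 - \R_0\bH_{R_0}$, $\hat{\C}_0 - \C_0\bH_{C_0}$, or $\hat{\F}_{0,t} - \bH_{R_0}^{-1}\F_{0,t}(\bH_{C_0}')^{-1}$ is the ``active'' difference, and iterate. Each resulting summand is bounded by combining: the rates of Theorem \ref{rc1-hat} for the loading errors; the pointwise and averaged rates of Theorem \ref{f1-hat} for the factor error; Lemma \ref{err}, Lemma \ref{errorfactor}, and Assumption \ref{as-3} for the noise sums such as $\sum_t\F_{0,t}\C_1'\E_t$, $\sum_t\F_{1,t}\E_t$, and their Frobenius analogs; plus Lemma \ref{bai}, which handles the delicate cross $\sum_t(\hat{\F}_{0,t}-\bH_{R_0}^{-1}\F_{0,t}(\bH_{C_0}')^{-1})\F_{1,t}'$ at the optimal $(p_1p_2)^{-1/2} + (p_{1\wedge 2}T)^{-1/2}$ rate. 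The dominant contribution among these will come from terms of the form $\frac{1}{p_1p_2^2T^2}\sum_t\D_t\hat{\C}_1\hat{\C}_1'\E_t'$ where the factor-error piece of $\D_t$ collides with $\E_t$, producing exactly the $O_P((p_{1\wedge 2}^{1/2}T^{3/2})^{-1}) + O_P((p_1^{1/2}p_2^{1/2}T)^{-1})$ contribution; all other $\D_t$-terms can be shown to be dominated by these or by $II+III+III'$.

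Assembling the four surviving rates by Weyl's inequality yields $\lambda_j(\proj{\M}_{R_1}) = O_P(\ldots)$ for $j>h_{R_1}$, and combined with the signal lower bound one obtains the full statement. The bookkeeping of the cross terms is tedious but essentially mechanical; the only genuinely delicate piece is invoking Lemma \ref{bai} at the right place to avoid picking up an extra $T^{1/2}$ factor in the factor-error by idiosyncratic-error interaction.
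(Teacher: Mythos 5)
Your proposal follows the paper's proof essentially verbatim: the same nine-term decomposition of $\proj{\M}_{R_1}$ with the filtering error $\R_0\F_{0,t}\C_0'-\hat{\R}_0\hat{\F}_{0,t}\hat{\C}_0'$ isolated, the same import of the $II+III+III'$ rates from the proof of Lemma \ref{proj-negative}, the same seven-piece expansion of the filtering error, and the same reliance on Theorems \ref{rc1-hat} and \ref{f1-hat} and Lemma \ref{bai} for the cross terms. The only detail you gloss over is that the interaction $\sum_{t}(\hat{\F}_{0,t}-\bH_{R_0}^{-1}\F_{0,t}(\bH_{C_0}')^{-1})\E_{t}$ is not covered by Lemma \ref{bai} (which handles the cross with $\F_{1,t}$) and requires its own in-line derivation, which the paper carries out via a separate five-term expansion; but you correctly flag this interaction as the delicate piece, so the plan is sound.
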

\begin{lemma}
\label{spec-m-c-tilde}We assume that Assumptions \ref{as-1}-\ref{as-5} are
satisfied. Then there exists a positive constant $c_{0}$ such that%
\begin{equation*}
\lambda _{j}\left( \proj{\M}_{C_{1}}\right) =c_{0}+o_{P}(1) 
\text{,}
\end{equation*}%
for all $j\leq h_{C_{1}}$, and%
\begin{equation*}
\lambda _{j}\left( \proj{\M}_{C_{1}}\right) =O_{P}\left( \frac{1}{%
p_{1\wedge 2}^{1/2}T^{3/2}}\right) +O_{P}\left( \frac{1}{p_{1}T}\right)
+O_{P}\left( \frac{1}{T^{2}}\right) +O_{P}\left( \frac{1}{%
p_{1}^{1/2}p_{2}^{1/2}T}\right) ,
\end{equation*}%
for all $j>h_{C_{1}}$.

\begin{proof}
The proof is the same as the proof of Lemma \ref{spec-m-r-tilde}, \textit{%
mutatis mutandis}.
\end{proof}
\end{lemma}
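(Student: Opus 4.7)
The plan is to mirror the proof of Lemma \ref{spec-m-r-tilde} step-by-step, exploiting the formal symmetry between the row and column sides of the model. First I would expand $\proj{\M}_{C_{1}} = (p_1^2 p_2 T^2)^{-1}\sum_{t=1}^T \proj{\X}_{t}^{\prime}\hat{\R}_1\hat{\R}_1^{\prime}\proj{\X}_{t}$ using $\proj{\X}_t = \X_t - \hat{\R}_0\hat{\F}_{0,t}\hat{\C}_0^\prime$ and $\X_t = \R_1 \F_{1,t}\C_1^\prime + \R_0\F_{0,t}\C_0^\prime + \E_t$, which produces the same nine-term decomposition as in the companion lemma, namely
\begin{equation*}
\proj{\M}_{C_{1}} = I + II + III + III^{\prime} + IV + IV^{\prime} + V + V^{\prime} + VI,
\end{equation*}
where $I$ is the pure signal term $(p_1^2 p_2 T^2)^{-1}\sum_t \C_1\F_{1,t}^{\prime}\R_1^{\prime}\hat{\R}_1\hat{\R}_1^{\prime}\R_1\F_{1,t}\C_1^{\prime}$, $II$ is the pure idiosyncratic term, $III,III^{\prime}$ are the factor--error cross terms, and $IV$--$VI$ involve the difference $\R_0\F_{0,t}\C_0^{\prime}-\hat{\R}_0\hat{\F}_{0,t}\hat{\C}_0^{\prime}$.

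Next I would bound each noise term. Terms $II+III+III^{\prime}$ match exactly those analysed in Lemma \ref{proj-negative} (modulo the $p_1\leftrightarrow p_2$ swap) and contribute $O_P(T^{-2}) + O_P(p_1^{-1}T^{-1}) + O_P(p_1^{-1/2}T^{-3/2})$. For the remaining terms, the decomposition of $\R_0\F_{0,t}\C_0^{\prime}-\hat{\R}_0\hat{\F}_{0,t}\hat{\C}_0^{\prime}$ into seven pieces from equation (\ref{st-cc}) is reused verbatim; applying Theorem \ref{rc1-hat}, Theorem \ref{f1-hat}, Lemma \ref{r-orth}, and a row-side analogue of Lemma \ref{bai} for $\sum_t(\hat{\F}_{0,t} - \bH_{R_0}^{-1}\F_{0,t}(\bH_{C_0}^{\prime})^{-1})^{\prime}\E_t^{\prime}$ yields each sub-term at a rate dominated by $O_P(p_{1\wedge 2}^{-1/2}T^{-3/2}) + O_P(p_1^{-1}T^{-1}) + O_P(T^{-2}) + O_P((p_1 p_2)^{-1/2}T^{-1})$, where the $p_1^{-1}T^{-1}$ component arises (as opposed to $p_2^{-1}T^{-1}$ in Lemma \ref{spec-m-r-tilde}) because the cross-sectional averaging via $\hat{\R}_1$ now involves $p_1$ summands.

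For term $I$, I would factor out $\C_1$ on both sides and note that $\hat{\R}_1^{\prime}\R_1/p_1 = \bH_{R_1}^\prime + o_P(1)$ by Theorem \ref{hat-estimates}, so
\begin{equation*}
I = \frac{1}{p_2}\C_1\left(\frac{\R_1^{\prime}\hat{\R}_1\hat{\R}_1^{\prime}\R_1}{p_1^2}\right)^{1/2} \left(\frac{1}{T^2}\sum_{t=1}^T \F_{1,t}^{\prime}\cdot\F_{1,t}\right) \left(\cdot\right)^{1/2}\C_1^{\prime},
\end{equation*}
so that by Weyl's multiplicative inequality, Assumption \ref{as-4}(\textit{ii}), Lemma \ref{chung}, and the argument in Lemma \ref{lambda} establishing $\lambda_{\min}(T^{-2}\sum \F_{1,t}^{\prime}\F_{1,t}) > 0$ with probability approaching one, we obtain $\lambda_j(I) \geq c_0 > 0$ for all $j \leq h_{C_1}$, while $\lambda_j(I) = 0$ for $j > h_{C_1}$ by construction. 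Combining with Weyl's additive inequality finishes the proof.

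The main obstacle, as in Lemma \ref{spec-m-r-tilde}, is keeping track of which terms involving the difference $\R_0\F_{0,t}\C_0^\prime-\hat{\R}_0\hat{\F}_{0,t}\hat{\C}_0^\prime$ actually dominate. In particular, the sharp rates for the cross terms rely on showing that the ``error-interacted'' sums like $\sum_t(\hat{\F}_{0,t} - \bH_{R_0}^{-1}\F_{0,t}(\bH_{C_0}^{\prime})^{-1})^{\prime}\E_t^{\prime}$ grow no faster than $O_P(T^{1/2})$ plus dimensionality-dependent remainders, which requires the full expansion of $\hat{\F}_{0,t}$ via (\ref{fi-hat-estimator}) and careful use of Assumptions \ref{as-3} and \ref{as-5}; all of this carries over from Lemma \ref{spec-m-r-tilde} with $p_1$ and $p_2$ interchanged in the relevant places.
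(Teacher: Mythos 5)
Your proposal is correct and follows essentially the same route as the paper, whose own proof of this lemma is simply a reference to Lemma \ref{spec-m-r-tilde} \emph{mutatis mutandis}: you reproduce the same nine-term decomposition, bound $II+III+III^{\prime}$ via Lemma \ref{proj-negative}, handle the remaining terms through the expansion (\ref{st-cc}), and lower-bound the signal eigenvalues with the multiplicative Weyl inequality together with Assumption \ref{as-4}\textit{(ii)} and the small-ball argument of Lemma \ref{lambda}. The only cosmetic issue is the informal square-root factorisation in your display for term $I$, which should be replaced by the direct bound $\lambda_j(I)\geq\lambda_j(\C_1^{\prime}\C_1/p_2)\,\lambda_{\min}\bigl(T^{-2}\sum_t\F_{1,t}^{\prime}(\R_1^{\prime}\hat{\R}_1\hat{\R}_1^{\prime}\R_1/p_1^2)\F_{1,t}\bigr)$, but this does not affect the validity of the argument.
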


\begin{lemma}
\label{eig-tilde}We assume that Assumptions \ref{as-1}-\ref{as-5} are
satisfied. Then it holds that%
\begin{equation*}
\left\Vert \widetilde{\Lambda }_{R_{1}}^{-1}\right\Vert =O_{P}(1) ,%
\text{ \ \ and \ \ }\left\Vert \widetilde{\Lambda }_{C_{1}}^{-1}\right\Vert
=O_{P}(1) .
\end{equation*}

\begin{proof}
The proof follows from Lemmas \ref{spec-m-r-tilde} and \ref{spec-m-c-tilde},
in the same way as the proof of Lemma \ref{spec-eig-mRx}.
\end{proof}
\end{lemma}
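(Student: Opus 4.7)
The plan is to read off the two bounds directly from Lemmas \ref{spec-m-r-tilde} and \ref{spec-m-c-tilde}, mimicking the argument used in Lemma \ref{spec-eig-mRx}. By construction, $\widetilde{\Lambda}_{R_{1}}$ is the $h_{R_{1}}\times h_{R_{1}}$ diagonal matrix whose entries are the top $h_{R_{1}}$ eigenvalues of $\proj{\M}_{R_{1}}$, so $\|\widetilde{\Lambda}_{R_{1}}^{-1}\|$ equals $1/\lambda_{h_{R_{1}}}(\proj{\M}_{R_{1}})$. Hence the whole statement reduces to showing that this smallest of the ``signal'' eigenvalues is bounded away from zero in probability.

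First I would invoke Lemma \ref{spec-m-r-tilde}, which gives $\lambda_{j}(\proj{\M}_{R_{1}})=c_{0}+o_{P}(1)$ for every $j\leq h_{R_{1}}$, with $c_{0}>0$. Fixing $\eps\in(0,c_{0}/2)$, this yields
\begin{equation*}
P\bigl(\lambda_{h_{R_{1}}}(\proj{\M}_{R_{1}})\geq c_{0}-\eps\bigr)\to 1,
\end{equation*}
so that $\|\widetilde{\Lambda}_{R_{1}}^{-1}\|\leq (c_{0}-\eps)^{-1}$ on an event of probability tending to one, which is precisely $\|\widetilde{\Lambda}_{R_{1}}^{-1}\|=O_{P}(1)$. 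The same argument, applied now to $\proj{\M}_{C_{1}}$ via Lemma \ref{spec-m-c-tilde}, delivers $\|\widetilde{\Lambda}_{C_{1}}^{-1}\|=O_{P}(1)$.

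There is no genuine obstacle here: unlike the proof of Lemma \ref{spec-eig-mRx}, we do not need to decompose $\proj{\M}_{R_{1}}$ term by term or verify that the stochastic trend part of the signal does not degenerate, because Lemmas \ref{spec-m-r-tilde} and \ref{spec-m-c-tilde} already encode precisely this fact. The only minor care required is to note that the lower bound in those lemmas is stated pointwise in $j\leq h_{R_{1}}$ (respectively $j\leq h_{C_{1}}$) and hence applies in particular to the smallest among the top $h_{R_{1}}$ (respectively $h_{C_{1}}$) eigenvalues, which is the one governing the spectral norm of the inverse.
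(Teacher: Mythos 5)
Your proposal is correct and follows essentially the same route as the paper: the paper's proof of Lemma \ref{eig-tilde} simply invokes Lemmas \ref{spec-m-r-tilde} and \ref{spec-m-c-tilde}, which already establish that $\lambda_{j}(\proj{\M}_{R_{1}})=c_{0}+o_{P}(1)$ with $c_{0}>0$ for all $j\leq h_{R_{1}}$ (and likewise for $\proj{\M}_{C_{1}}$), so that the smallest signal eigenvalue is bounded away from zero in probability and the inverse of the diagonal eigenvalue matrix is $O_{P}(1)$. Your observation that no term-by-term decomposition is needed here (unlike in Lemma \ref{spec-eig-mRx}, where the lower bound on the signal eigenvalues still had to be established) is exactly the point of the paper's one-line argument.
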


\clearpage
\newpage

%\renewcommand*{\thesection}{\Alph{section}}

%%\setcounter{section}{0} 
%%\setcounter{subsection}{-1} \setcounter{subsubsection}{-1} %
%\setcounter{equation}{0} \setcounter{lemma}{0} \setcounter{theorem}{0} %
%\renewcommand{\theassumption}{B.\arabic{assumption}} 
%%\renewcommand{\thesection}{A.\arabic{section}}
%\renewcommand{\thetheorem}{B.\arabic{theorem}} \renewcommand{\thelemma}{B.%
%\arabic{lemma}} \renewcommand{\theproposition}{B.\arabic{proposition}} %
%\renewcommand{\thecorollary}{B.\arabic{corollary}} \renewcommand{%
%\theequation}{B.\arabic{equation}}

%% ********************************************************
\section{Proofs\label{proofs}}
%% ********************************************************
Henceforth, we will use the following notation: $\log \left( x\right) $ is
the natural log of $x$;

\begin{proof}[Proof of Theorem \protect\ref{hat-estimates}]
We begin by studying the estimator of $\R_1$. By construction, it holds that%
\begin{equation*}
\hat{\R}_1=\M_{R_1}\hat{\R}_1\Lambda_{R_{1}}^{-1},
\end{equation*}%
where recall that, by Lemma \ref{lambda}, $\left\Vert \Lambda
_{R_{1}}^{-1}\right\Vert =O_{P}(1) $. Hence%
\begin{align}
\hat{\R}_1 &=\frac{1}{p_{1}p_{2}T^{2}}\sum_{t=1}^{T}\R_1\F_{1,t}\C_{1}^{\prime}\C_{1}\F_{1,t}^{\prime}\R_{1}^{\prime}\hat{\R}_1\Lambda _{R_{1}}^{-1}+\frac{1}{p_{1}p_{2}T^{2}}\sum_{t=1}^{T}\R_{0}\F_{0,t}\C_{0}^{\prime}\C_{0}\F_{0,t}^{\prime}\R_{0}^{\prime}\hat{\R}_1\Lambda
_{R_{1}}^{-1}  \label{r-hat-dec} \\
&+\frac{1}{p_{1}p_{2}T^{2}}\sum_{t=1}^{T}\E_{t}\E_{t}^{\prime}\hat{\R}_1\Lambda _{R_{1}}^{-1}+\frac{1}{p_{1}p_{2}T^{2}}\sum_{t=1}^{T}\R_1\F_{1,t}\C_{1}^{\prime}\C_{0}\F%
_{0,t}^{\prime}\R_{0}^{\prime}\hat{\R}_1\Lambda _{R_{1}}^{-1}
\notag \\
&+\left( \frac{1}{p_{1}p_{2}T^{2}}\sum_{t=1}^{T}\R_1\F_{1,t}\C_{1}^{\prime}\C_{0}\F_{0,t}^{\prime}\R_{0}^{\prime}%
\hat{\R}_1\Lambda _{R_{1}}^{-1}\right) ^{\prime}  \notag \\
&+\frac{1}{p_{1}p_{2}T^{2}}\sum_{t=1}^{T}\R_1\F_{1,t}\C_{1}^{\prime}%
\E_{t}^{\prime}\hat{\R}_1\Lambda _{R_{1}}^{-1}+\left( \frac{1%
}{p_{1}p_{2}T^{2}}\sum_{t=1}^{T}\R_1\F_{1,t}\C_{1}^{\prime}\E%
_{t}^{\prime}\hat{\R}_1\Lambda _{R_{1}}^{-1}\right) ^{\prime}  \notag
\\
&+\frac{1}{p_{1}p_{2}T^{2}}\sum_{t=1}^{T}\R_{0}\F_{0,t}%
\C_{0}^{\prime}\E_{t}^{\prime}\hat{\R}_1\Lambda
_{R_{1}}^{-1}+\left( \frac{1}{p_{1}p_{2}T^{2}}\sum_{t=1}^{T}\R_{0}%
\F_{0,t}\C_{0}^{\prime}\E_{t}^{\prime}\hat{\R}_1\Lambda _{R_{1}}^{-1}\right) ^{\prime}  \notag \\
&=I+II+III+IV+IV^{\prime}+V+V^{\prime}+VI+VI^{\prime}.  \notag
\end{align}%
Define%
\begin{equation}
\bH_{R_{1}}=\frac{1}{p_{1}T^{2}}\sum_{t=1}^{T}\F_{1,t}\frac{\C_{1}^{\prime}\C_{1}}{p_{2}}\F_{1,t}^{\prime}\R_{1}^{\prime}\hat{\R}_1\Lambda _{R_{1}}^{-1}=\left( \frac{1}{T^{2}}\sum_{t=1}^{T}%
\F_{1,t}\F_{1,t}^{\prime}\right) \left( \frac{\R_{1}^{\prime}\hat{\R}_1}{p_{1}}\right) \Lambda _{R_{1}}^{-1};  \label{h-r}
\end{equation}%
then it is immediate to see that%
\begin{align*}
\left\Vert \bH_{R_{1}}\right\Vert _{F} &\leq\frac{1}{p_{1}T^{2}}%
\left\Vert \R_1\right\Vert _{F}\left\Vert \hat{\R}_1\right\Vert _{F}\left\Vert \sum_{t=1}^{T}\F_{1,t}\F_{1,t}^{\prime}\right\Vert _{F}\left\Vert \Lambda _{R_{1}}^{-1}\right\Vert_{F} \\
&=O_{P}(1)
\end{align*}%
having used the identification restriction $\C_{1}^{\prime}\C_{1}=p_{2}\I_{h_{C_{1}}}$, and the facts that $\left\Vert \R%
\right\Vert _{F}=O\left( p_{1}^{1/2}\right) $, $\left\Vert \hat{\R}_1\right\Vert _{F}=p_{1}^{1/2}\ $by construction, and $\left\Vert \Lambda
_{R_{1}}^{-1}\right\Vert _{F}=O_{P}(1) $, and (\ref{lemma1-ii}) in
Lemma \ref{berkes}. Further (using $h_{R_{1}}=h_{C_{1}}=h_{R_{0}}=h_{C_{0}}=1$)%
\begin{equation*}
\left\Vert II\right\Vert _{F}\leq \frac{1}{p_{1}p_{2}T^{2}}\left\Vert 
\R_{0}\right\Vert _{F}^{2}\left\Vert \C_{0}\right\Vert
_{F}^{2}\left\Vert \hat{\R}_1\right\Vert _{F}\left\Vert \Lambda
_{R_{1}}^{-1}\right\Vert _{F}\left( \sum_{t=1}^{T}\F_{0,t}^{2}\right)
=O_{P}(1) \frac{p_{1}^{1/2}}{T},
\end{equation*}%
having used Assumption \ref{as-4}, and Lemmas \ref{f1-summab} and \ref%
{lambda}. Consider now%
\begin{equation*}
\left\Vert III\right\Vert _{F}\leq \frac{1}{p_{1}p_{2}T^{2}}\lambda _{\max
}\left( \sum_{t=1}^{T}\E_{t}\E_{t}^{\prime}\right)
\left\Vert \hat{\R}_1\right\Vert _{F}\left\Vert \Lambda
_{R_{1}}^{-1}\right\Vert _{F}=O_{P}\left( \frac{1}{p_{1}^{1/2}T}\right)
+O_{P}\left( \frac{p_{1}^{1/2}}{p_{2}^{1/2}T^{3/2}}\right) ,
\end{equation*}%
by Lemma \ref{err}. Similarly%
\begin{align*}
\left\Vert IV\right\Vert _{F} &\leq\frac{1}{p_{1}p_{2}T^{2}}\left\Vert 
\R_1\right\Vert _{F}\left\Vert \R_{0}\right\Vert
_{F}\left\Vert \hat{\R}_1\right\Vert _{F}\left\Vert \Lambda
_{R_{1}}^{-1}\right\Vert _{F}\left\Vert \C_{1}\right\Vert _{F}\left\Vert 
\C_{0}\right\Vert _{F}\left\vert \sum_{t=1}^{T}\F_{1,t}\F_{0,t}^{\prime}\right\vert \\
&=O_{P}\left( T\right) \frac{1}{p_{1}p_{2}T^{2}}%
p_{1}^{1/2}p_{1}^{1/2}p_{1}^{1/2}p_{2}^{1/2}p_{2}^{1/2}=O_{P}\left( \frac{%
p_{1}^{1/2}}{T}\right) ,
\end{align*}%
having used Assumption \ref{as-4} and Lemma \ref{crossf}. The same holds for 
$\left\Vert IV^{\prime}\right\Vert _{F}$. By the same token%
\begin{align*}
\left\Vert V\right\Vert _{F} &\leq\frac{1}{p_{1}p_{2}T^{2}}\left\Vert 
\R_1\right\Vert _{F}\left\Vert \hat{\R}_1\right\Vert
_{F}\left\Vert \Lambda _{R_{1}}^{-1}\right\Vert _{F}\left\Vert \sum_{t=1}^{T}%
\F_{1,t}\C_{1}^{\prime}\E_{t}^{\prime}\right\Vert _{F} \\
&=O_{P}\left( p_{1}^{1/2}p_{2}^{1/2}T\right) \frac{1}{p_{1}p_{2}T^{2}}%
p_{1}^{1/2}p_{1}^{1/2}=O_{P}\left( \frac{p_{1}^{1/2}}{p_{2}^{1/2}T}\right) ,
\end{align*}%
by Assumption \ref{as-4}, the fact that $\left\Vert \sum_{t=1}^{T}\F_{1,t}\C_{1}^{\prime}\E_{t}^{\prime}\right\Vert _{F}=O_{P}\left(
p_{1}^{1/2}p_{2}^{1/2}T\right) $, and Lemma \ref{lambda}; and the same also
holds for $\left\Vert V^{\prime}\right\Vert _{F}$. Finally, using the same
arguments as above, it holds that%
\begin{align*}
\left\Vert VI\right\Vert _{F} &\leq\frac{1}{p_{1}p_{2}T^{2}}\left\Vert 
\R_{0}\right\Vert _{F}\left\Vert \hat{\R}_1\right\Vert
_{F}\left\Vert \Lambda _{R_{1}}^{-1}\right\Vert _{F}\left\Vert \sum_{t=1}^{T}%
\F_{0,t}\C_{0}^{\prime}\E_{t}^{\prime}\right\Vert
_{F} \\
&=O_{P}\left( p_{1}^{1/2}p_{2}^{1/2}T^{1/2}\right) \frac{1}{p_{1}p_{2}T^{2}}%
p_{1}^{1/2}p_{1}^{1/2}=O_{P}\left( \frac{p_{1}^{1/2}}{p_{2}^{1/2}T^{3/2}}%
\right) ,
\end{align*}%
and the same holds for $\left\Vert VI^{\prime}\right\Vert _{F}$. Then,
putting all together, it follows that%
\begin{equation*}
\left\Vert \hat{\R}_1-\R_1\bH_{R_{1}}\right\Vert _{F}=O_{P}\left( 
\frac{p_{1}^{1/2}}{T}\right) .
\end{equation*}%
We conclude the proof by showing that $\left\Vert \left( \bH%
_{R_{1}}\right) ^{-1}\right\Vert _{F}=O_{P}(1) $. Recall that, by
construction, $\hat{\R}_1^{\prime}\hat{\R}_1=p_{1}%
\I_{h_{R_{1}}}$; recall also the identification restriction $\R_{1}^{\prime}\R_{1}=p_{1}\I_{h_{R_{1}}}$; hence%
\begin{align*}
\I_{h_{R_{1}}} &=\frac{1}{p_{1}}\hat{\R}_1^{\prime}\hat{%
\R_1} \\
&=\bH_{R_{1}}^{\prime}\left( \frac{1}{p_{1}}\R_{1}^{\prime}\R_1\right) \bH_{R_{1}}+\frac{1}{p_{1}}\left( \hat{\R}_1-\R_1\bH_{R_{1}}\right) ^{\prime}\hat{\R}_1 \\
&+\frac{1}{p_{1}}\hat{\R}_1^{\prime}\left( \hat{\R}_1-%
\R_1\bH_{R_{1}}\right) +\frac{1}{p_{1}}\left( \hat{\R}_1-\R_1\bH_{R_{1}}\right) ^{\prime}\left( \hat{\R}_1-\R_1\bH_{R_{1}}\right) \\
&=\bH_{R_{1}}^{\prime}\bH_{R_{1}}+I+I^{\prime}+II.
\end{align*}%
Clearly%
\begin{equation*}
\left\Vert I\right\Vert _{F}\leq \frac{1}{p_{1}}\left\Vert \hat{\R}_1\right\Vert _{F}\left\Vert \hat{\R}_1-\R_1\bH_{R_{1}}\right\Vert
_{F}=O_{P}\left( \frac{1}{T}\right) ;
\end{equation*}%
the same holds for $\left\Vert I^{\prime}\right\Vert _{F}$, and, by the
same token, $\left\Vert II\right\Vert _{F}$ is dominated. Hence%
\begin{equation}
\bH_{R_{1}}^{\prime}\bH_{R_{1}}=\I_{h_{R_{1}}}+O_{P}\left( \frac{1%
}{T}\right) .  \label{h-r-orthogonal}
\end{equation}%
Thus, as $\min \left\{ p_{1},p_{2},T\right\} \rightarrow \infty $, $\bH_{R_{1}}$ is an orthogonal matrix, and therefore $\left( \bH_{R_{1}}\right)
^{-1}=\bH_{R_{1}}^{\prime}+o_{P}(1) $. Now $\left\Vert 
\bH_{R_{1}}\right\Vert _{F}=O_{P}(1) $ follows immediately.

We now turn to studyingt the estimator of $\C_1$. Observing that%
\begin{equation*}
\hat{\C}_{1}=\M_{C_1}\hat{\C}_{1}\Lambda _{C_{1}}^{-1},
\end{equation*}%
and that, by Lemma \ref{lambda}, $\left\Vert \Lambda _{C_{1}}^{-1}\right\Vert
=O_{P}(1) $, it holds that%
\begin{align*}
\hat{\C}_{1} &=\frac{1}{p_{1}p_{2}T^{2}}\sum_{t=1}^{T}\C_{1}\F_{1,t}^{\prime}\R_{1}^{\prime}\R_1\F_{1,t}\C_{1}^{\prime}%
\hat{\C}_{1}\Lambda _{C_{1}}^{-1}+\frac{1}{p_{1}p_{2}T^{2}}\sum_{t=1}^{T}%
\C_{0}\F_{0,t}^{\prime}\R_{0}^{\prime}\R_{0}\F_{0,t}\C_{0}^{\prime}\hat{\C}_{1}\Lambda
_{C_{1}}^{-1} \\
&+\frac{1}{p_{1}p_{2}T^{2}}\sum_{t=1}^{T}\E_{t}^{\prime}\E%
_{t}\hat{\C}_{1}\Lambda _{C_{1}}^{-1}+\frac{1}{p_{1}p_{2}T^{2}}%
\sum_{t=1}^{T}\C_{1}\F_{1,t}^{\prime}\R_{1}^{\prime}\R_{0}%
\F_{0,t}\C_{0}^{\prime}\hat{\C}_{1}\Lambda _{C_{1}}^{-1}
\\
&+\left( \frac{1}{p_{1}p_{2}T^{2}}\sum_{t=1}^{T}\C_{1}\F_{1,t}^{\prime}%
\R_{1}^{\prime}\R_{0}\F_{0,t}\C_{0}^{\prime}%
\hat{\C}_{1}\Lambda _{C_{1}}^{-1}\right) ^{\prime} \\
&+\frac{1}{p_{1}p_{2}T^{2}}\sum_{t=1}^{T}\C_{1}\F_{1,t}^{\prime}\R_{1}^{\prime}\E_{t}\hat{\C}_{1}\Lambda _{C_{1}}^{-1}+\left( \frac{1%
}{p_{1}p_{2}T^{2}}\sum_{t=1}^{T}\R_1\F_{1,t}\C_{1}^{\prime}\E%
_{t}^{\prime}\hat{\C}_{1}\Lambda _{C_{1}}^{-1}\right) ^{\prime} \\
&+\frac{1}{p_{1}p_{2}T^{2}}\sum_{t=1}^{T}\C_{0}\F%
_{0,t}^{\prime}\R_{0}^{\prime}\E_{t}\hat{\C}_{1}%
\Lambda _{C_{1}}^{-1}+\left( \frac{1}{p_{1}p_{2}T^{2}}\sum_{t=1}^{T}\R_{0}\F_{0,t}\C_{0}^{\prime}\E_{t}^{\prime}\hat{%
\C}_{1}\Lambda _{C_{1}}^{-1}\right) ^{\prime} \\
&=I+II+III+IV+IV^{\prime}+V+V^{\prime}+VI+VI^{\prime}.
\end{align*}%
Letting 
\begin{equation}
\bH_{C_{1}}=\frac{1}{p_{2}T^{2}}\sum_{t=1}^{T}\F_{1,t}^{\prime}%
\frac{\R_{1}^{\prime}\R_1}{p_{1}}\F_{1,t}\C_{1}^{\prime}\hat{\C}_{1}\Lambda _{C_{1}}^{-1}=\left( \frac{1}{T^{2}}\sum_{t=1}^{T}%
\F_{1,t}^{\prime}\F_{1,t}\right) \left( \frac{\C_{1}^{\prime}\hat{\C}_{1}}{p_{2}}\right) \Lambda _{C_{1}}^{-1},  \label{h-c}
\end{equation}%
the proof proceeds as above, \textit{mutatis
mutandis}.
\end{proof}

\begin{proof}[Proof of Lemma \protect\ref{f-hat-negative}]
Recall that%
\begin{align*}
\hat{\F}_{1,t} &=\frac{1}{p_{1}p_{2}}\hat{\R}_1^{\prime}\X_{t}\hat{\C}_{1} \\
&=\frac{1}{p_{1}p_{2}}\hat{\R}_1^{\prime}\R_1\F_{1,t}\C_{1}^{\prime}\hat{\C}_{1}+\frac{1}{p_{1}p_{2}}\hat{\R}_1%
^{\prime}\R_{0}\F_{0,t}\C_{0}^{\prime}\hat{%
\C}_{1}+\frac{1}{p_{1}p_{2}}\hat{\R}_1^{\prime}\E_{t}%
\hat{\C}_{1} \\
&=I+II+III.
\end{align*}%
We will use the decompositions%
\begin{equation*}
\R_{1}=\R_{1} \pm \hat{\R}_1\left( \bH_{R_{1}}\right) ^{-1},\text{ \ \ and \ \ }\C_{1} = \C_{1} \pm \hat{\C}_{1}\left( \bH_{C_{1}}\right) ^{-1}.
\end{equation*}%
Consider $I$; it holds that
\begin{align*}
I &=\frac{\hat{\R}_1^{\prime}\hat{\R}_1}{p_{1}}\left( \bH_{R_{1}}\right) ^{-1}\F_{1,t}\left(\bH_{C_{1}}^{\prime}\right) ^{-1}\frac{\hat{\C}_{1}^{\prime}\hat{\C}_{1}}{p_{2}}-\frac{\hat{\R}_1^{\prime} \left(\hat{\R}_1\left(\bH_{R_{1}}\right)^{-1}-\R_1\right) }{p_{1}}\F_{1,t}\left( \bH_{C_{1}}^{\prime}\right) ^{-1} \frac{\hat{\C}_{1}^{\prime}\hat{\C}_{1}}{p_{2}} \\
&-\frac{\hat{\R}_1^{\prime}\hat{\R}_1}{p_{1}}\left( \bH_{R_{1}}\right) ^{-1}\F_{1,t}^{\prime}\frac{\left( 
\hat{\C}_{1}\left( \bH_{C_{1}}\right) ^{-1}-\C_{1}\right) ^{\prime}\hat{\C}_{1}}{p_{2}}+\frac{\hat{\R}_1^{\prime}\left( \hat{\R}_1\left( \bH_{R_{1}}\right)^{-1}-\R_1\right) }{p_{1}}\F_{1,t}\frac{\left(\hat{\C}_{1}\left( \bH_{C_{1}}\right)^{-1}-\C_{1}\right) ^{\prime}\hat{\C}_{1}}{p_{2}} \\
&=\left( \bH_{R_{1}}\right) ^{-1}\F_{1,t}\left( \bH_{C_{1}}^{\prime}\right) ^{-1}-I_{a}-I_{b}+I_{c}.
\end{align*}%
By (\ref{lemma1-i}) in Lemma \ref{berkes}, it immediately follows that $%
\left\Vert \F_{1,t}\right\Vert _{F}=O_{P}\left( T^{1/2}\right) $. Hence%
\begin{equation*}
\left\Vert I_{a}\right\Vert _{F}\leq \frac{\left\Vert \hat{\R}_1%
\right\Vert _{F}\left\Vert \hat{\R}_1\left( \bH%
_{R_{1}}\right) ^{-1}-\R_1\right\Vert _{F}}{p_{1}}\left\Vert \F_{1,t}\right\Vert _{F}\left\Vert \left( \bH_{C_{1}}^{\prime}\right)
^{-1}\right\Vert _{F}=O_{P}\left( T^{-1/2}\right) ,
\end{equation*}%
having used Theorem \ref{hat-estimates}; similarly%
\begin{equation*}
\left\Vert I_{b}\right\Vert _{F}\leq \frac{\left\Vert \hat{\C}_{1}%
\right\Vert _{F}\left\Vert \hat{\C}_{1}\left( \bH%
_{C_{1}}\right) ^{-1}-\C_{1}\right\Vert _{F}}{p_{1}}\left\Vert \F_{1,t}\right\Vert _{F}\left\Vert \left( \bH_{R_{1}}\right) ^{-1}\right\Vert
_{F}=O_{P}\left( T^{-1/2}\right) ,
\end{equation*}%
by Theorem \ref{hat-estimates}. By the same token, it is easy to see that $\left\Vert
I_{c}\right\Vert _{F}=O_{P}\left( T^{-3/2}\right) $. Further%
\begin{equation*}
\left\Vert II\right\Vert _{F}\leq \frac{1}{p_{1}p_{2}}\left\Vert \hat{%
\R_1}\right\Vert _{F}\left\Vert \R_{0}\right\Vert
_{F}\left\Vert \C_{0}\right\Vert _{F}\left\Vert \hat{\C}_{1}%
\right\Vert _{F}\left\Vert \F_{0,t}\right\Vert _{F}=O_{P}\left(
1\right) ,
\end{equation*}%
which is a consequence of the fact that, by Assumption \ref{as-2}, $%
\left\Vert \F_{0,t}\right\Vert _{F}=O_{P}(1) $. Finally 
\begin{align*}
&\left\Vert III\right\Vert _{F} \\
&\leq\left\Vert \frac{1}{p_{1}p_{2}}\left( \R_1\bH_{R_{1}}\right) ^{\prime}\E_{t}\C_{1}\bH_{C_{1}}\right\Vert _{F}+\left\Vert \frac{1}{p_{1}p_{2}%
}\left( \hat{\R}_1-\R_1\bH_{R_{1}}\right) ^{\prime}\E_{t}%
\C_{1}\bH_{C_{1}}\right\Vert _{F} \\
&+\left\Vert \frac{1}{p_{1}p_{2}}\left( \R_1\bH_{R_{1}}\right) ^{\prime}%
\E_{t}\left( \hat{\C}_{1}-\C_{1}\bH_{C_{1}}\right) \right\Vert
_{F}+\left\Vert \frac{1}{p_{1}p_{2}}\left( \hat{\R}_1-\R_1\bH%
_{R_{1}}\right) ^{\prime}\E_{t}\left( \hat{\C}_{1}-\C_{1}\bH%
_{C_{1}}\right) \right\Vert _{F} \\
&\leq\frac{1}{p_{1}p_{2}}\left\Vert \R_{1}^{\prime}\E_{t}%
\C_{1}\right\Vert _{F}\left\Vert \bH_{R_{1}}\right\Vert
_{F}\left\Vert \bH_{C_{1}}\right\Vert _{F}+\frac{1}{p_{1}p_{2}}\left\Vert 
\hat{\R}_1-\R_1\bH_{R_{1}}\right\Vert _{F}\left\Vert \E_{t}%
\C_{1}\right\Vert _{F}\left\Vert \bH_{C_{1}}\right\Vert _{F} \\
&+\frac{1}{p_{1}p_{2}}\left\Vert \R_{1}^{\prime}\E%
_{t}\right\Vert _{F}\left\Vert \bH_{R_{1}}\right\Vert _{F}\left\Vert 
\hat{\C}_{1}-\C_{1}\bH_{C_{1}}\right\Vert _{F}+\frac{1}{p_{1}p_{2}}%
\left\Vert \E_{t}\right\Vert _{F}\left\Vert \hat{\R}_1-%
\R_1\bH_{R_{1}}\right\Vert _{F}\left\Vert \hat{\C}_{1}-\C_{1}\bH%
_{C_{1}}\right\Vert _{F} \\
&=O_{P}\left( \frac{1}{p_{1}^{1/2}p_{2}^{1/2}}\right) +O_{P}\left( \frac{1}{%
p_{2}^{1/2}T}\right) +O_{P}\left( \frac{1}{p_{1}^{1/2}T}\right) +O_{P}\left( 
\frac{1}{T^{2}}\right) .
\end{align*}%
This follows because%
\begin{align*}
&E\left\Vert \R_{1}^{\prime}\E_{t}\C_{1}\right\Vert
_{F}^{2} \\
&=E\left( \sum_{i=1}^{p_{1}}\sum_{j=1}^{p_{2}}r_{i}c_{j}e_{ij,t}\right)
^{2}\leq \left( \max_{1\leq i\leq p_{1}}r_{i}^{2}\right) \left( \max_{1\leq
i\leq p_{2}}c_{i}^{2}\right) \sum_{i,i^{\prime}=1}^{p_{1}}\sum_{j,j^{\prime}=1}^{p_{2}}\left\vert E\left( e_{ij,t}e_{i^{\prime}j^{\prime},t}\right)
\right\vert \leq c_{0}p_{1}p_{2},
\end{align*}%
by Assumption \ref{as-3}\textit{(ii)}(e); also%
\begin{align*}
&E\left\Vert \E_{t}\C_{1}\right\Vert _{F}^{2} \\
&=E\left( \sum_{i=1}^{p_{1}}\left( \sum_{j=1}^{p_{2}}c_{j}e_{ij,t}\right)
^{2}\right) \leq \left( \max_{1\leq i\leq p_{2}}c_{i}^{2}\right)
\sum_{i=1}^{p_{1}}\sum_{j,j^{\prime}=1}^{p_{2}}\left\vert E\left(
e_{ij,t}e_{ij^{\prime},t}\right) \right\vert \leq c_{0}p_{1}p_{2},
\end{align*}%
by Assumption \ref{as-3}\textit{(ii)}(c); further, we have $E\left\Vert 
\R_{1}^{\prime}\E_{t}\right\Vert _{F}^{2}\leq c_{0}p_{1}p_{2}$,
by Assumption \ref{as-3}\textit{(ii)}(b) and the same arguments as above;
and, finally, we also have $E\left\Vert \E_{t}\right\Vert _{F}^{2}$ $%
=$ $\sum_{i=1}^{p_{1}}\sum_{j=1}^{p_{2}}E\left( e_{ij,t}^{2}\right) $ $\leq $
$c_{0}p_{1}p_{2}$.
\end{proof}

\begin{proof}[Proof of Lemma \protect\ref{proj-negative}]
We study the estimator of $\R_1$ first. Some of the arguments in the proof are
based on repeating some of the passages above, and we therefore omit them
for brevity. It holds that 
\begin{align}
\hat{\R}_1^{\dagger} &=\frac{1}{p_{1}p_{2}^{2}T^{2}}%
\sum_{t=1}^{T}\R_1\F_{1,t}\C_{1}^{\prime}\hat{\C}_{1}%
\hat{\C}_{1}^{\prime}\C_{1}\F_{1,t}^{\prime}\R_{1}^{\prime}%
\hat{\R}_1^{\dagger}\left( \Lambda _{R_{1}}^{\dagger}\right) ^{-1}
\label{r-hat-dagger} \\
&+\frac{1}{p_{1}p_{2}^{2}T^{2}}\sum_{t=1}^{T}\R_{0}\F_{0,t}%
\C_{0}^{\prime}\hat{\C}_{1}\hat{\C}_{1}^{\prime}%
\C_{0}\F_{0,t}^{\prime}\R_{0}^{\prime}\hat{%
\R_1}^{\dagger}\left( \Lambda _{R_{1}}^{\dagger}\right) ^{-1}  \notag \\
&+\frac{1}{p_{1}p_{2}^{2}T^{2}}\sum_{t=1}^{T}\E_{t}\hat{\C}_{1}\hat{\C}_{1}^{\prime}\E_{t}^{\prime}\hat{\R}_1%
^{\dagger}\left( \Lambda _{R_{1}}^{\dagger}\right) ^{-1}  \notag \\
&+\frac{1}{p_{1}p_{2}^{2}T^{2}}\sum_{t=1}^{T}\R_1\F_{1,t}\C_{1}^{\prime}\hat{\C}_{1}\hat{\C}_{1}^{\prime}\E%
_{t}^{\prime}\hat{\R}_1^{\dagger}\left( \Lambda _{R_{1}}^{\dagger}\right) ^{-1}  \notag \\
&+\left( \frac{1}{p_{1}p_{2}^{2}T^{2}}\sum_{t=1}^{T}\R_1\F_{1,t}\C_{1}^{\prime}\hat{\C}_{1}\hat{\C}_{1}^{\prime}\E%
_{t}^{\prime}\hat{\R}_1^{\dagger}\left( \Lambda _{R_{1}}^{\dagger}\right) ^{-1}\right) ^{\prime}  \notag \\
&+\frac{1}{p_{1}p_{2}^{2}T^{2}}\sum_{t=1}^{T}\R_1\F_{1,t}\C_{1}^{\prime}\hat{\C}_{1}\hat{\C}_{1}^{\prime}\C_{0}%
\F_{0,t}^{\prime}\R_{0}^{\prime}\hat{\R}_1%
^{\dagger}\left( \Lambda _{R_{1}}^{\dagger}\right) ^{-1}  \notag \\
&+\left( \frac{1}{p_{1}p_{2}^{2}T^{2}}\sum_{t=1}^{T}\R_1\F_{1,t}\C_{1}^{\prime}\hat{\C}_{1}\hat{\C}_{1}^{\prime}\C_{0}%
\F_{0,t}^{\prime}\R_{0}^{\prime}\hat{\R}_1%
^{\dagger}\left( \Lambda _{R_{1}}^{\dagger}\right) ^{-1}\right) ^{\prime} 
\notag \\
&+\frac{1}{p_{1}p_{2}^{2}T^{2}}\sum_{t=1}^{T}\R_{0}\F_{0,t}%
\C_{0}^{\prime}\hat{\C}_{1}\hat{\C}_{1}^{\prime}%
\E_{t}^{\prime}\hat{\R}_1^{\dagger}\left( \Lambda
_{R_{1}}^{\dagger}\right) ^{-1}  \notag \\
&+\left( \frac{1}{p_{1}p_{2}^{2}T^{2}}\sum_{t=1}^{T}\R_{0}\F%
_{0,t}\C_{0}^{\prime}\hat{\C}_{1}\hat{\C}_{1}%
^{\prime}\E_{t}^{\prime}\hat{\R}_1^{\dagger}\left(
\Lambda _{R_{1}}^{\dagger}\right) ^{-1}\right) ^{\prime}  \notag \\
&=I+II+III+IV+IV^{\prime}+V+V^{\prime}+VI+VI^{\prime}.  \notag
\end{align}%
We have%
\begin{equation*}
I=\R_1\bH_{R_{1}}^{\dagger},
\end{equation*}%
having defined%
\begin{equation*}
\bH_{R_{1}}^{\dagger}=\frac{1}{T^{2}}\sum_{t=1}^{T}\F_{1,t}\left( 
\frac{\C_{1}^{\prime}\hat{\C}_{1}}{p_{2}}\right) \left( \frac{%
\C_{1}^{\prime}\hat{\C}_{1}}{p_{2}}\right) ^{\prime}\F_{1,t}^{\prime}\frac{\R_{1}^{\prime}\hat{\R}_1^{\dagger}}{%
p_{1}}\left( \Lambda _{R_{1}}^{\dagger}\right) ^{-1}.
\end{equation*}%
Lemma \ref{spec-eig-mRx} entails that $\left\Vert \left( \Lambda
_{R_{1}}^{\dagger}\right) ^{-1}\right\Vert =O_{P}(1) $; hence%
\begin{align*}
\left\Vert \bH_{R_{1}}^{\dagger}\right\Vert _{F} &\leq\left\Vert \frac{%
1}{T^{2}}\sum_{t=1}^{T}\F_{1,t}\left( \frac{\C_{1}^{\prime}%
\hat{\C}_{1}}{p_{2}}\right) \left( \frac{\C_{1}^{\prime}%
\hat{\C}_{1}}{p_{2}}\right) ^{\prime}\F_{1,t}^{\prime}\right\Vert _{F}\left\Vert \frac{\R_{1}^{\prime}\hat{\R}_1%
^{\dagger}}{p_{1}}\right\Vert _{F}\left\Vert \left( \Lambda _{R_{1}}^{\dagger}\right) ^{-1}\right\Vert _{F} \\
&\leq\frac{1}{T^{2}}\sum_{t=1}^{T}\left\Vert \F_{1,t}\right\Vert
_{F}^{2}\left\Vert \frac{\C_{1}^{\prime}\hat{\C}_{1}}{p_{2}}%
\right\Vert _{F}^{2}\left\Vert \frac{\R_{1}^{\prime}\hat{\R}_1%
^{\dagger}}{p_{1}}\right\Vert _{F}\left\Vert \left( \Lambda _{R_{1}}^{\dagger}\right) ^{-1}\right\Vert _{F}=O_{P}(1) .
\end{align*}%
Further, recalling that $h_{R_{1}}=h_{C_{1}}=h_{R_{0}}=h_{C_{0}}=1$%
\begin{align*}
\left\Vert II\right\Vert _{F} &\leq\frac{1}{p_{1}p_{2}^{2}T^{2}}\left\Vert 
\R_{0}\right\Vert _{F}\left\Vert \R_{0}\right\Vert
_{F}\left\Vert \hat{\R}_1^{\dagger}\right\Vert _{F}\left\Vert 
\hat{\C}_{1}\right\Vert _{F}^{2}\left\Vert \C_{0}\right\Vert
_{F}^{2}\left( \sum_{t=1}^{T}\F_{0,t}^{2}\right) \left\Vert \left(
\Lambda _{R_{1}}^{\dagger}\right) ^{-1}\right\Vert _{F} \\
&=\frac{1}{p_{1}p_{2}^{2}T^{2}}p_{1}^{3/2}p_{2}^{2}O_{P}\left( T\right)
=O_{P}\left( \frac{p_{1}^{1/2}}{T}\right) .
\end{align*}%
Moreover, 
\begin{align*}
III &=\frac{1}{p_{1}p_{2}^{2}T^{2}}\sum_{t=1}^{T}\E_{t}\C_{1}\bH%
_{C_{1}}\bH_{C_{1}}^{\prime}\C_{1}^{\prime}\E_{t}^{\prime}%
\hat{\R}_1^{\dagger}\left( \Lambda _{R_{1}}^{\dagger}\right) ^{-1} \\
&+\frac{1}{p_{1}p_{2}^{2}T^{2}}\sum_{t=1}^{T}\E_{t}\left( \hat{%
\C}_{1}-\C_{1}\bH_{C_{1}}\right) \bH_{C_{1}}^{\prime}\C_{1}^{\prime}\E_{t}^{\prime}\hat{\R}_1^{\dagger}\left(
\Lambda _{R_{1}}^{\dagger}\right) ^{-1} \\
&+\frac{1}{p_{1}p_{2}^{2}T^{2}}\sum_{t=1}^{T}\E_{t}\C_{1}\bH%
_{C_{1}}\left( \hat{\C}_{1}-\C_{1}\bH_{C_{1}}\right) ^{\prime}\E%
_{t}^{\prime}\hat{\R}_1^{\dagger}\left( \Lambda _{R_{1}}^{\dagger}\right) ^{-1} \\
&+\frac{1}{p_{1}p_{2}^{2}T^{2}}\sum_{t=1}^{T}\E_{t}\left( \hat{%
\C}_{1}-\C_{1}\bH_{C_{1}}\right) \left( \hat{\C}_{1}-\C_{1}\bH%
_{C_{1}}\right) ^{\prime}\E_{t}^{\prime}\hat{\R}_1^{\dagger}\left( \Lambda _{R_{1}}^{\dagger}\right) ^{-1} \\
&=III_{a}+III_{b}+III_{b}^{\prime}+III_{c}.
\end{align*}%
Under $h_{R_{1}}=h_{C_{1}}=h_{R_{0}}=h_{C_{0}}=1$, $\bH_{C_{1}}$ is a random sign under
our identification restrictions, so we will omit it; it holds that%
\begin{align*}
\left\Vert III_{a}\right\Vert _{F} &\leq\lambda _{\max }\left( \frac{1}{%
p_{1}p_{2}^{2}T^{2}}\sum_{t=1}^{T}\E_{t}\C_{1}\C_{1}^{\prime}\E_{t}^{\prime}\right) \left\Vert \hat{\R}_1^{\dagger}\right\Vert _{F}\left\Vert \left( \Lambda _{R_{1}}^{\dagger}\right)
^{-1}\right\Vert _{F} \\
&\leq p_{1}^{1/2}\lambda _{\max }\left( \frac{1}{p_{1}p_{2}^{2}T^{2}}%
\sum_{t=1}^{T}\E_{t}\C_{1}\C_{1}^{\prime}\E_{t}^{\prime}\right) .
\end{align*}%
Also%
\begin{align*}
&\lambda _{\max }\left( \frac{1}{p_{1}p_{2}^{2}T^{2}}\sum_{t=1}^{T}\E_{t}\C_{1}\C_{1}^{\prime}\E_{t}^{\prime}\right) \leq \lambda
_{\max }\left( \frac{1}{p_{1}p_{2}^{2}T^{2}}\sum_{t=1}^{T}E\left( \E%
_{t}\C_{1}\C_{1}^{\prime}\E_{t}^{\prime}\right) \right) \\
&+\lambda _{\max }\left( \frac{1}{p_{1}p_{2}^{2}T^{2}}\sum_{t=1}^{T}\left( 
\E_{t}\C_{1}\C_{1}^{\prime}\E_{t}^{\prime}-E\left( \E_{t}\C_{1}\C_{1}^{\prime}\E_{t}^{\prime}\right) \right) \right) .
\end{align*}%
It holds that%
\begin{align*}
&\lambda _{\max }\left( \frac{1}{p_{1}p_{2}^{2}T^{2}}\sum_{t=1}^{T}E\left( 
\E_{t}\C_{1}\C_{1}^{\prime}\E_{t}^{\prime}\right) \right) \\
&\leq\frac{1}{p_{1}p_{2}^{2}T}\max_{1\leq h\leq
p_{1}}\sum_{k=1}^{p_{1}}\left\vert E\left(
\sum_{j=1}^{p_{2}}c_{j}e_{hj,t}\right) \left(
\sum_{j=1}^{p_{2}}c_{j}e_{kj,t}\right) \right\vert \\
&\leq c_{0}\frac{1}{p_{1}p_{2}^{2}T}\max_{1\leq h\leq
p_{1}}\sum_{k=1}^{p_{1}}\sum_{h,j=1}^{p_{2}}\left\vert E\left(
e_{hj,t}e_{kj,t}\right) \right\vert \leq c_{1}\frac{1}{p_{2}T};
\end{align*}%
also%
\begin{align*}
&\left\Vert \frac{1}{p_{1}p_{2}^{2}T^{2}}\sum_{t=1}^{T}\left( \E_{t}%
\C_{1}\C_{1}^{\prime}\E_{t}^{\prime}-E\left( \E_{t}\C_{1}\C_{1}^{\prime}\E_{t}^{\prime}\right) \right) \right\Vert _{F} \\
&=\frac{1}{p_{1}p_{2}^{2}T^{2}}\left( \sum_{i,j=1}^{p_{1}}\left(
\sum_{t=1}^{T}\left( \sum_{h=1}^{p_{2}}c_{h}\left( e_{ih,t}-E\left(
e_{ih,t}\right) \right) \right) \left( \sum_{k=1}^{p_{2}}c_{k}\left(
e_{ik,t}-E\left( e_{ik,t}\right) \right) \right) \right) ^{2}\right) ^{1/2},
\end{align*}%
and%
\begin{align*}
&E\sum_{i,j=1}^{p_{1}}\left( \sum_{t=1}^{T}\left(
\sum_{h=1}^{p_{2}}c_{h}\left( e_{ih,t}-E\left( e_{ih,t}\right) \right)
\right) \left( \sum_{k=1}^{p_{2}}c_{k}\left( e_{ik,t}-E\left(
e_{ik,t}\right) \right) \right) \right) ^{2} \\
&\leq\left( \max_{1\leq h\leq p_{2}}\left\vert c_{h}\right\vert \right)
^{4}\sum_{i,j=1}^{p_{1}}\sum_{h_{1},h_{2},h_{3},h_{4}=1}^{p_{2}}%
\sum_{t,s=1}^{T}\left\vert \cov\left(
e_{ih_{1},t}e_{jh_{2},t},e_{ih_{3},s}e_{jh_{4},s}\right) \right\vert \leq
c_{0}p_{1}^{2}p_{2}^{3}T,
\end{align*}%
whence%
\begin{equation*}
\left\Vert \frac{1}{p_{1}p_{2}^{2}T^{2}}\sum_{t=1}^{T}\left( \E_{t}%
\C_{1}\C_{1}^{\prime}\E_{t}^{\prime}-E\left( \E_{t}\C_{1}\C_{1}^{\prime}\E_{t}^{\prime}\right) \right) \right\Vert
_{F}=O_{P}\left( \frac{1}{p_{2}^{1/2}T^{3/2}}\right) .
\end{equation*}%
Hence%
\begin{equation*}
\lambda _{\max }\left( \frac{1}{p_{1}p_{2}^{2}T^{2}}\sum_{t=1}^{T}\E%
_{t}\C_{1}\C_{1}^{\prime}\E_{t}^{\prime}\right) =O\left( \frac{1}{%
p_{2}T}\right) +O_{P}\left( \frac{1}{p_{2}^{1/2}T^{3/2}}\right) ,
\end{equation*}%
which in turn entails that%
\begin{equation*}
\left\Vert III_{a}\right\Vert _{F}=O\left( \frac{p_{1}^{1/2}}{p_{2}T}\right)
+O_{P}\left( \frac{p_{1}^{1/2}}{p_{2}^{1/2}T^{3/2}}\right) .
\end{equation*}%
Following the proof of Lemma C.5 in \citet{he2023one}, it can be shown that $%
III_{b}$ and $III_{c}$ are both dominated by $III_{a}$. We also have (recall
that we are assuming $h_{R_{1}}=h_{C_{1}}=1$, and that therefore $\bH_{C_{1}}$ is
a random sign)%
\begin{align*}
IV &=\frac{1}{p_{1}p_{2}^{2}T^{2}}\R_1\C_{1}^{\prime}\hat{\C}_{1}%
\sum_{t=1}^{T}\F_{1,t}\left( \C_{1}\bH_{C_{1}}\right) ^{\prime}\E_{t}^{\prime}\hat{\R}_1^{\dagger}\left( \Lambda _{R_{1}}^{\dagger}\right) ^{-1} \\
&+\frac{1}{p_{1}p_{2}^{2}T^{2}}\R_1\C_{1}^{\prime}\hat{\C}_{1}%
\sum_{t=1}^{T}\F_{1,t}\left( \hat{\C}_{1}-\C_{1}\bH%
_{C_{1}}\right) ^{\prime}\E_{t}^{\prime}\hat{\R}_1^{\dagger}\left( \Lambda _{R_{1}}^{\dagger}\right) ^{-1} \\
&=IV_{a}+IV_{b}.
\end{align*}%
Consider now%
\begin{equation*}
\left\Vert \sum_{t=1}^{T}\F_{1,t}\C_{1}^{\prime}\E%
_{t}^{\prime}\right\Vert _{F}=\left( \sum_{i=1}^{p_{1}}\left\vert
\sum_{h=1}^{p_{2}}\sum_{t=1}^{T}\F_{1,t}c_{h}e_{ih,t}\right\vert
^{2}\right) ^{1/2},
\end{equation*}%
with%
\begin{align*}
&E\sum_{i=1}^{p_{1}}\left\vert \sum_{h=1}^{p_{2}}\sum_{t=1}^{T}\F_{1,t}c_{h}e_{ih,t}\right\vert ^{2} \\
&=E\sum_{i=1}^{p_{1}}\sum_{h,k=1}^{p_{2}}\sum_{t,s=1}^{T}c_{h}c_{k}\F_{1,t}\F_{1,s}e_{ih,t}e_{ik,s}=\sum_{i=1}^{p_{1}}\sum_{h,k=1}^{p_{2}}%
\sum_{t,s=1}^{T}c_{h}c_{k}E\left( \F_{1,t}\F_{1,s}\right)
E\left( e_{ih,t}e_{ik,s}\right) \\
&\leq\max_{1\leq h\leq
p_{2}}c_{h}^{2}\sum_{i=1}^{p_{1}}\sum_{h,k=1}^{p_{2}}\sum_{t,s=1}^{T}\left%
\vert E\left( \F_{1,t}\F_{1,s}\right) \right\vert \left\vert
E\left( e_{ih,t}e_{ik,s}\right) \right\vert \\
&\leq c_{0}\sum_{i=1}^{p_{1}}\sum_{h,k=1}^{p_{2}}\sum_{t,s=1}^{T}\left\vert
E\left( \F_{1,t}^{2}\right) E\left( \F_{1,s}^{2}\right)
\right\vert ^{1/2}\left\vert E\left( e_{ih,t}e_{ik,s}\right) \right\vert \\
&\leq c_{1}T\sum_{i=1}^{p_{1}}\sum_{h,k=1}^{p_{2}}\sum_{t,s=1}^{T}\left\vert
E\left( e_{ih,t}e_{ik,s}\right) \right\vert \leq c_{2}p_{1}p_{2}T^{2},
\end{align*}%
whence%
\begin{equation}
\left\Vert \sum_{t=1}^{T}\F_{1,t}\C_{1}^{\prime}\E%
_{t}^{\prime}\right\Vert _{F}=O_{P}\left( p_{1}^{1/2}p_{2}^{1/2}T\right) .
\label{frob-new}
\end{equation}%
Hence it follows that 
\begin{equation*}
\left\Vert IV_{a}\right\Vert _{F}\leq \frac{1}{p_{1}p_{2}^{2}T^{2}}%
\left\Vert \R_1\right\Vert _{F}\left\Vert \hat{\R}_1%
^{\dagger}\right\Vert _{F}\left\Vert \C_{1}\right\Vert _{F}\left\Vert 
\hat{\C}_{1}\right\Vert _{F}\left\Vert \left( \Lambda _{R_{1}}^{\dagger}\right) ^{-1}\right\Vert _{F}\left\Vert \sum_{t=1}^{T}\F_{1,t}\C_{1}^{\prime}\E_{t}^{\prime}\right\Vert _{F}=O_{P}\left( \frac{%
p_{1}^{1/2}}{p_{2}^{1/2}T}\right) ;
\end{equation*}%
also%
\begin{align*}
&\left\Vert IV_{b}\right\Vert _{F} \\
&\leq\frac{1}{p_{1}p_{2}^{2}T^{2}}\left\Vert \R_1\right\Vert
_{F}\left\Vert \hat{\R}_1^{\dagger}\right\Vert _{F}\left\Vert 
\C_{1}\right\Vert _{F}\left\Vert \hat{\C}_{1}\right\Vert
_{F}\left\Vert \left( \Lambda _{R_{1}}^{\dagger}\right) ^{-1}\right\Vert
_{F}\left\Vert \hat{\C}_{1}-\C_{1}\bH_{C_{1}}\right\Vert
_{F}\left\Vert \sum_{t=1}^{T}\F_{1,t}\E_{t}^{\prime}\right\Vert _{F}=O_{P}\left( \frac{p_{1}^{1/2}}{T^{2}}\right) ;
\end{align*}%
thus%
\begin{equation*}
\left\Vert IV\right\Vert _{F}=O_{P}\left( \frac{p_{1}^{1/2}}{T^{2}}\right) .
\end{equation*}%
Similarly, it is not hard to see that%
\begin{align*}
\left\Vert V\right\Vert _{F} &\leq\frac{1}{p_{1}p_{2}^{2}T^{2}}\left\Vert 
\R_1\right\Vert _{F}\left\Vert \R_{0}\right\Vert
_{F}\left\Vert \hat{\R}_1^{\dagger}\right\Vert _{F}\left\Vert 
\C_{1}\right\Vert _{F}\left\Vert \C_{0}\right\Vert
_{F}\left\Vert \hat{\C}_{1}\right\Vert _{F}^{2}\left\vert
\sum_{t=1}^{T}\F_{1,t}\F_{0,t}\right\vert \left\Vert \left(
\Lambda _{R_{1}}^{\dagger}\right) ^{-1}\right\Vert _{F} \\
&=O_{P}(1) \frac{1}{p_{1}p_{2}^{2}T^{2}}%
p_{1}^{3/2}p_{2}^{2}T=O_{P}\left( \frac{p_{1}^{1/2}}{T^{2}}\right) ,
\end{align*}%
and 
\begin{align*}
\left\Vert VI\right\Vert _{F} &\leq\frac{1}{p_{1}p_{2}^{2}T^{2}}\left\Vert 
\R_{0}\right\Vert _{F}\left\Vert \hat{\R}_1^{\dagger}\right\Vert _{F}\left\Vert \C_{0}\right\Vert _{F}\left\Vert 
\hat{\C}_{1}\right\Vert _{F}\left\Vert \sum_{t=1}^{T}\F_{0,t}%
\hat{\C}_{1}^{\prime}\E_{t}^{\prime}\right\Vert
_{F}\left\Vert \left( \Lambda _{R_{1}}^{\dagger}\right) ^{-1}\right\Vert _{F} \\
&=O_{P}\left( \left( p_{1}p_{2}T\right) ^{1/2}+\frac{\left( p_{1}T\right)
^{1/2}p_{2}}{T}\right) \frac{1}{p_{1}p_{2}^{2}T^{2}}p_{1}p_{2} \\
&=O_{P}\left( \frac{p_{1}^{1/2}}{p_{2}^{1/2}T^{3/2}}\right) +O_{P}\left( 
\frac{p_{1}^{1/2}}{T^{5/2}}\right) .
\end{align*}%
The desired result now follows.

The proof of the other result follows from the same arguments as that of
Lemma \ref{eig-mRx}, upon\ defining%
\begin{equation*}
\bH_{C_{1}}^{\dagger}=\frac{1}{T^{2}}\sum_{t=1}^{T}\F_{1,t}^{\prime}\left( \frac{\R_{1}^{\prime}\hat{\R}_1}{p_{1}}%
\right) \left( \frac{\R_{1}^{\prime}\hat{\R}_1}{p_{1}}\right)
^{\prime}\F_{1,t}\frac{\C_{1}^{\prime}\hat{\C}_{1}%
^{\dagger}}{p_{2}}\left( \Lambda _{C_{1}}^{\dagger}\right) ^{-1}.
\end{equation*}
\end{proof}

\begin{proof}[Proof of Theorem \protect\ref{rc1-hat}]
Some arguments are similar to the proof of Lemma \ref{eig-orth-r}, and we
therefore omit them when possible. We begin by showing (\ref{r1-hat}); by
definition, it holds that%
\begin{equation*}
\hat{\R}_{0}=\M_{X}^{R_{1},\perp}\hat{\R}%
_{0}\Lambda _{R_{0}}^{-1},
\end{equation*}%
whence%
\begin{eqnarray}
&&\hat{\R}_{0}  \label{r1-hat-dec} \\
&=&\frac{1}{p_{1}p_{2}^{2}T}\sum_{t=1}^{T}\R_{1}\F_{1,t}%
\C_{1}^{\prime}\hat{\C}_{1,\perp}\hat{\C}%
_{1,\perp}^{\prime}\C_{1}\F_{1,t}^{\prime}\R%
_{1}^{\prime}\hat{\R}_{0}\Lambda _{R_{0}}^{-1}  \notag \\
&&+\frac{1}{p_{1}p_{2}^{2}T}\sum_{t=1}^{T}\R_{0}\F_{0,t}%
\C_{0}^{\prime}\hat{\C}_{1,\perp}\hat{\C}%
_{1,\perp}^{\prime}\C_{0}\F_{0,t}^{\prime}\R%
_{0}^{\prime}\hat{\R}_{0}\Lambda _{R_{0}}^{-1}  \notag \\
&&+\frac{1}{p_{1}p_{2}^{2}T}\sum_{t=1}^{T}\E_{t}\hat{\C}%
_{1,\perp}\hat{\C}_{1,\perp}^{\prime}\E_{t}^{\prime}%
\hat{\R}_{0}\Lambda _{R_{0}}^{-1}+\frac{1}{p_{1}p_{2}^{2}T}%
\sum_{t=1}^{T}\R_{1}\F_{1,t}\C_{1}^{\prime}\hat{%
\C}_{1,\perp}\hat{\C}_{1,\perp}^{\prime}\E%
_{t}^{\prime}\hat{\R}_{0}\Lambda _{R_{0}}^{-1}  \notag \\
&&+\left( \frac{1}{p_{1}p_{2}^{2}T}\sum_{t=1}^{T}\R_{1}\F%
_{1,t}\C_{1}^{\prime}\hat{\C}_{1,\perp}\hat{%
\C}_{1,\perp}^{\prime}\E_{t}^{\prime}\right) ^{\prime}%
\hat{\R}_{0}\Lambda _{R_{0}}^{-1}+\frac{1}{p_{1}p_{2}^{2}T}%
\sum_{t=1}^{T}\R_{1}\F_{1,t}\C_{1}^{\prime}\hat{%
\C}_{1,\perp}\hat{\C}_{1,\perp}^{\prime}\C_{0}%
\F_{0,t}^{\prime}\R_{0}^{\prime}\hat{\R}%
_{0}\Lambda _{R_{0}}^{-1}  \notag \\
&&+\left( \frac{1}{p_{1}p_{2}^{2}T}\sum_{t=1}^{T}\R_{1}\F%
_{1,t}\C_{1}^{\prime}\hat{\C}_{1,\perp}\hat{%
\C}_{1,\perp}^{\prime}\C_{0}\F_{0,t}^{\prime}%
\R_{0}^{\prime}\right) ^{\prime}\hat{\R}_{0}\Lambda
_{R_{0}}^{-1}  \notag \\
&&+\frac{1}{p_{1}p_{2}^{2}T}\sum_{t=1}^{T}\R_{0}\F_{0,t}%
\C_{0}^{\prime}\hat{\C}_{1,\perp}\hat{\C}%
_{\perp}^{\prime}\E_{t}^{\prime}\hat{\R}_{0}\Lambda
_{R_{0}}^{-1}+\left( \frac{1}{p_{1}p_{2}^{2}T}\sum_{t=1}^{T}\R_{0}%
\F_{0,t}\C_{0}^{\prime}\hat{\C}_{\perp}%
\hat{\C}_{1,\perp}^{\prime}\E_{t}^{\prime}\right)
^{\prime}\hat{\R}_{0}\Lambda _{R_{0}}^{-1}  \notag \\
&=&I+II+III+IV+IV^{\prime}+V+V^{\prime}+VI+VI^{\prime}.  \notag
\end{eqnarray}%
We begin by noting that, by Lemma \ref{eig-orth-r}, $\left\Vert \Lambda
_{R_{0}}^{-1}\right\Vert _{R_{1}}=O_{P}(1) $. Consider $II$ first%
\begin{eqnarray*}
II &=&\frac{1}{p_{1}p_{2}^{2}T}\sum_{t=1}^{T}\R_{0}\F_{0,t}%
\C_{0}^{\prime}\hat{\C}_{1,\perp}\C_{0}\F_{0,t}^{\prime}\R_{0}^{\prime}\hat{\R}_{0}\Lambda
_{R_{0}}^{-1} \\
&=&\R_{0}\bH_{R_{0}}+\frac{1}{p_{1}p_{2}^{2}T}\sum_{t=1}^{T}%
\R_{0}\F_{0,t}\C_{0}^{\prime}\left( \hat{%
\C}_{1,\perp}-\C_{1,\perp}\right) ^{\prime}\C_{0}%
\F_{0,t}^{\prime}\R_{0}^{\prime}\hat{\R}%
_{0}\Lambda _{R_{0}}^{-1} \\
&=&II_{a}+II_{b},
\end{eqnarray*}%
where we have defined%
\begin{equation*}
\bH_{R_{0}}=\left( \frac{1}{T}\sum_{t=1}^{T}\F_{0,t}\frac{%
\C_{0}^{\prime}\C_{1,\perp}\C_{0}}{p_{2}^{2}}%
\F_{0,t}^{\prime}\right) \frac{\R_{0}^{\prime}\hat{%
\R}_{0}}{p_{1}}\Lambda _{R_{0}}^{-1}
\end{equation*}%
By similar arguments as in the above, it is easy to see that%
\begin{equation*}
\left\Vert \bH_{R_{0}}\right\Vert _{F}\leq \left( \frac{1}{T}%
\sum_{t=1}^{T}\left\Vert \F_{0,t}\right\Vert _{F}^{2}\right) \left( 
\frac{\left\Vert \C_{0}^{\prime}\C_{1,\perp}\C%
_{0}\right\Vert _{F}}{p_{2}}\right) ^{2}\frac{\left\Vert \R%
_{0}\right\Vert _{F}\left\Vert \hat{\R}_{0}\right\Vert _{F}}{%
p_{1}}\left\Vert \Lambda _{R_{0}}^{-1}\right\Vert _{F}=O_{P}(1) .
\end{equation*}%
Further%
\begin{eqnarray*}
\left\Vert II_{b}\right\Vert _{F} &\leq &\frac{1}{p_{1}p_{2}^{2}T}\left\Vert 
\R_{0}\right\Vert _{F}^{2}\left\Vert \hat{\R}%
_{0}\right\Vert _{F}\sum_{t=1}^{T}\left\Vert \F_{0,t}\right\Vert
_{F}^{2}\left\Vert \C_{0}\right\Vert _{F}^{2}\left\Vert \hat{%
\C}_{1,\perp}-\C_{1,\perp}\right\Vert _{F}\left\Vert
\Lambda _{R_{0}}^{-1}\right\Vert _{F} \\
&=&O_{P}(1) \frac{1}{p_{1}p_{2}^{2}T}p_{1}p_{1}^{1/2}Tp_{2}\frac{%
1}{T}=O_{P}\left( \frac{p_{1}^{1/2}}{p_{2}T}\right) ,
\end{eqnarray*}%
so that ultimately%
\begin{equation*}
II=\R_{0}\bH_{R_{0}}+O_{P}\left( \frac{p_{1}^{1/2}}{p_{2}T}%
\right) .
\end{equation*}%
We also note that Assumption \ref{as-4}\textit{(ii)}(b) $\C%
_{0}^{\prime}\C_{1,\perp}\neq 0$, together with Assumption \ref%
{as-2}\textit{(ii)} entails that $\bH_{R_{0}}$ has full rank $%
h_{R_{1}}$. We now consider the other terms, starting from%
\begin{equation*}
I=\frac{1}{p_{1}p_{2}^{2}T}\sum_{t=1}^{T}\R_{1}\F_{1,t}%
\C_{1}^{\prime}\left( \hat{\C}_{1,\perp}-\C%
_{1,\perp}\right) \left( \hat{\C}_{1,\perp}-\C%
_{1,\perp}\right) ^{\prime}\C_{1}\F_{1,t}^{\prime}\R_{1}^{\prime}\hat{\R}_{0}\Lambda _{R_{0}}^{-1},
\end{equation*}%
having used (\ref{orth-prod-c}) in the first passage, whence%
\begin{eqnarray*}
\left\Vert I\right\Vert _{F} &\leq &\left\Vert \R_{1}\right\Vert
_{F}\left( \frac{1}{T}\sum_{t=1}^{T}\left\Vert \F_{1,t}\right\Vert
_{F}^{2}\right) \frac{\left\Vert \C_{1}\right\Vert
_{F}^{2}\left\Vert \hat{\C}_{1,\perp}-\C_{1,\perp
}\right\Vert _{F}^{2}}{p_{2}^{2}}\frac{\left\Vert \R_{0}\right\Vert
_{F}\left\Vert \hat{\R}_{0}\right\Vert _{F}}{p_{1}}\left\Vert
\Lambda _{R_{0}}^{-1}\right\Vert _{F} \\
&=&p_{1}^{1/2}O_{P}\left( \frac{1}{p_{2}T}\right) =O_{P}\left( \frac{%
p_{1}^{1/2}}{p_{2}T}\right) .
\end{eqnarray*}%
Similarly, using (\ref{3F-error})%
\begin{eqnarray}
\left\Vert III\right\Vert _{F} &=&\left\Vert \frac{1}{p_{1}p_{2}^{2}T}%
\sum_{t=1}^{T}\E_{t}\hat{\C}_{1,\perp}\E%
_{t}^{\prime}\hat{\R}_{0}\Lambda _{R_{0}}^{-1}\right\Vert _{F}
\label{3f1} \\
&=&\left( O_{P}\left( \frac{1}{p_{1}p_{2}}\right) +O_{P}\left( \frac{1}{%
p_{2}^{1/2}T^{1/2}}\right) \right) p_{1}^{1/2}  \notag \\
&=&O_{P}\left( \frac{p_{1}^{1/2}}{p_{1}p_{2}}\right) +O_{P}\left( \frac{%
p_{1}^{1/2}}{p_{2}^{1/2}T^{1/2}}\right) .  \notag
\end{eqnarray}%
We now study (explicitly considering the case $%
h_{R_{1}}=h_{C_{1}}=h_{R_{1}}=h_{C_{1}}=1$ for simplicity)%
\begin{eqnarray*}
IV &=&\frac{1}{p_{1}p_{2}^{2}T}\sum_{t=1}^{T}\R_{1}\F_{1,t}%
\C_{1}^{\prime}\left( \hat{\C}_{1,\perp}-\C%
_{1,\perp}\right) \hat{\C}_{1,\perp}^{\prime}\E%
_{t}^{\prime}\hat{\R}_{0}\Lambda _{R_{0}}^{-1} \\
&=&\frac{1}{p_{1}p_{2}^{2}T}\sum_{t=1}^{T}\R_{1}\F_{1,t}%
\C_{1}^{\prime}\left( \hat{\C}_{1,\perp}-\C%
_{1,\perp}\right) \hat{\C}_{1,\perp}^{\prime}\E%
_{t}^{\prime}\hat{\R}_{0}\Lambda _{R_{0}}^{-1} \\
&&+\frac{1}{p_{1}p_{2}^{2}T}\sum_{t=1}^{T}\R_{1}\F_{1,t}%
\C_{1}^{\prime}\left( \hat{\C}_{1,\perp}-\C%
_{1,\perp}\right) \left( \hat{\C}_{1,\perp}-\C%
_{1,\perp}\right) ^{\prime}\E_{t}^{\prime}\hat{\R}%
_{0}\Lambda _{R_{0}}^{-1} \\
&=&IV_{a}+IV_{b},
\end{eqnarray*}%
again having used (\ref{orth-prod-c}). Hence, using the fact that (as can be
verified with a similar logic as above)%
\begin{equation*}
\left\Vert \sum_{t=1}^{T}\F_{1,t}\C_{1,\perp}^{\prime}%
\E_{t}^{\prime}\right\Vert _{F}=O_{P}\left(
p_{1}^{1/2}p_{2}T\right) ,
\end{equation*}%
it holds that%
\begin{eqnarray*}
\left\Vert IV_{a}\right\Vert _{F} &\leq &\frac{1}{T}\frac{\left\Vert \R_{1}\right\Vert _{F}\left\Vert \hat{\R}_{0}\right\Vert _{F}}{%
p_{1}}\left\Vert \sum_{t=1}^{T}\F_{1,t}\C_{1,\perp}^{\prime
}\E_{t}^{\prime}\right\Vert _{F}\frac{\left\Vert \C%
_{1}\right\Vert _{F}\left\Vert \hat{\C}_{1,\perp}-\C%
_{1,\perp}\right\Vert _{F}}{p_{2}^{2}}\left\Vert \Lambda
_{R_{0}}^{-1}\right\Vert _{F} \\
&=&O_{P}(1) \frac{1}{T}p_{1}^{1/2}p_{2}T\frac{1}{p_{2}^{3/2}T}%
=O_{P}\left( \frac{p_{1}^{1/2}}{p_{2}^{1/2}T}\right) ,
\end{eqnarray*}%
and%
\begin{eqnarray*}
\left\Vert IV_{b}\right\Vert _{F} &\leq &\frac{1}{T}\frac{\left\Vert \R_{1}\right\Vert _{F}\left\Vert \hat{\R}_{0}\right\Vert _{F}}{%
p_{1}}\left\Vert \sum_{t=1}^{T}\F_{1,t}\E_{t}^{\prime
}\right\Vert _{F}\frac{\left\Vert \C_{1}\right\Vert _{F}\left\Vert 
\hat{\C}_{1,\perp}-\C_{1,\perp}\right\Vert _{F}^{2}}{%
p_{2}^{2}}\left\Vert \Lambda _{R_{0}}^{-1}\right\Vert _{F} \\
&=&O_{P}(1) \frac{1}{T}p_{1}^{1/2}p_{2}^{1/2}T\frac{1}{%
p_{2}^{3/2}T^{2}}=O_{P}\left( \frac{p_{1}^{1/2}}{p_{2}T^{2}}\right) ,
\end{eqnarray*}%
by Lemma \ref{errorfactor}, whence%
\begin{equation*}
\left\Vert IV\right\Vert _{F}=O_{P}\left( \frac{p_{1}^{1/2}}{p_{2}^{1/2}T}%
\right) .
\end{equation*}%
By the same token 
\begin{equation*}
V=\frac{1}{p_{1}p_{2}^{2}T}\sum_{t=1}^{T}\R_{1}\F_{1,t}%
\C_{1}^{\prime}\left( \hat{\C}_{1,\perp}-\C%
_{1,\perp}\right) \C_{0}\F_{0,t}^{\prime}\R%
_{0}^{\prime}\hat{\R}_{0}\Lambda _{R_{0}}^{-1},
\end{equation*}%
whence%
\begin{eqnarray*}
\left\Vert V\right\Vert _{F} &\leq &\left\Vert \R_{1}\right\Vert
_{F}\left\Vert \frac{1}{T}\sum_{t=1}^{T}\F_{1,t}\F%
_{0,t}\right\Vert _{F}\frac{\left\Vert \C_{1}\right\Vert
_{F}\left\Vert \C_{0}\right\Vert _{F}\left\Vert \hat{\C}%
_{1,\perp}-\C_{1,\perp}\right\Vert _{F}}{p_{2}^{2}}\frac{%
\left\Vert \R_{0}\right\Vert _{F}\left\Vert \hat{\R}%
_{0}\right\Vert _{F}}{p_{1}}\left\Vert \Lambda _{R_{0}}^{-1}\right\Vert _{F}
\\
&=&O_{P}(1) p_{1}^{1/2}\frac{1}{p_{2}T}=O_{P}\left( \frac{%
p_{1}^{1/2}}{p_{2}T}\right) .
\end{eqnarray*}%
Also%
\begin{eqnarray*}
VI &=&\frac{1}{p_{1}p_{2}^{2}T}\sum_{t=1}^{T}\R_{0}\F_{0,t}%
\C_{0}^{\prime}\hat{\C}_{1,\perp}\E%
_{t}^{\prime}\hat{\R}_{0}\Lambda _{R_{0}}^{-1} \\
&=&\frac{1}{p_{1}p_{2}^{2}T}\sum_{t=1}^{T}\R_{0}\F_{0,t}%
\C_{0}^{\prime}\C_{1,\perp}\E_{t}^{\prime}%
\hat{\R}_{0}\Lambda _{R_{0}}^{-1} \\
&&+\frac{1}{p_{1}p_{2}^{2}T}\sum_{t=1}^{T}\R_{0}\F_{0,t}%
\C_{0}^{\prime}\left( \hat{\C}_{1,\perp}-\C%
_{1,\perp}\right) \E_{t}^{\prime}\hat{\R}_{0}\Lambda
_{R_{0}}^{-1} \\
&=&VI_{a}+VI_{b}.
\end{eqnarray*}%
Using the fact that%
\begin{eqnarray*}
&&E\left\Vert \sum_{t=1}^{T}\F_{0,t}\C_{1,\perp}\E%
_{t}^{\prime}\right\Vert _{F}^{2} \\
&=&E\sum_{i=1}^{p_{1}}\sum_{j=1}^{p_{2}}\left(
\sum_{t=1}^{T}\sum_{h=1}^{p_{2}}c_{\perp ,jh}e_{ih,t}\F_{0,t}\right)
^{2}\leq
c_{0}\sum_{i=1}^{p_{1}}\sum_{j=1}^{p_{2}}\sum_{t=1}^{T}%
\sum_{h_{1},h_{2}=1}^{p_{2}}\left\vert E\left(
e_{ih_{1},t}e_{ih_{2},s}\right) \right\vert \left\vert E\left( \F%
_{0,t}\F_{0,s}\right) \right\vert  \\
&\leq
&c_{0}\sum_{i=1}^{p_{1}}\sum_{j=1}^{p_{2}}\sum_{t=1}^{T}%
\sum_{h_{1},h_{2}=1}^{p_{2}}\left\vert E\left(
e_{ih_{1},t}e_{ih_{2},s}\right) \right\vert \leq O\left(
p_{1}p_{2}^{2}T\right) ,
\end{eqnarray*}%
we have%
\begin{eqnarray*}
\left\Vert VI_{a}\right\Vert _{F} &\leq &\frac{1}{p_{1}p_{2}^{2}T}\left\Vert 
\hat{\R}_{0}\right\Vert _{F}\left\Vert \R_{0}\right\Vert
_{F}\left\Vert \C_{0}\right\Vert _{F}\left\Vert \sum_{t=1}^{T}%
\F_{0,t}\C_{1,\perp}\E_{t}^{\prime}\right\Vert
_{F}\left\Vert \Lambda _{R_{0}}^{-1}\right\Vert _{F} \\
&=&O_{P}(1) \frac{1}{p_{1}p_{2}^{2}T}%
p_{1}^{1/2}p_{1}^{1/2}p_{2}^{1/2}p_{1}^{1/2}p_{2}T^{1/2}=O_{P}\left( \frac{1%
}{p_{2}^{1/2}T^{1/2}}\right) ;
\end{eqnarray*}%
also%
\begin{eqnarray*}
\left\Vert VI_{b}\right\Vert _{F} &\leq &\frac{1}{p_{1}p_{2}^{2}T}\left\Vert 
\R_{0}\right\Vert _{F}\left\Vert \C_{0}\right\Vert
_{F}\left\Vert \hat{\C}_{1,\perp}-\C_{1,\perp
}\right\Vert _{F}\left\Vert \sum_{t=1}^{T}\F_{0,t}\E%
_{t}^{\prime}\right\Vert _{F}\left\Vert \hat{\R}_{0}\right\Vert
_{F}\left\Vert \Lambda _{R_{0}}^{-1}\right\Vert _{F} \\
&=&O_{P}(1) \frac{1}{p_{1}p_{2}^{2}T}p_{1}^{1/2}p_{2}^{1/2}\frac{%
1}{T}p_{1}^{1/2}p_{2}^{1/2}T^{1/2}p_{1}^{1/2}=O_{P}\left( \frac{p_{1}^{1/2}}{%
p_{2}T^{3/2}}\right) ,
\end{eqnarray*}%
having used the fact that%
\begin{eqnarray*}
&&E\left\Vert \sum_{t=1}^{T}\F_{0,t}\E_{t}^{\prime
}\right\Vert _{F}^{2} \\
&=&E\sum_{i=1}^{p_{1}}\sum_{j=1}^{p_{2}}\left( \sum_{t=1}^{T}\F%
_{0,t}e_{ij,t}\right) ^{2}\leq
\sum_{i=1}^{p_{1}}\sum_{j=1}^{p_{2}}\sum_{t,s=1}^{T}\left\vert E\left( 
\F_{0,t}\F_{0,s}\right) \right\vert \left\vert E\left(
e_{ij,t}e_{ij,s}\right) \right\vert  \\
&\leq &c_{0}\sum_{i=1}^{p_{1}}\sum_{j=1}^{p_{2}}\sum_{t,s=1}^{T}\left\vert
E\left( e_{ij,t}e_{ij,s}\right) \right\vert \leq c_{1}p_{1}p_{2}T.
\end{eqnarray*}%
Hence%
\begin{equation*}
\left\Vert VI\right\Vert _{F}=O_{P}\left( \frac{p_{1}^{1/2}}{%
p_{2}^{1/2}T^{1/2}}\right) .
\end{equation*}%
The desired result now follows from putting everything together. As far as
the invertibility of $\bH_{R_{0}}$ is concerned, it follows from
similar arguments as in the proof of Theorem \ref{hat-estimates}.

The proof of (\ref{c1-hat}) is similar, upon noting%
\begin{eqnarray*}
\hat{\C}_{0} &=&\frac{1}{p_{1}^{2}p_{2}T}\sum_{t=1}^{T}\C%
_{1}\F_{1,t}^{\prime}\R_{1}^{\prime}\hat{\R}%
_{1,\perp}\hat{\R}_{1,\perp}^{\prime}\R_{1}\F%
_{1,t}\C_{1}^{\prime}\hat{\C}_{1,\perp}^{s}\hat{%
\C}_{0}\Lambda _{C_{0}}^{-1} \\
&&+\frac{1}{p_{1}^{2}p_{2}T}\sum_{t=1}^{T}\C_{0}\F%
_{0,t}^{\prime}\R_{0}^{\prime}\hat{\R}_{1,\perp}%
\hat{\R}_{1,\perp}^{\prime}\R_{0}\F_{0,t}%
\C_{0}^{\prime}\hat{\C}_{0}\Lambda _{C_{0}}^{-1} \\
&&+\frac{1}{p_{1}^{2}p_{2}T}\sum_{t=1}^{T}\E_{t}^{\prime}\hat{%
\R}_{1,\perp}\hat{\R}_{1,\perp}^{\prime}\E_{t}%
\hat{\C}_{0}\Lambda _{C_{0}}^{-1}+\frac{1}{p_{1}^{2}p_{2}T}%
\sum_{t=1}^{T}\E_{t}^{\prime}\hat{\R}_{1,\perp}%
\hat{\R}_{1,\perp}^{\prime}\R_{1}\F_{1,t}%
\C_{1}^{\prime}\hat{\C}_{0}\Lambda _{C_{0}}^{-1} \\
&&+\left( \frac{1}{p_{1}^{2}p_{2}T}\sum_{t=1}^{T}\E_{t}^{\prime}%
\hat{\R}_{1,\perp}\hat{\R}_{1,\perp}^{\prime}%
\R_{1}\F_{1,t}\C_{1}^{\prime}\hat{\C}%
_{0}\Lambda _{C_{0}}^{-1}\right) ^{\prime} \\
&&+\frac{1}{p_{1}^{2}p_{2}T}\sum_{t=1}^{T}\C_{0}\F%
_{0,t}^{\prime}\R_{0}^{\prime}\hat{\R}_{1,\perp}%
\hat{\R}_{1,\perp}^{\prime}\R_{1}\F_{1,t}%
\C_{1}^{\prime}\hat{\C}_{0}\Lambda _{C_{0}}^{-1} \\
&&+\left( \frac{1}{p_{1}^{2}p_{2}T}\sum_{t=1}^{T}\C_{0}\F%
_{0,t}^{\prime}\R_{0}^{\prime}\hat{\R}_{1,\perp}%
\hat{\R}_{1,\perp}^{\prime}\R_{1}\F_{1,t}%
\C_{1}^{\prime}\hat{\C}_{0}\Lambda _{C_{0}}^{-1}\right)
^{\prime} \\
&&+\frac{1}{p_{1}^{2}p_{2}T}\sum_{t=1}^{T}\E_{t}^{\prime}\hat{%
\R}_{1,\perp}\hat{\R}_{1,\perp}^{\prime}\R_{0}%
\F_{0,t}\C_{0}^{\prime}\hat{\C}_{0}\Lambda
_{C_{0}}^{-1} \\
&&+\left( \frac{1}{p_{1}^{2}p_{2}T}\sum_{t=1}^{T}\E_{t}^{\prime}%
\hat{\R}_{1,\perp}\hat{\R}_{1,\perp}^{\prime}%
\R_{0}\F_{0,t}\C_{0}^{\prime}\hat{\C}%
_{0}\Lambda _{C_{0}}^{-1}\right) ^{\prime} \\
&=&I+II+III+IV+IV^{\prime}+V+V^{\prime}+VI+VI^{\prime}.
\end{eqnarray*}%
and defining%
\begin{equation*}
\bH_{C_{0}}=\left( \frac{1}{T}\sum_{t=1}^{T}\F_{0,t}^{\prime}%
\frac{\R_{0}^{\prime}\R_{1,\perp}}{p_{1}}\frac{\R%
_{1,\perp}^{\prime}\R_{0}}{p_{1}}\F_{0,t}\right) \frac{%
\C_{0}^{\prime}\hat{\C}_{0}}{p_{2}}\Lambda
_{C_{0}}^{-1},
\end{equation*}%
\begin{equation*}
\bH_{R_{0}}=\left( \frac{1}{T}\sum_{t=1}^{T}\F_{0,t}\frac{%
\C_{0}^{\prime}\C_{1,\perp}}{p_{2}}\frac{\C%
_{1,\perp}^{\prime}\C_{0}}{p_{2}}\F_{0,t}^{\prime}\right) 
\frac{\R_{0}^{\prime}\hat{\R}_{0}}{p_{1}}\Lambda
_{R_{0}}^{-1}.
\end{equation*}
\end{proof}

\begin{proof}[Proof of Theorem \protect\ref{f1-hat}]
Let%
\begin{eqnarray*}
\hat{\mathbf{D}} &=&\left( \hat{\C}_{0}^{\prime}\hat{%
\C}_{1,\perp}\hat{\C}_{1,\perp}^{\prime}\hat{%
\C}_{0}\right) ^{-1}\otimes \left( \hat{\R}_{0}^{\prime}%
\hat{\R}_{1,\perp}\hat{\R}_{1,\perp}^{\prime}%
\hat{\R}_{0}\right) ^{-1}, \\
\hat{\mathbf{N}} &=&\left( \hat{\C}_{0}^{\prime}\hat{%
\C}_{1,\perp}\right) \otimes \left( \hat{\R}%
_{1}^{\prime}\hat{\R}_{1,\perp}\right) ;
\end{eqnarray*}%
note that, by standard algebra 
\begin{equation}
\left\Vert \left( \hat{\mathbf{D}}\right) ^{-1}\right\Vert
_{F}=O_{P}\left( \frac{1}{\left( p_{1}p_{2}\right) ^{2}}\right) ,
\label{d-hat-inv}
\end{equation}%
and consider the decompositions%
\begin{equation*}
\R_{0}=\R_{0}\pm \hat{\R}_{0}\left( \bH%
_{R_{0}}\right) ^{-1},\text{ \ \ and \ \ }\C_{0}=\C_{0}\pm 
\hat{\C}_{0}\left( \bH_{C_{0}}\right) ^{-1}.
\end{equation*}%
We are now ready to start the proof. It holds that%
\begin{eqnarray}
&&\Ve\hat{\F}_{0,t}  \label{vecf1t-1} \\
&=&\left( \hat{\mathbf{D}}\right) ^{-1}\hat{\mathbf{N}}\left( \left( 
\hat{\C}_{1,\perp}\right) ^{\prime}\otimes \left( \hat{%
\R}_{1,\perp}\right) ^{\prime}\right) \left( \C_{0}\otimes 
\R_{0}\right) \Ve\F_{0,t}  \notag \\
&&+\left( \hat{\mathbf{D}}\right) ^{-1}\hat{\mathbf{N}}\left( \left( 
\hat{\C}_{1,\perp}\right) ^{\prime}\otimes \left( \hat{%
\R}_{1,\perp}\right) ^{\prime}\right) \left( \C_{1}\otimes 
\R_{1}\right) \Ve\F_{1,t}  \notag \\
&&+\left( \hat{\mathbf{D}}\right) ^{-1}\hat{\mathbf{N}}\left( \left( 
\hat{\C}_{1,\perp}\right) ^{\prime}\otimes \left( \hat{%
\R}_{1,\perp}\right) ^{\prime}\right) \Ve\E_{t}  \notag \\
&=&I+II+III.  \notag
\end{eqnarray}%
Note%
\begin{eqnarray}
&&I  \label{vecf1t-2} \\
&=&\left[ \left( \hat{\mathbf{D}}\right) ^{-1}\hat{\mathbf{N}}\left(
\left( \hat{\C}_{1,\perp}\right) ^{\prime}\otimes \left( 
\hat{\R}_{1,\perp}\right) ^{\prime}\right) \left( \left( 
\hat{\C}_{0}\left( \bH_{C_{0}}\right) ^{-1}\right)
\otimes \left( \hat{\R}_{0}\left( \bH_{R_{0}}\right)
^{-1}\right) \right) \right.   \notag \\
&&+\left( \hat{\mathbf{D}}\right) ^{-1}\hat{\mathbf{N}}\left( \left( 
\hat{\C}_{1,\perp}\right) ^{\prime}\otimes \left( \hat{%
\R}_{1,\perp}\right) ^{\prime}\right) \left( \left( \C_{0}-%
\hat{\C}_{0}\left( \bH_{C_{0}}\right) ^{-1}\right)
\otimes \left( \hat{\R}_{0}\left( \bH_{R_{0}}\right)
^{-1}\right) \right)   \notag \\
&&+\left( \hat{\mathbf{D}}\right) ^{-1}\hat{\mathbf{N}}\left( \left( 
\hat{\C}_{1,\perp}\right) ^{\prime}\otimes \left( \hat{%
\R}_{1,\perp}\right) ^{\prime}\right) \left( \left( \hat{%
\C}_{0}\left( \bH_{C_{0}}\right) ^{-1}\right) \otimes \left( 
\R_{0}-\hat{\R}_{0}\left( \bH_{R_{0}}\right)
^{-1}\right) \right)   \notag \\
&&+\left. \left( \hat{\mathbf{D}}\right) ^{-1}\hat{\mathbf{N}}\left(
\left( \hat{\C}_{1,\perp}\right) ^{\prime}\otimes \left( 
\hat{\R}_{1,\perp}\right) ^{\prime}\right) \left( \left( 
\C_{0}-\hat{\C}_{0}\left( \bH_{C_{0}}\right)
^{-1}\right) \otimes \left( \R_{0}-\hat{\R}_{0}\left( 
\bH_{R_{0}}\right) ^{-1}\right) \right) \right]   \notag \\
&&\times \Ve\F_{0,t}  \notag \\
&=&I_{a}+I_{b}+I_{c}+I_{d}.  \notag
\end{eqnarray}%
By standard algebraic manipulations%
\begin{equation*}
I_{a}=\Ve\left( \left( \bH_{R_{0}}\right) ^{-1}\F_{0,t}\left( 
\bH_{C_{0}}^{\prime}\right) ^{-1}\right) .
\end{equation*}%
Further%
\begin{eqnarray*}
&&\left\Vert I_{b}\right\Vert _{F} \\
&=&\left\Vert \left( \hat{\mathbf{D}}\right) ^{-1}\left( \hat{%
\C}_{0}^{\prime}\hat{\C}_{1,\perp}\hat{\C}%
_{1,\perp}\left( \C_{0}-\hat{\C}_{0}\left( \bH%
_{C_{0}}\right) ^{-1}\right) \right) \otimes \left( \hat{\R}%
_{1}^{\prime}\hat{\R}_{1,\perp}\hat{\R}_{1,\perp}%
\hat{\R}_{0}\left( \bH_{R_{0}}\right) ^{-1}\right)
\right\Vert _{F}\left\Vert \Ve\F_{0,t}\right\Vert _{F} \\
&=&\left\Vert \left( \hat{\mathbf{D}}\right) ^{-1}\left( \hat{%
\C}_{0}^{\prime}\hat{\C}_{1,\perp}\left( \C%
_{0}-\hat{\C}_{0}\left( \bH_{C_{0}}\right) ^{-1}\right)
\right) \otimes \left( \hat{\R}_{1}^{\prime}\hat{\R}%
_{1,\perp}\hat{\R}_{0}\left( \bH_{R_{0}}\right)
^{-1}\right) \right\Vert _{F}\left\Vert \Ve\F_{0,t}\right\Vert _{F}
\\
&\leq &\left\Vert \left( \hat{\mathbf{D}}\right) ^{-1}\right\Vert
_{F}\left\Vert \hat{\C}_{0}\right\Vert _{F}\sigma _{\max }\left[ 
\hat{\C}_{1,\perp}\left( \C_{0}-\hat{\C}%
_{0}\left( \bH_{C_{0}}\right) ^{-1}\right) \right]  \\
&&\times \left\Vert \hat{\R}_{1}\right\Vert _{F}\left\Vert 
\hat{\R}_{1,\perp}\right\Vert _{F}\left\Vert \hat{\R%
}_{0}\right\Vert _{F}\left\Vert \left( \bH_{R_{0}}\right)
^{-1}\right\Vert _{F}\left\Vert \Ve\F_{0,t}\right\Vert _{F} \\
&=&O_{P}(1) \frac{1}{\left( p_{1}p_{2}\right) ^{2}}p_{2}^{1/2}%
p_{2}\left[ \left( O_{P}\left( \frac{p_{2}^{1/2}}{p_{1}T}\right)
+O_{P}\left( \frac{p_{2}^{1/2}}{p_{1}p_{2}}\right) +O_{P}\left( \frac{%
p_{2}^{1/2}}{p_{1}^{1/2}p_{2}^{1/2}T^{1/2}}\right) \right) \right]
p_{1}^{1/2}p_{1}p_{1}^{1/2} \\
&=&O_{P}\left( \frac{1}{p_{1}^{1/2}p_{2}^{1/2}T^{1/2}}\right) +O_{P}\left( 
\frac{1}{p_{1}p_{2}}\right) +O_{P}\left( \frac{1}{p_{1}T}\right) ,
\end{eqnarray*}%
having used (\ref{d-hat-inv}), Lemma \ref{yong} and Theorem \ref{rc1-hat}.
Similarly we can show that%
\begin{equation*}
\left\Vert I_{c}\right\Vert _{F}=O_{P}\left( \frac{1}{%
p_{1}^{1/2}p_{2}^{1/2}T^{1/2}}\right) +O_{P}\left( \frac{1}{p_{1}p_{2}}%
\right) +O_{P}\left( \frac{1}{p_{2}T}\right) ,
\end{equation*}%
and by the same logic, it can be shown that $\left\Vert I_{d}\right\Vert _{F}
$ is dominated. Further, using (\ref{orth-prod-c}) and the similar result $%
\R_{1}^{\prime}\hat{\R}_{1,\perp}=\R%
_{1}^{\prime}\left( \hat{\R}_{1,\perp}-\R_{1,\perp
}\right) $, it holds that%
\begin{eqnarray*}
II &=&\left( \hat{\mathbf{D}}\right) ^{-1}\left( \hat{\C}%
_{0}^{\prime}\hat{\C}_{1,\perp}\hat{\C}_{1,\perp
}^{\prime}\C_{1}\right) \otimes \left( \hat{\R}%
_{0}^{\prime}\hat{\R}_{1,\perp}\hat{\R}_{1,\perp
}^{\prime}\R_{1}\right) \Ve\F_{1,t} \\
&=&\left( \hat{\mathbf{D}}\right) ^{-1}\left( \left( \hat{\C}%
_{0}^{\prime}\hat{\C}_{1,\perp}\C_{1}\right) \otimes
\left( \hat{\R}_{0}^{\prime}\hat{\R}_{1,\perp}%
\R_{1}\right) \right) \Ve\F_{1,t} \\
&=&\left( \hat{\mathbf{D}}\right) ^{-1}\left( \left( \hat{\C}%
_{0}^{\prime}\left( \hat{\C}_{1,\perp}-\C_{1,\perp
}\right) \C_{1}\right) \otimes \left( \hat{\R}%
_{0}^{\prime}\left( \hat{\R}_{1,\perp}-\R_{1,\perp
}\right) \R_{1}\right) \right) \Ve\F_{1,t}
\end{eqnarray*}%
whence%
\begin{eqnarray*}
\left\Vert II\right\Vert _{F} &\leq &\left\Vert \left( \hat{\mathbf{D}}%
\right) ^{-1}\right\Vert _{F}\left\Vert \hat{\C}_{0}\right\Vert
_{F}\left\Vert \C_{1}\right\Vert _{F}\left\Vert \hat{\C}%
_{1,\perp}-\C_{1,\perp}\right\Vert _{F} \\
&&\times \left\Vert \hat{\R}_{0}\right\Vert _{F}\left\Vert 
\R_{1}\right\Vert _{F}\left\Vert \hat{\R}_{1,\perp}-%
\R_{1,\perp}\right\Vert _{F}\left\Vert \Ve\F%
_{1,t}\right\Vert _{F} \\
&=&O_{P}(1) \frac{1}{\left( p_{1}p_{2}\right) ^{2}}%
p_{2}^{1/2}p_{2}^{1/2}\frac{1}{T}p_{1}^{1/2}p_{1}^{1/2}\frac{1}{T}%
T^{1/2}=O_{P}\left( \frac{1}{p_{1}p_{2}T^{3/2}}\right) ,
\end{eqnarray*}%
having used (\ref{lemma1-i}) in Lemma \ref{berkes} and Lemmas \ref{r-orth}
and \ref{c-orth}. Finally, using the same logic as in the above, it can be
shown that (modulo some higher order terms)%
\begin{eqnarray*}
\left\Vert III\right\Vert _{F} &\leq &\left\Vert \left( \hat{\mathbf{D}}%
\right) ^{-1}\right\Vert _{F}\left\Vert \hat{\C}_{0}\right\Vert
_{F}\left\Vert \hat{\R}_{0}\right\Vert _{F}\left\Vert \left( 
\hat{\R}_{1,\perp}^{s}\right) ^{\prime}\E_{0,t}%
\hat{\C}_{1,\perp}^{s}\right\Vert _{F} \\
&=&O_{P}(1) \frac{1}{\left( p_{1}p_{2}\right) ^{2}}%
p_{1}^{1/2}p_{2}^{1/2}p_{1}p_{2}=O_{P}\left( \frac{1}{p_{1}^{1/2}p_{2}^{1/2}}%
\right) .
\end{eqnarray*}%
Then (\ref{f1-hat-point}) follows from putting everything together. Equation
(\ref{f1-hat-l2}) can be shown by noting that, using Minkowski's inequality 
\begin{eqnarray*}
&&\frac{1}{T}\sum_{t=1}^{T}\left\Vert \hat{\F}_{0,t}-\left( 
\bH_{R_{0}}\right) ^{-1}\F_{0,t}\left( \bH%
_{C_{0}}^{\prime}\right) ^{-1}\right\Vert _{F}^{2} \\
&\leq &\frac{1}{T}\sum_{t=1}^{T}\left( \left\Vert \widetilde{I}\right\Vert
_{F}^{2}+\left\Vert II\right\Vert _{F}^{2}+\left\Vert III\right\Vert
_{F}^{2}\right) ,
\end{eqnarray*}%
where $II$ and $III$ are defined in (\ref{vecf1t-1}), and $\widetilde{I}%
=I_{b}+I_{c}+I_{d}$, with $I_{b}$, $I_{c}$ and$\ I_{d}$\ defined in (\ref%
{vecf1t-2}). The desired result can now be shown by applying the same logic
as above.
\end{proof}

\begin{proof}[Proof of Theorem \protect\ref{rc-tilde}]
Some arguments are similar to the proof of Lemma \ref{rc1-hat}, and we
therefore omit them when possible. We begin by showing (\ref{r-hat-tilde});
by definition, it holds that%
\begin{equation*}
\widetilde{\R}_1=\proj{\M}_{R_{1}}\widetilde{\R}_1\widetilde{%
\Lambda }_{R_{1}}^{-1},
\end{equation*}%
whence 
\begin{align}
&\widetilde{\R}_1  \label{r-tilde-dec} \\
&=\frac{1}{p_{1}p_{2}^{2}T^{2}}\sum_{t=1}^{T}\R_1\F_{1,t}\C_{1}^{\prime}\hat{\C}_{1}\hat{\C}_{1}^{\prime}\C_{1}\F_{1,t}^{\prime}\R_{1}^{\prime}\widetilde{\R}_1\widetilde{\Lambda }%
_{R_{1}}^{-1}+\frac{1}{p_{1}p_{2}^{2}T^{2}}\sum_{t=1}^{T}\E_{t}\hat{\C}_{1}\hat{\C}_{1}^{\prime}\E_{t}^{\prime}\widetilde{%
\R_1}\widetilde{\Lambda }_{R_{1}}^{-1}  \notag \\
&+\frac{1}{p_{1}p_{2}^{2}T^{2}}\sum_{t=1}^{T}\R_1\F_{1,t}\C_{1}^{\prime}\hat{\C}_{1}\hat{\C}_{1}^{\prime}\E_{t}^{\prime}\widetilde{\R}_1\widetilde{\Lambda }_{R_{1}}^{-1}+\left( 
\frac{1}{p_{1}p_{2}^{2}T^{2}}\sum_{t=1}^{T}\R_1\F_{1,t}\C_{1}^{\prime}\hat{\C}_{1}\hat{\C}_{1}^{\prime}\E_{t}^{\prime}%
\widetilde{\R}_1\widetilde{\Lambda }_{R_{1}}^{-1}\right) ^{\prime}  \notag
\\
&+\frac{1}{p_{1}p_{2}^{2}T^{2}}\sum_{t=1}^{T}\left( \R_{0}\F_{0,t}\C_{0}^{\prime}-\hat{\R}_{0}\hat{\F}%
_{0,t}\hat{\C}_{0}^{\prime}\right) \hat{\C}_{1}\hat{\C}_{1}^{\prime}\E_{t}^{\prime}\widetilde{\R}_1%
\widetilde{\Lambda }_{R_{1}}^{-1}  \notag \\
&+\left( \frac{1}{p_{1}p_{2}^{2}T^{2}}\sum_{t=1}^{T}\left( \R_{0}%
\F_{0,t}\C_{0}^{\prime}-\hat{\R}_{0}\hat{%
\F}_{0,t}\hat{\C}_{0}^{\prime}\right) \hat{\C}_{1}\hat{\C}_{1}^{\prime}\E_{t}^{\prime}\widetilde{\R%
}\widetilde{\Lambda }_{R_{1}}^{-1}\right) ^{\prime}  \notag \\
&+\frac{1}{p_{1}p_{2}^{2}T^{2}}\sum_{t=1}^{T}\R_1\F_{1,t}\C_{1}^{\prime}\hat{\C}_{1}\hat{\C}_{1}^{\prime}\left( \R_{0}\F_{0,t}\C_{0}^{\prime}-\hat{\R}_{0}%
\hat{\F}_{0,t}\hat{\C}_{0}^{\prime}\right) ^{\prime}\widetilde{\R}_1\widetilde{\Lambda }_{R_{1}}^{-1}  \notag \\
&+\left( \frac{1}{p_{1}p_{2}^{2}T^{2}}\sum_{t=1}^{T}\R_1\F_{1,t}\C_{1}^{\prime}\hat{\C}_{1}\hat{\C}_{1}^{\prime}\left( \R_{0}\F_{0,t}\C_{0}^{\prime}-\hat{\R}_{0}%
\hat{\F}_{0,t}\hat{\C}_{0}^{\prime}\right) ^{\prime}\widetilde{\R}_1\widetilde{\Lambda }_{R_{1}}^{-1}\right) ^{\prime} 
\notag \\
&+\frac{1}{p_{1}p_{2}^{2}T^{2}}\sum_{t=1}^{T}\left( \R_{0}\F%
_{0,t}\C_{0}^{\prime}-\hat{\R}_{0}\hat{\F}%
_{0,t}\hat{\C}_{0}^{\prime}\right) \hat{\C}_{1}\hat{\C}_{1}^{\prime}\left( \R_{0}\F_{0,t}\C_{0}^{\prime}-\hat{\R}_{0}\hat{\F}_{0,t}\hat{%
\C}_{0}^{\prime}\right) ^{\prime}\widetilde{\R}_1\widetilde{\Lambda }_{R_{1}}^{-1}  \notag \\
&=I+II+III+III^{\prime}+IV+IV^{\prime}+V+V^{\prime}+VI.  \notag
\end{align}%
Upon letting%
\begin{equation*}
\widetilde{\bH}_{R_{1}}=\left( \frac{1}{T^{2}}\sum_{t=1}^{T}\F_{1,t} \frac{\C_{1}^{\prime}\hat{\C}_{1}}{p_{2}}\frac{\hat{\C}_{1}^{\prime}\C_{1}}{p_{2}}\F_{1,t}^{\prime}\right) \left( \frac{%
\R_{1}^{\prime}\widetilde{\R}_1}{p_{1}}\right) \widetilde{\Lambda 
}_{R_{1}}^{-1},
\end{equation*}%
the same logic as in the above yields that $\left\Vert \widetilde{\bH}%
_{R_{1}}\right\Vert _{F}=O_{P}(1) $. We now carry out the proof
under $h_{R_{1}}=h_{C_{1}}=1$ when possible, so that $\bH_{C_{1}}$ is a random
sign. It holds that 
\begin{align*}
II &=\frac{1}{p_{1}p_{2}^{2}T^{2}}\sum_{t=1}^{T}\E_{t}\C_{1}\bH%
_{C_{1}}\bH_{C_{1}}^{\prime}\C_{1}^{\prime}\E_{t}^{\prime}%
\widetilde{\R}_1\widetilde{\Lambda }_{R_{1}}^{-1}+\frac{1}{%
p_{1}p_{2}^{2}T^{2}}\sum_{t=1}^{T}\E_{t}\left( \hat{\C}_{1}-%
\C_{1}\bH_{C_{1}}\right) \bH_{C_{1}}^{\prime}\C_{1}^{\prime}\E_{t}^{\prime}\widetilde{\R}_1\widetilde{\Lambda }_{R_{1}}^{-1} \\
&+\frac{1}{p_{1}p_{2}^{2}T^{2}}\sum_{t=1}^{T}\E_{t}\C_{1}\bH%
_{C_{1}}\left( \hat{\C}_{1}-\C_{1}\bH_{C_{1}}\right) ^{\prime}\E%
_{t}^{\prime}\widetilde{\R}_1\widetilde{\Lambda }_{R_{1}}^{-1} \\
&+\frac{1}{p_{1}p_{2}^{2}T^{2}}\sum_{t=1}^{T}\E_{t}\left( \hat{%
\C}_{1}-\C_{1}\bH_{C_{1}}\right) \left( \hat{\C}_{1}-\C_{1}\bH%
_{C_{1}}\right) ^{\prime}\E_{t}^{\prime}\widetilde{\R}_1%
\widetilde{\Lambda }_{R_{1}}^{-1} \\
&=II_{a}+II_{b}+II_{b}^{\prime}+II_{c},
\end{align*}%
and%
\begin{equation*}
\left\Vert II_{a}\right\Vert _{F}\leq \frac{1}{p_{1}p_{2}^{2}T^{2}}%
\left\Vert \sum_{t=1}^{T}\E_{t}\C_{1}\C_{1}^{\prime}\E%
_{t}^{\prime}\right\Vert _{F}\left\Vert \widetilde{\R}_1\right\Vert
_{F}\left\Vert \widetilde{\Lambda }_{R_{1}}^{-1}\right\Vert _{F}=O_{P}\left( 
\frac{p_{1}^{1/2}}{p_{2}T}\right) +O_{P}\left( \frac{p_{1}^{1/2}}{%
p_{2}^{1/2}T^{3/2}}\right) ,
\end{equation*}%
using/adapting (\ref{3F-error}). We now consider%
\begin{equation*}
\left\Vert II_{b}\right\Vert _{F}\leq \frac{1}{p_{1}p_{2}^{2}T^{2}}%
\left\Vert \sum_{t=1}^{T}\E_{t}\left( \hat{\C}_{1}-\C_{1}\bH_{C_{1}}\right) \C_{1}^{\prime}\E_{t}^{\prime}\right\Vert
_{F}\left\Vert \widetilde{\R}_1\right\Vert _{F}\left\Vert \widetilde{%
\Lambda }_{R_{1}}^{-1}\right\Vert _{F}.
\end{equation*}%
Noting that%
\begin{align*}
&\left\Vert \sum_{t=1}^{T}\E_{t}\left( \hat{\C}_{1}-\C_{1}\bH_{C_{1}}\right) \C_{1}^{\prime}\E_{t}^{\prime}\right\Vert
_{F}^{2} \\
&=\sum_{i,h=1}^{p_{1}}\left( \sum_{t=1}^{T}\left( \sum_{j=1}^{p_{2}}\left( 
\hat{c}_{j}-c_{j}\right) e_{ij,t}\right) \left(
\sum_{j=1}^{p_{2}}c_{j}e_{hj,t}\right) \right) ^{2} \\
&\leq\sum_{i,h=1}^{p_{1}}\left( \sum_{t=1}^{T}\left(
\sum_{j=1}^{p_{2}}\left( \hat{c}_{j}-c_{j}\right) e_{ij,t}\right)
^{2}\right) \left( \sum_{t=1}^{T}\left(
\sum_{j=1}^{p_{2}}c_{j}e_{hj,t}\right) ^{2}\right) \\
&\leq\sum_{i,h=1}^{p_{1}}\sum_{t,s=1}^{T}\left( \sum_{j=1}^{p_{2}}\left( 
\hat{c}_{j}-c_{j}\right) ^{2}\right) \left(
\sum_{j=1}^{p_{2}}e_{ij,t}^{2}\right) \left(
\sum_{j=1}^{p_{2}}c_{j}e_{hj,s}\right) ^{2} \\
&=\left( \left\Vert \hat{\C}_{1}-\C_{1}\bH_{C_{1}}\right\Vert
_{F}^{2}\right) \sum_{i,h=1}^{p_{1}}\sum_{t,s=1}^{T}\left(
\sum_{j=1}^{p_{2}}e_{ij,t}^{2}\right) \left(
\sum_{j=1}^{p_{2}}c_{j}e_{hj,s}\right) ^{2} \\
&=\left( \left\Vert \hat{\C}_{1}-\C_{1}\bH_{C_{1}}\right\Vert
_{F}^{2}\right) \left(
\sum_{i=1}^{p_{1}}\sum_{t=1}^{T}\sum_{j=1}^{p_{2}}e_{ij,t}^{2}\right) \left(
\sum_{i=1}^{p_{1}}\sum_{t=1}^{T}\left(
\sum_{j=1}^{p_{2}}c_{j}e_{hj,s}\right) ^{2}\right) ,
\end{align*}%
it is easy to see that%
\begin{equation*}
\left\Vert \sum_{t=1}^{T}\E_{t}\left( \hat{\C}_{1}-\C_{1}\bH_{C_{1}}\right) \C_{1}^{\prime}\E_{t}^{\prime}\right\Vert
_{F}^{2}=O_{P}\left( p_{1}^{2}p_{2}^{3}\right) ,
\end{equation*}%
whence it immediately follows that%
\begin{equation*}
\left\Vert II_{b}\right\Vert _{F}=O_{P}\left( \frac{p_{1}^{1/2}}{%
p_{2}^{1/2}T^{2}}\right) ,
\end{equation*}%
and the same holds for $II_{b}^{\prime}$; similarly, it is not hard to show
that $II_{c}$ is dominated by $II_{a}$ and $II_{b}$. Turning to $III$, 
\begin{align*}
III &=\frac{1}{p_{1}p_{2}^{2}T^{2}}\sum_{t=1}^{T}\R_1\F_{1,t}\C_{1}^{\prime}\hat{\C}_{1}\bH_{C_{1}}^{\prime}\C_{1}^{\prime}%
\E_{t}^{\prime}\widetilde{\R}_1\widetilde{\Lambda }_{R_{1}}^{-1}
\\
&+\frac{1}{p_{1}p_{2}^{2}T^{2}}\sum_{t=1}^{T}\R_1\F_{1,t}\C_{1}^{\prime}\hat{\C}_{1}\left( \hat{\C}_{1}-\C_{1}\bH%
_{C_{1}}\right) ^{\prime}\E_{t}^{\prime}\widetilde{\R}_1%
\widetilde{\Lambda }_{R_{1}}^{-1}=III_{a}+III_{b},
\end{align*}%
and we have

\begin{align*}
\left\Vert III_{a}\right\Vert _{F} &\leq\frac{1}{p_{1}p_{2}^{2}T^{2}}%
\left\Vert \R_1\right\Vert _{F}\left\Vert \bH_{C_{1}}\right\Vert
_{F}\left\Vert \C_{1}\right\Vert _{F}\left\Vert \hat{\C}_{1}%
\right\Vert _{F}\left\Vert \sum_{t=1}^{T}\F_{1,t}\C_{1}^{\prime}%
\E_{t}^{\prime}\right\Vert _{F}\left\Vert \widetilde{\R}_1%
\right\Vert _{F}\left\Vert \widetilde{\Lambda }_{R_{1}}^{-1}\right\Vert _{F} \\
&=O_{P}(1) \frac{1}{p_{1}p_{2}^{2}T^{2}}%
p_{1}^{1/2}p_{2}^{1/2}p_{2}^{1/2}p_{1}^{1/2}\left( p_{1}p_{2}T^{2}\right)
^{1/2}=O_{P}\left( \frac{p_{1}^{1/2}}{p_{2}^{1/2}T}\right) ,
\end{align*}%
and 
\begin{align*}
\left\Vert III_{b}\right\Vert _{F} &\leq\frac{1}{p_{1}p_{2}^{2}T^{2}}%
\left\Vert \R_1\right\Vert _{F}\left\Vert \C_{1}\right\Vert
_{F}\left\Vert \hat{\C}_{1}\right\Vert _{F}\left\Vert \hat{%
\C}_{1}-\C_{1}\bH_{C_{1}}\right\Vert _{F}\left\Vert \sum_{t=1}^{T}\F_{1,t}\E_{t}^{\prime}\right\Vert _{F}\left\Vert \widetilde{\R}_{1}\right\Vert _{F}\left\Vert \widetilde{\Lambda }_{R_{1}}^{-1}\right\Vert _{F} \\
&=O_{P}(1) \frac{1}{p_{1}p_{2}^{2}T^{2}}%
p_{1}^{1/2}p_{2}^{1/2}p_{2}^{1/2}\frac{p_{2}^{1/2}}{T}p_{1}^{1/2}\left(
p_{1}p_{2}T^{2}\right) ^{1/2}=O_{P}\left( \frac{p_{1}^{1/2}}{T^{2}}\right) .
\end{align*}%
We now study%
\begin{align*}
IV &=\frac{1}{p_{1}p_{2}^{2}T^{2}}\sum_{t=1}^{T}\R_{0}\F%
_{0,t}\left( \bH_{C_0}^{\prime}\right) ^{-1}\left( \hat{\C}_{0}-\C_{0}\bH_{C_0}\right) ^{\prime}\hat{\C}_{1}%
\hat{\C}_{1}^{\prime}\E_{t}^{\prime}\widetilde{\R}_1%
\widetilde{\Lambda }_{R_{1}}^{-1} \\
&+\frac{1}{p_{1}p_{2}^{2}T^{2}}\sum_{t=1}^{T}\left( \hat{\R}_{0}-\R_{0}\bH_{R_0}\right) \bH_{R_0}^{-1}\F%
_{0,t}\C_{0}^{\prime}\hat{\C}_{1}\hat{\C}_{1}%
^{\prime}\E_{t}^{\prime}\widetilde{\R}_1\widetilde{\Lambda }%
_{R_{1}}^{-1} \\
&+\frac{1}{p_{1}p_{2}^{2}T^{2}}\sum_{t=1}^{T}\R_{0}\left( \hat{%
\F}_{0,t}-\left( \bH_{R_0}\right) ^{-1}\F_{0,t}\left( 
\bH_{C_0}^{\prime}\right) ^{-1}\right) \C_{0}^{\prime}%
\hat{\C}_{1}\hat{\C}_{1}^{\prime}\E_{t}^{\prime}%
\widetilde{\R}_1\widetilde{\Lambda }_{R_{1}}^{-1}+IV_{d} \\
&=IV_{a}+IV_{b}+IV_{c}+IV_{d},
\end{align*}%
where $IV_{d}$ is a remainder which, by the same logic as above, can be
shown to be dominated by $IV_{a}-IV_{c}$. It holds that%
\begin{align*}
\left\Vert IV_{a}\right\Vert _{F} &=O_{P}(1) \frac{1}{%
p_{1}p_{2}^{2}T^{2}}\left\Vert \R_{0}\right\Vert _{F}\left\Vert 
\hat{\C}_{1}\right\Vert _{F}\left\Vert \hat{\C}_{0}-%
\C_{0}\bH_{C_0}\right\Vert _{F}\left\Vert \sum_{t=1}^{T}%
\F_{0,t}\hat{\C}_{1}^{\prime}\E_{t}^{\prime}\right\Vert _{F}\left\Vert \widetilde{\R}_1\right\Vert _{F}\left\Vert 
\widetilde{\Lambda }_{R_{1}}^{-1}\right\Vert _{F} \\
&=O_{P}(1) \frac{1}{p_{2}T^{2}}\left( \frac{1}{p_{1}p_{2}}+%
\frac{1}{p_{1}^{1/2}T^{1/2}}\right) \left\Vert \sum_{t=1}^{T}\F_{0,t}%
\hat{\C}_{1}^{\prime}\E_{t}^{\prime}\right\Vert _{F};
\end{align*}%
noting that%
\begin{align}
\left\Vert \sum_{t=1}^{T}\F_{0,t}\hat{\C}_{1}^{\prime}%
\E_{t}^{\prime}\right\Vert _{F} &\leq\left\Vert \bH%
_{C_{1}}^{\prime}\right\Vert _{F}\left\Vert \sum_{t=1}^{T}\F_{0,t}%
\bH_{C_{1}}^{\prime}\C_{1}\E_{t}^{\prime}\right\Vert _{F}+\left\Vert
\sum_{t=1}^{T}\F_{0,t}\left( \hat{\C}_{1}-\C_{1}\bH%
_{C_{1}}\right) ^{\prime}\E_{t}^{\prime}\right\Vert _{F}
\label{f1e_chat} \\
&=O_{P}\left( p_{1}^{1/2}p_{2}^{1/2}T^{1/2}\right) +O_{P}\left( \frac{1}{%
T^{1/2}}p_{1}^{1/2}p_{2}\right) ,  \notag
\end{align}%
it now follows that%
\begin{equation*}
\left\Vert IV_{a}\right\Vert _{F}=O_{P}(1) \left( \frac{1}{%
p_{1}^{1/2}p_{2}^{3/2}T^{3/2}}+\frac{1}{p_{2}^{1/2}T^{2}}+\frac{1}{%
p_{1}^{1/2}p_{2}T^{5/2}}+\frac{1}{T^{3}}\right) .
\end{equation*}%
Similarly, using again (\ref{f1e_chat})%
\begin{align*}
\left\Vert IV_{b}\right\Vert _{F} &=O_{P}(1) \frac{1}{%
p_{1}p_{2}^{2}T^{2}}\left\Vert \hat{\R}_{0}-\R_{0}%
\bH_{R_0}\right\Vert _{F}\left\Vert \bH_{R_0}^{-1}\right\Vert
_{F}\left\Vert \C_{0}\right\Vert _{F}\left\Vert \hat{\C}_{1}%
\right\Vert _{F} \\
&\times \left\Vert \sum_{t=1}^{T}\F_{0,t}\hat{\C}_{1}%
^{\prime}\E_{t}^{\prime}\right\Vert _{F}\left\Vert \widetilde{%
\R_1}\right\Vert _{F}\left\Vert \widetilde{\Lambda }%
_{R_{1}}^{-1}\right\Vert _{F} \\
&=O_{P}(1) \left( \frac{p_{1}^{1/2}}{p_{2}T^{2}}+\frac{%
p_{1}^{1/2}}{p_{2}^{1/2}T^{3}}+\frac{1}{p_{1}^{1/2}p_{2}^{3/2}T^{3/2}}+\frac{%
1}{p_{1}^{1/2}p_{2}T^{5/2}}\right) ,
\end{align*}%
and%
\begin{align*}
\left\Vert IV_{c}\right\Vert _{F} &=O_{P}(1) \frac{1}{%
p_{1}p_{2}^{2}T^{2}}\left\Vert \R_{0}\right\Vert _{F}\left\Vert 
\C_{0}\right\Vert _{F}\left\Vert \hat{\C}_{1}\right\Vert
_{F}\left\Vert \widetilde{\R}_1\right\Vert _{F}\left\Vert \widetilde{%
\Lambda }_{R_{1}}^{-1}\right\Vert _{F} \\
&\times \left( \sum_{t=1}^{T}\left\Vert \hat{\F}_{0,t}-\left( 
\bH_{R_0}\right) ^{-1}\F_{0,t}\left( \bH_{C_0}^{\prime}\right) ^{-1}\right\Vert _{F}^{2}\right) ^{1/2}\left(
\sum_{t=1}^{T}\left\Vert \hat{\C}_{1}^{\prime}\E_{t}^{\prime}\right\Vert _{F}^{2}\right) ^{1/2} \\
&=O_{P}(1) \frac{1}{p_{2}T^{2}}T^{1/2}\left( \frac{1}{%
p_{1}^{1/2}p_{2}^{1/2}}+\frac{1}{\left( p_{1\wedge 2}T\right) ^{1/2}}\right)
\left( p_{1}^{1/2}p_{2}^{1/2}+\frac{p_{1}^{1/2}p_{2}}{T}\right)
\end{align*}%
by the same logic as above. We now study 
\begin{align*}
V^{\prime} &=\frac{1}{p_{1}p_{2}^{2}T^{2}}\sum_{t=1}^{T}\R_{0}%
\F_{0,t}\left( \bH_{C_0}^{\prime}\right) ^{-1}\left( 
\hat{\C}_{0}-\C_{0}\bH_{C_0}\right) ^{\prime}%
\hat{\C}_{1}\hat{\C}_{1}^{\prime}\C_{1}\F_{1,t}^{\prime}\R_{1}^{\prime}\widetilde{\R}_1\widetilde{\Lambda }%
_{R_{1}}^{-1} \\
&+\frac{1}{p_{1}p_{2}^{2}T^{2}}\sum_{t=1}^{T}\left( \hat{\R}_{0}-\R_{0}\bH_{R_0}\right) \F_{0,t}\C_{0}^{\prime}\hat{\C}_{1}\hat{\C}_{1}^{\prime}\C_{1}\F_{1,t}^{\prime}\R_{1}^{\prime}\widetilde{\R}_1%
\widetilde{\Lambda }_{R_{1}}^{-1} \\
&+\frac{1}{p_{1}p_{2}^{2}T^{2}}\sum_{t=1}^{T}\R_{0}\left( \hat{%
\F}_{0,t}-\left( \bH_{R_0}\right) ^{-1}\F_{0,t}\left( 
\bH_{C_0}^{\prime}\right) ^{-1}\right) \C_{0}^{\prime}%
\hat{\C}_{1}\hat{\C}_{1}^{\prime}\C_{1}\F_{1,t}^{\prime}\R_{1}^{\prime}\widetilde{\R}_1\widetilde{\Lambda }%
_{R_{1}}^{-1}+V_{d} \\
&=V_{a}+V_{b}+V_{c}+V_{d},
\end{align*}%
where $V_{d}$ is a remainder which, by the same logic as above, can be shown
to be dominated by $V_{a}-V_{c}$. We have%
\begin{align*}
\left\Vert V_{a}\right\Vert _{F} &=O_{P}(1) \frac{1}{%
p_{1}p_{2}^{2}T^{2}}\left\Vert \R_{0}\right\Vert _{F}\left\Vert 
\R_1\right\Vert _{F}\left\Vert \hat{\C}_{1}\right\Vert
_{F}^{2}\left\Vert \C_{1}\right\Vert _{F}\left\Vert \hat{\C}_{0}-\C_{0}\bH_{C_0}\right\Vert _{F} \\
&\times \left\Vert \sum_{t=1}^{T}\F_{0,t}\F_{1,t}\right\Vert _{F}\left\Vert \widetilde{\R}_1\right\Vert
_{F}\left\Vert \widetilde{\Lambda }_{R_{1}}^{-1}\right\Vert _{F} \\
&=p_{1}^{1/2}O_{P}(1) \frac{1}{p_{1}p_{2}^{2}T^{2}}%
p_{1}^{1/2}p_{1}^{1/2}p_{2}p_{2}^{1/2}p_{2}^{1/2}\left( \frac{1}{p_{1}p_{2}}+%
\frac{1}{p_{1}^{1/2}T^{1/2}}\right) T \\
&=O_{P}\left( \frac{p_{1}^{1/2}}{T}\left( \frac{1}{p_{1}p_{2}}+\frac{1}{%
p_{1}^{1/2}T^{1/2}}\right) \right) ,
\end{align*}%
\begin{align*}
\left\Vert V_{b}\right\Vert _{F} &=O_{P}(1) \frac{1}{%
p_{1}p_{2}^{2}T^{2}}\left\Vert \R_1\right\Vert _{F}\left\Vert \hat{%
\R}_{0}-\R_{0}\bH_{R_0}\right\Vert _{F}\left\Vert 
\hat{\C}_{1}\right\Vert _{F}^{2} \\
&\times \left\Vert \C_{1}\right\Vert _{F}\left\Vert \C_{0}\right\Vert _{F}\left\Vert \sum_{t=1}^{T}\F_{0,t}\F_{1,t}\right\Vert _{F}\left\Vert \widetilde{\R}_1\right\Vert
_{F}\left\Vert \widetilde{\Lambda }_{R_{1}}^{-1}\right\Vert _{F} \\
&=O_{P}(1) \frac{1}{p_{1}p_{2}^{2}T^{2}}%
p_{1}^{1/2}p_{1}^{1/2}p_{2}p_{2}^{1/2}p_{2}^{1/2}p_{1}^{1/2}\left( \frac{1}{%
p_{1}p_{2}}+\frac{1}{p_{2}^{1/2}T^{1/2}}\right) T \\
&=O_{P}\left( \frac{p_{1}^{1/2}}{T}\left( \frac{1}{p_{1}p_{2}}+\frac{1}{%
p_{2}^{1/2}T^{1/2}}\right) \right) ,
\end{align*}%
and, using Lemma \ref{bai}%
\begin{align*}
\left\Vert V_{c}\right\Vert _{F} &=O_{P}(1) \frac{1}{%
p_{1}p_{2}^{2}T^{2}}\left\Vert \R_1\right\Vert _{F}\left\Vert \R_{0}\right\Vert _{F}\left\Vert \hat{\C}_{1}\right\Vert
_{F}^{2}\left\Vert \C_{1}\right\Vert _{F}\left\Vert \C_{0}\right\Vert _{F} \\
&\times \left\Vert \sum_{t=1}^{T}\left( \hat{\F}_{0,t}-\left( 
\bH_{R_0}\right) ^{-1}\F_{0,t}\left( \bH_{C_0}^{\prime}\right) ^{-1}\right) \F_{1,t}^{\prime}\right\Vert
_{F}\left\Vert \widetilde{\R}_1\right\Vert _{F}\left\Vert \widetilde{%
\Lambda }_{R_{1}}^{-1}\right\Vert _{F} \\
&=O_{P}(1) \frac{1}{p_{1}p_{2}^{2}T^{2}}%
p_{1}^{1/2}p_{1}^{1/2}p_{1}^{1/2}p_{2}p_{2}^{1/2}p_{2}^{1/2}T\left( \frac{1}{%
p_{1}^{1/2}p_{2}^{1/2}}+\frac{1}{p_{1\wedge 2}^{1/2}T^{1/2}}\right) \\
&=O_{P}\left( \frac{p_{1}^{1/2}}{T}\left( \frac{1}{p_{1}^{1/2}p_{2}^{1/2}}+%
\frac{1}{p_{1\wedge 2}^{1/2}T^{1/2}}\right) \right) .
\end{align*}%
Finally, using (\ref{f1-hat-l2})%
\begin{align*}
\left\Vert VI\right\Vert _{F} &\leq\frac{1}{p_{1}p_{2}^{2}T}\left( \frac{1%
}{T}\sum_{t=1}^{T}\left\Vert \R_{0}\F_{0,t}\C_{0}^{\prime}-\hat{\R}_{0}\hat{\F}_{0,t}\hat{\C}_{0}^{\prime}\right\Vert _{F}^{2}\right) \left\Vert \hat{%
\C}_{1}\right\Vert _{F}^{2}\left\Vert \widetilde{\R}_1\right\Vert
_{F}\left\Vert \widetilde{\Lambda }_{R_{1}}^{-1}\right\Vert _{F} \\
&=O_{P}(1) \frac{1}{p_{1}p_{2}^{2}T}\left( \frac{1}{%
p_{1}^{1/2}p_{2}^{1/2}}+\frac{1}{p_{1\wedge 2}^{1/2}T^{1/2}}\right)
^{2}p_{2}p_{1}^{1/2} \\
&=O_{P}\left( \frac{1}{p_{1}^{1/2}p_{2}T}\left( \frac{1}{%
p_{1}^{1/2}p_{2}^{1/2}}+\frac{1}{p_{1\wedge 2}^{1/2}T^{1/2}}\right)
^{2}\right) ,
\end{align*}%
which is dominated. The desired result now follows from putting all
together; finally, the invertibility of $\widetilde{\bH}_{R_{1}}$ can be
shown in a similar way as in the above.

As far as (\ref{c-hat-tilde}) is concerned, recall%
\begin{equation*}
\widetilde{\C}_{1}=\proj{\M}_{C_{1}}\widetilde{\C}_{1}\widetilde{%
\Lambda }_{C_{1}}^{-1},
\end{equation*}%
with%
\begin{align*}
\widetilde{\C}_{1} &=\frac{1}{p_{1}^{2}p_{2}T^{2}}\sum_{t=1}^{T}\C_{1}\F_{1,t}^{\prime}\R_{1}^{\prime}\hat{\R}_1\hat{\R}_1%
^{\prime}\R_1\F_{1,t}\C_{1}^{\prime}\widetilde{\C}_{1}%
\widetilde{\Lambda }_{C_{1}}^{-1}+\frac{1}{p_{1}^{2}p_{2}T^{2}}\sum_{t=1}^{T}%
\E_{t}^{\prime}\hat{\R}_1\hat{\R}_1^{\prime}%
\E_{t}\widetilde{\C}_{1}\widetilde{\Lambda }_{C_{1}}^{-1} \\
&+\frac{1}{p_{1}^{2}p_{2}T^{2}}\sum_{t=1}^{T}\C_{1}\F_{1,t}^{\prime}%
\R_{1}^{\prime}\hat{\R}_1\hat{\R}_1^{\prime}%
\E_{t}\widetilde{\C}_{1}\widetilde{\Lambda }_{C_{1}}^{-1}+\left( 
\frac{1}{p_{1}^{2}p_{2}T^{2}}\sum_{t=1}^{T}\C_{1}\F_{1,t}^{\prime}\R_{1}^{\prime}\hat{\R}_1\hat{\R}_1^{\prime}\E_{t}%
\widetilde{\C}_{1}\widetilde{\Lambda }_{C_{1}}^{-1}\right) ^{\prime} \\
&+\frac{1}{p_{1}^{2}p_{2}T^{2}}\sum_{t=1}^{T}\left( \R_{0}\F%
_{0,t}\C_{0}^{\prime}-\hat{\R}_{0}\hat{\F}%
_{0,t}\hat{\C}_{0}^{\prime}\right) ^{\prime}\hat{\R%
}\hat{\R}_1^{\prime}\E_{t}\widetilde{\C}_{1}%
\widetilde{\Lambda }_{C_{1}}^{-1} \\
&+\left( \frac{1}{p_{1}^{2}p_{2}T^{2}}\sum_{t=1}^{T}\left( \R_{0}%
\F_{0,t}\C_{0}^{\prime}-\hat{\R}_{0}\hat{%
\F}_{0,t}\hat{\C}_{0}^{\prime}\right) ^{\prime}%
\hat{\R}_1\hat{\R}_1^{\prime}\E_{t}\widetilde{%
\C}_{1}\widetilde{\Lambda }_{C_{1}}^{-1}\right) ^{\prime} \\
&+\frac{1}{p_{1}^{2}p_{2}T^{2}}\sum_{t=1}^{T}\C_{1}\F_{1,t}^{\prime}%
\R_{1}^{\prime}\hat{\R}_1\hat{\R}_1^{\prime}\left( 
\R_{0}\F_{0,t}\C_{0}^{\prime}-\hat{\R}_{0}\hat{\F}_{0,t}\hat{\C}_{0}^{\prime}\right) 
\widetilde{\C}_{1}\widetilde{\Lambda }_{C_{1}}^{-1} \\
&+\left( \frac{1}{p_{1}^{2}p_{2}T^{2}}\sum_{t=1}^{T}\C_{1}\F_{1,t}^{\prime}\R_{1}^{\prime}\hat{\R}_1\hat{\R}_1^{\prime}\left( \R_{0}\F_{0,t}\C_{0}^{\prime}-\hat{%
\R}_{0}\hat{\F}_{0,t}\hat{\C}_{0}^{\prime}\right) \widetilde{\C}_{1}\widetilde{\Lambda }_{C_{1}}^{-1}\right) ^{\prime} \\
&+\frac{1}{p_{1}^{2}p_{2}T^{2}}\sum_{t=1}^{T}\left( \R_{0}\F%
_{0,t}\C_{0}^{\prime}-\hat{\R}_{0}\hat{\F}%
_{0,t}\hat{\C}_{0}^{\prime}\right) ^{\prime}\hat{\R}_{1}\hat{\R}_1^{\prime}\left( \R_{0}\F_{0,t}\C_{0}^{\prime}-\hat{\R}_{0}\hat{\F}_{0,t}\hat{%
\C}_{0}^{\prime}\right) \widetilde{\C}_{1}\widetilde{\Lambda }%
_{C_{1}}^{-1} \\
&=I+II+III+III^{\prime}+IV+IV^{\prime}+V+V^{\prime}+VI.
\end{align*}%
Upon defining%
\begin{equation*}
\widetilde{\bH}_{C_{1}}=\left( \frac{1}{T^{2}}\sum_{t=1}^{T}\F_{1,t}^{\prime}\left( \frac{\R_{1}^{\prime}\hat{\R}_1}{p_{1}}%
\right) \left( \frac{\hat{\R}_1^{\prime}\R_1}{p_{1}}\right) 
\F_{1,t}\left( \frac{\C_{1}^{\prime}\widetilde{\C}_{1}}{p_{2}}%
\right) \right) \widetilde{\Lambda }_{C_{1}}^{-1},
\end{equation*}%
the proof of the theorem is the same as above, and we therefore omit it.
\end{proof}

\begin{proof}[Proof of Theorem \protect\ref{f-tilde}]
Recall that%
\begin{equation*}
\widetilde{\F}_{1,t}=\frac{1}{p_{1}p_{2}}\widetilde{\R}_1%
^{\prime}\proj{\X}_t\widetilde{\C}_{1}.
\end{equation*}%
We will use the decompositions%
\begin{align}
\R_{1} &=\widetilde{\R}_1\left( \widetilde{\bH}_{R_{1}}\right)
^{-1}-\left( \widetilde{\R}_1-\R_1\widetilde{\bH}%
_{R_{1}}\right) \left( \widetilde{\bH}_{R_{1}}\right) ^{-1},
\label{r-r-tilde} \\
\C_{1} &=\widetilde{\C}_{1}\left( \widetilde{\bH}_{C_{1}}\right)
^{-1}-\left( \widetilde{\C}_{1}-\C_{1}\widetilde{\bH}%
_{C_{1}}\right) \left( \widetilde{\bH}_{C_{1}}\right) ^{-1}.
\label{c-c-tilde}
\end{align}%
It holds that%
\begin{align*}
\widetilde{\F}_{1,t} &=\left( \widetilde{\bH}_{R_{1}}\right) ^{-1}%
\F_{1,t}\left( \widetilde{\bH}_{C_{1}}^{\prime}\right) ^{-1}-\frac{%
1}{p_{1}p_{2}}\widetilde{\R}_1^{\prime}\left( \widetilde{\R}_1-%
\R_1\widetilde{\bH}_{R_{1}}\right) \left( \widetilde{\bH}%
_{R_{1}}\right) ^{-1}\F_{1,t}\C_{1}^{\prime}\widetilde{\C}_{1} \\
&-\frac{1}{p_{1}p_{2}}\widetilde{\R}_1^{\prime}\R_1\F_{1,t}%
\C_{1}^{\prime}\left( \widetilde{\C}_{1}-\C_{1}\widetilde{%
\bH}_{C_{1}}\right) \left( \widetilde{\bH}_{C_{1}}\right) ^{-1} \\
&+\frac{1}{p_{1}p_{2}}\widetilde{\R}_1^{\prime}\left( \widetilde{\R}_{1}-\R_1\widetilde{\bH}_{R_{1}}\right) \left( \widetilde{\bH}_{R_{1}}\right) ^{-1}\F_{1,t}\C_{1}^{\prime}\left( 
\widetilde{\C}_{1}-\C_{1}\widetilde{\bH}_{C_{1}}\right) \left( 
\widetilde{\bH}_{C_{1}}\right) ^{-1} \\
&+\frac{1}{p_{1}p_{2}}\widetilde{\R}_1^{\prime}\E_{t}%
\widetilde{\C}_{1}+\frac{1}{p_{1}p_{2}}\widetilde{\R}_1^{\prime}\left( \R_{0}\F_{0,t}\C_{0}^{\prime}-\hat{%
\R}_{0}\hat{\F}_{0,t}\hat{\C}_{0}^{\prime}\right) \widetilde{\C}_{1} \\
&=\left( \widetilde{\bH}_{R_{1}}\right) ^{-1}\F_{1,t}\left( 
\widetilde{\bH}_{C_{1}}^{\prime}\right) ^{-1}+I+II+III+IV+V.
\end{align*}%
It holds that%
\begin{align*}
\left\Vert I\right\Vert _{F} &\leq\frac{1}{p_{1}p_{2}}\left\Vert 
\widetilde{\R}_1\right\Vert _{F}\left\Vert \widetilde{\R}_1-%
\R_1\widetilde{\bH}_{R_{1}}\right\Vert _{F}\left\Vert \left( 
\widetilde{\bH}_{R_{1}}\right) ^{-1}\right\Vert _{F}\left\Vert \C%
\right\Vert _{F}\left\Vert \widetilde{\C}_{1}\right\Vert _{F}\left\Vert 
\F_{1,t}\right\Vert _{F} \\
&=O_{P}(1) \frac{1}{p_{1}p_{2}}p_{1}^{1/2}\left( \frac{%
p_{1}^{1/2}}{p_{2}^{1/2}T}+\frac{p_{1}^{1/2}}{T^{2}}+\frac{1}{p_{1}^{1/2}T}+%
\frac{1}{T^{3/2}}\right) p_{2}^{1/2}p_{2}^{1/2}T^{1/2} \\
&=O_{P}(1) \left( \frac{1}{p_{2}^{1/2}T^{1/2}}+\frac{1}{T^{3/2}}%
+\frac{1}{p_{1}T^{1/2}}+\frac{1}{p_{1}^{1/2}T}\right) ;
\end{align*}%
by the same token, it can be shown that%
\begin{equation*}
\left\Vert II\right\Vert _{F}=O_{P}(1) \left( \frac{1}{%
p_{1}^{1/2}T^{1/2}}+\frac{1}{T^{3/2}}+\frac{1}{p_{2}T^{1/2}}+\frac{1}{%
p_{2}^{1/2}T}\right) ,
\end{equation*}%
and $\left\Vert III\right\Vert _{F}$ is clearly dominated by $\left\Vert
I\right\Vert _{F}$ and $\left\Vert II\right\Vert _{F}$. Using the convention 
$h_{R_{1}}=h_{C_{1}}=h_{R_{0}}=h_{C_{0}}=1$%
\begin{align*}
IV &=\frac{1}{p_{1}p_{2}}\widetilde{\bH}_{R_{1}}^{\prime}\R_{1}^{\prime}\E_{t}\C_{1}\widetilde{\bH}_{C_{1}}+\frac{1}{%
p_{1}p_{2}}\left( \widetilde{\R}_1-\R_1\widetilde{\bH}_{R_{1}}\right) ^{\prime}\E_{t}\C_{1}\widetilde{\bH}_{C_{1}} \\
&+\frac{1}{p_{1}p_{2}}\widetilde{\bH}_{R_{1}}^{\prime}\R_{1}^{\prime}\E_{t}\left( \widetilde{\C}_{1}-\C_{1}\widetilde{%
\bH}_{C_{1}}\right) +\frac{1}{p_{1}p_{2}}\left( \widetilde{\R}_{1}-\R_{1}\widetilde{\bH}_{R_{1}}\right) ^{\prime}\E_{t}\left( 
\widetilde{\C}_{1}-\C_{1}\widetilde{\bH}_{C_{1}}\right) \\
&=IV_{a}+IV_{b}+IV_{c}+IV_{d}.
\end{align*}%
In the above we showed that $\left\Vert \R_{1}^{\prime}\E_{t}%
\C_{1}\right\Vert _{F}=O_{p}\left( p_{1}^{1/2}p_{2}^{1/2}\right) $ and $%
\left\Vert \E_{t}\C_{1}\right\Vert _{F}=O_{p}\left(
p_{1}^{1/2}p_{2}^{1/2}\right) $; hence it follows that%
\begin{equation*}
\left\Vert IV_{a}\right\Vert _{F}=O_{P}\left( \frac{1}{\left(
p_{1}p_{2}\right) ^{1/2}}\right) ,
\end{equation*}%
and 
\begin{align*}
\left\Vert IV_{b}\right\Vert _{F} &\leq\frac{1}{p_{1}p_{2}}\left\Vert 
\widetilde{\R}_1-\R_1\widetilde{\bH}_{R_{1}}\right\Vert
_{F}\left\Vert \E_{t}\C_{1}\right\Vert _{F}\left\Vert \widetilde{%
\bH}_{C_{1}}\right\Vert _{F} \\
&=O_{P}(1) \frac{1}{p_{1}p_{2}}p_{1}^{1/2}\left( \frac{%
p_{1}^{1/2}}{p_{2}^{1/2}T}+\frac{p_{1}^{1/2}}{T^{2}}+\frac{1}{p_{1}^{1/2}T}+%
\frac{1}{T^{3/2}}\right) \left( p_{1}p_{2}\right) ^{1/2} \\
&=O_{P}(1) \left( \frac{1}{p_{2}^{1/2}T}+\frac{1}{T^{2}}+\frac{1%
}{p_{1}T}+\frac{1}{p_{1}^{1/2}T^{3/2}}\right) ,
\end{align*}%
and likewise%
\begin{equation*}
\left\Vert IV_{c}\right\Vert _{F}=O_{P}(1) \left( \frac{1}{%
p_{1}^{1/2}T}+\frac{1}{T^{2}}+\frac{1}{p_{2}T}+\frac{1}{p_{1}^{1/2}T^{3/2}}%
\right) .
\end{equation*}%
Finally, it is not hard to see that $IV_{d}$ is dominated by $IV_{a}-IV_{c}$%
. Finally, after some algebra%
\begin{align*}
\left\Vert V\right\Vert _{F} &\leq\frac{1}{p_{1}p_{2}}\left\Vert 
\widetilde{\R}_1\right\Vert _{F}\left\Vert \R_{0}\F%
_{0,t}\C_{0}^{\prime}-\hat{\R}_{0}\hat{\F}%
_{0,t}\hat{\C}_{0}^{\prime}\right\Vert _{F}\left\Vert 
\widetilde{\C}_{1}\right\Vert _{F} \\
&=O_{P}(1) \left( \frac{1}{\left( p_{1}p_{2}\right) ^{1/2}}+%
\frac{1}{p_{1\wedge 2}^{1/2}T^{1/2}}\right) .
\end{align*}%
The final result follows from putting all together. The proof of (\ref%
{f-tilde-l2}) is similar to that of (\ref{f1-hat-l2}), and we omit it to
save space.
\end{proof}

\begin{proof}[Proof of Lemma \protect\ref{neg-rtilde}]
Consider (\ref{r-tilde-neg}). The result follows immediately upon
considering the term%
\begin{equation*}
\frac{1}{p_{1}p_{2}^{2}T}\sum_{t=1}^{T}\E_{t}\widetilde{\C}_{1,\perp}^{s}\left( \widetilde{\C}_{\perp}^{s}\right) ^{\prime}%
\E_{t}^{\prime}\widetilde{\R}_{0}\widetilde{\Lambda }%
_{R_{0}}^{-1},
\end{equation*}%
in the expansion of $\widetilde{\R}_{0}-\R_{0}\widetilde{%
\bH}_{R_0}$, which can be derived along the same lines as (\ref%
{r1-hat-dec}). In particular, the term%
\begin{equation*}
\frac{1}{p_{1}p_{2}^{2}T}\sum_{t=1}^{T}\E_{t}\C_{1,\perp}^{s}\left( \C_{1,\perp}^{s}\right) ^{\prime}\E_{t}^{\prime}%
\widetilde{\R}_{0}\widetilde{\Lambda }_{R_{0}}^{-1},
\end{equation*}%
is of order $O_{P}\left( \frac{p_{1}^{1/2}}{p_{1}p_{2}}\right) +O_{P}\left( 
\frac{p_{1}^{1/2}}{p_{2}^{1/2}T^{1/2}}\right) $; again, this can be shown
following exactly the proof of (\ref{3f1}). Since this is the dominant rate
in $\left\Vert \hat{\R}_{0}-\R_{0}\hat{\bH}%
_{R_0}\right\Vert _{F}$, the desired result follows. The same arguments
yield also (\ref{c-tilde-neg}).
\end{proof}

\begin{proof}[Proof of Lemma \protect\ref{bai03}]
The method of proof is the same for all theorems and lemmas, and it is based
on an argument in \citet{bai03} - see in particular Footnote 5 on p. 143. Consider
the random variable $\mathcal{A}$, with $\mathcal{A}=1$ if the relevant
result holds, and $\mathcal{A}=0$ otherwise; and the random variable $%
\mathcal{B}$, with $\mathcal{B}=1$ if $\widetilde{h}_{R_{1}}=h_{R_{1}}$, and $%
\widetilde{h}_{C_{1}}=h_{C_{1}}$, and $\widetilde{h}_{R_{1}}=h_{R_{0}}$, and $\widetilde{h}%
_{C_{1}}=h_{C_{0}}$, and $\mathcal{B}=0$ otherwise. Then we have%
\begin{equation*}
P\left( \left\{ \mathcal{A}=1\right\} \right) =P\left( \left\{ \mathcal{A}%
=1\right\} \cap \left\{ \mathcal{B}=1\right\} \right) +P\left( \left\{ 
\mathcal{A}=1\right\} \cap \left\{ \mathcal{B}=0\right\} \right) .
\end{equation*}%
Note also that%
\begin{equation*}
P\left( \left\{ \mathcal{A}=1\right\} \cap \left\{ \mathcal{B}=0\right\}
\right) \leq P\left( \left\{ \mathcal{B}=0\right\} \right) =o(1)
,
\end{equation*}%
under the assumption that $\widetilde{h}_{R_{1}}=h_{R_{1}}+o_{P}(1) $, $%
\widetilde{h}_{C_{1}}=h_{C_{1}}+o_{P}(1) $, $\widetilde{h}%
_{R_{1}}=h_{R_{0}}+o_{P}(1) $, $\widetilde{h}_{C_{1}}=h_{C_{0}}+o_{P}\left(
1\right) $, which also entails that $P\left( \left\{ \mathcal{B}=1\right\}
\right) \rightarrow 1$. Hence we have%
\begin{align*}
P\left( \left\{ \mathcal{A}=1\right\} \right) &=P\left( \left\{ \mathcal{A}%
=1\right\} \cap \left\{ \mathcal{B}=1\right\} \right) +o(1) \\
&=P\left( \left\{ \mathcal{A}=1\right\} |\left\{ \mathcal{B}=1\right\}
\right) P\left( \left\{ \mathcal{B}=1\right\} \right) +o(1) \\
&=P\left( \left\{ \mathcal{A}=1\right\} |\left\{ \mathcal{B}=1\right\}
\right) +o(1) ,
\end{align*}%
which proves the desired result.
\end{proof}

\begin{proof}[Proof of Theorem \protect\ref{er}]
We only show that $\widetilde{h}_{R_{1}}=h_{R_{1}}+o_{P}(1) $; the other
results follow from the same arguments. Recall that, by Lemma \ref%
{spec-m-r-tilde}, 
\begin{equation}
\lambda _{j}\left( \proj{\M}_{R_{1}}\right) =c_{0}+o_{P}(1) ,
\label{khat1}
\end{equation}%
for all $j\leq h_{R_{1}}$, where $c_{0}>0$; and, also%
\begin{equation}
\lambda _{j}\left( \proj{\M}_{R_{1}}\right) =O_{P}\left( \frac{1}{%
p_{1\wedge 2}^{1/2}T^{3/2}}\right) +O_{P}\left( \frac{1}{p_{2}T}\right)
+O_{P}\left( \frac{1}{T^{2}}\right) +O_{P}\left( \frac{1}{%
p_{1}^{1/2}p_{2}^{1/2}T}\right) ,  \label{khat2}
\end{equation}%
for all $j>h_{R_{1}}$. Hence, by elementary arguments, (\ref{khat1}) entails that%
\begin{equation*}
\max_{1\leq j\leq h_{R_{1}}-1}\frac{\lambda _{j}\left( \M%
_{X}^{R_{1}\diamond}\right) }{\lambda _{j+1}\left( \proj{\M}_{R_{1}}\right) +%
\widetilde{c}_{R_{1}}\delta _{R,p_{1},p_{2},T}^{k}}\leq \max_{1\leq j\leq
h_{R_{1}}-1}\frac{\lambda _{j}\left( \proj{\M}_{R_{1}}\right) }{\lambda
_{j+1}\left( \proj{\M}_{R_{1}}\right) }=O_{P}(1) .
\end{equation*}%
Similarly, using (\ref{khat2}) and the definition of $\widetilde{\delta }%
_{R,p_{1},p_{2},T}^{k}$%
\begin{equation*}
\max_{h_{R_{1}}+1\leq j\leq h_{\max }}\frac{\lambda _{j}\left( \M%
_{X}^{R_{1}\diamond}\right) }{\lambda _{j+1}\left( \proj{\M}_{R_{1}}\right) +%
\widetilde{c}_{R_{1}}\delta _{R,p_{1},p_{2},T}^{k}}\leq \max_{1\leq j\leq
h_{R_{1}}-1}\frac{\lambda _{j}\left( \proj{\M}_{R_{1}}\right) }{\widetilde{c}%
_{R_{1}}\delta _{R,p_{1},p_{2},T}^{k}}=O_{P}(1) .
\end{equation*}%
Finally, combining (\ref{khat1}) and (\ref{khat2}), as $\min \left\{
p_{1},p_{2},T\right\} \rightarrow \infty $ we have that, for some $%
0<c_{0}<\infty $%
\begin{equation*}
P\left( \frac{\lambda _{h_{R_{1}}}\left( \proj{\M}_{R_{1}}\right) }{\lambda
_{h_{R_{1}}+1}\left( \proj{\M}_{R_{1}}\right) +\widetilde{c}_{R_{1}}\delta
_{R,p_{1},p_{2},T}^{k}}\geq c_{0}\left( \delta _{R,p_{1},p_{2},T}^{k}\right)
^{-1}\lambda _{h_{R_{1}}}\left( \proj{\M}_{R_{1}}\right) \right) =1.
\end{equation*}%
The desired result follows from noting that, by (\ref{khat1})%
\begin{equation*}
\lim_{\min \left\{ p_{1},p_{2},T\right\} \rightarrow \infty }\left( \delta
_{R,p_{1},p_{2},T}^{k}\right) ^{-1}\lambda _{h_{R_{1}}}\left( \M%
_{X}^{R_{1}\diamond}\right) =\infty \text{. }
\end{equation*}%
When using the mock eigenvalue, note that if $h_{R_{1}}>0$%
\begin{equation*}
\frac{\lambda _{0}\left( \proj{\M}_{R_{1}}\right) }{\lambda _{1}\left( 
\proj{\M}_{R_{1}}\right) +\widetilde{c}_{R_{1}}\delta _{R,p_{1},p_{2},T}^{k}}%
\leq \frac{\lambda _{0}\left( \proj{\M}_{R_{1}}\right) }{\lambda_{1}\left( \proj{\M}_{R_{1}}\right) }=o_{P}(1) ,
\end{equation*}%
by the definition of $\lambda _{0}\left( \proj{\M}_{R_{1}}\right) $;
conversely, if $h_{R_{1}}=0$, then by the same token as above%
\begin{equation*}
P\left( \frac{\lambda _{0}\left( \proj{\M}_{R_{1}}\right) }{\lambda
_{1}\left( \proj{\M}_{R_{1}}\right) +\widetilde{c}_{R_{1}}\delta
_{R,p_{1},p_{2},T}^{k}}\geq c_{0}\left( \delta _{R,p_{1},p_{2},T}^{k}\right)
^{-1}\lambda _{0}\left( \proj{\M}_{R_{1}}\right) \right) =1.
\end{equation*}%
for some $0<c_{0}<\infty $, and, by the construction of $\lambda
_{h_{R_{1}}}\left( \proj{\M}_{R_{1}}\right) $ 
\begin{equation*}
\lim_{\min \left\{ p_{1},p_{2},T\right\} \rightarrow \infty }\left( \delta
_{R,p_{1},p_{2},T}^{k}\right) ^{-1}\lambda _{0}\left( \M%
_{X}^{R_{1}\diamond}\right) =\infty \text{,}
\end{equation*}%
whence the desired result again follows.
\end{proof}

\newpage
%% ********************************************************
\section{Additional Monte Carlo results}\label{app:MC}
%% ********************************************************
In this section we report extended simulation studies that could not fit in the main article due to space constraints. 

\begin{figure}
    \centering
    \includegraphics[width=0.45\linewidth]{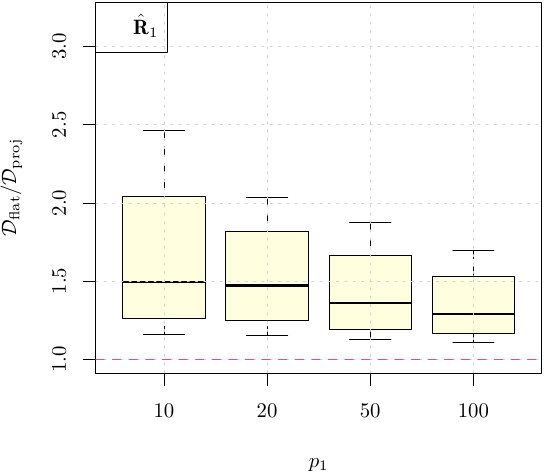}
    \includegraphics[width=0.45\linewidth]{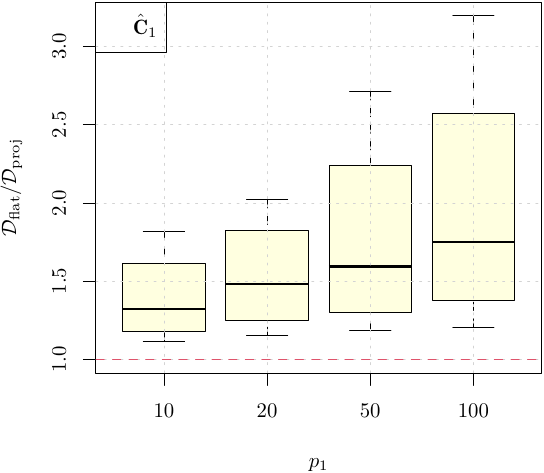}
    \caption{Boxplots of the ratio $\D_{\text{flat}}/\D_{\text{proj}}$ between the initial flattened  and the refined projected estimators for $\R_1$ (left) and $\C_1$ (right) against $p_1$.}
    \label{fig:mc2box1p}
\end{figure}

\begin{figure}
    \centering
    \includegraphics[width=0.45\linewidth]{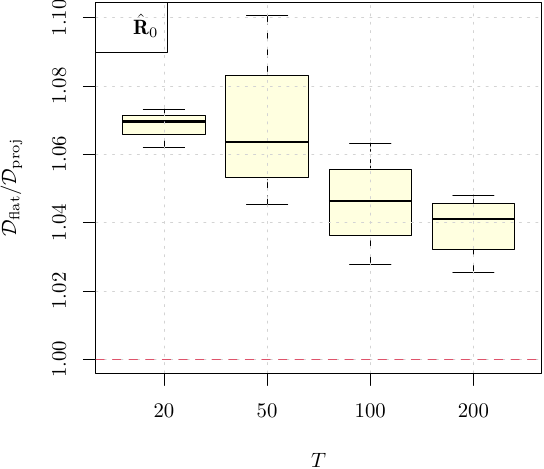}
    \includegraphics[width=0.45\linewidth]{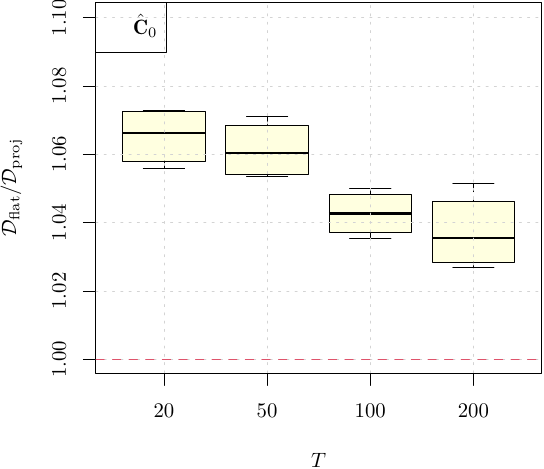}
    \caption{Boxplots of the ratio $\D_{\text{flat}}/\D_{\text{proj}}$ between the initial flattened  and the refined projected estimators for $\R_0$ (left) and $\C_0$ (right) against $T$.}
    \label{fig:mc2box0T}
\end{figure}

\begin{figure}
    \centering
    \includegraphics[width=0.45\linewidth]{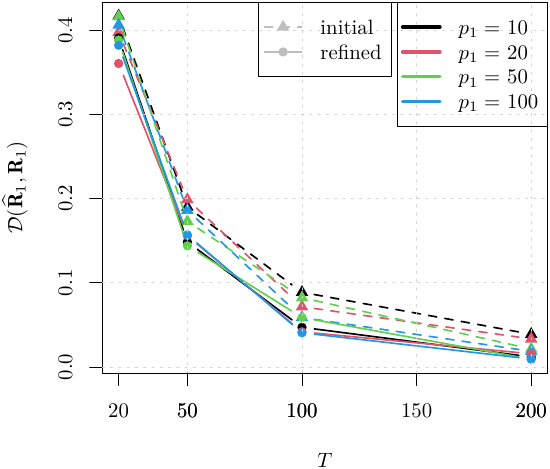}
    \includegraphics[width=0.45\linewidth]{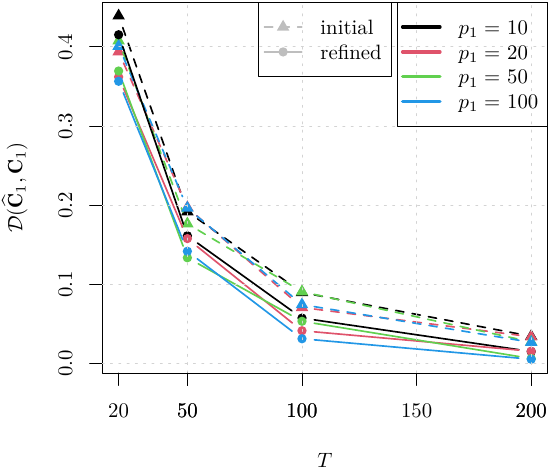}
    \includegraphics[width=0.45\linewidth]{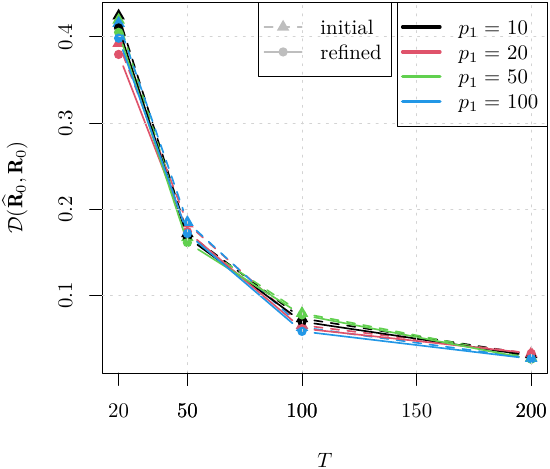}
    \includegraphics[width=0.45\linewidth]{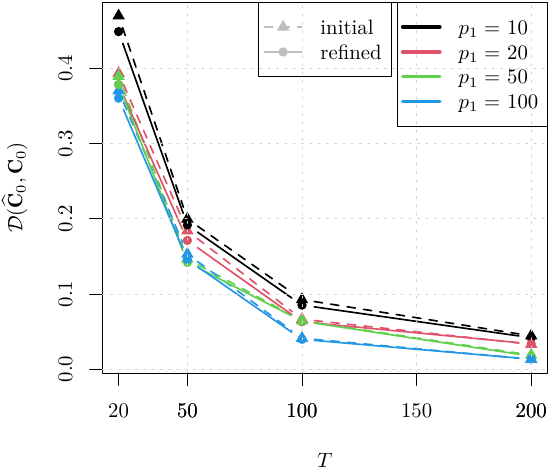}
    \caption{Case 1.2: estimation of $\R_1$, $\C_1$, $\R_0$, $\C_0$  for varying series length $T$ and row dimension $p_1$. Also, $p_2=20$. Triangles with dashed lines indicate the initial ``flattened'' estimator, circles with full lines indicate the refined projected estimator.}\label{fig:1.2}
\end{figure}

\begin{figure}
    \centering
    \includegraphics[width=0.45\linewidth]{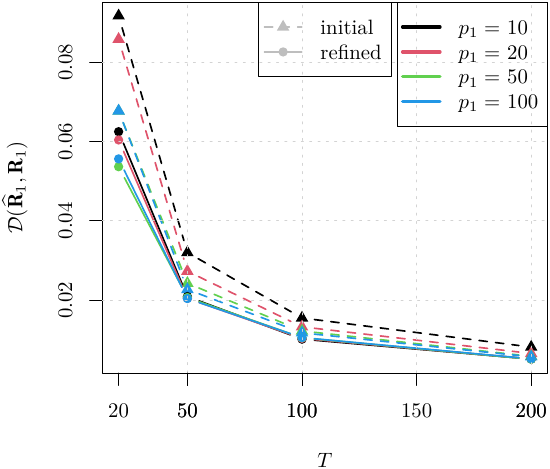}
    \includegraphics[width=0.45\linewidth]{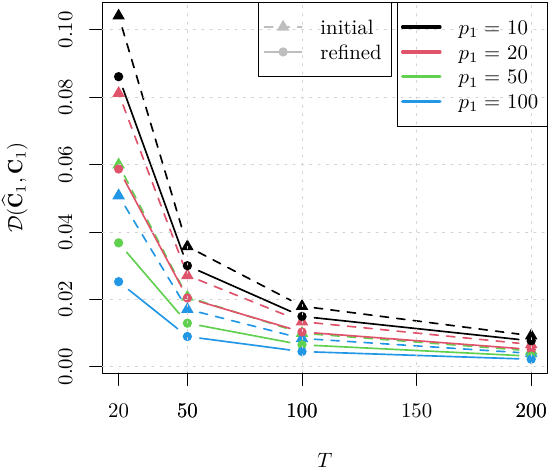}
    \includegraphics[width=0.45\linewidth]{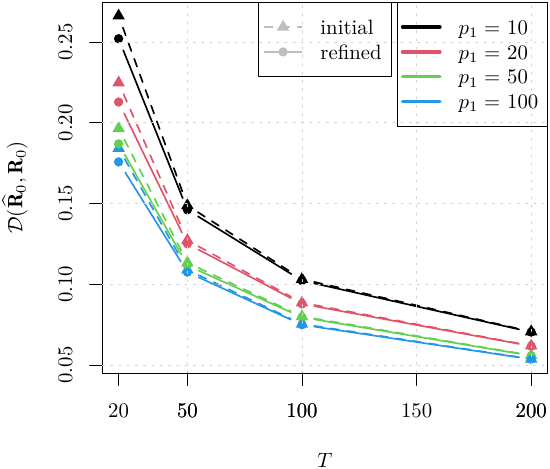}
    \includegraphics[width=0.45\linewidth]{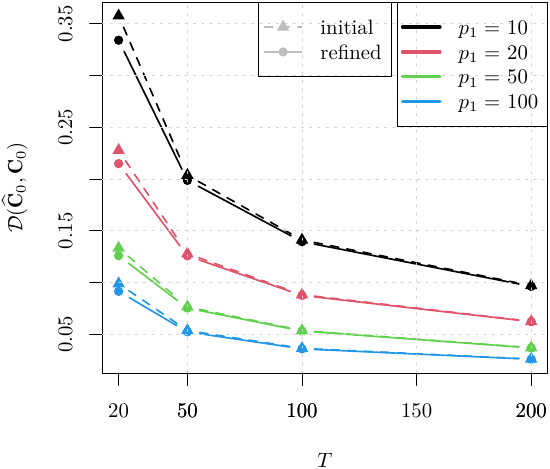}
    \caption{Case 2.1: estimation of $\R_1$, $\C_1$, $\R_0$, $\C_0$  for varying series length $T$ and row dimension $p_1$. Also, $p_2=20$. Triangles with dashed lines indicate the initial ``flattened'' estimator, circles with full lines indicate the refined projected estimator.}\label{fig:2.1}
\end{figure}

\begin{figure}
    \centering
    \includegraphics[width=0.45\linewidth]{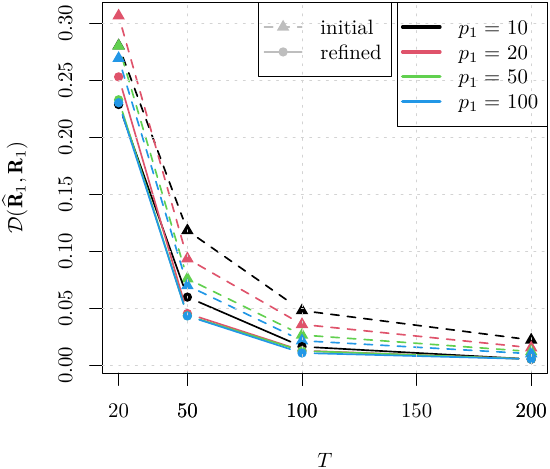}
    \includegraphics[width=0.45\linewidth]{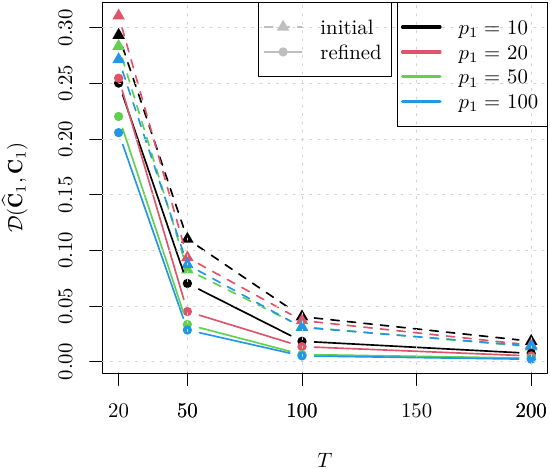}
    \includegraphics[width=0.45\linewidth]{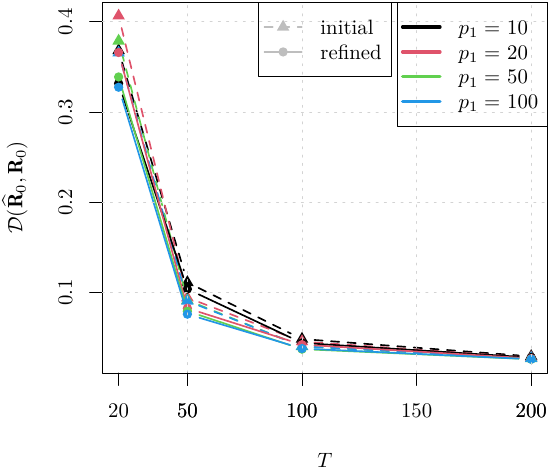}
    \includegraphics[width=0.45\linewidth]{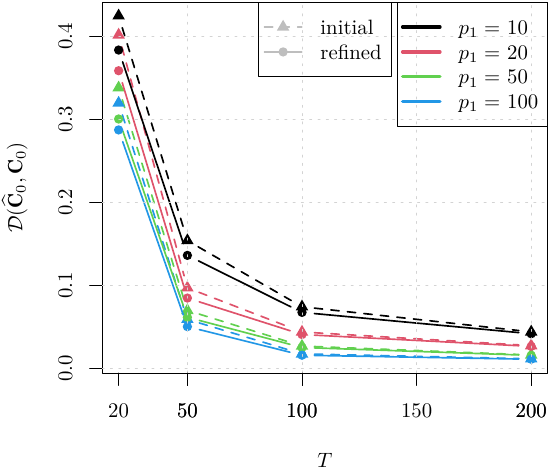}
    \caption{Case 2.2: estimation of $\R_1$, $\C_1$, $\R_0$, $\C_0$  for varying series length $T$ and row dimension $p_1$. Also, $p_2=20$. Triangles with dashed lines indicate the initial ``flattened'' estimator, circles with full lines indicate the refined projected estimator.}\label{fig:2.2}
\end{figure}

\begin{figure}
    \centering
    \includegraphics[width=0.45\linewidth]{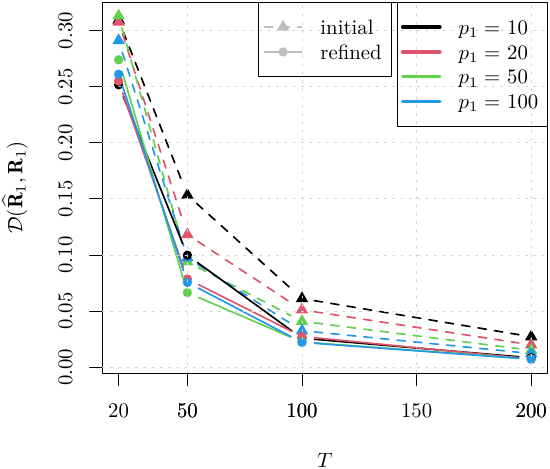}
    \includegraphics[width=0.45\linewidth]{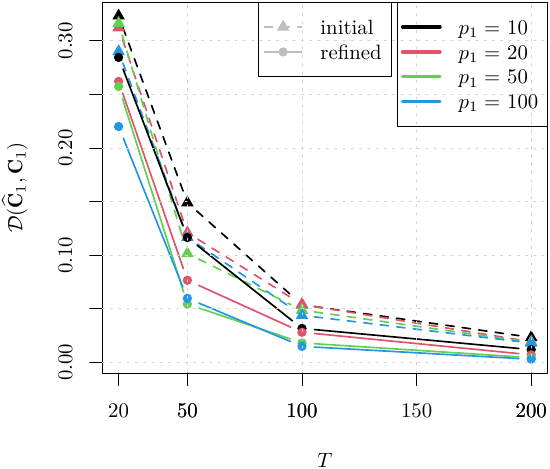}
    \includegraphics[width=0.45\linewidth]{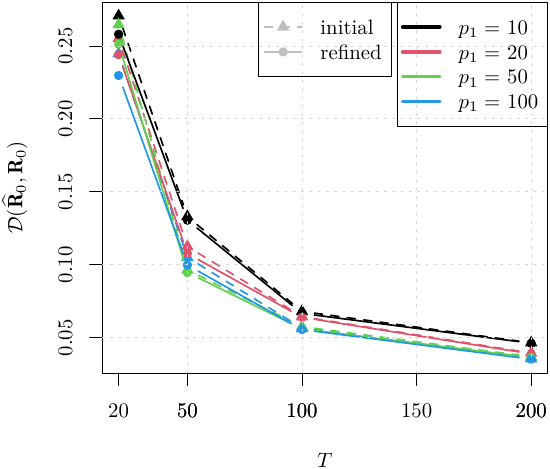}
    \includegraphics[width=0.45\linewidth]{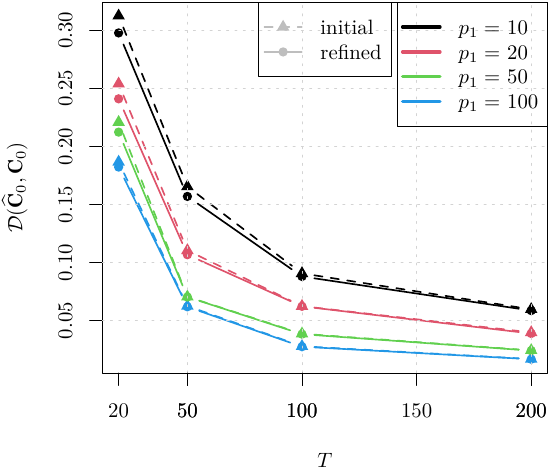}
    \caption{Case 3.1: estimation of $\R_1$, $\C_1$, $\R_0$, $\C_0$  for varying series length $T$ and row dimension $p_1$. Also, $p_2=20$. Triangles with dashed lines indicate the initial ``flattened'' estimator, circles with full lines indicate the refined projected estimator.}\label{fig:3.1}
\end{figure}

\begin{figure}
    \centering
    \includegraphics[width=0.45\linewidth]{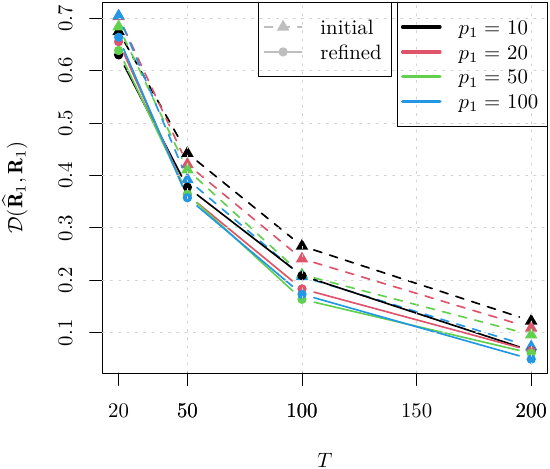}
    \includegraphics[width=0.45\linewidth]{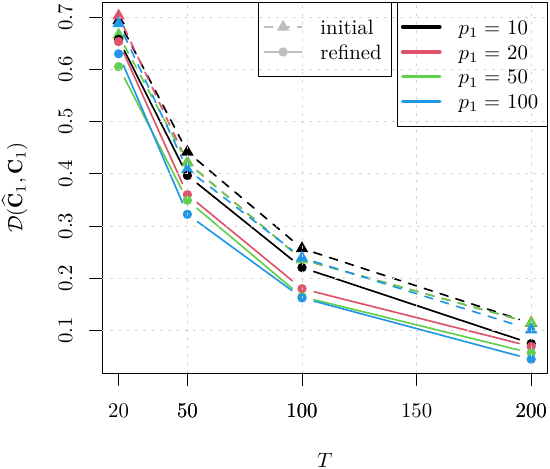}
    \includegraphics[width=0.45\linewidth]{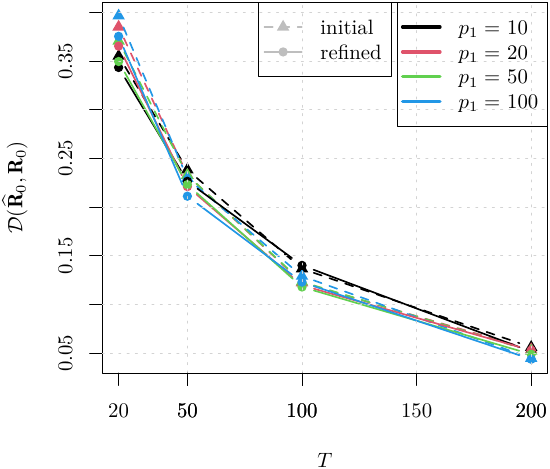}
    \includegraphics[width=0.45\linewidth]{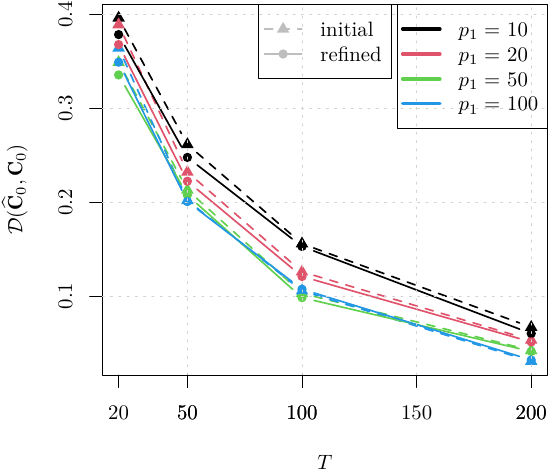}
    \caption{Case 3.2: estimation of $\R_1$, $\C_1$, $\R_0$, $\C_0$  for varying series length $T$ and row dimension $p_1$. Also, $p_2=20$. Triangles with dashed lines indicate the initial ``flattened'' estimator, circles with full lines indicate the refined projected estimator.}\label{fig:3.2}
\end{figure}

\begin{figure}
    \centering
    \includegraphics[width=0.45\linewidth]{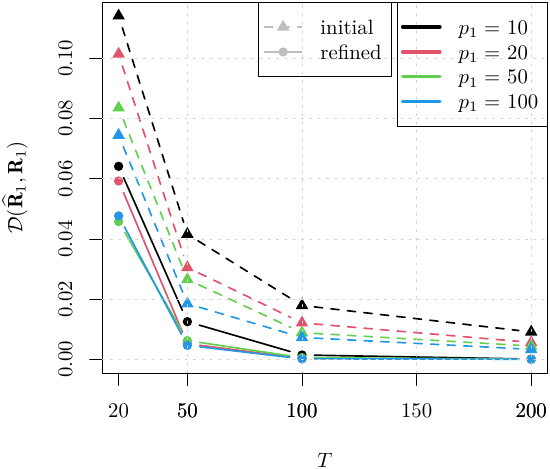}
    \includegraphics[width=0.45\linewidth]{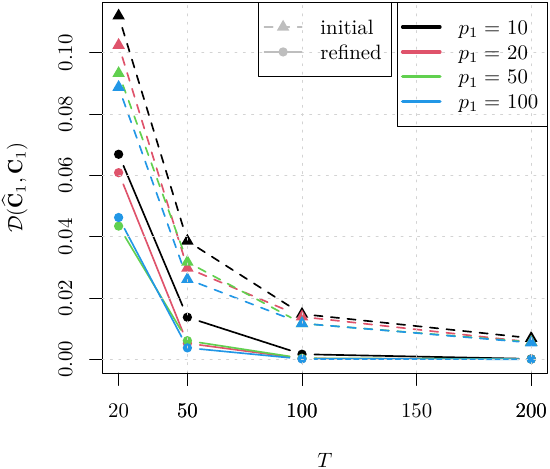}
    \includegraphics[width=0.45\linewidth]{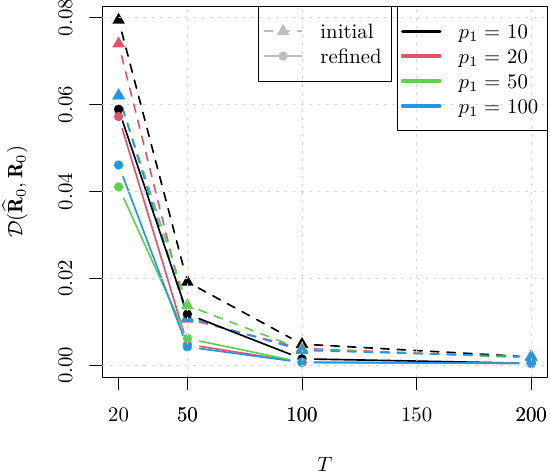}
    \includegraphics[width=0.45\linewidth]{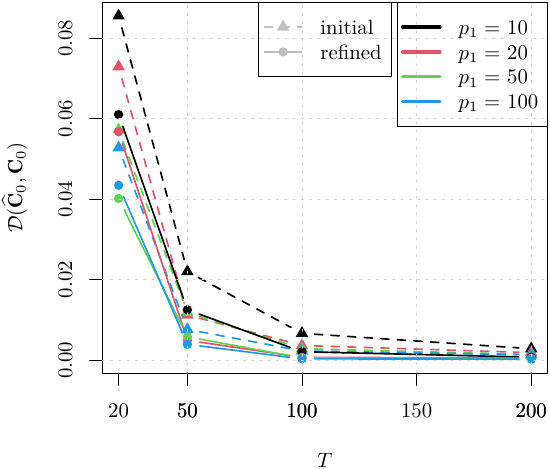}
    \caption{Case 4.1: estimation of $\R_1$, $\C_1$, $\R_0$, $\C_0$  for varying series length $T$ and row dimension $p_1$. Also, $p_2=20$. Triangles with dashed lines indicate the initial ``flattened'' estimator, circles with full lines indicate the refined projected estimator.}\label{fig:4.1}
\end{figure}

\begin{figure}
    \centering
    \includegraphics[width=0.45\linewidth]{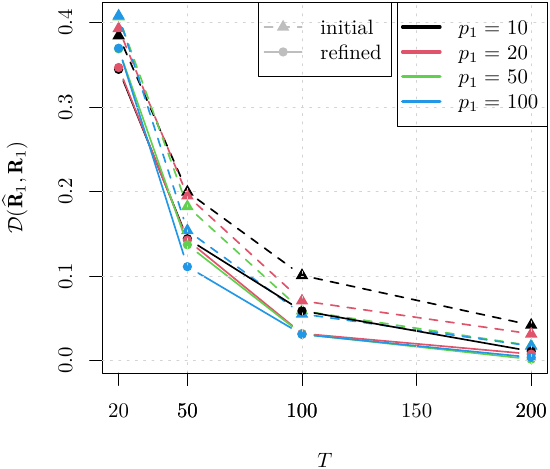}
    \includegraphics[width=0.45\linewidth]{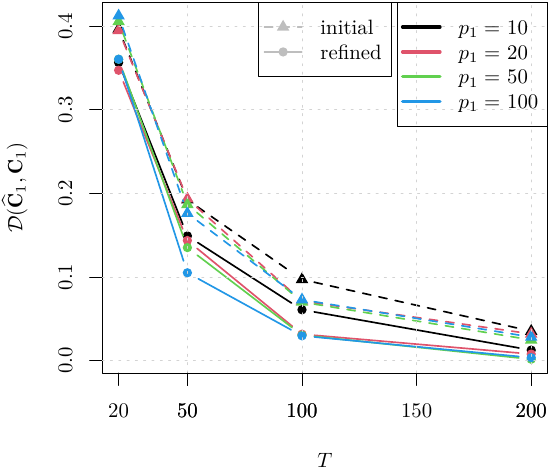}
    \includegraphics[width=0.45\linewidth]{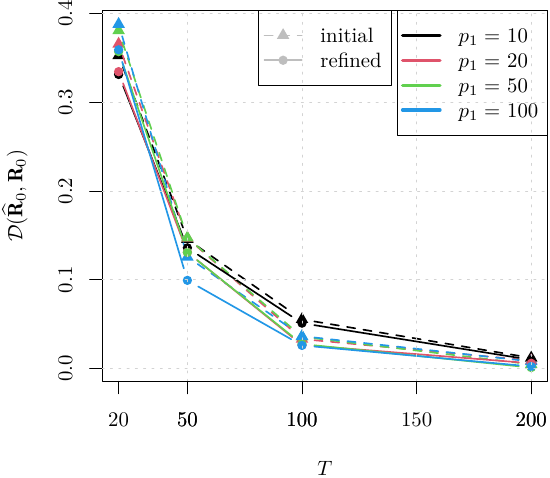}
    \includegraphics[width=0.45\linewidth]{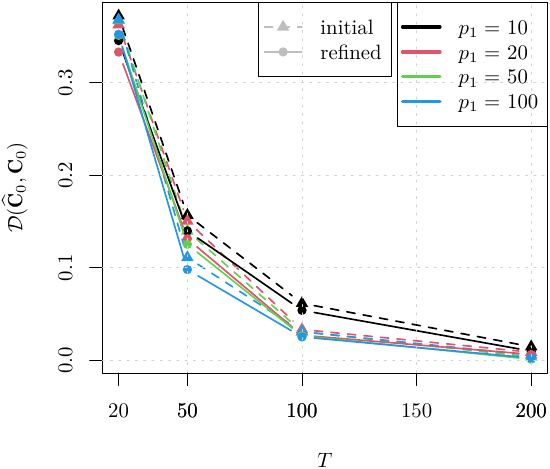}
    \caption{Case 4.2: estimation of $\R_1$, $\C_1$, $\R_0$, $\C_0$  for varying series length $T$ and row dimension $p_1$. Also, $p_2=20$. Triangles with dashed lines indicate the initial ``flattened'' estimator, circles with full lines indicate the refined projected estimator.}\label{fig:4.2}
\end{figure}

\begin{figure}
    \centering
    \includegraphics[width=0.95\linewidth]{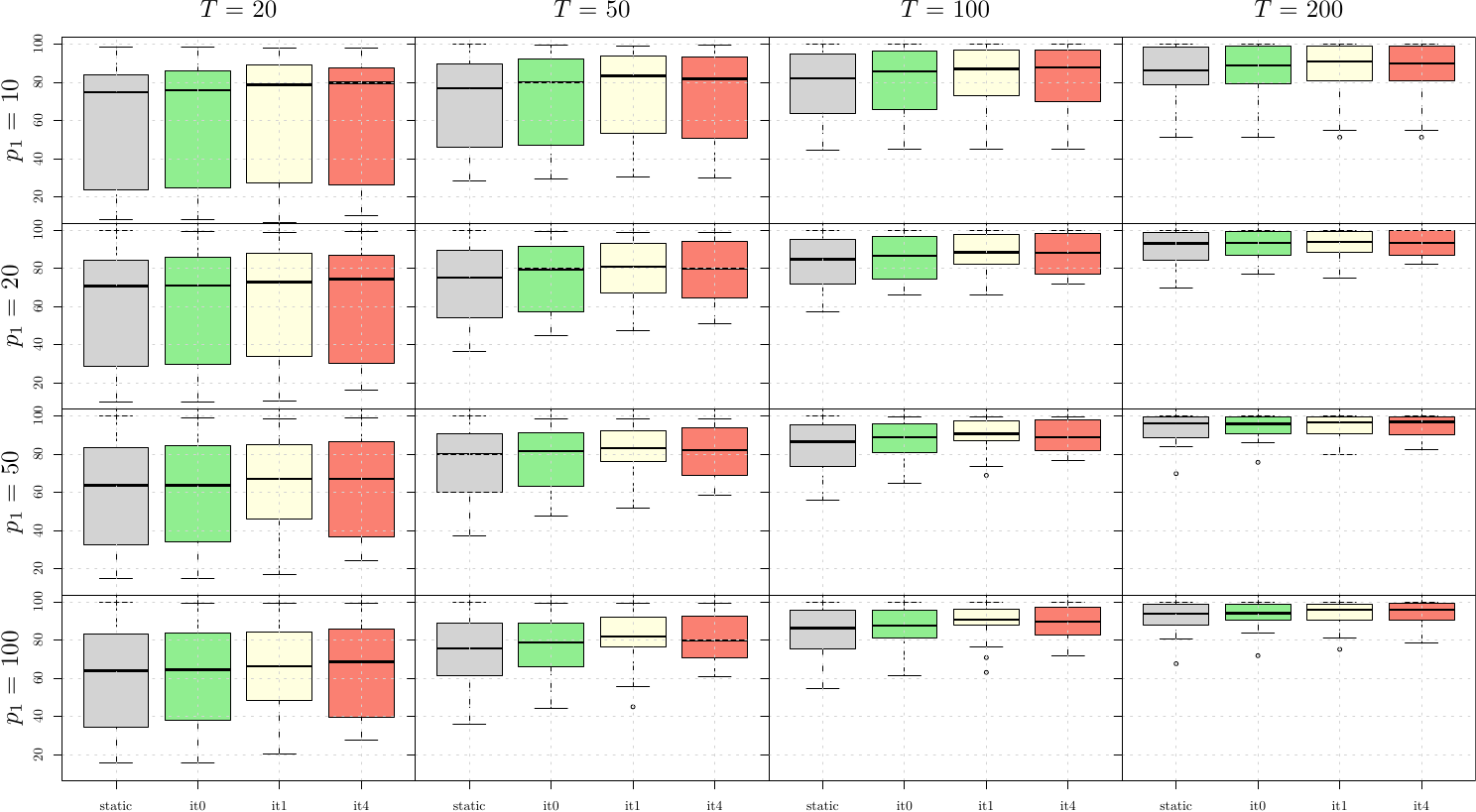}
    \caption{Boxplots of the percentages of correct estimation of the number of factors for the 4 criteria and varying  $p_1$ and sample size $T$. The percentages for the 8 cases and the 4 different parameters are aggregated in a single boxplot.}\label{fig:nf1b}
\end{figure}

\begin{figure}
    \centering
    \includegraphics[width=0.95\linewidth]{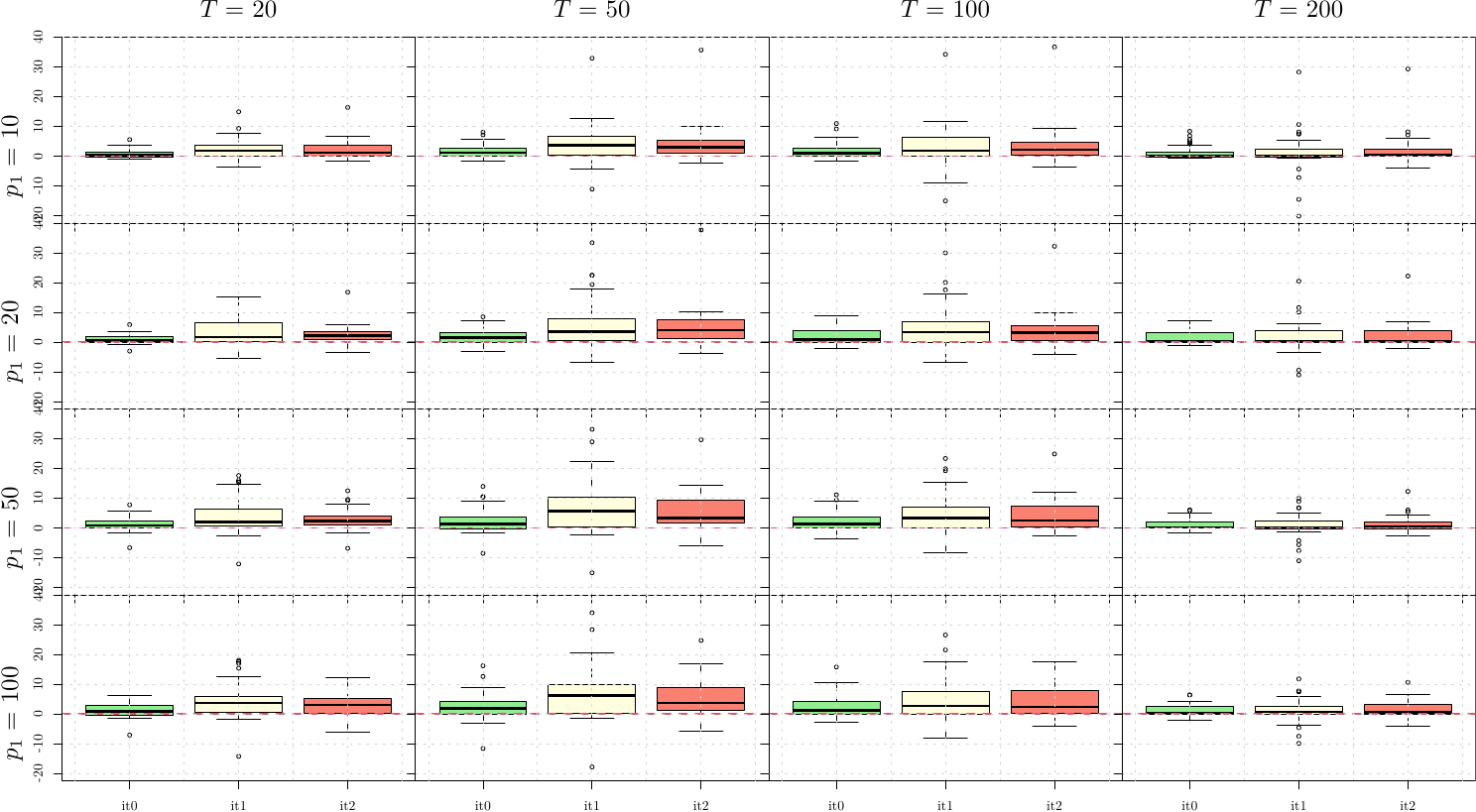}
    \caption{Boxplots of the differences of percentages of correct estimation of the number of factors for the iterative criteria w.r.t. the static criterion. Positive values indicate that the iterative version is superior w.r.t. the static one. The percentages for the 8 cases and the 4 different parameters are aggregated in a single boxplot.}\label{fig:nf2b}
\end{figure}

\end{appendix}

\end{document}